\newtheorem{theorem}{Theorem}[section]   
\newtheorem{proposition}[theorem]{Proposition}
\newtheorem{lemma}[theorem]{Lemma}
\newtheorem{hypothesis}[theorem]{Hypothesis}
\newcommand{\N}{\mathbbm{N}}
\newcommand{\Z}{\mathbbm{Z}}
\newcommand{\R}{\mathbbm{R}}
\newcommand{\C}{\mathbbm{C}}
\newcommand{\id}{\mathbbm{1}}            
\newcommand{\hilbert}{{\mathcal H}}      
\DeclareMathOperator{\im}{Im}            
\DeclareMathOperator{\re}{Re}            
\DeclareMathOperator{\sign}{sign}        
\DeclareMathOperator{\res}{res}          
\DeclareMathOperator{\dom}{dom}          
\DeclareMathOperator{\ran}{ran}          
\DeclareMathOperator{\mathspan}{span}    
\DeclareMathOperator{\rank}{rank}        
\DeclareMathOperator{\tr}{tr}            
\DeclareMathOperator{\arcosh}{arcosh}
\numberwithin{equation}{section}
\begin{document}
\title{Ground-state energy of one-dimensional free Fermi gases in the thermodynamic limit}
\author{
Peter Otte\thanks{peter.otte@rub.de} 
\ and\
Wolfgang Spitzer\thanks{wolfgang.spitzer@fernuni-hagen.de}\\ 
Fakult\"at f\"ur Mathematik und Informatik\\ 
FernUniversit\"at in Hagen\\ 
Germany}
\date{\today}
\maketitle
\begin{abstract}
We study the ground-state energy of one-dimensional, non-interacting fermions 
subject to an external potential in the thermodynamic limit. 
To this end, we fix some (Fermi) energy $\nu>0$, 
confine fermions with total energy below $\nu$ inside the interval $[-L,L]$ 
and study the shift of the ground-state energy due to the
potential $V$ in the thermodynamic limit $L\to\infty$.
We show that the difference $\mathcal{E}_L(\nu)$ of the two ground-state energies with and without
potential can be decomposed into a term
of order one (leading to the \emph{Fumi-term}) and a term of order
$1/L$, which yields the so-called \emph{finite size energy}.
We compute both terms for all possible boundary conditions explicitly and express them
through the scattering data of the one-particle Schr\"odinger operator
$-\Delta +V$ on $L^2(\mathbbm R)$.

\end{abstract}
\tableofcontents
%
\section{Introduction\label{introduction}}
In the seminal papers~\cite{Anderson1967a,Anderson1967} from 1967, Ph.~Anderson studied the
overlap between the ground states of $N$ non-interacting fermions with and without an external potential.
He showed that this overlap vanishes in the thermodynamic limit $N\to\infty$ at the rate $N^{-\gamma}$ and
expressed the orthogonality exponent $\gamma$ in terms of scattering data. 
This effect became known as Anderson's orthogonality catastrophe (AOC).

Beside the ground-state overlap an appropriate quantity to describe the change caused by the potential 
is the difference of the respective ground-state energies.
In 1994, I.~Affleck~\cite{Affleck1997,AffleckLudwig1994} proposed that the coefficient of the 
next-to-leading term of that energy difference in the thermodynamic limit can be identified with $\gamma$. 
His arguments were based upon ideas from conformal field theory.
In this paper we will thoroughly describe the asymptotics of the energy difference for one-dimensional (spin-less) fermions 
and thereby test Affleck's remarkable relation.

The infinite system is approached either by intervals $\Lambda_L\coloneqq [-L,L]$ or by $\Lambda_L \coloneqq [0,L]$. 
On such an interval, the free single-particle Hamiltonian $H_L\coloneqq -\Delta$ acting on the Hilbert space $L^2(\Lambda_L)$ 
(and depending on some boundary conditions, which we suppress in this notation) has eigenvalues 
$\lambda_{1,L}\leq\lambda_{2,L}\leq \ldots$ and (normalized) eigenfunctions $\varphi_{j,L}$. 
The ground state $\Phi_{N,L}$ of $N$ free fermions on $\Lambda_L$ is 
the anti-symmetric tensor product of the single particle (normalized) eigenfunctions $\varphi_{j,L}$. 
Hence, the energy of the ground state $\Phi_{N,L} = \varphi_{1,L} \wedge\cdots\wedge\varphi_{N,L}$ equals
\begin{equation}\label{intro01}
  \lambda^{(N)}_L = \lambda_{1,L}+\cdots+\lambda_{N,L} .
\end{equation}
Similarly, if $H_{V,L} \coloneqq H_L+V$ is another (or perturbed) single-particle Hamiltonian on the same Hilbert space $L^2(\Lambda_L)$ 
with eigenvalues $\mu_{j,L}$ (in ascending order) and eigenfunctions $\psi_{j,L}$ then the new ground state of 
$N$ fermions is $\Psi_{N,L} =\psi_{1,L}\wedge\cdots\wedge\psi_{N,L}$ with energy
\begin{equation}\label{intro02}
  \mu^{(N)}_L = \mu_{1,L}+\cdots+\mu_{N,L}  .
\end{equation}
The overlap mentioned in AOC is the square of the modulus of the scalar product, $|(\Phi_{N,L},\Psi_{N,L})|^2$. 
With growing $N$ and $L$ but with particle density $N/|\Lambda_L|$ converging to some fixed $\rho>0$, this
overlap equals to leading order $N^{-\gamma}$ with $\gamma>0$ as $N\to\infty$. 
It is conjectured (see \cite{Anderson1967a,GebertKuettlerMuellerOtte2016}) that
\begin{equation*}
  \gamma(\nu) = \frac{1}{\pi^2}\|\arcsin|T(\nu)/2|\|_2^2
\end{equation*}
where $T(\nu)\coloneqq S(\nu)-\id$ is the T-matrix and $S(\nu)$ the S-matrix at energy $\nu$. 
AOC has not been studied from a mathematical point until recently. In \cite{GebertKuettlerMuellerOtte2016}, it was proved
that $N^{-\gamma}$ with the above $\gamma$ is indeed an upper bound to the decay of the overlap. A smaller upper bound was proved earlier
in \cite{GebertKuettlerMueller2014} and in spatial dimension one in \cite{KuettlerOtteSpitzer2014} together with a lower bound of the
order $N^{-\gamma'}$ for some $\gamma'>\gamma$.

The difference of the ground-state energies is $\mu^{(N)}_L- \lambda^{(N)}_L$.
We perform the thermodynamic limit in a slightly different manner by fixing some (Fermi) energy $\nu>0$
and instead analyzing the \emph{canonical energy difference}
\begin{equation}\label{intro03}
  \mathcal{E}_L(\nu) \coloneqq \sum_{\mu_{j,L}\leq\nu} f(\mu_{j,L}) - \sum_{\lambda_{j,L}\leq\nu}f(\lambda_{j,L})
\end{equation}
in the limit $L\to\infty$. Here, we allow for a (holomorphic) weight function $f$, the most important case being $f(z)=z$. 
The particle number $N$ is recovered through the largest $\lambda_{N,L}\leq \nu$, see \ref{ssc}. It is easy to see
that $N/L\to\rho>0$ as $N,L\to\infty$.
It turns out that expanding $\mathcal{E}_L(\nu)$ in powers of $1/L$ does not quite cover the actual asymptotic behaviour 
which seems to contradict Affleck's relation when taken too strictly. Rather, leaving aside some technical details
(see \eqref{asymptotics01} for the precise statement) we decompose the energy difference into
\begin{equation*}
  \mathcal{E}_L(\nu) = -f(\nu)\xi_L(\nu) + \mathcal{E}_L^{\text{Fumi}}(\nu) + \mathcal{E}_L^{\text{FSE}}(\nu) .
\end{equation*}
Here, $\xi_L(\nu)$ is Kre\u\i{}n's spectral shift function, see \ref{ssf} and \ref{ssc}.
The so-called \emph{Fumi-term} $\mathcal{E}_L^{\text{Fumi}}(\nu)$ is of order one though with possible lower order correction terms.
The next-to-leading term is $\mathcal{E}_L^{\text{FSE}}(\nu)$, called \emph{finite size energy}, which is of order $1/L$
\begin{equation*}
  \mathcal{E}_L^{\text{FSE}}(\nu) = \mathcal{E}^{\text{FSE}}(\nu)\frac{1}{L} + o(\frac{1}{L}) .
\end{equation*}
According to Affleck, the coefficient $\mathcal{E}^{\text{FSE}}(\nu)$ is supposed to equal the orthogonality exponent $\gamma$.

The paper's main result is a rigorous analysis of the asymptotic behaviour of the Fumi-term and the finite size energy 
as well as explicit expressions for the limit terms. 

\begin{theorem}\label{intro01t}
Let $V\in L^1(\R)$ satisfy the Birman--Solomyak conditions
$V\in\ell^{\frac{1}{2}}(L^1(\R))$ and $X^2V\in\ell^{\frac{1}{2}}(L^1(\R))$, see \eqref{birman_solomyak}.
Then, $\mathcal{E}_L^{\text{Fumi}}(\nu)\to\mathcal{E}^{\text{Fumi}}(\nu)$ as $L\to\infty$ with the Fumi term
\begin{equation*}
  \mathcal{E}^{\text{Fumi}}(\nu)
      = \int_{-\infty}^\nu f'(\lambda)\xi(\lambda)\, d\lambda .
\end{equation*}
Here, $\xi$ is the spectral shift function for the infinite volume system (cf. \eqref{krein_xi}).
\end{theorem}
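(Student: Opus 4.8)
The plan is to reduce the statement to the convergence of one integral. Writing the two sums in \eqref{intro03} as Stieltjes integrals against the eigenvalue counting functions $N_L(\lambda)\coloneqq\#\{j:\lambda_{j,L}\le\lambda\}$ and $N^V_L(\lambda)\coloneqq\#\{j:\mu_{j,L}\le\lambda\}$ and integrating by parts gives
\begin{equation*}
  \mathcal{E}_L(\nu) = -f(\nu)\,\xi_L(\nu) + \int_{-\infty}^{\nu} f'(\lambda)\,\xi_L(\lambda)\,d\lambda , \qquad \xi_L\coloneqq N_L - N^V_L ,
\end{equation*}
which is the finite-volume Kre\u\i{}n trace formula and matches $\mathcal{E}_L=-f(\nu)\xi_L(\nu)+\mathcal{E}^{\text{Fumi}}_L+\mathcal{E}^{\text{FSE}}_L$ with $\mathcal{E}^{\text{Fumi}}_L(\nu)+\mathcal{E}^{\text{FSE}}_L(\nu)=\int_{-\infty}^\nu f'(\lambda)\xi_L(\lambda)\,d\lambda$. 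Since $\mathcal{E}^{\text{FSE}}_L(\nu)=O(1/L)$, it suffices to prove
\begin{equation*}
  \int_{-\infty}^{\nu} f'(\lambda)\,\xi_L(\lambda)\,d\lambda \;\xrightarrow[L\to\infty]{}\; \int_{-\infty}^{\nu} f'(\lambda)\,\xi(\lambda)\,d\lambda .
\end{equation*}
I would obtain this by dominated convergence, for which the two ingredients are a uniform-in-$L$ bound $\|\xi_L\|_{L^\infty(-\infty,\nu)}\le C$ with $\supp\bigl(\xi_L|_{(-\infty,\nu]}\bigr)\subseteq[-C,\nu]$, and pointwise convergence $\xi_L(\lambda)\to\xi(\lambda)$ for almost every $\lambda<\nu$; note that $f'$ is bounded on the compact interval $[-C,\nu]$ to which the relevant spectra are confined.

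Below the essential spectrum this is elementary. For $\lambda<0$ one has $\xi_L(\lambda)=-N^V_L(\lambda)$, the number of negative eigenvalues of $H_{V,L}$; the Birman--Solomyak hypotheses bound this number uniformly in $L$, so $\xi_L$ is uniformly bounded and supported in a fixed interval there, and each negative eigenvalue of $H_{V,L}$ converges — exponentially fast in $L$ — to the corresponding bound state of $H_V$ on $L^2(\R)$. Hence $\xi_L(\lambda)\to\xi(\lambda)$ for $\lambda<0$ away from the finitely many bound-state energies, with an error that does not even affect the $1/L$ term.

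The substance is the regime $\lambda\in(0,\nu)$ inside the essential spectrum. Here $N_L(\lambda)$ and $N^V_L(\lambda)$ each grow like $\sqrt\lambda\,L$ up to a constant, but the Weyl bulk terms cancel in the difference and $\xi_L$ stays bounded. To make this quantitative I would encode the eigenvalue conditions of $H_L$ and $H_{V,L}$ by transfer matrices and arrive at a Bohr--Sommerfeld-type quantization $2Lk+\theta_L(k)\in\pi\Z+\text{const}$ for the eigenvalues $\lambda=k^2$, where $\theta_L(k)$ is a scattering phase that, under the Birman--Solomyak conditions, converges uniformly on compact subsets of $(0,\infty)$ to the infinite-volume phase $\theta(k)$ with $\xi(k^2)=\theta(k)/\pi$. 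Counting lattice points then yields $\xi_L(k^2)=\xi(k^2)+\sigma_L(k)$ with a bounded sawtooth-type remainder $\sigma_L$ whose phase $2Lk+\theta_L(k)$ winds at speed $\sim 2L$; this gives both the uniform bound needed above and, after integration against the smooth weight $f'$, the $O(1/L)$ size of the oscillatory contribution, which is precisely the leading part of $\mathcal{E}^{\text{FSE}}_L$. Substituting $\xi_L=\xi+\sigma_L$, the $\sigma_L$-term disappears in the limit and one is left with $\int_{-\infty}^\nu f'(\lambda)\xi(\lambda)\,d\lambda$.

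I expect the hardest points to be the two ends of the interval $(0,\nu)$. Near the threshold $\lambda\to 0^+$ the quantization degenerates and the low-energy behaviour of $\xi$ — generic versus resonant, controlled precisely by the $X^2V\in\ell^{1/2}(L^1(\R))$ condition — must be matched against that of $\xi_L$ so that $\xi_L\to\xi$ and the uniform bound persist down to $0$. At the Fermi energy $\lambda=\nu$ the last eigenvalue below $\nu$ lies within $O(1/L)$ of $\nu$, which is what produces the boundary term $-f(\nu)\xi_L(\nu)$ and must be separated off cleanly before passing to the limit. An alternative, more operator-theoretic route is to express $\int f'\xi_L$ as a contour integral of $f(z)$ against $\tr\bigl((H_{V,L}-z)^{-1}-(H_L-z)^{-1}\bigr)$ and to use trace-norm convergence of the resolvent difference to its infinite-volume analogue — the Birman--Solomyak conditions being exactly what makes that difference trace class — but the contour must cross the essential spectrum $[0,\nu]$, so one is again forced to the oscillatory analysis near the real axis, and I would expect to invoke the transfer-matrix estimates there as well.
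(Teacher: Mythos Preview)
Your plan has a genuine gap at its core: dominated convergence cannot work, because $\xi_L(\lambda)$ does \emph{not} converge pointwise to $\xi(\lambda)$ on $(0,\nu)$. The finite-volume shift $\xi_L$ is integer-valued (difference of counting functions), while $\xi(\lambda)=-\frac{1}{\pi}\arg\det S(\lambda)$ is generically irrational on the essential spectrum; the paper notes this explicitly in Section~\ref{ssc}. You recognise the problem when you write $\xi_L=\xi+\sigma_L$ with a bounded sawtooth $\sigma_L$, but a sawtooth of fixed amplitude does not tend to zero a.e., so you have silently abandoned dominated convergence in favour of an oscillatory-integral (Riemann--Lebesgue) argument. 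That replacement argument is plausible but is substantially harder than you suggest: you would need uniform control of the derivative of the phase $2Lk+\theta_L(k)$ including near $k=0$ and $k=\sqrt{\nu}$, and you would need to know that the $1/L$ remainder from $\int f'\sigma_L$ coincides exactly with the paper's $\mathcal{E}^{\text{FSE}}_L$, which you are presupposing to be $O(1/L)$ --- i.e.\ you are invoking Theorem~\ref{intro02t} to prove Theorem~\ref{intro01t}.

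The paper avoids all of this by a different decomposition, made \emph{before} passing to spectral shift functions. It writes $R_L(z)=R_{\infty,L}(z)-D_L(z)$ (Lemma~\ref{fr01t}), where $R_{\infty,L}$ is the restriction of the free full-line resolvent and $D_L$ is rank two. This yields a multiplicative factorisation of the perturbation determinant (Lemma~\ref{wo03t}),
\[
  \det(\id-K_L(z))=\det(\id-K_{\infty,L}(z))\cdot\det(\id+d_L(z)T_L(z)G_L(z)),
\]
and $\mathcal{E}^{\text{Fumi}}_{L,b}$ is \emph{defined} as the contour integral of $\ln\det(\id-K_{\infty,L}(z))$. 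The point is that $K_{\infty,L}(z)\to K_\infty(z)$ in trace norm uniformly on compact subsets of $\C\setminus\{0\}$ (Lemma~\ref{bso04t}), including on the piece of the contour touching the real axis (the limiting absorption principle, Lemma~\ref{bso03t}, is where the Birman--Solomyak conditions enter). So the Fumi integral converges with no oscillation whatsoever; the entire oscillatory content is quarantined in the second factor, which becomes the FSE. Your ``alternative route'' at the end is close to this, but you conclude that crossing the essential spectrum forces oscillatory analysis --- the paper's insight is precisely that the $R_{\infty,L}$ vs.\ $D_L$ split removes that obstacle for the Fumi piece.
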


\begin{theorem}\label{intro02t}
Let $V\in L^1(\R)$ satisfy $X^3V\in L^1(\R)$. 
Assume that $e^{2\pi iL\sqrt{\nu}}\to e^{i\pi\eta}$, $\eta\in\interval[open right]{-1}{1}$, for $L\to\infty$.
Then, in this limit $L\mathcal{E}_L^{\text{FSE}}(\nu)\to\mathcal{E}^{\text{FSE}}(\nu)$ where
\begin{equation*}
  \mathcal{E}^{\text{FSE}}(\nu)
      = \frac{\sqrt{\nu}}{4\pi}f'(\nu) \tr\big[ \arccos^2(\re(e^{i\eta}\sigma_xU(\nu)^*S(\nu))) - \arccos^2(\re(e^{i\eta}\sigma_xU(\nu)^*)) \big]
\end{equation*}
with the boundary condition scattering matrix $U(\nu)$, see \eqref{fr_bcs}, and the Pauli matrix $\sigma_x$.
$S(\nu)$ is the scattering matrix at energy $\nu$, see Section \ref{scattering}).
\end{theorem}

The Fumi term $\mathcal{E}^{\text{Fumi}}(\nu)$ does not depend on the special sequence of system lengths $L$ used for the limit. 
In contrast, the finite size energy does. This subtlety was first noted by M.~Gebert \cite{Gebert2015} who analyzed
the corresponding system on the half-line with $\Lambda_L = [0, L]$ and Dirichlet boundary conditions at both endpoints. 
We return to the half-line model in Section \ref{hl} and recover Gebert's result 
although under slightly different conditions on the potential $V$.

Moreover, while the Fumi term $\mathcal{E}^{\text{Fumi}}(\nu)$ is the same for all boundary conditions
the finite size energy $\mathcal{E}^{\text{FSE}}(\nu)$ is not.
Through the boundary condition scattering matrix $U(\nu)$, see \eqref{fr_bcs},
it depends on the boundary conditions and, therefore, cannot
equal the orthogonality exponent $\gamma$ in general. However, AOC was essentially studied with Dirichlet boundary
conditions and its independence thereof has not yet been established. That leaves Affleck's relation still open.

The first appearance of the relation between the energy difference and the (integral of the) scattering 
data seems to be in Fumi's work~\cite{Fumi1955}, see also \cite{LangerAmbegaokar1961} and \cite[Sec. 4.1F]{Mahan1990}.
Shortly thereafter Fukuda and Newton~\cite{FukudaNewton1956} studied the difference of eigenvalues 
$\mu_{N,L}-\lambda_{N,L}$ and related the limit to scattering data.

Avoiding the thermodynamic limit, Frank, Lewin, Lieb, and Seiringer~\cite{FrankLewinLiebSeiringer2011}
studied the energy difference directly in the infinite-volume system
(cf. \eqref{energy_difference}) and proved quantitative semiclassical bounds reminiscent of Lieb--Thirring bounds.

The proofs of Theorems \ref{intro01t} and \ref{intro02t} are presented in Section \ref{asymptotics}.
We start off by using a generalization of Riesz's integral formula due to Dunford 
to express the energy difference $\mathcal{E}_L(\nu)$ in terms of an integral of the perturbation determinant 
which involves the Birman--Schwinger operator $K_L(z)\coloneqq \sqrt{|V|}R_L(z)\sqrt{|V|}J$, see Proposition \ref{energy01t}.
Here the potential is written in the Rollnik form, $V=\sqrt{|V|}J\sqrt{|V|}$ and 
$R_L(z)$ is the resolvent of the free Schr\"odinger operator on the bounded domain $\Lambda_L$.
The latter can be written such as to separate the infinite volume part from the boundary conditions, 
$R_L(z) = R_{\infty,L}(z) -D_L(z)$.
$R_{\infty,L}(z)$ is the resolvent of the free Schr\"odinger operator on the full domain 
(either $\R$ or $\interval[open right]{0}{\infty}$) restricted to $\Lambda_L$. 
We allow all possible boundary conditions for the Schr\"odinger operator on $\Lambda_L$. 
The operator $D_L(z)$ is trace class and, in dimension one, even a rank two operator, respectively a rank one operator.
An equally important fact is that this representation of $R_L(z)$ eventually yields a natural decomposition of $\mathcal E_L(\nu)$ 
into the Fumi term $\mathcal{E}^{\text{Fumi}}(\nu)$ and the finite size energy $\mathcal{E}^{\text{FSE}}(\nu)$.

Besides the Birman--Schwinger operator $K_L(z)$ corresponding to the (free) Schr\"odinger operator on $\Lambda_L$ 
there is a locally reduced Birman--Schwinger operator $K_{\infty,L}(z)$, 
which stems from the resolvent $R_\infty(z)$ of the (free) Schr\"odinger operator on $\R$, 
respectively $\interval[open right]{0}{\infty}$. 
In Lemma \ref{wo03t} we find an expression of $K_L(z)$ in terms of $K_{\infty,L}(z)$ 
and a factorization of the corresponding perturbation determinants
which allows us to derive the limiting behaviour of the energy difference, see Section \eqref{asymptotics}.

Our mathematical approach via the resolvent is related to the work of Gesztesy and Nichols~\cite[Lemma 3.2]{GesztesyNichols2012},
who proved that
\begin{equation*}
  \det(\id-\sqrt{V}R_L(z)\sqrt{V}) \to \det(\id-\sqrt{V}R_\infty(z)\sqrt{V}) \ \text{as}\ L\to\infty,\
    z\in\C\setminus\interval[open right]{0}{\infty} .
\end{equation*}
Since we are interested in the $1/L$-correction to the leading Fumi-term 
we need more information (uniformly in $z$) than just the above limit of determinants. 
In fact, we need the rate of approach to this limit. 

Although we treat here only one-dimensional systems, our approach via
Riesz's integral formula allows us to treat higher dimensional systems
with non-radial potentials. We plan to pursue this in the future.

\section{Representation of the energy difference \texorpdfstring{$\mathcal{E}(\nu)$}{}\label{energy}}
Let $\hilbert$ be a separable complex Hilbert space with scalar product $(\cdot,\cdot)$ which is linear in the second argument. 
We denote by $B_p(\hilbert)$, $1\leq p<\infty$, the Schatten--von Neumann classes with norm $\|\cdot\|_p$.
Furthermore, $B(\hilbert)$ are the bounded operators with operator norm $\|\cdot\|$ occasionally referred to as the $p=\infty$-norm.
These norms satisfy Jensen's inequality
\begin{equation}\label{jensen_inequality}
   \| A\|_q \leq \|A\|_p,\ 1\leq p\leq q\leq \infty
\end{equation}
and H\"older's inequality
\begin{equation}\label{hoelder_inequality}
   \|A B\|_r \leq \|A\|_p\|B\|_q,\ 1\leq r,p,q\leq \infty,\
   \frac{1}{r} = \frac{1}{p}+\frac{1}{q} .
\end{equation}
The only cases needed herein are $p=1,2$, the trace class and Hilbert--Schmidt operators, respectively, 
and $p=\infty$.

Let $H:\dom(H)\to\hilbert$, $\dom(H)\subset\hilbert$, be a self-adjoint operator
and $V:\hilbert\to\hilbert$ be (for simplicity) a bounded symmetric, hence self-adjoint, operator. Its polar decomposition
can be written in a form first used by Rollnik \cite{Rollnik1956} when studying the Lippmann--Schwinger equation (cf. Section \ref{scattering})
\begin{equation}\label{V_polar_decomposition}
  V = \sqrt{|V|}J\sqrt{|V|},\ J^*=J,\ J^2=\id,\ \|J\|=1 ,
\end{equation}
where $\id$ denotes the unity operator. Then, $H_V\coloneqq H+V$ is self-adjoint as well with $\dom(H_V)=\dom(H)$. 
We denote by $\sigma(H)$ and $\sigma(H_V)$ the spectrum of $H$ and $H_V$, respectively, by
\begin{equation}\label{resolvents}
  R(z) \coloneqq (z\id-H)^{-1},\ z\in\C\setminus\sigma(H),\ 
  R_V(z) \coloneqq (z\id- H_V)^{-1},\ z\in\C\setminus\sigma(H_V)
\end{equation}
their resolvents and their spectral families by
\begin{equation}\label{spectral_projections}
  P_\nu \coloneqq \chi_{\interval[open left]{-\infty}{\nu}}(H),\   
  \Pi_\nu \coloneqq \chi_{\interval[open left]{-\infty}{\nu}}(H_V),\ \nu\in\R ,
\end{equation}
where $\chi_I$ is the indicator function of the interval $I\subset\R$.
The paper's central quantity is the energy difference
\begin{equation}\label{energy_difference}
  \mathcal{E}(\nu) \coloneqq \tr \big[f(H_V)\Pi_\nu - f(H)P_\nu\big]
\end{equation}
for some fixed holomorphic function $f:\C\to\C$.
It is well-defined under appropriate conditions on the operators $H$ and $H_V$.

\subsection{Perturbation determinant \texorpdfstring{$\Delta(z)$}{}\label{pd}}
There is a handy formula for $\mathcal{E}(\nu)$ based upon Riesz's integral formula
for spectral projections (or the Dunford integral) and the perturbation determinant.

The resolvents in \eqref{resolvents} are related by Kre\u\i{}n's resolvent formula for additive perturbations
\begin{equation}\label{krein_Omega}
  R_V(z) - R(z) = R(z) \sqrt{|V|} J\Omega(z)\sqrt{|V|} R(z),\ z\in\C\setminus(\sigma(H)\cup\sigma(H_V)) ,
\end{equation}
which holds true whenever the following inverse
\begin{equation}\label{Omega_K}
      \Omega(z) \coloneqq (\id-K(z))^{-1},\ K(z) \coloneqq \sqrt{|V|}R(z)\sqrt{|V|}J ,
\end{equation}
exists. Motivated by stationary scattering theory, see e.g. \cite[3.6.1]{Thirring2002}, we call $\Omega(z)$ the \emph{wave operator}.
The sandwiched resolvent $K(z)$ is called \emph{Birman--Schwinger operator}. We used Rollnik's factorization \eqref{V_polar_decomposition}
since in our applications the operator $R(z)V$ generally will not have the necessary trace class properties.
The wave operator is closely related to the spectrum of the perturbed operator, which is known as Birman--Schwinger principle.

\begin{lemma}\label{energy_omega}
Let $z\in\C\setminus\sigma(H)$. Then, $\Omega(z)$ exists and is bounded if and only if
$z\notin\sigma(H_V)$ in which case the map $z\mapsto\Omega(z)$ is holomorphic with respect 
to the operator norm and we have the formula
\begin{equation}\label{energy_omega01}
  \Omega(z) = \id + \sqrt{|V|} R_V(z) \sqrt{|V|}J .
\end{equation}
\end{lemma}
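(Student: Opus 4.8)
The plan is to prove the two implications of the equivalence separately and then read off both the formula \eqref{energy_omega01} and the holomorphy as by-products. Note first that $\sqrt{|V|}$ is bounded (since $V$ is), so $K(z)=\sqrt{|V|}R(z)\sqrt{|V|}J$ is a bounded operator for every $z\notin\sigma(H)$, and likewise $M(z)\coloneqq\sqrt{|V|}R_V(z)\sqrt{|V|}J$ is bounded whenever $z\notin\sigma(H_V)$. For the \emph{easy direction}, suppose $z\notin\sigma(H)\cup\sigma(H_V)$. Starting from the second resolvent identity $R_V(z)-R(z)=R(z)VR_V(z)=R_V(z)VR(z)$ and inserting the Rollnik factorization $V=\sqrt{|V|}J\sqrt{|V|}$ from \eqref{V_polar_decomposition}, I would multiply on the left by $\sqrt{|V|}$ and on the right by $\sqrt{|V|}J$, which turns the two forms of the resolvent identity into the algebraic relations $M(z)=K(z)+K(z)M(z)$ and $M(z)=K(z)+M(z)K(z)$. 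Substituting one into the other gives $(\id-K(z))(\id+M(z))=\id=(\id+M(z))(\id-K(z))$, so $\id-K(z)$ is boundedly invertible, $\Omega(z)$ exists, and $\Omega(z)=\id+M(z)$, which is precisely \eqref{energy_omega01}. This already proves ``$z\notin\sigma(H_V)\Rightarrow\Omega(z)$ exists and is bounded'' along with the stated formula.

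For the \emph{converse} (the Birman--Schwinger direction), assume $z\notin\sigma(H)$ and that $\Omega(z)=(\id-K(z))^{-1}$ exists as a bounded operator; the goal is $z\notin\sigma(H_V)$. Since $V$ is bounded, $H_V$ is self-adjoint with $\dom(H_V)=\dom(H)$, so it is enough to exhibit a bounded two-sided inverse of $z\id-H_V$. The natural candidate is the right-hand side of Kre\u\i{}n's formula \eqref{krein_Omega}, namely $B\coloneqq R(z)+R(z)\sqrt{|V|}J\,\Omega(z)\,\sqrt{|V|}R(z)$, which maps $\hilbert$ into $\ran R(z)=\dom(H)$. Using $(z\id-H-V)R(z)=\id-\sqrt{|V|}J\sqrt{|V|}R(z)$, the identity $\sqrt{|V|}R(z)\sqrt{|V|}J=K(z)$, and $(\id-K(z))\Omega(z)=\id$, a short computation collapses $(z\id-H_V)B$ to $\id$; the mirror computation, using instead $\Omega(z)(\id-K(z))=\id$, gives $B(z\id-H_V)=\id$ on $\dom(H)$. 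Hence $z\in\rho(H_V)$, i.e. $z\notin\sigma(H_V)$, and by the easy direction we also get $B=R_V(z)$, so the formula $\Omega(z)=\id+\sqrt{|V|}R_V(z)\sqrt{|V|}J$ holds on all of $\C\setminus(\sigma(H)\cup\sigma(H_V))$.

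Finally, $z\mapsto R(z)$ is holomorphic in operator norm on $\C\setminus\sigma(H)$, hence so is $z\mapsto K(z)$, being the composition of $R(\cdot)$ with the bounded linear maps $A\mapsto\sqrt{|V|}A$ and $A\mapsto A\sqrt{|V|}J$; since inversion is holomorphic on the open set of boundedly invertible operators, $z\mapsto\Omega(z)=(\id-K(z))^{-1}$ is holomorphic on the set where it is defined, which by the first two steps equals $\C\setminus(\sigma(H)\cup\sigma(H_V))$. I expect the only mildly fiddly point to be the bookkeeping in the converse direction — verifying that $B$ really lands in $\dom(H)$ and that both one-sided products reduce cleanly to $\id$ — but since $V$, and hence $\sqrt{|V|}$, is bounded there are no domain or trace-class complications, and everything reduces to the resolvent identity together with the algebraic relations among $K(z)$, $\Omega(z)$ and $M(z)$ from the first step.
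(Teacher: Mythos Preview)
Your proof is correct, and for the ``easy'' direction and the holomorphy it matches the paper essentially verbatim. The genuine difference is in the converse implication. The paper argues by contradiction: assuming $z\in\sigma(H_V)$ while $\Omega(z)$ is bounded, it takes a sequence $z_n\in\C\setminus\R$ with $z_n\to z$, uses continuity of $K(\cdot)$ at $z$ to get $\sup_n\|\Omega(z_n)\|<\infty$, and then invokes the already-established formula $\Omega(z_n)=\id+\sqrt{|V|}R_V(z_n)\sqrt{|V|}J$ together with $\|R_V(z_n)\|\to\infty$ to derive a contradiction. You instead give a direct, constructive argument: you write down the Kre\u\i{}n candidate $B=R(z)+R(z)\sqrt{|V|}J\,\Omega(z)\sqrt{|V|}R(z)$ and verify by hand that $(z\id-H_V)B=\id$ and $B(z\id-H_V)=\id$ on $\dom(H)$, so $z\in\rho(H_V)$.

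Your route is slightly longer but entirely self-contained and avoids a delicate point in the paper's contradiction argument, namely the passage from $\|R_V(z_n)\|\to\infty$ to unboundedness of the sandwiched operator $\sqrt{|V|}R_V(z_n)\sqrt{|V|}J$; your construction sidesteps this by never needing to control $R_V$ near the spectrum. The paper's approach, on the other hand, is quicker once formula~\eqref{energy_omega01} is in hand and highlights the role of that formula as the mechanism linking invertibility of $\id-K(z)$ to the spectrum of $H_V$.
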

\begin{proof}
(i) 
Let $z\notin\sigma(H_V)$ which means that $R_V(z)$ exists as a bounded operator. Then,
\begin{equation*}
\begin{split}
\lefteqn{ (\id - \sqrt{|V|}R(z)\sqrt{|V|}J)(\id + \sqrt{|V|}R_V(z)\sqrt{|V|}J) }\\
   & = \id - \sqrt{|V|}R(z)\sqrt{|V|}J + \sqrt{|V|} ( \id - R(z)V ) R_V(z)\sqrt{|V|}J\\
   & = \id - \sqrt{|V|}R(z)\sqrt{|V|}J + \sqrt{|V|}R(z) (z\id - H - V ) R_V(z)\sqrt{|V|}\\
   & = \id .
\end{split}
\end{equation*}
If the factors are interchanged the calcutions are analogous. We conclude that $\Omega(z)$ exists and is bounded.
Formula \eqref{energy_omega01} is obvious.
 
It is well-known that the map $z\mapsto R(z)$ is holomorphic for $z\notin\sigma(H)$.
Since $V$ is bounded a Neumann series argument shows that the map $z\mapsto\Omega(z)$ is holomorphic 
on $\C\setminus( \sigma(H)\cup\sigma(H_V) )$.

(ii)
Let $z\in\sigma(H_V)\subset\R$ and $z_n\in\C\setminus\R$ with $z_n\to z$ for $n\to\infty$.
From (i) we know $\|\Omega(z_n)\|<\infty$.
If $\Omega(z)$ existed as a bounded operator it would follow from (i) that $\Omega(z_n)\to \Omega(z)$. In particular,
$\sup_n \|\Omega(z_n)\| < \infty$.
On the other hand, we have $\|R_V(z_n)\|<\infty$ and $\|R_V(z_n)\|\to\infty$ as $z_n\to z$, which yields a contradiction
via \eqref{energy_omega01}.
\end{proof}

The spectrum of the Birman--Schwinger operator can be described a tad more detailed.

\begin{lemma}\label{spectrum_K}
Let $z\in\C\setminus\R$ and $\varkappa\in\C\setminus\{0\}$ be an eigenvalue of $K(z)$. 
Then $\im(\varkappa)\neq 0$.
\end{lemma}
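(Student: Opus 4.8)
The plan is to argue directly from the eigenvalue equation using self-adjointness of $H$ and the sign structure of $J$. Suppose $\varkappa\in\C\setminus\{0\}$ is an eigenvalue of $K(z)=\sqrt{|V|}R(z)\sqrt{|V|}J$ with eigenvector $g\neq 0$, so that $\sqrt{|V|}R(z)\sqrt{|V|}Jg=\varkappa g$. The natural move is to set $h\coloneqq Jg$ (note $h\neq0$ since $J^2=\id$) and test the equation against $Jg$ in the inner product, i.e.\ form $\varkappa\,(Jg,g)=(Jg,\sqrt{|V|}R(z)\sqrt{|V|}Jg)=(\sqrt{|V|}Jg,R(z)\sqrt{|V|}Jg)$. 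Writing $u\coloneqq\sqrt{|V|}Jg$, the right-hand side is $(u,R(z)u)$ and the left-hand side is $\varkappa\,(Jg,g)$; since $J=J^*$ we have $(Jg,g)=(g,Jg)=\overline{(Jg,g)}$, so $(Jg,g)\in\R$. Hence $\varkappa\,(Jg,g)=(u,R(z)u)$ with $(Jg,g)\in\R$.

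The second step is to extract the imaginary part. Because $H$ is self-adjoint and $z=x+iy$ with $y\neq0$, the standard identity $\im (u,R(z)u)=y\,\|R(z)u\|^2$ holds (from $R(z)-R(\bar z)=(z-\bar z)R(z)R(\bar z)$ and $R(\bar z)=R(z)^*$). Therefore
\begin{equation*}
  \im(\varkappa)\cdot(Jg,g)=\im\big(\varkappa\,(Jg,g)\big)=\im\,(u,R(z)u)=y\,\|R(z)u\|^2 .
\end{equation*}
Now I must rule out $\|R(z)u\|=0$: if $R(z)u=0$ then $u=0$ (as $R(z)$ is injective, being a resolvent at a regular point), i.e.\ $\sqrt{|V|}Jg=0$, which forces $K(z)g=\sqrt{|V|}R(z)(\sqrt{|V|}Jg)=0=\varkappa g$, contradicting $\varkappa\neq0$ and $g\neq0$. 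So $\|R(z)u\|^2>0$, and since $y\neq0$ the right-hand side above is nonzero; in particular $(Jg,g)\neq0$ and $\im(\varkappa)=y\,\|R(z)u\|^2/(Jg,g)\neq0$.

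I expect no serious obstacle here; the only point requiring a little care is confirming $(Jg,g)\neq0$ rather than assuming it, but this drops out for free from the displayed identity once $\|R(z)u\|>0$ and $y\neq0$ are established, so the argument closes cleanly. A secondary subtlety is making sure the pairing $(u,R(z)u)$ is legitimate, i.e.\ that $u=\sqrt{|V|}Jg\in\hilbert$ so that $R(z)u$ makes sense; this is immediate since $\sqrt{|V|}$ and $J$ are bounded and $g\in\hilbert$. As a bonus, the computation even pins down the sign: $\sign\im(\varkappa)=\sign(y)\cdot\sign(Jg,g)$, though the statement only asks for $\im(\varkappa)\neq0$.
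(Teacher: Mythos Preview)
Your proof is essentially identical to the paper's: both test the eigenvalue equation against $Jg$, use that $(Jg,g)\in\R$, take imaginary parts, and rule out $R(z)\sqrt{|V|}Jg=0$ via the eigenvalue equation. One harmless slip: with the paper's convention $R(z)=(z\id-H)^{-1}$ the correct identity is $\im(u,R(z)u)=-\im(z)\,\|R(z)u\|^2$ (your sign is reversed, as is your first resolvent identity), but this does not affect the conclusion $\im(\varkappa)\neq 0$.
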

\begin{proof}
Since $\im(z)\neq 0$ the resolvent $R(z)$ and, hence, the Birman--Schwinger operator is bounded.
Let $\varkappa\in\C\setminus\{0\}$ and let $\varphi\in\hilbert$, $\varphi\neq 0$, such that
\begin{equation*}
  \varkappa\varphi = K(z)\varphi = \sqrt{|V|}R(z)\sqrt{|V|}J \varphi .
\end{equation*}
We multiply by $J$, take scalar products
\begin{equation*}
  \varkappa (\varphi,J\varphi) = (\varphi,J\sqrt{|V|}R(z)\sqrt{|V|}J\varphi) ,
\end{equation*}
and look at the imaginary part
\begin{equation*}
  \im(\varkappa) (\varphi,J\varphi) = -\im(z) \|R(z)\sqrt{|V|}J\varphi\|^2 .
\end{equation*}
The norm on the right-hand side does not vanish since that would imply
$\varkappa\varphi=0$ by the eigenvalue equation which contradicts $\varphi\neq 0$ and 
$\varkappa\neq 0$. Hence, $\im(z)\neq 0$ implies $\im(\varkappa)\neq 0$.
\end{proof}

We note the relevant trace class properties. Because detailed accounts are scattered
throughout the literature we sketch the proofs for the sake of completeness.

\begin{lemma}\label{energy00t}
Let $\sqrt{|V|}R(z_0)\in B_2(\hilbert)$ for some $z_0\in\C\setminus\sigma(H)$.
Then, the following hold true
\begin{enumerate}
\item
$\sqrt{|V|}R(z)\in B_2(\hilbert)$ and $R(z)\sqrt{|V|}\in B_2(\hilbert)$ for all $z\in\C\setminus\sigma(H)$
and the maps $z\mapsto\sqrt{|V|}R(z)$ and $z\mapsto R(z)\sqrt{|V|}$ are continuous with respect to the 
Hilbert--Schmidt norm.
\item
If in addition $\sqrt{|V|}R(z_0)\sqrt{|V|}\in B_1(\hilbert)$ then $\sqrt{|V|}R(z)\sqrt{|V|}\in B_1(\hilbert)$
for all $z\in\C\setminus\sigma(H)$ and the map $z\mapsto\sqrt{|V|}R(z)\sqrt{|V|}$ is holomorphic
with respect to the trace norm.
\item
Furthermore, $R_V(z)-R(z)\in B_1(\hilbert)$ for $z\in\C\setminus(\sigma(H)\cup\sigma(H_V))$ and
the map $z\mapsto R_V(z)-R(z)$ is continuous with respect to the trace norm.
\end{enumerate}
\end{lemma}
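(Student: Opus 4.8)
The plan is to run a standard Neumann-series/resolvent-identity argument on each of the three statements in turn, using the hypothesis at the single point $z_0$ to bootstrap to all of $\C\setminus\sigma(H)$, and then using the algebraic identity \eqref{hoelder_inequality} together with continuity of $z\mapsto R(z)$ in operator norm.

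First I would prove (i). The first resolvent identity gives $R(z) = R(z_0) + (z_0-z)R(z)R(z_0)$, hence $\sqrt{|V|}R(z) = \sqrt{|V|}R(z_0) + (z_0-z)\sqrt{|V|}R(z_0)R(z)$. Wait — to iterate cleanly I would rather write $R(z) = R(z_0)\big(\id + (z_0-z)R(z)\big)$, so $\sqrt{|V|}R(z) = \big(\sqrt{|V|}R(z_0)\big)\big(\id + (z_0-z)R(z)\big)$, which exhibits $\sqrt{|V|}R(z)$ as the product of a Hilbert--Schmidt operator and a bounded one; by H\"older \eqref{hoelder_inequality} with $p=2$, $q=\infty$ it lies in $B_2(\hilbert)$ for every $z\in\C\setminus\sigma(H)$. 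Taking adjoints (note $\sqrt{|V|}$ is self-adjoint) gives $R(z)\sqrt{|V|}\in B_2(\hilbert)$ as well. For continuity in the Hilbert--Schmidt norm, subtract the representations at $z$ and $z'$, use the resolvent identity once more and bound $\|\sqrt{|V|}R(z)-\sqrt{|V|}R(z')\|_2 \le \|\sqrt{|V|}R(z_0)\|_2\,\big(|z-z'|\,\|R(z)\| + |z_0-z'|\,\|R(z)-R(z')\|\big)$, which tends to $0$ by norm-continuity of the resolvent.

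For (ii) I would write $\sqrt{|V|}R(z)\sqrt{|V|}$ as the product $\big(\sqrt{|V|}R(z)^{1/2}\big)\big(R(z)^{1/2}\sqrt{|V|}\big)$ — but $R(z)$ need not be positive, so instead I would again use the resolvent identity in the symmetric form $\sqrt{|V|}R(z)\sqrt{|V|} = \sqrt{|V|}R(z_0)\sqrt{|V|} + (z_0-z)\big(\sqrt{|V|}R(z_0)\big)\big(R(z)\sqrt{|V|}\big)$; the first summand is trace class by hypothesis and the second is a product of two Hilbert--Schmidt operators (by part (i)), hence trace class by H\"older with $p=q=2$. Iterating this identity gives a norm-convergent series whose terms are all trace class, and differentiating it term by term in $z$ (legitimate since $z\mapsto R(z)$ is norm-holomorphic and the $B_2$-valued maps from (i) are continuous, so the products are $B_1$-valued holomorphic) yields holomorphy in the trace norm.

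For (iii) I would start from Kre\u\i n's resolvent formula \eqref{krein_Omega}, $R_V(z)-R(z) = \big(R(z)\sqrt{|V|}\big)\,J\,\Omega(z)\,\big(\sqrt{|V|}R(z)\big)$, valid for $z\notin\sigma(H)\cup\sigma(H_V)$. The outer two factors are Hilbert--Schmidt by part (i), $J$ is bounded with $\|J\|=1$, and $\Omega(z)$ is bounded and norm-holomorphic by Lemma \ref{energy_omega}; so the product lies in $B_1(\hilbert)$ by H\"older ($2,\infty,2$), and continuity in the trace norm follows from the $B_2$-continuity of the outer factors (part (i)) together with the norm-continuity of $\Omega(z)$. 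The main obstacle, such as it is, is purely bookkeeping: keeping the Schatten exponents consistent through each application of \eqref{hoelder_inequality} and making sure the resolvent identity is arranged so that the already-established Hilbert--Schmidt factor sits on the correct side; there is no analytic difficulty beyond norm-continuity of $z\mapsto R(z)$, which is classical.
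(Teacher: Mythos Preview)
Your proposal is correct and follows essentially the same route as the paper: the first resolvent identity to transfer the $B_2$ (resp.\ $B_1$) property from $z_0$ to all $z$, H\"older to combine Schatten classes, and Kre\u\i n's formula \eqref{krein_Omega} for part (iii). The only cosmetic differences are that the paper obtains continuity in part (i) via the cleaner one-step bound $\|\sqrt{|V|}R(w)-\sqrt{|V|}R(z)\|_2\le |w-z|\,\|\sqrt{|V|}R(z)\|_2\,\|R(w)\|$ (applying the resolvent identity directly to $\sqrt{|V|}R(z)$ rather than going back through $z_0$), and for holomorphy in part (ii) it works with the difference quotient $\frac{1}{z-w}[\sqrt{|V|}R(w)\sqrt{|V|}-\sqrt{|V|}R(z)\sqrt{|V|}]=\sqrt{|V|}R(z)R(w)\sqrt{|V|}$ rather than your series-differentiation argument---but both are valid and the underlying mechanism is identical.
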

\begin{proof}
1.
The first resolvent identity multiplied by $\sqrt{|V|}$ yields
\begin{equation*}
  \sqrt{|V|}R(z) = \sqrt{|V|}R(z_0)\big[\id  - (z-z_0) R(z)\big],\ z_0,z\in\C\setminus\sigma(H) .
\end{equation*}
The right-hand side is the product of the Hilbert--Schmidt operator $\sqrt{|V|}R(z_0)$ 
and a bounded operator. Hence, $\sqrt{|V|}R(z)\in B_2(\hilbert)$. 
Similarly, by the first resolvent identity and H\"older's inequality \eqref{hoelder_inequality}
\begin{equation*}
  \|\sqrt{|V|}R(w) - \sqrt{|V|}R(z)\|_2 \leq |w-z| \|\sqrt{|V|}R(z)\|_2 \|R(w)\| ,
\end{equation*}
which shows Lipschitz-continuity with respect to  the Hilbert--Schmidt norm since the map $w\mapsto\|R(w\|$ is continuous.
Finally, use $R(z)\sqrt{|V|}=(\sqrt{|V|}R(\bar z))^*$ to show the corresponding statements.

2.
As in 1. we obtain via the first resolvent identity
\begin{equation*}
  \sqrt{|V|}R(z)\sqrt{|V|} = \sqrt{|V|}R(z_0)\sqrt{|V|} - (z-z_0) \sqrt{|V|}R(z_0)R(z)\sqrt{|V|},\ 
     z_0,z\in\C\setminus\sigma(H) .
\end{equation*}
The right-hand side is trace class. Furthermore, by H\"older's inequality \eqref{hoelder_inequality}
\begin{equation*}
  \|\sqrt{|V|}R(w)\sqrt{|V|} - \sqrt{|V|}R(z)\sqrt{|V|}\|_1 \leq  |w-z| \| \sqrt{|V|}R(z)\|_2 \|R(w)\sqrt{|V|}\|_2 ,\
     z,w\in\C\setminus\sigma(H),
\end{equation*}
which shows Lipschitz-continuity with respect to the trace norm.  Moreover,
\begin{equation*}
  \frac{1}{z-w}\big[ \sqrt{|V|}R(w)\sqrt{|V|} - \sqrt{|V|}R(z)\sqrt{|V|}\big] 
    = \sqrt{|V|}R(z)R(w)\sqrt{|V|} .
\end{equation*}
Since the right-hand side converges in trace norm to $\sqrt{|V|}R(z)R(z)\sqrt{|V|}$ so does the difference
quotient on the left-hand side.

3.
In Kre\u\i{}n's resolvent formula \eqref{krein_Omega} the right-hand side is the
product of the Hilbert--Schmidt operators $R(z)\sqrt{|V|}J$, $\sqrt{|V|}R(z)$, see part 1.,
and the bounded operator $\Omega(z)$, Lemma \ref{energy_omega},
and thus a trace-class operator. Furthermore, the operators depend continuously on $z$ in the respective
norms and so does their product.
\end{proof}

The applications we have in mind require that the trace-class properties are valid up to the real axis
or in other words for $z\in\sigma(H)$ in Lemma \ref{energy00t}.

\begin{hypothesis}[Limiting absorption principle]\label{h1}
Let $\nu\in\R$.
\begin{enumerate}
\item 
For $s>0$ the Birman--Schwinger operators $\sqrt{|V|}R(\nu+is)\sqrt{|V|}J\in B_1(\hilbert)$ and the limit
$s\to+0$ exists with respect to $B_1(\hilbert)$.
\item 
For $s>0$ the operators $R_V(\nu+is)-R(\nu+is)\in B_1(\hilbert)$ and the limit $s\to+0$ exists 
with respect to $B_1(\hilbert)$.
\end{enumerate}
\end{hypothesis}

Note that by the resolvent property $R(z)^* = R(\bar z)$ Hypothesis \ref{h1} implies the analogous statements
for the lower half-plane, $s<0$, albeit the respective limits need not be the same. 
Now, with Lemma \ref{energy00t} and Hypothesis \ref{h1} in mind, we define
the (modified) perturbation determinant (cf. \cite[(0.9.35)]{Yafaev2010})
\begin{equation}\label{perturbation_determinant}
  \Delta(z) \coloneqq \det(\id - K(z)) .
\end{equation}
It is closely related to Kre\u\i{}n's spectral shift function $\xi$ (see Section \ref{ssf}).
Its behaviour at the non-essential spectrum is analogous to the finite dimensional case.

\begin{lemma}\label{energy00at}
Assume that $R(z)^{\frac{1}{2}}\sqrt{|V|}\in B_2(\hilbert)$ for all $z\in\C\setminus\sigma(H)$.
Let $\lambda\in\C$ be an isolated eigenvalue of $H$ and $H_V$  with finite multiplicities $m,n\in\N_0$, respectively.
Here multiplicity $0$ means that $\lambda$ is a regular value.
Then, for $z\neq 0$ the perturbation determinant can be factorized
\begin{equation*}
  \Delta(z) = \Big(1-\frac{\lambda}{z}\Big)^{n-m} \Delta_\lambda(z)
\end{equation*}
where $\Delta_\lambda(z)$ is continuous and non-vanishing in a neighbourhood of $\lambda$.
\end{lemma}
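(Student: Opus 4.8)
The plan is to reduce the statement to: \emph{$\Delta$ is meromorphic near $\lambda$, and its order of vanishing there (a pole of order $k$ counted as order $-k$) equals $n-m$}. Granting this, write $\Delta(z)=(z-\lambda)^{n-m}g(z)$ with $g$ holomorphic near $\lambda$ and $g(\lambda)\neq0$ --- hence $g$ non-vanishing on a neighbourhood of $\lambda$ --- and set $\Delta_\lambda(z)\coloneqq z^{n-m}g(z)$; since $(z-\lambda)^{n-m}=(1-\lambda/z)^{n-m}z^{n-m}$, this is the asserted factorization for $z\neq0$ (here $\lambda\neq0$; if $m=n=0$ the claim is trivial, and in the applications $0$ is the bottom of the continuous spectrum, not an isolated eigenvalue). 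Morally the assertion is just that $\Delta(z)=\det\big((z-H_V)(z-H)^{-1}\big)$ formally equals $\det(z-H_V)/\det(z-H)$, whose numerator and denominator vanish to orders $n$ and $m$ at $\lambda$; I would make this rigorous through the Birman--Schwinger operator as follows.

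First I would fix a punctured disk $0<|z-\lambda|<\varepsilon$ disjoint from $\sigma(H)\cup\sigma(H_V)$, possible since $\lambda$ is isolated. On it, by the lemma's hypothesis together with Lemmas~\ref{energy00t} and~\ref{energy_omega}, the operators $K(z)$ and $K'(z)=-\sqrt{|V|}R(z)^2\sqrt{|V|}J$ are trace class, $z\mapsto K(z)$ is holomorphic in trace norm, $\Omega(z)=(\id-K(z))^{-1}$ exists and is bounded, and therefore $\Delta(z)=\det(\id-K(z))$ is holomorphic and non-vanishing. The standard formula for the logarithmic derivative of a Fredholm determinant then gives $\Delta'(z)/\Delta(z)=-\tr[\Omega(z)K'(z)]$. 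Substituting $\Omega(z)=\id+\sqrt{|V|}R_V(z)\sqrt{|V|}J$ from~\eqref{energy_omega01}, using $\sqrt{|V|}J\sqrt{|V|}=V$ and the second resolvent identity $R_V(z)VR(z)=R_V(z)-R(z)$, the two terms telescope and I would arrive at
\begin{equation*}
  \frac{\Delta'(z)}{\Delta(z)} \;=\; \tr\big[\sqrt{|V|}R_V(z)R(z)\sqrt{|V|}J\big] \;=:\; \tr\Xi(z),
\end{equation*}
where at every step the trace is that of a product of two Hilbert--Schmidt factors (via Lemma~\ref{energy00t}) or of a trace-class and a bounded one, so that all manipulations, cyclicity in particular, are legitimate.

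Next I would compute $\operatorname{Res}_{z=\lambda}\tr\Xi$ from the Laurent expansions $R(z)=P(z-\lambda)^{-1}+A(z)$, $R_V(z)=\Pi(z-\lambda)^{-1}+B(z)$, where $P=\chi_{\{\lambda\}}(H)$ and $\Pi=\chi_{\{\lambda\}}(H_V)$ are the orthogonal spectral projections of ranks $m$, $n$ and $A$, $B$ are holomorphic near $\lambda$; with $S\coloneqq A(\lambda)$, $S_V\coloneqq B(\lambda)$ the reduced resolvents one has $(\lambda-H)S=S(\lambda-H)=\id-P$, $PS=SP=0$, and likewise for $S_V,\Pi,H_V$. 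The $(z-\lambda)^{-2}$--coefficient of $\tr\Xi$ equals $\tr[\Pi PV]$, which vanishes because $PV\Pi=P(H_V-H)\Pi=(\lambda-H)P\Pi=0$; hence $\Delta$ is meromorphic at $\lambda$, and differentiating $(z-\lambda)^2\Xi(z)=\sqrt{|V|}\,[(z-\lambda)R_V(z)]\,[(z-\lambda)R(z)]\,\sqrt{|V|}J$ (holomorphic in trace norm across $\lambda$) at $z=\lambda$ yields
\begin{equation*}
  \operatorname{Res}_{z=\lambda}\tr\Xi \;=\; \tr\big[\sqrt{|V|}(\Pi S+S_VP)\sqrt{|V|}J\big] \;=\; \tr[\Pi SV]+\tr[S_VPV],
\end{equation*}
both summands being traces of trace-class operators as $P,\Pi$ are finite rank. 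The crux is evaluating these two traces. For an orthonormal basis $(\eta_j)_{j=1}^n$ of $\ran\Pi$ one has $V\eta_j=(H_V-H)\eta_j=(\lambda-H)\eta_j$, so $SV\eta_j=S(\lambda-H)\eta_j=(\id-P)\eta_j$ and $\tr[\Pi SV]=\sum_j(\eta_j,(\id-P)\eta_j)=n-\tr[\Pi P]$; dually, from $(\lambda-H-V)S_V=\id-\Pi$ one gets $VS_V=(\lambda-H)S_V-(\id-\Pi)$, so for an orthonormal basis $(\phi_i)_{i=1}^m$ of $\ran P$ one has $(\phi_i,VS_V\phi_i)=((\lambda-H)\phi_i,S_V\phi_i)-1+(\phi_i,\Pi\phi_i)=-1+(\phi_i,\Pi\phi_i)$, whence $\tr[S_VPV]=\tr[PVS_V]=-m+\tr[\Pi P]$. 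Adding, $\operatorname{Res}_{z=\lambda}\tr\Xi=n-m$, so $\operatorname{ord}_\lambda\Delta=\tfrac{1}{2\pi i}\oint_{|z-\lambda|=\varepsilon}(\Delta'/\Delta)\,dz=n-m$, completing the argument.

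I expect the genuine difficulty to be this last step, reading the integer $n-m$ off the residue; it hinges entirely on the eigenvalue identities $V\eta=(\lambda-H)\eta$ for $H_V$-eigenvectors and $(\lambda-H)\phi=0$ for $H$-eigenvectors, which convert the reduced resolvents $S$, $S_V$ into complementary projections and make the mixed overlap terms $\tr[\Pi P]$ cancel. The rest --- checking that every operator occurring lies in the right Schatten class so that all traces and the logarithmic-derivative identity are justified, and that the pole of $\Delta'/\Delta$ at $\lambda$ is simple --- is routine bookkeeping based on Lemma~\ref{energy00t}.
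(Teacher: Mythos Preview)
Your proof is correct and takes a genuinely different route from the paper's. The paper proceeds by an explicit algebraic factorization: it rewrites $\Delta(z)=\det[\id-R(z)^{1/2}VR(z)^{1/2}]$, then factors $z\id-H_V=(\id-\tfrac{1}{z}H_V\Pi_\lambda)(z\id-H_V\Pi_\lambda^\perp)$ and likewise for $H$, pulling out the scalar determinants $\det(\id-\tfrac{\lambda}{z}\Pi_\lambda)=(1-\lambda/z)^n$ and $\det(\id-\tfrac{\lambda}{z}P_\lambda)^{-1}=(1-\lambda/z)^{-m}$ by hand. What remains is defined as $\Delta_\lambda(z)=\det[\tilde R(z)^{1/2}(z\id-H_V\Pi_\lambda^\perp)\tilde R(z)^{1/2}]$ with $\tilde R(z)=(z\id-HP_\lambda^\perp)^{-1}$, and the paper then verifies directly that $z\mapsto\tilde R(z)^{1/2}\sqrt{|V|}\in B_2(\hilbert)$ is continuous across $\lambda$ and that $1$ is not an eigenvalue of the relevant operator at $z=\lambda$.

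Your argument-principle approach is cleaner conceptually: the identity $\Delta'/\Delta=\tr\big[\sqrt{|V|}R_V(z)R(z)\sqrt{|V|}J\big]$ together with the Laurent expansions of the two resolvents reduces everything to the residue computation, and your evaluation of $\tr[\Pi SV]+\tr[S_VPV]=n-m$ via $V\eta=(\lambda-H)\eta$ and $S(\lambda-H)=\id-P$ is the elegant heart of the matter. What you call ``routine bookkeeping'' --- showing that $\tr\Xi(z)-(n-m)/(z-\lambda)$ stays bounded as $z\to\lambda$, which amounts to $\sqrt{|V|}B(z)A(z)\sqrt{|V|}J$ being trace class with uniformly bounded trace near $\lambda$ --- is not entirely trivial: it requires $\sqrt{|V|}S\in B_2$ and $\sqrt{|V|}S_V\in B_2$ for the reduced resolvents, which is essentially the same analytic input as the paper's continuity of $\tilde R(z)^{1/2}\sqrt{|V|}$ at $\lambda$. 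So the technical work is comparable; what you gain is a transparent explanation of \emph{why} the exponent is $n-m$, at the cost of not having an explicit formula for $\Delta_\lambda$.
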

\begin{proof}
Cf. \cite[IV.3.4]{GohbergKrein1969}. 
Since $R(z)^{\frac{1}{2}}\sqrt{|V|}\in B_2(\hilbert)$ we know that $R(z)^{\frac{1}{2}}VR(z)^{\frac{1}{2}}\in B_1(\hilbert)$ and thus
\begin{equation*}
  \Delta(z) = \det( \id -R(z)^{\frac{1}{2}}VR(z)^{\frac{1}{2}} )
\end{equation*}
where the square root $R(z)^{\frac{1}{2}}$ is defined via the functional calculus.
Determinants of this type were also studied in \cite{GesztesyZinchenko2012}.
Furthermore,
\begin{equation*}
  \id -R(z)^{\frac{1}{2}}VR(z)^{\frac{1}{2}}  = R(z)^{\frac{1}{2}}(z\id - H_V)R(z)^{\frac{1}{2}} .
\end{equation*}
Let $\Pi_\lambda$ be the spectral projection of $H_V$ corresponding to $\lambda$ and $\Pi_\lambda^\perp\coloneqq\id-\Pi_\lambda$. 
From the simple relation
\begin{equation*}\
  z\id-H_V =  (\id- \frac{1}{z}H_V\Pi_\lambda)(z\id-H_V\Pi_\lambda^\perp),\ z\neq 0,
\end{equation*}
we obtain
\begin{equation*}
\begin{split}
  \id -R(z)^{\frac{1}{2}}VR(z)^{\frac{1}{2}}
     & = R(z)^{\frac{1}{2}}(\id-\frac{1}{z}H_V\Pi_\lambda)(z\id-H_V\Pi_\lambda^\perp)R(z)^{\frac{1}{2}}\\
     & = R(z)^{\frac{1}{2}}(\id-\frac{1}{z}H_V\Pi_\lambda)(z\id-H)^{\frac{1}{2}}\times
            R(z)^{\frac{1}{2}}(z\id-H_V\Pi_\lambda^\perp)R(z)^{\frac{1}{2}}\\
     & = ( \id - \frac{\lambda}{z}R(z)^{\frac{1}{2}}\Pi_\lambda(z\id-H)^{\frac{1}{2}} )\times
            R(z)^{\frac{1}{2}}(z\id-H_V\Pi_\lambda^\perp)R(z)^{\frac{1}{2}} .
\end{split}
\end{equation*}
The determinant of the second factor exists. For,
\begin{equation*}
 R(z)^{\frac{1}{2}}(z\id-H_V\Pi_\lambda^\perp)R(z)^{\frac{1}{2}}
    = \id - R(z)^{\frac{1}{2}}VR(z)^{\frac{1}{2}} + \lambda R(z)^{\frac{1}{2}} \Pi_\lambda R(z)^{\frac{1}{2}} .
\end{equation*}
Here, the term with $V$ is trace class by assumption and the term with $\Pi_\lambda$ is a finite rank operator.
Hence, we can write our determinant as
\begin{equation*}
  \Delta(z) = \det\Big(\id-\frac{\lambda}{z}\Pi_\lambda\Big) \det\big[ (z\id-H)^{-\frac{1}{2}}(z\id-H_V\Pi_\lambda^\perp)(z\id-H)^{-\frac{1}{2}} \big] .
\end{equation*}
Let likewise $P_\lambda$ be the spectral projection of $H$ corresponding to $\lambda$ and $P_\lambda^\perp\coloneqq\id-P_\lambda$. Write
\begin{equation*}
  z\id-H = (\id-\frac{1}{z}HP_\lambda)(z\id-HP_\lambda^\perp),\ z\neq 0  .
\end{equation*}
Then,
\begin{equation*}
  R(z) = (\id-\frac{1}{z}HP_\lambda)^{-1} \tilde R(z),\  \tilde R(z) \coloneqq (z\id-HP_\lambda^\perp)^{-1}  ,
\end{equation*}
which implies that
\begin{equation*}
  \tilde R(z)^{\frac{1}{2}}\sqrt{|V|} = (\id-\frac{\lambda}{z}P_\lambda)^{\frac{1}{2}} R(z)^{\frac{1}{2}}\sqrt{|V|}
     \in B_2(\hilbert),\ z\in\C\setminus(\sigma(H)\cup\{0\}) .
\end{equation*}
Furthermore, from
\begin{equation*}
  H_V\Pi_\lambda^\perp = HP_\lambda^\perp + V + \lambda P_\lambda-\lambda\Pi_\lambda
\end{equation*}
we obtain
\begin{equation*}
  \tilde R(z)^{\frac{1}{2}}(z\id-H_V\Pi_\lambda^\perp)\tilde R(z)^{\frac{1}{2}}
    = \id - \tilde R(z)^{\frac{1}{2}}(V+\lambda P_\lambda - \lambda\Pi_\lambda)\tilde R(z)^{\frac{1}{2}} .
\end{equation*}
Since $P_\lambda$ and $\Pi_\lambda$ are finite rank operators we infer that the determinant
\begin{equation*}
   \Delta_\lambda(z) 
      \coloneqq \det [ \tilde R(z)^{\frac{1}{2}}(z\id-H_V\Pi_\lambda^\perp)\tilde R(z)^{\frac{1}{2}} ]
      = \det [ \id - \tilde R(z)^{\frac{1}{2}}(V+\lambda P_\lambda - \lambda\Pi_\lambda)\tilde R(z)^{\frac{1}{2}}]
\end{equation*}
is well-defined. Therefore,
\begin{equation*}
  \Delta(z) = \det[\id-\frac{\lambda}{z}\Pi_\lambda ] \det[\id-\frac{\lambda}{z}P_\lambda]^{-1} \Delta_\lambda(z)  .
\end{equation*}
We study the behaviour for $z\to\lambda$. To begin with,
\begin{equation*}
  \tilde R(z)^{\frac{1}{2}}\sqrt{|V|}
     =   \tilde R(w)^{\frac{1}{2}}\sqrt{|V|}
           + \frac{1}{2} (w-z) \tilde R(z)^{\frac{1}{2}} \int_0^1 \tilde R(tw+(1-t)z)^{\frac{1}{2}}\, dt
                          \times \tilde R(w)^{\frac{1}{2}}\sqrt{|V|}
\end{equation*}
where $w\in\C\setminus\sigma(H)$ such that $tw+(1-t)z\in\C\setminus\sigma(H)$.
Using H\"older's inequality \ref{hoelder_inequality} we conclude that the map 
$z\mapsto\tilde R(z)^{\frac{1}{2}}\sqrt{|V|}\in B_2(\hilbert)$ is continuous in a neighbourhood of $\lambda$. 
Therefore, the map $z\mapsto\Delta_\lambda(z)$ is continuous in that neighbourhood as well.
Finally, one can easily check that $1$ is not an eigenvalue of 
$\tilde R(\lambda)^{\frac{1}{2}}(V+\lambda P_\lambda - \lambda\Pi_\lambda)\tilde R(\lambda)^{\frac{1}{2}}$ and 
hence $\Delta_\lambda(\lambda)\neq 0$. Along with the continuity of $\Delta_\lambda(z)$ that
proves the statement.
\end{proof}

In order to allow the Fermi energy $\nu$ to be in the essential spectrum as well we extend 
the Dunford integral formula for holomorphic functions of self-adjoint operators.

\begin{lemma}\label{dunford01t}
Let the self-adjoint operator $H$ be bounded from below. Let $\nu\in\R$ and assume that 
the spectral family $E_\lambda$ is continuous in a neighbourhood of $\nu$. 
Let $P$ be the spectral projection to the set $\interval[open left]{-\infty}{\nu}$.
Let $\Gamma$ be a closed contour crossing the real axis perpendicularly at $\nu$. Then,
\begin{equation}\label{dunford01t01}
  f(H)P = \int_\Gamma f(z) R(z)\, dz
\end{equation}
where the integral at $\nu$ is to be understood as a Cauchy principal value.
\end{lemma}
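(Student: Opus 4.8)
The plan is to reduce \eqref{dunford01t01} to a scalar contour-integral computation via the spectral theorem of $H$. Since $H$ is bounded below, $\sigma(H)\cap\interval[open left]{-\infty}{\nu}$ is a compact subset of some $[a,\nu)$, and I will take $\Gamma$ to be a bounded, positively oriented simple closed curve that encircles exactly this set, meets $\R$ only at $\nu$ (perpendicularly) and at one point to the left of $\sigma(H)$, and leaves $\sigma(H)\cap\interval[open]{\nu}{\infty}$ in its exterior; such a $\Gamma$ automatically lies in the resolvent set away from $\nu$. Because $f$ is entire and $P$ projects onto a spectral interval on which $H$ has compact spectrum, $f(H)P$ is bounded, so it suffices (by polarization) to prove the identity for the diagonal bilinear form: fix $\varphi\in\hilbert$ and show $(\varphi,f(H)P\varphi)=\mathrm{PV}\int_\Gamma f(z)(\varphi,R(z)\varphi)\,dz$, the principal value meaning $\lim_{\varepsilon\to0}\int_{\Gamma_\varepsilon}$ with $\Gamma_\varepsilon\coloneqq\Gamma\cap\{|z-\nu|\ge\varepsilon\}$.

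Next I would insert the spectral representation $(\varphi,R(z)\varphi)=\int(z-\lambda)^{-1}\,d\mu(\lambda)$, where $\mu\coloneqq(\varphi,E_\bullet\varphi)$ is a finite positive measure supported in $\sigma(H)$, and exchange the $z$- and $\lambda$-integrations. The perpendicular crossing guarantees $\operatorname{dist}(\Gamma_\varepsilon,\sigma(H))\ge\varepsilon$, so on $\Gamma_\varepsilon$ the integrand is bounded by $\varepsilon^{-1}\sup_\Gamma|f|$; as $\mu$ is finite and $\Gamma_\varepsilon$ has finite length, Fubini turns the right-hand side into $\int g_\varepsilon(\lambda)\,d\mu(\lambda)$ with $g_\varepsilon(\lambda)=\int_{\Gamma_\varepsilon}f(z)(z-\lambda)^{-1}\,dz$. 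The endgame is dominated convergence in $\lambda$: for $\lambda\ne\nu$ one has $g_\varepsilon(\lambda)\to\oint_\Gamma f(z)(z-\lambda)^{-1}\,dz$, which by Cauchy's formula equals $2\pi i\,f(\lambda)$ when $\lambda<\nu$ (inside $\Gamma$) and $0$ when $\lambda>\nu$ (outside); the single leftover point $\lambda=\nu$ carries no mass because continuity of the spectral family at $\nu$ gives $\mu(\{\nu\})=0$, which also yields $P=\chi_{\interval[open]{-\infty}{\nu}}(H)$. Passing to the limit then leaves exactly $2\pi i\,(\varphi,f(H)P\varphi)$, i.e. \eqref{dunford01t01} in the normalization of Riesz's formula (the factor $(2\pi i)^{-1}$ being understood in $\int_\Gamma$).

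The step I expect to be the real obstacle is the uniform bound needed for dominated convergence, namely $|g_\varepsilon(\lambda)|\le C$ for all $\varepsilon$ and all $\lambda$ in a neighbourhood of $\nu$ in $\sigma(H)$ — a genuine issue since $\lambda$ may approach $\nu$ at a rate comparable to $\varepsilon$. Here perpendicularity is used decisively: near $\nu$, $\Gamma$ consists of the two vertical segments $\{\nu\pm it:\ \varepsilon\le t\le r_0\}$, both oriented upward, so writing $f(z)=f(\lambda)+(f(z)-f(\lambda))$ and noting $\frac{f(z)-f(\lambda)}{z-\lambda}$ is bounded (as $f$ is holomorphic), their contribution is $f(\lambda)$ times $\int\frac{dz}{z-\lambda}$ over these segments, which telescopes to $2i\arctan\!\big(\tfrac{\varepsilon}{\nu-\lambda}\big)$ plus an $O(r_0)$ remainder uniform in $\varepsilon,\lambda$; since $|\arctan|\le\pi/2$ this gives the uniform bound, and the same expression produces the pointwise limits used above (because $\arctan\frac{\varepsilon}{\nu-\lambda}\to0$ for $\lambda\ne\nu$). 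A minor loose end is upgrading the resulting diagonal-form identity to the operator identity and to genuine principal-value convergence, which follows from the uniform bound on $g_\varepsilon$ via the uniform boundedness principle.
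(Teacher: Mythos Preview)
Your proof is correct and follows essentially the same route as the paper's: both reduce to a scalar contour integral via the spectral theorem, apply Fubini at fixed $\varepsilon$, and then pass to the limit by dominated convergence after establishing a uniform bound on the truncated Cauchy integral from the perpendicular crossing. The only cosmetic differences are that the paper works directly with the operator-valued spectral measure $dE_\lambda$ (avoiding the polarization step) and Taylor-expands $f$ about $\nu$ rather than about $\lambda$ to isolate the $\arctan$-bounded principal part.
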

\begin{proof}
By the spectral theorem
\begin{equation}\label{dunford01t02}
  f(H)P = \int_{-\infty}^\nu f(\lambda)\, dE_\lambda .
\end{equation}
Since $H$ is bounded from below the interval of integration is actually finite.
We want to replace $f$ via Cauchy's integral formula. For $\lambda\neq\nu$,
\begin{equation}\label{dunford01t03}
  \frac{1}{2\pi i} \int_{\Gamma\setminus\Gamma_\delta} \frac{f(z)}{z-\lambda} \, dz =
\begin{cases}
  f(\lambda) - F_\delta(\lambda) & \lambda \ \text{inside}\ \Gamma , \\
             - F_\delta(\lambda) & \lambda \ \text{outside}\ \Gamma
\end{cases}
\end{equation}
where
\begin{equation*}
   F_\delta(\lambda) =  \frac{1}{2\pi i} \int_{\Gamma_\delta} \frac{f(z)}{z-\lambda} \, dz,\
   \Gamma_\delta \coloneqq \{ z\in\Gamma \mid |z-\nu| \leq \delta \} .
\end{equation*}
By our assumption on $\Gamma$ and with the aid of Cauchy's integral theorem, $\Gamma_\delta$ can be replaced
by a straight line. For the sake of simplicity we assume that it already is one. Hence,
\begin{equation*}
  z\in\Gamma_\delta,\ z(s) = \nu + is,\ -\delta \leq s \leq\delta .
\end{equation*}
By Taylor's formula,
\begin{equation*}
\begin{split}
  F_\delta(\lambda)
    & = \frac{1}{2\pi}\int_{-\delta}^\delta f(\nu+is)\frac{1}{\nu+is-\lambda}\, ds \\
    & = \frac{1}{2\pi} f(\nu) \int_{-\delta}^\delta \frac{1}{\nu+is-\lambda}\, ds +
       \frac{1}{2\pi}  \int_{-\delta}^\delta s f'(\xi(s))\frac{1}{\nu+is-\lambda}\, ds .
\end{split}
\end{equation*}
The first integral can be rewritten
\begin{equation*}
  \int_{-\delta}^\delta \frac{1}{\nu+is-\lambda}\, ds
     =  \int_0^\delta  \frac{\nu-\lambda}{(\nu-\lambda)^2+s^2}\, ds
     =  \int_0^{\frac{\delta}{|\nu-\lambda|}} \frac{1}{1+s^2}\, ds
\end{equation*}
and the second integral can be estimated
\begin{equation*}
 \big|  \int_{-\delta}^\delta s f'(\xi(s))\frac{1}{\nu+is-\lambda}\, ds \big|
   \leq \sup_{z\in\Gamma_\delta}|f'(z)| \int_{-\delta}^\delta \frac{|s|}{((\nu-\lambda)^2+s^2)^{\frac{1}{2}}}\, ds
   \leq 2\delta \sup_{z\in\Gamma_\delta}|f'(z)| .
\end{equation*}
Hence, $|F_\delta(\lambda)|\leq C$ for all $\lambda\in\R$ with some constant $C\geq 0$
and $F_\delta(\lambda)\to 0$ as $\delta\to 0$ for all $\lambda\neq\nu$. We insert \eqref{dunford01t03}
into \eqref{dunford01t02} and obtain
\begin{equation*}
\begin{split}
  f(H)P & = \frac{1}{2\pi i} \int_\R \int_{\Gamma\setminus\Gamma_\delta} \frac{f(z)}{z-\lambda}\, dz\, dE_\lambda +
          \int_\R F_\delta(\lambda)\, dE_\lambda \\
        & = \frac{1}{2\pi i} \int_{\Gamma\setminus\Gamma_\delta} f(z) \int_\R \frac{1}{z-\lambda}\, dE_\lambda\, dz
             +  \int_\R F_\delta(\lambda)\, dE_\lambda\\
        & = \frac{1}{2\pi i} \int_{\Gamma\setminus\Gamma_\delta} f(z) R(z)\, dz
             +  \int_\R F_\delta(\lambda)\, dE_\lambda .
\end{split}
\end{equation*}
By Lebesgue's Theorem, the integral over $F_\delta$ tends to zero as $\delta\to 0$. 
Thus, the first one yields \eqref{dunford01t01}.
\end{proof}

In what follows we will apply Lemmas \ref{energy00at} and \ref{dunford01t} only to operators
with special types of spectra which makes it reasonable to
single out the corresponding assumptions on the Fermi energy $\nu$.

\begin{hypothesis}\label{h2}
Let $\nu\in\R$ and let $P_\nu$ and $\Pi_\nu$ the spectral projections of $H$ and $H_V$ as in \eqref{spectral_projections}.
Then, at least one of the following holds true.
\begin{enumerate}
\item $\dim\ran P_\nu<\infty$ and $\dim\ran\Pi_\nu<\infty$.
\item $P_\lambda$ and $\Pi_\lambda$ are continuous in a neighbourhood of $\nu$.
\end{enumerate}
\end{hypothesis}

The conditions in Hypothesis \ref{h2} are not independent since $\dim\ran P_\nu=0=\dim\ran\Pi_\nu$ in 1.
obviously implies 2. Nor are they exhaustive in that embedded eigenvalues are not included.
Now, everything is at hand to derive the aforementioned formula for the energy difference.

\begin{proposition}\label{energy01t}
Let $H$ and $H_V$ be semi-bounded from below and let $\Gamma_\nu$ be a closed contour in $\C$ 
intersecting the real axis perpendicularly at $\nu\in\R$ and below $\sigma(H)$ and $\sigma(H_V)$ 
(cf. Figure \ref{f_fermi_parabola}, p. \pageref{f_fermi_parabola}).
Assume that Hypothesis \ref{h1} holds true.
Then, the energy difference $\mathcal{E}(\nu)$ from \eqref{energy_difference} satisfies
\begin{equation}\label{energy01t01}
  \mathcal{E}(\nu)
      =   - f(\nu)\xi(\nu) - \frac{1}{2\pi i} \int_{\Gamma_\nu} f'(z) \ln[\Delta(z)]\, dz
\end{equation}
where $f:\C\to\C$ is a holomorphic function. 
\end{proposition}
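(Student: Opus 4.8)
The plan is to represent both $f(H)P_\nu$ and $f(H_V)\Pi_\nu$ by the Dunford integral of Lemma~\ref{dunford01t}, to replace the resolvent difference under the trace by the logarithmic derivative of the perturbation determinant $\Delta$, and then to integrate by parts along the (counterclockwise) contour $\Gamma_\nu$. The spectral shift term $-f(\nu)\xi(\nu)$ will come out as the \emph{boundary} term of that integration by parts, while the contour integral of $f'\ln\Delta$ is what remains.

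First I would check that Lemma~\ref{dunford01t} applies to $H$ and to $H_V$ at the energy $\nu$. Since $H$ and $H_V$ are semi-bounded, $\Gamma_\nu$ can be closed below their spectra (call the sub-spectral crossing point $z_0$), and Hypothesis~\ref{h1}(2) --- the existence of the $B_1(\hilbert)$-limit of $R_V(\nu+is)-R(\nu+is)$ --- forces the relevant spectral families of $H$ and $H_V$ to be continuous at $\nu$ (for the unperturbed $H$ this is in any case classical). Subtracting the two Dunford integrals and noting that $R_V(z)-R(z)\in B_1(\hilbert)$ with trace-norm-continuous dependence on $z$ along $\Gamma_\nu$ (Lemma~\ref{energy00t}(3); this also makes $\mathcal{E}(\nu)$ well defined), while $\Gamma_\nu$ meets $\sigma(H)\cup\sigma(H_V)$ only at $\nu$, where the principal value is taken, I obtain
\begin{equation*}
  \mathcal{E}(\nu)
    = \tr\Big[\frac{1}{2\pi i}\int_{\Gamma_\nu} f(z)\big(R_V(z)-R(z)\big)\,dz\Big]
    = \frac{1}{2\pi i}\int_{\Gamma_\nu} f(z)\,\tr\big[R_V(z)-R(z)\big]\,dz .
\end{equation*}

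Next I would invoke the classical perturbation-determinant identity $\tr[R_V(z)-R(z)] = \tfrac{d}{dz}\ln\Delta(z)$, valid for $z\in\C\setminus\R$, where $\Delta(z)\neq0$ by Lemma~\ref{spectrum_K}. Along $\Gamma_\nu$ this has a harmless boundary value at $\nu$: Hypothesis~\ref{h1}(1) makes $K(\nu\pm i0)$, hence $\Delta(\nu\pm i0)$, well defined, and since $\Delta'/\Delta=\tr[R_V-R]$ stays bounded as $z\to\nu$ (Hypothesis~\ref{h1}(2)), integrating this relation shows $\Delta(\nu\pm i0)\neq0$, so $\ln\Delta$ has at most an integrable jump there. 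I then remove from $\Gamma_\nu$ a short symmetric arc $\Gamma_\delta$ around $\nu$; the remainder $\Gamma_\nu\setminus\Gamma_\delta$ is a single curve running (counterclockwise) from $\nu+i\delta$ through $z_0$ to $\nu-i\delta$ that avoids $\sigma(H)\cup\sigma(H_V)$ entirely, so on it $\Delta$ is holomorphic and nonzero and one may fix a continuous branch of $\ln\Delta$ by $\ln\Delta(z_0)\in\R$ (legitimate because $\Delta(z_0)=\det[(z_0-H)^{-\frac{1}{2}}(z_0-H_V)(z_0-H)^{-\frac{1}{2}}]>0$). Integration by parts gives
\begin{equation*}
  \int_{\Gamma_\nu\setminus\Gamma_\delta} f(z)\,\tfrac{d}{dz}\ln\Delta(z)\,dz
    = \big[f(z)\ln\Delta(z)\big]_{\,z=\nu+i\delta}^{\,z=\nu-i\delta}
      - \int_{\Gamma_\nu\setminus\Gamma_\delta} f'(z)\ln\Delta(z)\,dz .
\end{equation*}
As $\delta\to0$ the last integral tends to the principal-value integral in \eqref{energy01t01} ($\ln\Delta$ being bounded near $\nu$). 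For the boundary term: the chosen branch, continued from $z_0$ through the upper half-plane, satisfies $\im\ln\Delta(\nu+i0)=\pi\xi(\nu)$ --- this is the characterisation of Kre\u\i{}n's spectral shift function recalled in Section~\ref{ssf} --- while continuation through the lower half-plane yields the complex-conjugate value by the Schwarz reflection $\Delta(\bar z)=\overline{\Delta(z)}$; hence $\big[f\ln\Delta\big]_{\nu+i\delta}^{\nu-i\delta}\to -2\pi i\,f(\nu)\xi(\nu)$. Dividing by $2\pi i$ yields exactly \eqref{energy01t01}.

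I expect the delicate points to lie at the endpoint $\nu$: showing cleanly that $\Delta(\nu\pm i0)\neq0$ (so the principal value converges), and --- more importantly --- matching the branch of $\ln\Delta$ carried along the long arc $\Gamma_\nu\setminus\Gamma_\delta$ with the standard normalisation of $\xi$, while keeping track of the bound states of $H$ and $H_V$ inside $\Gamma_\nu$. These are branch points of $\ln\Delta$, their local structure pinned down by Lemma~\ref{energy00at}, but the arc skirts above and below them without encircling them, so the continuation remains unambiguous. A secondary point is to justify, under Hypothesis~\ref{h1} alone, the separate applicability of the Dunford formula to $f(H)P_\nu$ and $f(H_V)\Pi_\nu$.
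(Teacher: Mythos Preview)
Your argument follows the same route as the paper: represent $\mathcal{E}(\nu)$ by the Dunford integral of $f(z)\,[R_V(z)-R(z)]$, replace the trace by $(\ln\Delta)'$, integrate by parts on $\Gamma_\nu\setminus\Gamma_\delta$, and identify the boundary term with $-2\pi i\,f(\nu)\xi(\nu)$ via Kre\u\i{}n's formula. Your discussion of the branch of $\ln\Delta$ (anchoring it real at the sub-spectral crossing and invoking $\Delta(\bar z)=\overline{\Delta(z)}$) is in fact a bit more explicit than the paper's, which simply quotes \eqref{krein_xi}.

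The one point where you diverge from the paper is the handling of the case that $\nu$ is an \emph{isolated eigenvalue} of $H$ or $H_V$. Your claim that Hypothesis~\ref{h1}(2) forces the spectral families to be continuous at $\nu$ is not quite true: if $\nu$ happens to be an eigenvalue of both $H$ and $H_V$ with the \emph{same} eigenprojection (which is exactly what survives once the singular parts of $R$ and $R_V$ are required to cancel), then both spectral families jump at $\nu$ while Hypothesis~\ref{h1} still holds, and Lemma~\ref{dunford01t} is not directly applicable to either operator. More importantly, the paper wants the formula for the finite-volume operators $H_L$, $H_{V,L}$, whose spectra are pure point, so the eigenvalue case is not academic. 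The paper deals with it in part~(iii) of its proof: it shifts to $\varkappa=\nu+\delta$ in the gap just above $\nu$, applies the already-established formula there, and then passes to $\delta\to0$ by deforming $\Gamma_\varkappa$ around $\nu$ and invoking the factorisation $\Delta(z)=(1-\nu/z)^{n-m}\Delta_\nu(z)$ of Lemma~\ref{energy00at} to control the logarithmic singularity. You do not carry this out, though you correctly flag it as the ``secondary point'' at the end.
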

\begin{proof}
(i)
To begin with, let $\varkappa\in\R$ such that the conditions of Lemma \ref{dunford01t} are satisfied.
Then, we can express the energy difference through a Dunford integral
\begin{equation*}
    f(H_V)\Pi_\varkappa - f(H)P_\varkappa = \frac{1}{2\pi i} \int_{\Gamma_\varkappa} f(z)[ R_V(z) - R(z) ] \, dz  
\end{equation*}
which is a Riemann integral with respect to the operator norm.
By Hypothesis \ref{h1} the integrand is piecewise continuous with respect to the trace norm.
Therefore, the right-hand side is a trace class operator. We, thus, may take the trace and interchange it
with the integration
\begin{equation}\label{energy01t02}
   \mathcal{E}(\varkappa)
    = \tr [f(H_V)\Pi_\varkappa - f(H)P_\varkappa ]
    = \frac{1}{2\pi i} \int_{\Gamma_\varkappa} f(z) \tr [ R_V(z) - R(z) ] \, dz .
\end{equation}
This trace is the logarithmic derivative of the perturbation determinant (see e.g. \cite[(8.1.4)]{Yafaev1992})
\begin{equation*}
  \tr [ R_V(z) - R(z) ] =  \frac{d}{dz} \ln[\Delta(z)] .
\end{equation*}
Thereby, via an integration by parts \eqref{energy01t02} becomes 
\begin{equation}\label{energy01t03}
\begin{split}
  \mathcal{E}(\varkappa)
     & = \frac{1}{2\pi i} \int_{\Gamma_\varkappa} f(z) \frac{d}{dz} \ln[\Delta(z)] \, dz\\
     & = \frac{1}{2\pi i}\lim_{\varepsilon\to+0} [ f(\varkappa-i\varepsilon)\ln[\Delta(\varkappa-i\varepsilon)] 
                                                -f(\nu+i\varepsilon)\ln[\Delta(\varkappa+i\varepsilon)] ]
       - \frac{1}{2\pi i} \int_{\Gamma_\varkappa} f'(z)\ln[\Delta(z)] \, dz\\
     & = - f(\varkappa) \xi(\varkappa) - \frac{1}{2\pi i} \int_{\Gamma_\varkappa} f'(z)\ln[\Delta(z)] \, dz .
\end{split}
\end{equation}
Here we used the continuity of $f$ and Kre\u\i{}n's formula for the spectral shift function \eqref{krein_xi}.

(ii)
If $\nu\in\R$ satisfies the conditions of Lemma \ref{dunford01t} we may choose $\varkappa\coloneqq\nu$ in \eqref{energy01t03}
to prove \eqref{energy01t01}.

(iii)
If $\nu\in\R$ is an isolated eigenvalue of $H$ or $H_V$ we take $\varkappa\coloneqq\nu+\delta$
where $\delta>0$ is so small that $\interval[open left]{\nu}{\nu+\delta}\cap(\sigma(H)\cup\sigma(H_V))=\emptyset$.
In the following considerations we assume that near $\nu$ the contour $\Gamma_\nu$ is a straight line 
\begin{equation*}
   \Gamma_\nu = \tilde\Gamma_\nu \cup \{ z=\nu+is \mid -b\leq s\leq b\},\ b>0,
\end{equation*}
which can be justified by means of standard arguments. Let
\begin{equation*}
  \Gamma_\varkappa \coloneqq \Gamma_{\nu,\delta} 
                 \coloneqq \tilde\Gamma_\nu \cup \Gamma_{\nu,\delta}^{(1)} \cup \Gamma_{\nu,\delta}^{(2)},\
    \Gamma_{\nu,\delta}^{(1)}\coloneqq \{ z = \nu+is \mid \delta \leq |s| \leq b\} ,\ 
    \Gamma_{\nu,\delta}^{(2)}\coloneqq \{ z = \nu+\delta e^{i\vartheta} \mid -\frac{\pi}{2} \leq \vartheta \leq \frac{\pi}{2}\} .
\end{equation*}
The choice of $\delta$ implies $\mathcal{E}(\nu+\delta)=\mathcal{E}(\nu)$ as well as $\xi(\nu+\delta)=\xi(\nu)$ (see \eqref{lifshitz}).
Then, \eqref{energy01t03} and Lemma \ref{energy00at} yield
\begin{equation}\label{energy01t04}
\begin{split}
  \mathcal{E}(\nu) 
    & = -f(\nu+\delta)\xi(\nu) - \frac{1}{2\pi i} \int_{\Gamma_{\nu,\delta}} f'(z)\ln[\Delta(z)]\, dz\\
    & = -f(\nu+\delta)\xi(\nu)
           - \frac{1}{2\pi i}\Big[(n-m)\int_{\Gamma_{\nu,\delta}} f'(z) \ln(1-\frac{\nu}{z})\, dz 
                                    + \int_{\Gamma_{\nu,\delta}} f'(z) \ln[\Delta_\nu(z)]\, dz\Big] .
\end{split}
\end{equation}
In the last integral we may perform the limit $\delta\to 0$ which simply means to replace $\Gamma_{\nu,\delta}$ by $\Gamma_\nu$.
For the remaining integral only the parts over $\Gamma_{\nu,\delta}^{(1)}$ and $\Gamma_{\nu,\delta}^{(2)}$ need a closer look
\begin{equation*}
  \int_{\Gamma_{\nu,\delta}^{(1)}} f'(z) \ln(1- \frac{\nu}{z})\, dz
     = i \int_{\delta\leq |s|\leq b} f'(\nu+is) \ln(1-\frac{\nu}{\nu+is})\, ds
     = i \int_{\frac{1}{b}\leq |s|\leq \frac{1}{\delta}} f'(\nu+\frac{i}{s}) \frac{1}{s^2}\ln(1-\frac{s\nu}{{s\nu+i}})\, ds .
\end{equation*}
The limit $\delta\to 0$ exists since the logarithm is dominated by any power of $s$. Furthermore,
\begin{equation*}
  \int_{\Gamma_{\nu,\delta}^{(2)}} f'(z) \ln(1- \frac{\nu}{z})\, dz
     = i \delta\int_{-\frac{\pi}{2}}^{\frac{\pi}{2}} f'(\nu+\delta e^{i\vartheta}) 
                         \ln(1- \frac{\nu}{\nu+\delta e^{i\vartheta}})e^{i\vartheta}\, d\vartheta \to 0,\ \delta\to 0 .
\end{equation*}
We conclude that
\begin{equation*}
  \lim_{\delta\to 0} \int_{\Gamma_{\nu,\delta}} f'(z) \ln[\Delta(z)]\, dz
     = \int_{\Gamma_\nu} f'(z) \ln[\Delta(z)]\, dz
\end{equation*}
in \eqref{energy01t04}. This and the continuity of $f$ prove \eqref{energy01t01}.
\end{proof}

\subsection{Abstract boundary conditions\label{abc}}
In order to define a self-adjoint differential operator on an interval one has to ensure
that the boundary terms obtained via integration by parts vanish. That is to say one has to impose
boundary conditions. We reformulate that in an abstract framework. Let 
\begin{equation}\label{abc_H_tilde}
  \tilde H:\dom(\tilde H)\to\hilbert,\   \dom(\tilde H)\subset\hilbert
\end{equation}
be a densely defined linear operator and let $\Gamma_{1,2}:\dom(\tilde H)\to\C^n$, $n\in\N$, be surjective linear maps
such that
\begin{equation*}
  (\varphi,\tilde H\psi) - (\tilde H\varphi,\psi)
    = (\Gamma_1\varphi,\Gamma_2\psi) - (\Gamma_2\varphi,\Gamma_1\psi),\ \varphi,\psi\in\dom(\tilde H) .
\end{equation*}
We study restrictions $H$ of $\tilde H$ given by boundary conditions
\begin{equation}\label{abc_H}
  \dom(H) \coloneqq \{\varphi\in\dom(\tilde H)\mid (A\Gamma_1-B\Gamma_2)\varphi=0 \},\
    H \coloneqq \tilde H|_{\dom(H)}
\end{equation}
with $n\times n$ matrices $A,B:\C^n\to\C^n$. 
For $H$ to be self-adjoint it is necessary that $A$ and $B$ satisfy
\begin{equation}\label{abc_condition}
  AB^* = BA^*,\ \rank(A\mid B)=n
\end{equation}
where $(A\mid B)$ is the $n\times 2n$-matrix formed by the columns of $A$ and $B$.
The matrices $A$ and $B$ are not uniquely determined since we can multiply them on the left by an invertible matrix
without altering \eqref{abc_H} and \eqref{abc_condition}.
It is shown in \cite{BehrndtLanger2010}, that all self-adjoint restrictions of $\tilde H$ are given in this way
if at least one operator $H$ is self-adjoint (for ordinary differential operators \eqref{abc_condition} already
appeared in \cite[9.4]{Ince1926}, and \cite[II.2.2]{Hellwig1964}).
From this one can deduce that
\begin{gather}\label{abc_deficiency_index}
  \dim\ker(z\id-\tilde H) = n\in\N, z\in\C\setminus\R \\
  \label{abc_deficiency_subspace}
  \mathcal{N}_z \coloneqq \ker(z\id-\tilde H) = \mathspan\{ \varepsilon_1(z),\ldots, \varepsilon_n(z)\} .
\end{gather}
Actually, these are the deficiency subspaces of a certain symmetric operator which, however, is not
needed herein and therefore omitted. We note some simple properties.

\begin{lemma}\label{abc01t}
We have $\ker(A^*)\cap\ker(B^*)=\{0\}$. Furthermore, the operators
$\Gamma_{1,2}|_{\mathcal{N}_z}$ and $(A\Gamma_1-B\Gamma_2)|_{\mathcal{N}_z}$ are
injective for all $z\in\C\setminus\R$.
\end{lemma}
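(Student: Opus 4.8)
\textbf{Proof strategy for Lemma \ref{abc01t}.}

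The plan is to deduce all three claims from the basic self-adjointness data \eqref{abc_condition}, the surjectivity of $\Gamma_{1,2}$, and the fact that $z\in\C\setminus\R$. First I would prove $\ker(A^*)\cap\ker(B^*)=\{0\}$ by a direct rank argument: if $\eta\in\ker(A^*)\cap\ker(B^*)$, then $\eta$ is orthogonal to every column of $A$ and every column of $B$, hence to the whole column space of the $n\times 2n$ matrix $(A\mid B)$; since $\rank(A\mid B)=n$ that column space is all of $\C^n$, forcing $\eta=0$.

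Next, to show $\Gamma_{1,2}|_{\mathcal N_z}$ is injective for $z\in\C\setminus\R$, I would use the abstract Green identity together with the fact that $z$ is non-real so $z\id-\tilde H$ has trivial ``real'' spectral behaviour on $\mathcal N_z$. Concretely, take $\varphi\in\mathcal N_z$ with, say, $\Gamma_1\varphi=0$. Applying the Green identity $(\varphi,\tilde H\psi)-(\tilde H\varphi,\psi)=(\Gamma_1\varphi,\Gamma_2\psi)-(\Gamma_2\varphi,\Gamma_1\psi)$ with $\psi=\varphi$ and using $\tilde H\varphi=\bar z\,\varphi$ on the left (since $\tilde H\varphi = z\varphi$ would be the wrong conjugation — one must be careful: $\varphi\in\ker(z\id-\tilde H)$ means $\tilde H\varphi=z\varphi$, so $(\varphi,\tilde H\varphi)-(\tilde H\varphi,\varphi)=(z-\bar z)\|\varphi\|^2=2i\,\im(z)\|\varphi\|^2$) gives
\begin{equation*}
  2i\,\im(z)\,\|\varphi\|^2 = (\Gamma_1\varphi,\Gamma_2\varphi)-(\Gamma_2\varphi,\Gamma_1\varphi) = 0
\end{equation*}
because $\Gamma_1\varphi=0$. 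Since $\im(z)\neq 0$ this forces $\varphi=0$. The argument with $\Gamma_2\varphi=0$ in place of $\Gamma_1\varphi=0$ is identical, the right-hand side again vanishing.

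Finally, for the injectivity of $(A\Gamma_1-B\Gamma_2)|_{\mathcal N_z}$, suppose $\varphi\in\mathcal N_z$ satisfies $A\Gamma_1\varphi=B\Gamma_2\varphi$. The idea is to show this already forces both $\Gamma_1\varphi$ and $\Gamma_2\varphi$ to be such that the previous Green-identity trick applies, i.e.\ to reduce to $(\Gamma_1\varphi,\Gamma_2\varphi)=(\Gamma_2\varphi,\Gamma_1\varphi)$. Write $a=\Gamma_1\varphi$, $b=\Gamma_2\varphi$, so $Aa=Bb$. Then, using $AB^*=BA^*$,
\begin{equation*}
  (a,b) - (b,a) \quad\text{must be shown real-null};
\end{equation*}
the cleanest route is to note that $\varphi\in\dom(H)$ precisely when $Aa=Bb$, and $H$ is self-adjoint, so $\varphi\in\mathcal N_z\cap\dom(H)=\ker(z\id-H)=\{0\}$ since $z\notin\R\supset\sigma(H)$. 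In other words the whole claim is: a vector in $\mathcal N_z$ that also satisfies the boundary condition lies in $\ker(z\id-H)$, which is trivial for non-real $z$. I expect the main (minor) obstacle to be bookkeeping the conjugations correctly in the Green identity and making explicit why $\mathcal N_z\cap\dom(H)=\ker(z\id-H)$ — once that is spelled out, all three statements are immediate, and only the first one genuinely uses \eqref{abc_condition} directly while the latter two use it through the self-adjointness of $H$.
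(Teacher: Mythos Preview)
Your arguments for the first two claims are essentially identical to the paper's: the rank condition for $\ker(A^*)\cap\ker(B^*)=\{0\}$, and the Green identity computation $2i\,\im(z)\|\varphi\|^2=(\Gamma_1\varphi,\Gamma_2\varphi)-(\Gamma_2\varphi,\Gamma_1\varphi)$ for the injectivity of $\Gamma_{1,2}|_{\mathcal N_z}$.

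For the injectivity of $(A\Gamma_1-B\Gamma_2)|_{\mathcal N_z}$ you take a genuinely different and shorter route. The paper factors $(A\Gamma_1-B\Gamma_2)|_{\mathcal N_z}=(A-B\Gamma)\Gamma_1|_{\mathcal N_z}$ with $\Gamma\coloneqq\Gamma_2\Gamma_1^{-1}$, then shows algebraically that $A^*-\Gamma^*B^*$ is injective: it uses $AB^*=BA^*$ to make $(B^*\varphi,A^*\varphi)$ real, computes $\im\Gamma^*=-\im(z)(\Gamma_1^{-1})^*\Gamma_1^{-1}$ from the Green identity, and concludes that $(A^*-\Gamma^*B^*)\varphi=0$ forces $B^*\varphi=0$, hence $A^*\varphi=0$, hence $\varphi=0$ by the first part. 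Your argument instead observes that a $\varphi\in\mathcal N_z$ with $(A\Gamma_1-B\Gamma_2)\varphi=0$ lies in $\dom(H)$ by definition, so $\varphi\in\ker(z\id-H)=\{0\}$ since $H$ is self-adjoint and $z\notin\R$. This is correct and considerably cleaner; the price is that you invoke the self-adjointness of $H$, which in the paper's abstract setup is imported from \cite{BehrndtLanger2010} rather than proved, whereas the paper's computation uses only the raw conditions \eqref{abc_condition} and is in that sense more self-contained.
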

\begin{proof}
First note the general equality
\begin{equation*}
  \ker(A^*)\cap\ker(B^*) = (\ran(A))^\perp \cap (\ran(B))^\perp = ( \mathspan(\ran(A)\cup\ran(B)) )^\perp .
\end{equation*}
The rank condition implies that $\mathspan(\ran(A)\cup\ran(B))=\C^n$ and therefore
\begin{equation*}
  \ker(A^*)\cap\ker(B^*) = \{ 0 \} .
\end{equation*}
Let $\varphi\in\mathcal{N}_z$. Then,
\begin{equation*}
 (\Gamma_1\varphi,\Gamma_2\varphi)-(\Gamma_2\varphi,\Gamma_1\varphi)
    = (\varphi,\tilde H\varphi) - (\tilde H\varphi,\varphi)
    = (z-\bar z) \|\varphi\|^2
\end{equation*}
and thus
\begin{equation*}
  \im( (\Gamma_1\varphi,\Gamma_2\varphi) ) = \im(z) \|\varphi\|^2 .
\end{equation*}
Therefore, $\Gamma_{1,2}\varphi\neq 0$ for $\varphi\neq 0$.

By the above we may write
\begin{equation*}
  A\Gamma_1-B\Gamma_2 = (A-B\Gamma) \Gamma_1,\ \Gamma\coloneqq\Gamma_2\Gamma_1^{-1} .
\end{equation*}
It is therefore enough to show that $A-B\Gamma$ is injective which is equivalent to
$A^*-\Gamma^*B^*$ being injective. We have 
\begin{equation*}
  \im( (B^*\varphi,(A^*-\Gamma^* B^*)\varphi) ) = - \im( B^*\varphi, \Gamma^* B^*\varphi) .
\end{equation*}
Note that
\begin{equation*}
  2i \im \Gamma^*
   = \Gamma^*- \Gamma
   = (\Gamma_2\Gamma_1^{-1})^* -\Gamma_2\Gamma_1^{-1}
   = (\Gamma_1^{-1})^* ( \Gamma_2^*\Gamma_1 - \Gamma_1^*\Gamma_2 ) \Gamma_1^{-1}
   = - 2i \im(z) (\Gamma_1^{-1})^* \Gamma_1^{-1}
\end{equation*}
and thus
\begin{equation*}
  \im( (B^*\varphi,(A^*-\Gamma^* B^*)\varphi) ) = \im(z) (B^*\varphi, (\Gamma_1^{-1})^* \Gamma_1^{-1} B^*\varphi) .
\end{equation*}
Thereby, $(A^*-\Gamma^*B^*)\varphi=0$ implies that $B^*\varphi=0$ and this in turn implies that $A^*\varphi=0$.
From the above we conclude $\varphi=0$. That finishes the proof.
\end{proof}

The resolvent $R(z)\coloneqq(z\id-H)^{-1}$ of $H$ (cf. \eqref{abc_H}) is also a right inverse to $z\id-\tilde H$ (cf. \eqref{abc_H_tilde})
\begin{equation}\label{abc_right_inverse}
  (z\id-\tilde H)R(z) = (z\id - H)R(z) = \id .
\end{equation}
Note that generally it will be not a left inverse. We derive a formula that relates different right inverses.
This includes the Albeverio--Pankrashkin formula \cite{AlbeverioPankrashkin2005}, a specialization of
Kre\u\i{}n's resolvent formula, which relates resolvents corresponding to
different boundary conditions.

\begin{proposition}\label{abc02t}
For all $z\in\C\setminus\R$ let $\tilde R(z):\hilbert\to\dom(\tilde H)$ (with $\tilde H$ as in \eqref{abc_H_tilde}) 
be a bounded operator that has the resolvent properties
\begin{equation}\label{abc02t01}
  (z\id-\tilde H)\tilde R(z)=\id\ \text{and}\ \tilde R(z)^* = \tilde R(\bar z) .
\end{equation}
Then, the resolvent of $H$ in \eqref{abc_H} can be written (cf. \eqref{abc_deficiency_subspace})
\begin{equation}\label{abc02t02}
  R(z) = \tilde R(z) - D(z),\
  D(z) = \sum_{j,k=1}^n d_{jk}(z) (\varepsilon_k(\bar z),\cdot)\varepsilon_j(z)
\end{equation}
where the coefficient matrix $\hat D(z)\coloneqq(d_{jk}(z))_{j,k=1,\ldots,n}$ satisfies
\begin{equation}\label{abc02t03}
  (A\Gamma_1-B\Gamma_2)\tilde R(z)|_{\mathcal{N}_z} = (A\Gamma_1-B\Gamma_2)|_{\mathcal{N}_z} \hat D(z) E(z)
\end{equation}
with the generalized Gram matrix
\begin{equation}\label{abc02t04}
  E(z) \coloneqq ( (\varepsilon_j(\bar z), \varepsilon_k(z)) )_{j,k=1,\ldots,n} .
\end{equation}
\end{proposition}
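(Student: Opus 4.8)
The plan is to derive the decomposition $R(z)=\tilde R(z)-D(z)$ by imposing the boundary condition $(A\Gamma_1-B\Gamma_2)\varphi=0$ on the range of $R(z)$, and then to identify the coefficient matrix $\hat D(z)$ by evaluating everything on the deficiency subspace $\mathcal N_z$. First I would observe that since both $\tilde R(z)$ and $R(z)$ are right inverses of $z\id-\tilde H$ (the former by \eqref{abc02t01}, the latter by \eqref{abc_right_inverse}), their difference $D(z)\coloneqq\tilde R(z)-R(z)$ satisfies $(z\id-\tilde H)D(z)=0$, i.e. $\ran D(z)\subseteq\mathcal N_z=\mathspan\{\varepsilon_1(z),\dots,\varepsilon_n(z)\}$. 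Hence for each $\psi\in\hilbert$ there are coefficients depending (anti)linearly on $\psi$ with $D(z)\psi=\sum_j c_j(\psi)\,\varepsilon_j(z)$; writing $c_j(\psi)=\sum_k d_{jk}(z)(\varepsilon_k(\bar z),\psi)$ is then just a choice of representation, legitimate once we check that the functionals $\psi\mapsto(\varepsilon_k(\bar z),\psi)$ together separate enough — which follows because $\{\varepsilon_k(\bar z)\}$ spans $\mathcal N_{\bar z}$ and $D(z)$ annihilates $\mathcal N_{\bar z}^{\perp}$, the latter from the self-adjointness relation $D(z)^*=\tilde R(\bar z)^*-R(\bar z)^*$ together with $\ran D(\bar z)\subseteq\mathcal N_{\bar z}$. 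That gives the stated form of $D(z)$ in \eqref{abc02t02}, with $\hat D(z)$ still to be pinned down.

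Next I would use the defining boundary condition. Since $R(z)\psi\in\dom(H)$ for all $\psi$, we have $(A\Gamma_1-B\Gamma_2)R(z)\psi=0$, hence
\begin{equation*}
  (A\Gamma_1-B\Gamma_2)\tilde R(z) = (A\Gamma_1-B\Gamma_2)D(z)
     = (A\Gamma_1-B\Gamma_2)\Big[\sum_{j,k} d_{jk}(z)\,(\varepsilon_k(\bar z),\cdot)\,\varepsilon_j(z)\Big].
\end{equation*}
Now restrict both sides to $\mathcal N_z$: for $\varphi\in\mathcal N_z$ the vector $(\varepsilon_k(\bar z),\varphi)$ is exactly the $k$-th entry against which $\hat D(z)$ acts, so the right-hand side collapses to $(A\Gamma_1-B\Gamma_2)|_{\mathcal N_z}\,\hat D(z)\,E(z)$ with the generalized Gram matrix $E(z)$ of \eqref{abc02t04} — this is precisely \eqref{abc02t03}. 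That the system \eqref{abc02t03} actually determines $\hat D(z)$ (and hence that such a $D(z)$ exists and makes $R(z)$ land in $\dom(H)$) uses Lemma \ref{abc01t}: $(A\Gamma_1-B\Gamma_2)|_{\mathcal N_z}$ is injective, hence invertible as a map from $\C^n$ onto its image, and $E(z)$ is invertible since the $\varepsilon_k(z)$ are linearly independent and $\{\varepsilon_j(\bar z)\}$ spans a subspace pairing nondegenerately with $\mathcal N_z$.

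The one place demanding care — and what I expect to be the main obstacle — is verifying that the operator $R(z)\coloneqq\tilde R(z)-D(z)$ constructed this way genuinely is the resolvent of the self-adjoint operator $H$, not merely some right inverse landing in $\dom(H)$. Concretely: $(z\id-H)R(z)=\id$ is automatic from $\ran D(z)\subseteq\mathcal N_z$, but one must also see $R(z)$ is injective (equivalently, a two-sided inverse), which I would get from the self-adjointness of $H$ — $z\id-H$ is bijective for $z\notin\R$, so its unique inverse must coincide with any right inverse mapping into $\dom(H)$. The bookkeeping linking the abstract $c_j(\psi)$ to the specific bilinear form $\sum_{jk}d_{jk}(z)(\varepsilon_k(\bar z),\cdot)\varepsilon_j(z)$, and checking the two invertibility claims feeding \eqref{abc02t03}, are the steps where Lemma \ref{abc01t} and the nondegeneracy of $E(z)$ must be invoked cleanly; everything else is formal manipulation of right inverses.
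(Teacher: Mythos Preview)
Your core argument matches the paper's: define $D(z)\coloneqq\tilde R(z)-R(z)$, use that both are right inverses of $z\id-\tilde H$ to get $\ran D(z)\subset\mathcal N_z$, then use $D(z)=D(\bar z)^*$ (equivalently, your observation that $D(z)$ annihilates $\mathcal N_{\bar z}^\perp$) to force the form $\sum_{j,k}d_{jk}(z)(\varepsilon_k(\bar z),\cdot)\varepsilon_j(z)$, and finally apply $(A\Gamma_1-B\Gamma_2)R(z)=0$ and restrict to $\mathcal N_z$ to obtain \eqref{abc02t03}. That is exactly the paper's proof.

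Where you go astray is in the logical direction of the last third of your plan. The proposition takes the resolvent $R(z)$ as \emph{given} (it exists because $H$ is self-adjoint and $z\notin\R$) and asserts only that the difference $D(z)$ has the stated rank-$n$ form with $\hat D(z)$ \emph{satisfying} \eqref{abc02t03}. It does \emph{not} claim that \eqref{abc02t03} determines $\hat D(z)$ uniquely, nor that one can solve \eqref{abc02t03} and then verify the resulting operator is the resolvent. Your ``main obstacle'' is therefore not an obstacle at all: you never construct $R(z)$, you start from it. Correspondingly, your assertion that $E(z)$ is invertible is neither needed nor justified --- the paper explicitly remarks, immediately after the proof, that invertibility of $E(z)$ (and hence uniqueness of $\hat D(z)$) is left open in this abstract setting and is checked only later in the concrete Schr\"odinger case, where in fact $E(z)$ can be singular at isolated points. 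Drop the third paragraph and the invertibility claim, and your proof is complete and identical to the paper's.
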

\begin{proof}
We study the properties of the difference
\begin{equation*}
  D(z) = \tilde R(z) - R(z) .
\end{equation*}
From \eqref{abc_right_inverse} and the first equality in \eqref{abc02t01} we obtain
\begin{equation*}
  (z\id-\tilde H) D(z) = (z\id-\tilde H)\tilde R(z) - (z\id-\tilde H)R(z) = \id -\id = 0
\end{equation*}
which implies that
\begin{equation*}
  \ran D(z)\subset\mathcal{N}_z,\ z\in\C\setminus\R ,
\end{equation*}
and therefore
\begin{equation*}
  D(z) = \sum_{k=1}^n (\tilde\varepsilon_k(z),\cdot)\varepsilon_k(z),\ \tilde\varepsilon_k(z)\in\hilbert .
\end{equation*}
Furthermore, the second equality in \eqref{abc02t01} yields
\begin{equation*}
  D(z) = \tilde R(z) - R(z)
       = ( \tilde R(\bar z) - R(\bar z) )^*
       = D(\bar z)^*
       = \sum_{k=1}^n (\varepsilon_k(\bar z),\cdot)\tilde\varepsilon_k(\bar z)
\end{equation*}
which implies that $\tilde\varepsilon_k(\bar z)\in\mathcal{N}_z$ and thereby
\begin{equation*}
  \tilde\varepsilon_k(\bar z) = \sum_{l=1}^n \overline{d_{kl}(\bar z)} \varepsilon_l(z) .
\end{equation*}
The special form of the coefficients has been chosen for convenience.
To determine the coefficient matrix $\hat D(z)$ we use the boundary conditions, i.e. $(A\Gamma_1-B\Gamma_2)R(z)=0$,
\begin{equation*}
  (A\Gamma_1-B\Gamma_2) D(z)
    = (A\Gamma_1-B\Gamma_2)\tilde R(z)
    = \sum_{k,l=1}^n d_{kl}(z)(\varepsilon_l(\bar z),\cdot)(A\Gamma_1-B\Gamma_2)\varepsilon_k(z) .
\end{equation*}
We evaluate this on the vector $\varepsilon_j(z)$ thereby obtaining
\begin{equation*}
  (A\Gamma_1-B\Gamma_2)\tilde R(z)|_{\mathcal{N}_z} = (A\Gamma_1-B\Gamma_2)|_{\mathcal{N}_z} \hat D(z) E(z)
\end{equation*}
where $E(z)$ has the entries $(\varepsilon_l(\bar z), \varepsilon_j(z))$. This finishes the proof.
\end{proof}

We have not shown that the operator $D(z)$ is uniquely determined which would require to solve
for $\hat D(z)$ in \eqref{abc02t03}. Although that sounds reasonable since $(A\Gamma_1-B\Gamma_2)|_{\mathcal{N}_z}$ is injective
we would have to study the matrix $E(z)$. It is easier to do that in the concrete case 
considered in Section \ref{fr}.

\section{Schr\"odinger operators\label{schroedinger}}
From now on we consider concrete Schr\"odinger operators in dimension one.
Let $\hilbert_\infty\coloneqq L^2(\R)$ and let 
\begin{equation}\label{H_infinity}
  H_\infty:\dom(H_\infty)\to\hilbert_\infty,\ H_\infty \coloneqq -\frac{d^2}{dx^2}
\end{equation}
be the free Schr\"odinger operator defined on the whole line. The domain $\dom(H_\infty)$ can be described
with the aid of Sobolev spaces which, however, is not needed herein.
The spectrum is $\sigma(H_\infty)=[0,\infty[$. The resolvent
\begin{equation}\label{R_infinity}
  R_\infty(z)\coloneqq(z\id - H_\infty)^{-1}:\hilbert_\infty\to\hilbert_\infty,\
  z\in\C\setminus\sigma(H_\infty) ,
\end{equation}
has a cut singularity along $\sigma(H_\infty)$.
The corresponding Green function (the kernel of $R_\infty(z)$) reflects this fact. It reads
\begin{equation}\label{green_infinity}
  R_\infty(z;x,y) =
   \mp \frac{i}{2\sqrt{z}} e^{\pm i\sqrt{z}|x-y|},\ \im(\sqrt{z})\gtrless 0,\ x,y\in\R .
\end{equation}
Here, $\sqrt{z}$ is the principal branch of the square root function. This means in particular that
for $\re(z)\geq 0$ the condition $\im(\sqrt{z})\gtrless0$ is equivalent to $\im(z)\gtrless0$.
The Green function can be extended to $\im(\sqrt{z})=0$, $z\neq 0$. We denote the respective boundary values by
\begin{equation}\label{R_boundary_values}
  R_\infty^\pm(\nu;x,y) \coloneqq \lim_{\im(\sqrt{z})\to \pm 0} R_\infty(z;x,y)
                       = \mp \frac{i}{2\sqrt{\nu}} e^{\pm i\sqrt{\nu}|x-y|}
\end{equation}
and write formally
\begin{equation*}
  R_\infty^\pm(\nu) \coloneqq \lim_{\im(\sqrt{z})\to \pm 0} R_\infty(z)
\end{equation*}
for the corresponding unbounded operators. The limit will be studied more carefully below.

We restrict $H_\infty$ to $\hilbert_L\coloneqq L^2(\Lambda_L)$, $\Lambda_L\coloneqq[-L,L]$,
thereby obtaining the maximal operator 
\begin{equation}\label{H_maximal}
  \tilde H:\dom(\tilde H)\to\hilbert_L,\ \tilde H\coloneqq -\frac{d^2}{dx^2},\
  \dom(\tilde H) \coloneqq \{\varphi|_{\Lambda_L} \mid \varphi\in\dom(H_\infty)\} . 
\end{equation}
Integration by parts shows that
\begin{equation*}
  (\varphi,\tilde H\psi) - (\tilde H\varphi,\psi)
  = (\Gamma_1\varphi,\Gamma_2\psi) - (\Gamma_2\varphi,\Gamma_1\psi)
\end{equation*}
where
\begin{equation*}
  \Gamma_1\varphi =
\begin{pmatrix}
  \varphi(L)\\
  \varphi(-L)
\end{pmatrix}
  ,\
  \Gamma_2\varphi =
\begin{pmatrix}
  -\varphi'(L)\\
  \varphi'(-L)
\end{pmatrix} .
\end{equation*}
The deficiency subspace \eqref{abc_deficiency_subspace} is
\begin{equation}\label{deficiency_subspace}
  \mathcal{N}_z = \mathspan\{ \varepsilon_1(z),\varepsilon_2(z)\},\
  \varepsilon_1(z;x) = e^{i\sqrt{z}x},\ \varepsilon_2(z;x) = e^{-i\sqrt{z}x} .
\end{equation}
The boundary conditions are parametrized by the $2\times 2$ matrices $A$ and $B$, see \eqref{abc_condition}.
We define
\begin{equation}\label{H_L}
  H_L\coloneqq\tilde H|_{\dom(H_L)},\
  \dom(H_L) \coloneqq \{ \varphi\in\dom(\tilde H) \mid (A\Gamma_1-B\Gamma_2)\varphi =0 \}
\end{equation}
and note some fundamental properties.

\begin{lemma}\label{H_L_spectrum}
The operator $H_L$ is self-adjoint. Its spectrum $\sigma(H_L)$ consists of eigenvalues
\begin{equation}\label{H_L_spectrum01}
  \sigma(H_L) = \{ \lambda_{j,L} \mid j\in\N\},\ \lambda_{j,L} \leq \lambda_{j+1,L}
\end{equation}
counted with multiplicity, which can be at most two. Furthermore,
\begin{equation}\label{H_L_spectrum02}
  \lambda_{j,L} \geq -c\Big(\frac{1}{L} + \frac{4}{c+1}\Big) + \frac{1}{c+1}\Big(\frac{\pi(j-1)}{2L}\Big)^2,\ j\in\N,
\end{equation}
with some constant $c\geq 0$ depending only on $A$ and $B$.
\end{lemma}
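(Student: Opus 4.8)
The plan is to establish self-adjointness and the spectral description of $H_L$ by combining the abstract results of Section~\ref{abc} with an explicit analysis of the ODE eigenvalue problem, and then to derive the quantitative lower bound \eqref{H_L_spectrum02} by a quadratic-form argument. First I would note that self-adjointness is immediate from the abstract construction: $\tilde H$ in \eqref{H_maximal} together with the boundary maps $\Gamma_{1,2}$ satisfies the Green-type identity, and $(A,B)$ satisfy \eqref{abc_condition}, so $H_L$ defined in \eqref{H_L} is one of the self-adjoint restrictions characterized in \cite{BehrndtLanger2010}. Since $\tilde H$ has deficiency indices $(2,2)$ by \eqref{abc_deficiency_index}, $H_L$ differs from any fixed self-adjoint restriction by a finite-rank (in the resolvent) perturbation; alternatively, $R_L(z)$ is compact because $\dom(H_L)\subset H^2(\Lambda_L)$ embeds compactly into $L^2(\Lambda_L)$ (Rellich). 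Hence $\sigma(H_L)$ is purely discrete, real, and bounded below, giving \eqref{H_L_spectrum01} with eigenvalues listed in increasing order counted with multiplicity; the multiplicity is at most two because for any $z$ the solution space of $-\varphi''=z\varphi$ on an interval is two-dimensional, so $\dim\ker(z\id-H_L)\le 2$.

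Next I would pin down the multiplicity-two claim more carefully and set up the counting function. For $\lambda=k^2>0$ the eigenfunctions are combinations of $e^{\pm ikx}$ (for $\lambda<0$ of $e^{\pm|k|x}$, for $\lambda=0$ of $1,x$), and the boundary condition $(A\Gamma_1-B\Gamma_2)\varphi=0$ becomes a homogeneous $2\times 2$ linear system in the coefficients whose determinant is an entire function of $k$; its zeros are the eigenvalues. A double eigenvalue occurs exactly when this matrix vanishes identically, which happens only for special $(A,B)$ (e.g.\ periodic-type conditions), so generically multiplicities are one but never exceed two. To get the Weyl-type asymptotics encoded in \eqref{H_L_spectrum02} I would use min--max: $\lambda_{j,L}=\min_{\dim M=j}\max_{0\ne\varphi\in M}\frac{(\varphi,H_L\varphi)}{\|\varphi\|^2}$, and compare $H_L$ from below with a reference operator whose spectrum is explicit, namely the Neumann Laplacian on $\Lambda_L$ (eigenvalues $(\pi n/2L)^2$, $n\ge 0$) after controlling the boundary terms produced by the genuine boundary condition.

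The quantitative step, which I expect to be the main obstacle, is to absorb the boundary contribution uniformly in $L$ with the stated constants. Integrating by parts, $(\varphi,H_L\varphi)=\|\varphi'\|_{L^2(\Lambda_L)}^2 + (\text{boundary terms})$, where the boundary terms are a hermitian form in $(\varphi(\pm L),\varphi'(\pm L))$ determined by $A,B$; on $\dom(H_L)$ one can solve the linear constraint to express $\Gamma_2\varphi$ through $\Gamma_1\varphi$ (as in Lemma~\ref{abc01t}, writing $A\Gamma_1-B\Gamma_2=(A-B\Gamma)\Gamma_1$ when $\Gamma_1$ is invertible on the relevant subspace, and treating the remaining cases separately), so the boundary form becomes $\le c\,(|\varphi(L)|^2+|\varphi(-L)|^2)$ with $c=c(A,B)\ge 0$. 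The trace inequality $|\varphi(\pm L)|^2 \le \varepsilon\|\varphi'\|_{L^2(\Lambda_L)}^2 + C_\varepsilon\|\varphi\|_{L^2(\Lambda_L)}^2$ on $\Lambda_L$, with the sharp scaling $C_\varepsilon \sim 1/\varepsilon + 1/L$, then yields $(\varphi,H_L\varphi) \ge (1-c\varepsilon)\|\varphi'\|^2 - c\,C_\varepsilon\|\varphi\|^2$. Choosing $\varepsilon=1/(c+1)$ gives the factor $\tfrac{1}{c+1}$ in front of the kinetic term and the additive constant $-c(\tfrac1L+\tfrac{4}{c+1})$; combining with the Neumann min--max bound $\|\varphi'\|^2 \ge (\pi(j-1)/2L)^2\|\varphi\|^2$ on a suitable $j$-dimensional subspace produces exactly \eqref{H_L_spectrum02}. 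Tracking the precise numerical constant $4/(c+1)$ requires a careful choice of the trace inequality on an interval of length $2L$ (e.g.\ via the fundamental theorem of calculus, $|\varphi(L)|^2 \le \frac1{2L}\|\varphi\|^2 + \|\varphi\|\,\|\varphi'\|$, followed by Young's inequality), and handling the degenerate cases where $\Gamma_1$ fails to be invertible on $\mathcal N_z$ (so that the constraint must instead be resolved through $\Gamma_2$, i.e.\ a Dirichlet-type component) without spoiling the uniform constant is the delicate point.
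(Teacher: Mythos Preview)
Your proposal is correct and follows essentially the same route as the paper: abstract self-adjointness via the boundary-triple framework, the quadratic-form identity $(\varphi,H_L\varphi)=(\Gamma_1\varphi,\Gamma_2\varphi)+\|\varphi'\|^2$, a bound $|(\Gamma_1\varphi,\Gamma_2\varphi)|\le c(|\varphi(L)|^2+|\varphi(-L)|^2)$, the trace inequality $|\varphi(\pm L)|^2\le \tfrac{1}{2L}\|\varphi\|^2+2\|\varphi\|\,\|\varphi'\|$ followed by Young's inequality, and finally the min--max comparison with the Neumann eigenvalues $(\pi(j-1)/2L)^2$.

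One small correction: the ``degenerate case'' you flag at the end is not about invertibility of $\Gamma_1|_{\mathcal N_z}$ (that holds always, Lemma~\ref{abc01t}) but about whether $B$ is invertible, i.e.\ whether the constraint $A\Gamma_1\varphi=B\Gamma_2\varphi$ can be solved for $\Gamma_2\varphi$. The paper resolves this by a case analysis on $\rank B\in\{0,1,2\}$: for $\rank B=0$ (Dirichlet) the boundary term vanishes and $c=0$; for $\rank B=2$ one writes $\Gamma_2\varphi=B^{-1}A\Gamma_1\varphi$; for $\rank B=1$ one normalizes $A,B$ so that the second row of $B$ vanishes, uses the rank condition \eqref{abc_condition} to see that $\Gamma_1\varphi$ lies in a one-dimensional subspace, and computes $(\Gamma_1\varphi,\Gamma_2\varphi)$ directly. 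This is the only step where your sketch is vague, but you correctly identified that a case distinction is needed.
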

\begin{proof}
The choice $B=0$, i.e. Dirichlet boundary conditions, yields a self-adjoint operator.
We know from Section \ref{abc} that then all self-adjoint realizations are given via matrices $A$ and $B$ that
satisfy \eqref{abc_condition}.

To begin with, we show that $H_L$ is semi-bounded from below and look at the quadratic form
\begin{equation*}
  (\varphi,H_L\varphi)
     = - \int_{-L}^L \bar \varphi(x)\varphi''(x)\, dx
     = (\Gamma_1\varphi,\Gamma_2\varphi) + \int_{-L}^L |\varphi'(x)|^2\, dx,\ \varphi\in\dom(H_L) .
\end{equation*}
We show that
\begin{equation*}
  | (\Gamma_1\varphi,\Gamma_2\varphi) | \leq c ( |\varphi(L)|^2 + |\varphi(-L)|^2 ) ,\ c\geq 0 .
\end{equation*}
For $\rank(B)=0$ this is trivial with $c=0$ and for $\rank(B)=2$ this follows from $\Gamma_2\varphi = B^{-1}A\Gamma_1\varphi$.
The case $\rank(B)=1$ is slightly more difficult. We may choose $A$ and $B$ as
\begin{equation*}
  A =
\begin{pmatrix}
  a_{11} & a_{12} \\
  a_{21} & a_{22}
\end{pmatrix}
  ,\
  B = 
\begin{pmatrix}
  b_{11} & b_{12} \\
  0     & 0 
\end{pmatrix} 
  , \   |a_{21}|^2 + |a_{22}|^2 \neq 0 \neq |b_{11}|^2 + |b_{12}|^2,\   a_{21}\bar b_{11} + a_{22}\bar b_{12} = 0
\end{equation*}
where we used \eqref{abc_condition}. We write out the boundary condition and conclude
\begin{equation*}
\begin{pmatrix}
  \bar\varphi(L) \\ \bar\varphi(-L)
\end{pmatrix}
  = \beta
\begin{pmatrix}
  b_{11} \\ b_{12}
\end{pmatrix}
  ,\ |\beta|^2 = \frac{|\varphi(L)|^2 + |\varphi(-L)|^2}{|b_{11}|^2 + |b_{12}|^2} .
\end{equation*}
Using the boundary condition we obtain
\begin{equation*}
  (\Gamma_1\varphi,\Gamma_2\varphi) 
    = \beta ( - b_{11}\varphi'(L) + b_{12}\varphi'(-L) )
    = \beta ( a_{11}\varphi(L) + a_{12}\varphi(-L) )
\end{equation*}
which proves this case. All in all,
\begin{equation*}
  (\varphi,H_L\varphi) \geq -c ( |\varphi(L)|^2 + |\varphi(-L)|^2 ) + \|\varphi'\|^2 .
\end{equation*}
We estimate the boundary values. To this end,
\begin{equation*}
  |\varphi(L)|^2 = |\varphi(x)|^2 + \int_x^L \frac{d}{dy} |\varphi(y)|^2\, dy
                 = |\varphi(x)|^2 + 2\int_x^L \re(\bar\varphi(y)\varphi'(y))\, dy
                 \leq |\varphi(x)|^2 + 2\|\varphi\| \|\varphi'\| .
\end{equation*}
The other boundary value can be trated likewise. Integrating then yields
\begin{equation*}
  |\varphi(\pm L)|^2 \leq \frac{1}{2L}\|\varphi\|^2 + 2\|\varphi\|\|\varphi'\| 
                     \leq \frac{1}{2L}\|\varphi\|^2 + \frac{1}{\delta}\|\varphi\|^2 + \delta\|\varphi'\|^2
\end{equation*}
with some $\delta>0$. Thus,
\begin{equation*}
  (\varphi,H_L\varphi) 
    \geq -c ( \frac{1}{L}\|\varphi\|^2 + \frac{2}{\delta}\|\varphi\|^2 + 2\delta\|\varphi'\|^2) + \|\varphi'\|^2
     = -c(\frac{1}{L} + \frac{2}{\delta})\|\varphi\|^2 + (1-2c\delta)\|\varphi'\|^2 .
\end{equation*}
A convenient choice is $\delta\coloneqq\frac{1}{2(c+1)}$.
This shows the semi-boundedness. By the variational principle (see e.g. \cite[Thm. XIII.1]{ReedSimon1978})
\begin{equation*}
  \lambda_k \geq -c\Big(\frac{1}{L} + \frac{4}{c+1}\Big) + \frac{1}{c+1}\tilde\lambda_k,\ 
  \tilde\lambda_k=\Big(\frac{\pi(k-1)}{2L}\Big)^2,\ k\in\N,
\end{equation*}
where we used the variational characterization of the $\tilde\lambda_k$, the eigenvalues for Neumann boundary conditions, $A=0$. 
This proves \eqref{H_L_spectrum02} which in turn shows \eqref{H_L_spectrum01}. 
Since $H_L$ is a second order differential operator the eigenvalues' multiplicity can at most be two.
\end{proof}

We will thoroughly study the behaviour of the resolvent
\begin{equation}\label{R_L}
  R_L(z)\coloneqq(z\id- H_L)^{-1}:\hilbert_L\to\hilbert_L,\
  z\in\C\setminus\sigma(H_L) ,
\end{equation}
in the complex plane. 
We investigate the energy difference \eqref{energy_difference} for some fixed $\nu>0$ 
which we call Fermi energy. Because of $\sqrt{z}$ appearing in \eqref{green_infinity},
we choose in \eqref{energy01t01} an integration contour that is composed of parabolas 
(see Fig. \ref{f_fermi_parabola}). 
\begin{figure}[bth]
\includegraphics[width=0.75\textwidth]{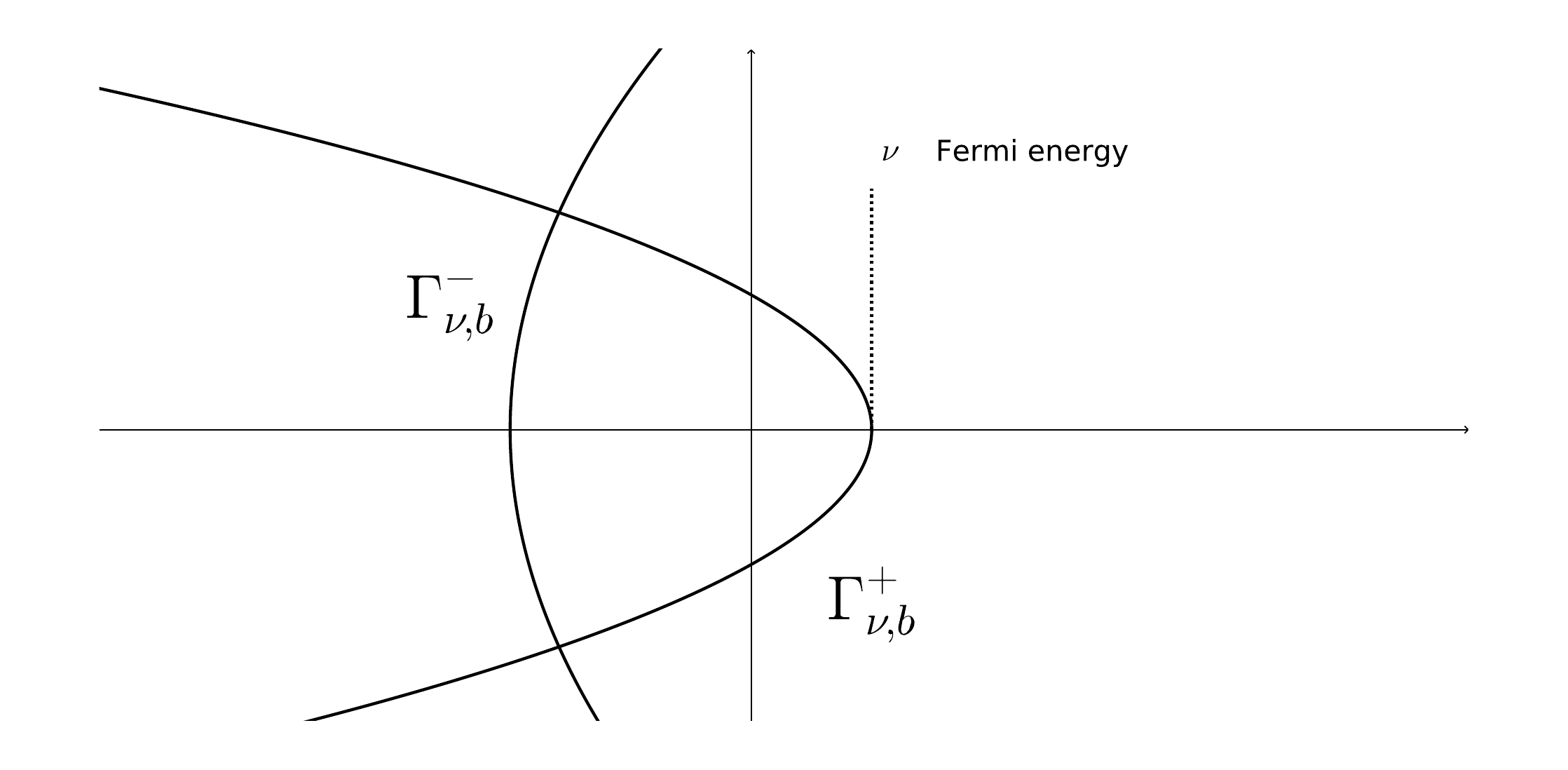}
\caption{The Fermi parabola in the complex plane.}
\label{f_fermi_parabola}
\end{figure}
More precisely, for $b>0$ we define $\Gamma_{\nu,b} \coloneqq \Gamma_{\nu,b}^- \cup \Gamma_{\nu,b}^+$ with
\begin{equation}\label{fermi_parabola_b}
    \Gamma_{\nu,b}^- \coloneqq \{ z = (t+ib)^2\mid \sqrt{\nu}\geq t\geq -\sqrt{\nu}\},\
    \Gamma_{\nu,b}^+ \coloneqq \{ z = (\sqrt{\nu}+is)^2\mid -b\leq s\leq b\} .
\end{equation}
Furthermore, we define the Fermi parabola
\begin{equation}\label{fermi_parabola}
    \Gamma_\nu \coloneqq \Gamma_{\nu,\infty}^+ \coloneqq \{ z = (\sqrt{\nu}+is)^2\mid s\in\R \} .
\end{equation}
Let $V:\hilbert_L\to\hilbert_L$ be the multiplication operator
$(V\varphi)(x)\coloneqq V(x)\varphi(x)$, $\varphi\in\hilbert_L$. For simplicity, we assume $V$ to be
bounded, $V\in L^\infty(\R)$. Then, the perturbed operator
\begin{equation}\label{H_VL}
  H_{V,L}:\dom(H_{V,L})\to\hilbert_L,\ H_{V,L} \coloneqq H_L + V,\
\end{equation} 
is self-adjoint and $\dom(H_{V,L})=\dom(H_L)$. Its spectrum $\sigma(H_{V,L})$ consists of eigenvalues,
\begin{equation}\label{H_VL_spectrum}
  \sigma(H_{V,L}) = \{ \mu_{j,L} \mid j\in\N\},\ \mu_{j,L}\leq \mu_{j+1,L},\ \mu_{j,L}\to\infty\ \text{as} \ j\to\infty,
\end{equation}
counted with multiplicity, which can be at most two. This can be shown with the aid of Lemma \ref{H_L_spectrum}.

At times, the potential needs to fall off at infinity sufficiently fast which is expressed via
\begin{equation}\label{V_decay1}
  X^n V\in L^1(\R),\ (X^nV)(x) \coloneqq x^n V(x),\
  \|V\|_{1,n} \coloneqq \max_{k=0,\ldots,n} \|X^k V\|_1 ,\ n\in\N_0 .
\end{equation}
The limiting absorption principle in Lemma \ref{bso03t} requires a slightly more regular behaviour
expressed with the aid of the Birman--Solomyak condition (see e.g. \cite[pp. 38]{Simon2005})
\begin{equation}\label{birman_solomyak}
  \ell^q(L^1(\R)) \coloneqq \{ V \in L^1(\R)\mid \llbracket V\rrbracket_{1,q} < \infty\},\
  \llbracket V\rrbracket_{1,q} \coloneqq \sum_{j\in\Z} \| V\chi_{I_j} \|_1^q,\
  I_j\coloneqq [j,j+1],\ q>0 .
\end{equation}
Note that compactly supported potentials satisfy the Birman--Solomyak condition, i.e. $L^1_0(\R)\subset\ell^q(L^1(\R))$.
Due to the complex integration contour in the integral representation \eqref{energy01t01} 
of the energy difference we need a weighted $L^1$-norm (cf. \cite[(3.25)]{KuettlerOtteSpitzer2014})
\begin{equation}\label{V_L}
  \mathcal{V}_L(s) \coloneqq \int_{-L}^L |V(x)| e^{s|x|} \, dx,\ s\in\R .
\end{equation}
Note that $\mathcal{V}_L$ is differentiable with respect to $s$.

\subsection{Decomposition of the free resolvent\label{fr}}
We apply the results of Section \ref{abc} to the resolvents of $H_\infty$
and $H_L$, see \eqref{R_infinity} and \eqref{R_L}. First of all, we restrict $R_\infty(z)$ to $\hilbert_L$
which defines both operators on the same Hilbert space.
To this end let $\chi_L$ be the indicator function of $\Lambda_L$. We define
\begin{equation}\label{R_infinity_L}
  R_{\infty,L}(z) \coloneqq \chi_L R_\infty(z)\chi_L,\ R_{\infty,L}^\pm (\nu) \coloneqq \chi_L R_\infty^\pm(\nu)\chi_L ,
\end{equation}
which can be considered both an operator on $L^2(\Lambda_L)$ and on $L^2(\R)$. We use
the same symbol for these operators as the meaning should be clear from the context.
Note that $R_{\infty,L}(z)$ and particularly $R_{\infty,L}^\pm(\nu)$ are bounded operators 
while $R_\infty^\pm(\nu)$ are not since $\nu\in\sigma(H_\infty)$, cf. Lemma \ref{bso01t}.

The operator $R_{\infty,L}(z)$ and the resolvent $R_L(z)$ differ by a rank two operator as we show in Lemma \ref{fr01t}. 
A crucial role is played by the matrix
\begin{equation}\label{fr_bcs}
  U(z) \coloneqq (iA-\sqrt{z}B)^{-1}(iA+\sqrt{z}B)
\end{equation}
with the matrices $A$ and $B$ describing the boundary conditions, see \eqref{H_L} and \eqref{abc_condition}.
In the context of quantum wires, this matrix $U(z)$ is the scattering matrix if $\sqrt{z}$ is real, see \cite[Thm. 2.1]{KostrykinSchrader1999}. 
Since it enters through the boundary conditions we name it \emph{boundary condition scattering matrix}. 
It is studied in more detail in Section \ref{bcs}.

\begin{lemma}\label{fr01t}
For $z\in\C\setminus\sigma(H_L)$ the resolvent $R_L(z)$ of $H_L$ can be decomposed into (cf. \eqref{deficiency_subspace})
\begin{equation}\label{fr01t01}
  R_L(z) = R_{\infty,L}(z) - D_L(z),\ 
  D_L(z) = \frac{1}{2i\sqrt{z}}d_L(z)\sum_{j,k=1}^2 g_{jk}(z) ( \varepsilon_k(\bar z), \cdot )\varepsilon_j(z) .
\end{equation}
Note that $D_L(\bar{z})=(D_L(z))^*$.
The coefficient matrix $G_L(z)\coloneqq (g_{jk}(z))_{j,k=1,2}$ and the scalar prefactor are given through
\begin{equation}\label{fr01t02}
  G_L(z) = ( e^{2i\sqrt{z}L}\id + U(z)\sigma_x )^{-1},\
  d_L(z) \coloneqq e^{2i\sqrt{z}L} ,\
  \im(\sqrt{z})\geq 0,
\end{equation}
with the Pauli matrix $\sigma_x$.
\end{lemma}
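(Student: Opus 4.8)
The plan is to apply Proposition~\ref{abc02t} with $\tilde R(z)\coloneqq R_{\infty,L}(z)$, which by construction has the right-inverse property \eqref{abc02t01} on $\hilbert_L$, and then to carry out the matrix computation \eqref{abc02t03} explicitly in the concrete basis \eqref{deficiency_subspace}. First I would verify the two hypotheses of Proposition~\ref{abc02t}: that $(z\id-\tilde H)R_{\infty,L}(z)=\id$ on $\hilbert_L$ (immediate from the fact that $R_\infty(z)$ inverts $z\id-H_\infty$ on $L^2(\R)$ and $\tilde H$ acts as $-d^2/dx^2$ on $\dom(\tilde H)$), and the symmetry $R_{\infty,L}(z)^*=R_{\infty,L}(\bar z)$ (clear from $R_\infty(z)^*=R_\infty(\bar z)$ and $\chi_L^*=\chi_L$). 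This gives at once the structural form $R_L(z)=R_{\infty,L}(z)-D_L(z)$ with $D_L(z)$ a rank-two operator whose range lies in $\mathcal{N}_z=\mathspan\{\varepsilon_1(z),\varepsilon_2(z)\}$, and the adjoint relation $D_L(\bar z)=D_L(z)^*$ follows as in the proof of Proposition~\ref{abc02t}.

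Next I would compute each ingredient of \eqref{abc02t03} in coordinates. Applying $\Gamma_1,\Gamma_2$ to $\varepsilon_1(z;x)=e^{i\sqrt z x}$ and $\varepsilon_2(z;x)=e^{-i\sqrt z x}$ gives, with $d_L(z)=e^{2i\sqrt z L}$ (after pulling out a common factor $e^{-i\sqrt z L}$, which is harmless since it is invertible), explicit $2\times 2$ matrices, and hence $(A\Gamma_1-B\Gamma_2)|_{\mathcal{N}_z}$ is expressible through $iA\pm\sqrt z B$; this is exactly where the combination $U(z)=(iA-\sqrt z B)^{-1}(iA+\sqrt z B)$ of \eqref{fr_bcs} enters, using that $iA-\sqrt z B$ is invertible for $z\in\C\setminus\R$ (one can see this from \eqref{abc_condition}, or note it is the analog of the injectivity statement in Lemma~\ref{abc01t}). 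The generalized Gram matrix $E(z)$ in \eqref{abc02t04} has entries $(\varepsilon_j(\bar z),\varepsilon_k(z))$, which are elementary integrals $\int_{-L}^L e^{\pm i\sqrt z x\pm i\sqrt z x}\,dx$ over $[-L,L]$ and evaluate to combinations of $2L$ and $(e^{2i\sqrt z L}-e^{-2i\sqrt z L})/(2i\sqrt z)$; similarly $(A\Gamma_1-B\Gamma_2)R_{\infty,L}(z)|_{\mathcal{N}_z}$ is computed by plugging the explicit Green function \eqref{green_infinity} into $\Gamma_{1,2}$ evaluated at $\varepsilon_k(z)$. Solving \eqref{abc02t03} for $\hat D(z)$ then yields $G_L(z)=(g_{jk}(z))$; the key algebraic simplification is that after multiplying through by $(iA-\sqrt z B)^{-1}$ and clearing the elementary factors, everything collapses to the clean form $G_L(z)=(e^{2i\sqrt z L}\id+U(z)\sigma_x)^{-1}$, with the $\sigma_x$ appearing because $\Gamma_{1,2}$ swap the two endpoints $\pm L$ relative to the $\pm$ exponents of $\varepsilon_{1,2}$.

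The main obstacle I anticipate is bookkeeping rather than conceptual: keeping the $2\times 2$ matrix manipulations consistent — in particular tracking exactly which common scalar factors ($e^{\pm i\sqrt z L}$, $1/(2i\sqrt z)$) get absorbed into $d_L(z)$ versus $G_L(z)$ versus the $\varepsilon_j(z)$ themselves — so that the final prefactor in \eqref{fr01t01} comes out as $\tfrac{1}{2i\sqrt z}d_L(z)$ and not some other normalization. It will also require a small check that $e^{2i\sqrt z L}\id+U(z)\sigma_x$ is actually invertible for $z\notin\sigma(H_L)$; this should follow because a non-trivial kernel would produce an eigenfunction of $H_L$ at $z$ — indeed the condition $\det(e^{2i\sqrt z L}\id+U(z)\sigma_x)=0$ is precisely the secular equation for the eigenvalues $\lambda_{j,L}$ — so outside $\sigma(H_L)$ the inverse exists, and its holomorphy in $z$ is then automatic. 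Finally, once $G_L(z)$ is identified one reads off $D_L(z)$ in the stated form, and the identity $D_L(\bar z)=D_L(z)^*$ is re-confirmed directly, using $\overline{U(z)}=U(\bar z)^{-\top}$-type relations together with $\sigma_x^*=\sigma_x$, which also serves as a useful consistency check on the whole computation.
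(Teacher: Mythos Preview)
Your proposal is correct and follows essentially the same approach as the paper: apply Proposition~\ref{abc02t} with $\tilde R(z)=R_{\infty,L}(z)$, compute $\Gamma_{1,2}|_{\mathcal N_z}$, $(A\Gamma_1-B\Gamma_2)R_{\infty,L}(z)|_{\mathcal N_z}$, and $E(z)$ explicitly, then solve \eqref{abc02t03} for $\hat D_L(z)$ to extract $G_L(z)$. Your observation that the invertibility of $e^{2i\sqrt z L}\id+U(z)\sigma_x$ off $\sigma(H_L)$ amounts to the secular equation is exactly Lemma~\ref{bcs_special01t}, and the adjoint relation $D_L(\bar z)=D_L(z)^*$ is handled in the paper just as you suggest, via $R_L(\bar z)=R_L(z)^*$ and $R_{\infty,L}(\bar z)=R_{\infty,L}(z)^*$.
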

\begin{proof}
Since $R_{\infty,L}(z)$ satisfies the resolvent properties \eqref{abc02t01}
(with $\tilde H$ the maximal operator from \eqref{H_maximal}) we may use Proposition \ref{abc02t}.
Computing $\Gamma_{1,2}\varepsilon_{1,2}(z)$ yields the matrix representations
\begin{equation*}
  \Gamma_1|_{\mathcal{N}_z} = e^{i\sqrt{z}L}\id + e^{-i\sqrt{z}L}\sigma_x,\
  \Gamma_2|_{\mathcal{N}_z} = -i\sqrt{z} ( e^{i\sqrt{z}L}\id - e^{-i\sqrt{z}L}\sigma_x )
\end{equation*}
and thus
\begin{equation*}
\begin{split}
  (A\Gamma_1-B\Gamma_2)|_{\mathcal{N}_z}
    & = A(e^{i\sqrt{z}L}\id+e^{-i\sqrt{z}L}\sigma_x) +i\sqrt{z} B(e^{i\sqrt{z}L}\id-e^{-i\sqrt{z}L}\sigma_x) \\
    & = e^{i\sqrt{z}L}(A+i\sqrt{z}B) + e^{-i\sqrt{z}L}(A-i\sqrt{z}B)\sigma_x .
\end{split}
\end{equation*}
Using the formula for the Green function in \eqref{green_infinity} we obtain
\begin{equation*}
  \big[ \Gamma_1R_{\infty,L}(z)\varphi \big]_j = - \frac{i}{2\sqrt{z}}e^{i\sqrt{z}L} (\varepsilon_j(\bar z), \varphi)
  ,\
  \big[ \Gamma_2R_{\infty,L}(z)\varphi \big]_j = -\frac{1}{2} e^{i\sqrt{z}L} (\varepsilon_j(\bar z), \varphi)
  ,\ j=1,2 .
\end{equation*}
Recall, $E(z)_{jk} = (\varepsilon_j(\bar z),\varepsilon_k(z))$. Then, taking $\varphi=\varepsilon_{1,2}(z)$ yields
\begin{equation*}
  \Gamma_1R_{\infty,L}(z)|_{\mathcal{N}_z} = -\frac{i}{2\sqrt{z}}e^{i\sqrt{z}L} E(z),\
  \Gamma_2R_{\infty,L}(z)|_{\mathcal{N}_z} = -\frac{1}{2}e^{i\sqrt{z}L} E(z) .
\end{equation*}
Finally, the equation \eqref{abc02t03} for $\hat D_L(z)$ becomes
\begin{equation*}
  -\frac{1}{2}e^{i\sqrt{z}L} ( \frac{i}{\sqrt{z}}A-B)E(z)
    = ( e^{i\sqrt{z}L}(A+i\sqrt{z}B) + e^{-i\sqrt{z}L}(A-i\sqrt{z}B)\sigma_x ) \hat D_L(z)E(z) .
\end{equation*}
Computing the scalar products $(\varepsilon_j(\bar z),\varepsilon_k(z))$, $j,k=1,2$, we obtain
\begin{equation*}
  E(z) = 2L\id + \frac{1}{\sqrt{z}}\sin(2L\sqrt{z}) \sigma_x
       = 2L ( \id + \frac{1}{w}\sin(w)\sigma_x ),\ w\coloneqq 2L\sqrt{z} .
\end{equation*}
Obviously, $E(z)$ is not invertible if
\begin{equation*}
  0 = \det( \id + \frac{1}{w}\sin(w)\sigma_x ) = 1 - \frac{1}{w^2}\sin^2(w) .
\end{equation*}
Since the right-hand side is an entire holomorphic function of $w$ there are at most countably many solutions 
without any point of accumulation (for more details see \cite{BurnistonSiewert1973a}).
Except for those points we may solve for $\hat D_L(z)$
\begin{equation*}
\begin{split}
  \hat D_L(z) & = -\frac{i}{2\sqrt{z}}( iA-\sqrt{z}B + e^{-2i\sqrt{z}L}(iA+\sqrt{z}B)\sigma_x)^{-1}(iA-\sqrt{z}B)\\
         & = -\frac{i}{2\sqrt{z}} e^{2i\sqrt{z}L} ( e^{2i\sqrt{z}L} \id + U(z)\sigma_x )^{-1}
\end{split}
\end{equation*}
which yields \eqref{fr01t01} and \eqref{fr01t02}. The matrix $G_L(z)$ exists for all $z\in\C\setminus\sigma(H_L)$, 
see Lemma \ref{bcs_special01t}.

Finally, since $R_L(\bar{z}) = (R_L(z))^*$ and $R_{\infty,L}(\bar{z}) = (R_{\infty,L}(z))^*$ we see that $D_L(\bar{z})=(D_L(z))^*$.
\end{proof}

\subsection{Birman--Schwinger operators \texorpdfstring{$K_L$, $K_\infty$, and $K_{\infty,L}$}{}\label{bso}}
We will need some analytic properties of the Birman--Schwinger operators
\begin{gather}
  K_L(z)\coloneqq\sqrt{|V|}R_L(z)\sqrt{|V|}J,\label{K_L}\\
  K_\infty(z)\coloneqq\sqrt{|V|}R_\infty(z)\sqrt{|V|}J,\ K_{\infty,L}(z)\coloneqq\sqrt{|V|}R_{\infty,L}(z)\sqrt{|V|}J\label{K_infinity} ,
\end{gather}
see \eqref{V_polar_decomposition} and \eqref{R_L}, \eqref{R_infinity}, \eqref{R_infinity_L},
as well as their boundary values (cf. \eqref{R_boundary_values})
\begin{equation}\label{K_boundary_values}
  K_{\infty,L}^\pm(\nu) \coloneqq \sqrt{|V|}R_{\infty,L}^\pm(\nu)\sqrt{|V|}J,\
  K_\infty^\pm(\nu) \coloneqq \sqrt{|V|}R_\infty^\pm(\nu)\sqrt{|V|}J .
\end{equation}
We will use the simple facts
\begin{equation}\label{exp_properties}
  |e^{\pm i\sqrt{z}u}| = e^{\mp\im(\sqrt{z})u} \ \text{and}\
  |e^{i\sqrt{z}u} - e^{i\sqrt{w}u}| \leq |u| |\sqrt{z}-\sqrt{w}|\int_0^1 e^{\mp u (t \im(\sqrt{z}) + (1-t)\im(\sqrt{w}))}\, dt ,\ u\in\R .
\end{equation}

\begin{lemma}\label{bso01t}
Let $V\in L^1(\R)$ and $z\in\C\setminus\{0\}$. Then the following hold true.
\begin{enumerate}
\item The Birman--Schwinger operators $K_{\infty,L}(z)$ and $K_\infty(z)$ (see \eqref{K_infinity}) are in $B_2(\hilbert)$ with
\begin{equation}\label{bso01t01}
  \| K_{\infty,L}(z)\|_2 \leq \|K_\infty(z)\|_2 \leq \frac{1}{2|\sqrt{z}|}\|V\|_1 .
\end{equation}
In particular, $\sqrt{|V|}R_{\infty,L}(z)\in B_2(\hilbert)$.
\item
If, in addition, $X^nV\in L^1(\R)$ for $n=1,2$ (cf. \eqref{V_decay1}), then for $w\neq 0$
\begin{equation}\label{bso01t01a}
  \|K_{\infty,L}(z) - K_{\infty,L}(w)\|_2
       \leq  \|V\|_{1,2} \frac{1}{|\sqrt{z}|}\Big( 1+\frac{1}{2|w|}\Big)^{\frac{1}{2}} |\sqrt{z}-\sqrt{w}|,\
  \im(\sqrt{z})\cdot\im(\sqrt{w})\geq 0 .
\end{equation}
\item
Let $\chi_L^\perp\coloneqq 1-\chi_L$. Then the operator $K_{\infty,L}(z):\hilbert_\infty\to\hilbert_\infty$ satisfies
\begin{equation}\label{bso01t02}
  \|K_{\infty,L}(z) - K_\infty(z)\|_2 \leq \frac{1}{\sqrt{|z|}} \|V\|_1^{\frac{1}{2}} \|\chi_L^\perp V\|_1^{\frac{1}{2}} .
\end{equation}
\end{enumerate}
\end{lemma}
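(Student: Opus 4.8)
The plan is to prove all three estimates by the same elementary strategy: the operators $K_{\infty,L}(z)$, $K_\infty(z)$ are integral operators whose kernels are explicitly given by sandwiching the Green function \eqref{green_infinity} between $\sqrt{|V|}$ and $\sqrt{|V|}J$, so every Hilbert--Schmidt norm is just an $L^2$-norm of an explicit kernel, which we bound by pointwise estimates on the exponential factors using \eqref{exp_properties}. First I would write down, for $\im(\sqrt z)\ge 0$,
\begin{equation*}
  K_{\infty,L}(z;x,y) = \chi_L(x)\sqrt{|V(x)|}\,\Big(-\frac{i}{2\sqrt z}\,e^{i\sqrt z|x-y|}\Big)\sqrt{|V(y)|}\,J(y)\,\chi_L(y),
\end{equation*}
and analogously for $K_\infty(z)$ without the cutoffs; since $|J|\le 1$ and $|e^{i\sqrt z|x-y|}| = e^{-\im(\sqrt z)|x-y|}\le 1$ when $\im(\sqrt z)\ge 0$, the kernel is dominated pointwise by $\tfrac{1}{2|\sqrt z|}|V(x)|^{1/2}|V(y)|^{1/2}$ (with or without $\chi_L$). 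Taking the $L^2(\R^2)$-norm and using $\int |V(x)|^{1/2}\cdot|V(x)|^{1/2}\,dx = \|V\|_1$ gives $\|K_\infty(z)\|_2\le \tfrac{1}{2|\sqrt z|}\|V\|_1$; inserting the extra factors $\chi_L$ only shrinks the domain of integration, which yields the chain \eqref{bso01t01}. For $\im(\sqrt z)<0$ one replaces $e^{i\sqrt z|x-y|}$ by $e^{-i\sqrt z|x-y|}$ per \eqref{green_infinity} and the same bound holds, so the statement is valid for all $z\in\C\setminus\{0\}$. The final sentence of part~1 is immediate since $\sqrt{|V|}R_{\infty,L}(z) = K_{\infty,L}(z)\,J\sqrt{|V|}^{-1}$ formally, but more honestly: the kernel of $\sqrt{|V(x)|}R_{\infty,L}(z;x,y)$ is $\chi_L(x)\sqrt{|V(x)|}\cdot(\mp\tfrac{i}{2\sqrt z})e^{\pm i\sqrt z|x-y|}\cdot\chi_L(y)$, whose $L^2(\R^2)$-norm squared is $\le \tfrac{1}{4|z|}\|V\|_1\cdot 2L<\infty$.

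For part~3, I would subtract the two kernels: $K_{\infty,L}(z;x,y) - K_\infty(z;x,y)$ equals the $K_\infty$ kernel multiplied by $(\chi_L(x)\chi_L(y)-1) = -\big(\chi_L^\perp(x) + \chi_L(x)\chi_L^\perp(y)\big)$, so pointwise it is dominated by $\tfrac{1}{2|\sqrt z|}|V(x)|^{1/2}|V(y)|^{1/2}\big(\chi_L^\perp(x)+\chi_L^\perp(y)\big)$. Taking $L^2(\R^2)$-norms, each of the two resulting terms factors as a product of $\big(\int\chi_L^\perp|V|\big)^{1/2}$ and $\big(\int|V|\big)^{1/2}$ over the two variables, yielding $\tfrac{1}{2|\sqrt z|}\cdot 2\cdot\|\chi_L^\perp V\|_1^{1/2}\|V\|_1^{1/2}$; absorbing the constant (and using the triangle inequality carefully rather than the crude union bound) gives exactly \eqref{bso01t02} with the constant $1/\sqrt{|z|}$.

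Part~2 is the one requiring genuine care, and I expect it to be the main obstacle. Here I would write $K_{\infty,L}(z;x,y) - K_{\infty,L}(w;x,y) = \chi_L(x)\sqrt{|V(x)|}\,\big(g_z(|x-y|) - g_w(|x-y|)\big)\sqrt{|V(y)|}J(y)\chi_L(y)$ with $g_z(u) = -\tfrac{i}{2\sqrt z}e^{i\sqrt z u}$ (for $\im(\sqrt z)\ge 0$). Splitting $g_z - g_w = -\tfrac{i}{2}\big(\tfrac1{\sqrt z}-\tfrac1{\sqrt w}\big)e^{i\sqrt z u} - \tfrac{i}{2\sqrt w}\big(e^{i\sqrt z u}-e^{i\sqrt w u}\big)$, the first piece is controlled by $\tfrac{|\sqrt z-\sqrt w|}{2|\sqrt z||\sqrt w|}$ times an exponential $\le 1$ (using $\tfrac1{\sqrt z}-\tfrac1{\sqrt w} = \tfrac{\sqrt w-\sqrt z}{\sqrt z\sqrt w}$), and the second piece is controlled via the second inequality in \eqref{exp_properties} by $\tfrac{1}{2|\sqrt w|}\cdot|u|\,|\sqrt z-\sqrt w|$ (the integral of exponentials there being $\le 1$ since both $\im(\sqrt z),\im(\sqrt w)\ge 0$ on the support $u\ge 0$; this is exactly where the hypothesis $\im(\sqrt z)\im(\sqrt w)\ge 0$ is used). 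Thus the kernel difference is bounded pointwise by $\tfrac{|\sqrt z-\sqrt w|}{2|\sqrt z|}\big(\tfrac1{|\sqrt w|} + \tfrac{|x-y|}{|\sqrt w|}\big)|V(x)|^{1/2}|V(y)|^{1/2}\chi_L(x)\chi_L(y)$ — wait, one must redistribute to match \eqref{bso01t01a}: collecting the $\tfrac1{|\sqrt z|}$ prefactor, one has a bound $\tfrac{|\sqrt z-\sqrt w|}{|\sqrt z|}\cdot\tfrac12\big(1 + |x-y|\big)\tfrac{1}{|w|^{1/2}}\cdots$ — actually the cleaner route is: $|g_z(u)-g_w(u)|\le \tfrac{|\sqrt z-\sqrt w|}{2|\sqrt z|}\cdot\tfrac{1}{|\sqrt w|}(1 + u)$ is not quite it either; the correct pointwise bound to aim for is $|g_z(u)-g_w(u)| \le \tfrac{|\sqrt z-\sqrt w|}{|\sqrt z|}\,\tfrac12\,(1+u)$, and then taking $L^2(\R^2)$-norms of $\chi_L(x)\sqrt{|V(x)|}(1+|x-y|)\sqrt{|V(y)|}\chi_L(y)$, one uses $(1+|x-y|)^2 \le 2(1 + x^2 + y^2)$ (or rather $(1+|x-y|) \le 1 + |x| + |y|$, squared) to reduce to the norms $\|V\|_1$, $\|X V\|_1$, $\|X^2 V\|_1$, all finite by hypothesis, with the resulting constant $(1 + \tfrac{1}{2|w|})^{1/2}$ coming from a Cauchy--Schwarz bookkeeping. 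The delicate point is getting the $w$-dependence into precisely the form $\tfrac{1}{|\sqrt z|}(1+\tfrac1{2|w|})^{1/2}$ rather than something looser, which amounts to being slightly clever about which exponential factor to keep and which to bound by $1$; I would do this split explicitly and track constants, since the subsequent asymptotic analysis of the finite-size energy needs the $w$-uniformity this lemma encodes.
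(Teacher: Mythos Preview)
Your approach is essentially the paper's: pointwise kernel bounds, then integrate. Parts~1 and~3 are correct and match the paper almost verbatim.

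In part~2 you have the right idea but your bookkeeping goes astray. Two concrete fixes. First, the split matters: to get the prefactor $1/|\sqrt z|$ (rather than $1/|\sqrt w|$) you should write
\[
  g_z(u)-g_w(u) \;=\; -\frac{i}{2\sqrt z}\bigl(e^{i\sqrt z u}-e^{i\sqrt w u}\bigr) \;-\; \frac{i}{2}\Bigl(\frac{1}{\sqrt z}-\frac{1}{\sqrt w}\Bigr)e^{i\sqrt w u},
\]
which together with \eqref{exp_properties} gives $|g_z(u)-g_w(u)|\le \dfrac{|\sqrt z-\sqrt w|}{2|\sqrt z|}\bigl(u+\tfrac{1}{|\sqrt w|}\bigr)$. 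Your split has the roles of $z$ and $w$ swapped, which is why you could not match the stated constant. Second, the clean way to produce the factor $(1+\tfrac{1}{2|w|})^{1/2}$ is not to expand $(1+u)^2$ or $(1+|x|+|y|)^2$, but to use the two-term Cauchy--Schwarz inequality $a+b\le\sqrt 2\,(a^2+b^2)^{1/2}$ on $u+\tfrac{1}{|\sqrt w|}$, yielding a pointwise bound $\dfrac{|\sqrt z-\sqrt w|}{\sqrt 2\,|\sqrt z|}\bigl(u^2+\tfrac{1}{|w|}\bigr)^{1/2}$. After squaring and integrating, the $u^2=(x-y)^2$ term contributes at most $2\|X^2V\|_1\|V\|_1$ (expand $(x-y)^2$ and drop the nonpositive cross term $-2(\int x|V|)^2$), and the constant term contributes $\tfrac{1}{|w|}\|V\|_1^2$. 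Bounding both $\|X^2V\|_1\|V\|_1$ and $\|V\|_1^2$ by $\|V\|_{1,2}^2$ turns $\tfrac{1}{2}\bigl(2+\tfrac{1}{|w|}\bigr)$ into exactly $1+\tfrac{1}{2|w|}$.
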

\begin{proof}
We prove the estimates, essentially, by bounding the respective kernel functions, cf. \eqref{green_infinity}.
Throughout the proof let $\im(\sqrt{z})\geq 0$ and $\im(\sqrt{w})\geq 0$. The other case can be treated in like manner.

1.
The bound \eqref{bso01t01} follows from
\begin{equation}\label{bso01t03}
  |R_\infty(z;x,y)|^2 = \frac{1}{4|z|} e^{-2\im(\sqrt{z})|x-y|}
                 \leq \frac{1}{4|z|} .
\end{equation}
This also implies that $\sqrt{|V|}R_{\infty,L}(z) = \sqrt{|V|}\chi_LR_\infty(z)\chi_L\in B_2(\hilbert)$, cf. \eqref{R_infinity_L}.

2.
We use \eqref{exp_properties} with $u\coloneqq |x-y|\geq 0$
\begin{multline*}
   |R_\infty(z;x,y) - R_\infty(w;x,y)|
     \leq \frac{1}{2}\frac{1}{|\sqrt{z}|}|e^{i\sqrt{z}u}-e^{i\sqrt{w}u}| + \frac{1}{2}|\frac{1}{\sqrt{z}}-\frac{1}{\sqrt{w}}| |e^{i\sqrt{w}u}| \\
     \leq \frac{1}{2|\sqrt{z}|} |\sqrt{z}-\sqrt{w}| ( u + \frac{1}{|\sqrt{w}|} ) 
     \leq \frac{1}{\sqrt{2}}\frac{1}{|\sqrt{z}|} |\sqrt{z}-\sqrt{w}| ( u^2 + \frac{1}{|w|} )^{\frac{1}{2}} .
\end{multline*}
The term with $u^2$ leads to
\begin{equation*}
  \int_{-L}^L\int_{-L}^L |x-y|^2 |V(x)| |V(y)|\, dy \, dx
     = 2\int_{-L}^L\int_{-L}^L x^2 |V(x)| |V(y)|\, dy \, dx 
         - 2\Big[ \int_{-L}^L  x |V(x)| \, dx\Big]^2 .
\end{equation*}
Dropping the rightmost term yields
\begin{equation*}
  \| K_{\infty,L}(z) - K_{\infty,L}(w)\|_2^2
    \leq  \frac{1}{2}\frac{1}{|z|} |\sqrt{z}-\sqrt{w}|^2 \big[ 2 \|X^2V\|_1 \|V\|_1 + \frac{1}{|w|} \|V\|_1^2 \big] .
\end{equation*}
Finally, we simplify the constant via \eqref{V_decay1} and obtain \eqref{bso01t01a}.

3. Furthermore, we have from \eqref{bso01t03}
\begin{equation*}
\begin{split}
  |R_{\infty,L}(z;x,y) - R_\infty(z;x,y)|
    & \leq |(\chi_L(x)-1)R_\infty(z;x,y)\chi_L(y)| + |R_\infty(z;x,y)(\chi_L(y)-1)|\\
    & \leq \frac{1}{2\sqrt{|z|}} \big[  |\chi_L(x)-1| + |\chi_L(y)-1| \big]
\end{split}
\end{equation*}
since $0\leq\chi_L\leq\id$. That implies \eqref{bso01t02}.
\end{proof}

In order to make the Fredholm determinants such as in \eqref{energy01t01} 
well-defined we establish the relevant trace class properties of the Birman--Schwinger operators 
starting with the finite volume operator. 

\begin{lemma}\label{bso02at}
Let $V\in L^1(\R)$. Then, $\sqrt{|V|}R_L(z)\in B_2(\hilbert_L)$ and
$K_L(z)\in B_1(\hilbert_L)$ for all $z\notin\sigma(H_L)$. 
The map $z\mapsto K_L(z)$ is holomorphic with respect to $B_1(\hilbert_L)$ on $\C\setminus\sigma(H_L)$. 
\end{lemma}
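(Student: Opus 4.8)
The plan is to exploit that on the bounded interval $\Lambda_L$ the free operator $H_L$ is far better behaved than on the line: by Lemma~\ref{H_L_spectrum} it is self-adjoint, bounded from below, and has purely discrete spectrum with quadratically growing eigenvalues, so that its resolvent is already trace class. Since $\sqrt{|V|}$ is a bounded multiplication operator (recall $V\in L^\infty(\R)$ throughout this section) and $J$ is bounded with $\|J\|=1$, these factors can then be attached without leaving the relevant Schatten ideals.

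First I would fix a real parameter below the spectrum. With $c$ the constant from Lemma~\ref{H_L_spectrum} set $z_0\coloneqq -c\big(\tfrac1L+\tfrac4{c+1}\big)-1$. Then \eqref{H_L_spectrum02} yields $\lambda_{j,L}-z_0\ge \tfrac1{c+1}\big(\tfrac{\pi(j-1)}{2L}\big)^2+1$ for all $j$; in particular $z_0<\lambda_{1,L}=\inf\sigma(H_L)$, so $z_0\notin\sigma(H_L)$, and $R_L(z_0)$ is a self-adjoint operator whose singular values are $(\lambda_{j,L}-z_0)^{-1}$ and sum to a finite value because $\sum_{j\ge 2}(j-1)^{-2}<\infty$. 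Hence $R_L(z_0)\in B_1(\hilbert_L)$. For an arbitrary $z\notin\sigma(H_L)$ the first resolvent identity, written as $R_L(z)=R_L(z_0)\big[\id+(z-z_0)R_L(z)\big]$, exhibits $R_L(z)$ as a product of the trace-class operator $R_L(z_0)$ with a bounded one, so $R_L(z)\in B_1(\hilbert_L)$ as well. Multiplying by the bounded operator $\sqrt{|V|}$ gives $\sqrt{|V|}R_L(z)\in B_1(\hilbert_L)\subset B_2(\hilbert_L)$, and sandwiching by $\sqrt{|V|}$ and then by $J$ gives $K_L(z)=\sqrt{|V|}R_L(z)\sqrt{|V|}J\in B_1(\hilbert_L)$.

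For the holomorphy in trace norm I would use that $z\mapsto R_L(z)$ is holomorphic in operator norm, hence so is $z\mapsto\id+(z-z_0)R_L(z)$; multiplying on the left by the fixed trace-class operator $R_L(z_0)$ gives a map that is holomorphic in $B_1$-norm (using $\|R_L(z_0)A\|_1\le\|R_L(z_0)\|_1\,\|A\|$), and this equals $R_L(z)$; finally sandwiching by the fixed bounded operators $\sqrt{|V|}$ and $\sqrt{|V|}J$ preserves $B_1$-holomorphy. I expect no real obstacle here; the only point that genuinely uses the one-dimensional, bounded-interval setting is the summability of the reciprocal eigenvalues, i.e.\ the quadratic lower bound \eqref{H_L_spectrum02}. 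An alternative route, which is longer but generalizes better, would instead start from the decomposition $R_L(z)=R_{\infty,L}(z)-D_L(z)$ of Lemma~\ref{fr01t}, use the Hilbert--Schmidt bound of Lemma~\ref{bso01t} for the first summand and the rank-two structure of $D_L(z)$ for the second, and obtain the trace-class property of $\sqrt{|V|}R_{\infty,L}(z)\sqrt{|V|}J$ by factoring $R_{\infty,L}(z_0)=\chi_L R_\infty(z_0)\chi_L$ through $(H_\infty-z_0)^{-1/2}$ for real $z_0<0$.
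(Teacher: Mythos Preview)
Your argument is correct and considerably shorter than the paper's, but it rests on the standing assumption $V\in L^\infty(\R)$ from the section preamble rather than on the lemma's stated hypothesis $V\in L^1(\R)$ alone. You observe that the eigenvalue lower bound \eqref{H_L_spectrum02} already forces $R_L(z_0)\in B_1(\hilbert_L)$ for a real $z_0$ below the spectrum, whence $R_L(z)\in B_1$ for all $z\notin\sigma(H_L)$ by the resolvent identity (minor sign slip: it should read $R_L(z)=R_L(z_0)\big[\id-(z-z_0)R_L(z)\big]$), and then the bounded factors $\sqrt{|V|}$ and $J$ can be attached for free. The holomorphy argument via a fixed $B_1$ factor times an operator-norm holomorphic family is clean.

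The paper takes a genuinely different route that uses only $V\in L^1$: it does not show $R_L(z)\in B_1$ but instead establishes $\sqrt{|V|}R_L(z)\sqrt{|V|}\in B_1$ directly via the spectral representation, a uniform $L^\infty$-bound on the normalized eigenfunctions $\varphi_{j,L}$, and the estimate $\|\sqrt{|V|}\varphi_{j,L}\|^2\le\|\varphi_{j,L}\|_\infty^2\,\|V\|_1$, followed by a singular-value argument using Bessel's inequality. This is longer, but it proves exactly what the lemma hypothesis promises and would survive dropping the boundedness assumption on $V$. Your route buys brevity; the paper's buys a sharper accounting of what is really needed.
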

\begin{proof}
The spectral representation of the resolvent yields
\begin{equation}\label{bso02at01}
  \sqrt{|V|}R_L(z)\sqrt{|V|} = 
     \sum_{j=1}^\infty \frac{1}{z-\lambda_{j,L}} (\sqrt{|V|}\varphi_{j,L},\cdot)\sqrt{|V|}\varphi_{j,L}.
\end{equation}
From \eqref{H_L_spectrum02} we infer that
\begin{equation}\label{bso02at02}
  \sum_{j=1}^\infty \frac{1}{|z-\lambda_{j,L}|} < \infty \ \text{for}\ z\notin\sigma(H_L) .
\end{equation}
The eigenvectors $\varphi_{j,L}$ have the general form
\begin{equation*}
  \varphi(x) = C_1 e^{i\sqrt{\lambda}x} + C_2 e^{-i\sqrt{\lambda}x}
\end{equation*}
with appropriate constants $C_{1,2}$. For $\lambda>0$ the normalization implies 
\begin{equation*}
  1  = \int_{-L}^L |\varphi(x)|^2 \, dx
     = 2L ( |C_1|^2 + |C_2|^2 ) + 2\re (C_1\bar C_2) \frac{\sin(2L\sqrt{\lambda})}{\sqrt{\lambda}} 
     \geq \frac{2L\sqrt{\lambda}-1}{\sqrt{\lambda}} ( |C_1|^2 + |C_2|^2 ).
\end{equation*}
For $2L\sqrt{\lambda}>1$ this yields an upper bound for $|C_1|^2+|C_2^2|$ and thereby
\begin{equation*}
  |\varphi_{j,L}(x)|^2 \leq 2 (|C_1|^2+|C_2|^2) \leq \frac{2\sqrt{\lambda_{j,L}}}{2L\sqrt{\lambda_{j,L}}-1}\ 
     \text{for}\ 2L\sqrt{\lambda_{j,L}} > 1.
\end{equation*}
By \eqref{H_L_spectrum02} there are only finitely many $\lambda_{j,L}$ with $2L\sqrt{\lambda_{j,L}} \leq 1$
whence $\|\varphi_{j,L}\|_\infty\leq C$, $j\in\N$, with some constant $0\leq C<\infty$, which may depend on $L$, however.
Using $V\in L^1(\R)$ we conclude
\begin{equation}\label{bso02at03}
  \sup_{j\in\N} \|\sqrt{|V|}\varphi_{j,L}\| < \infty .
\end{equation}
With the aid of \eqref{bso02at02} and \eqref{bso02at03} we deduce from \eqref{bso02at01}
that $\sqrt{|V|}R_L(z)\sqrt{|V|}$ is compact since it is the operator norm limit of finite rank operators. 
Let $\varkappa_j\geq 0$ be its singular values. The singular value decomposition gives
\begin{equation*}
  0 \leq \sum_{j=1}^n \varkappa_j
       = \sum_{j=1}^n (f_j, \sqrt{|V|}R_L(z)\sqrt{|V|} g_j)
       = \sum_{k=1}^\infty \frac{1}{z-\lambda_{k,L}}
           \sum_{j=1}^n (f_j,\sqrt{|V|}\varphi_{k,L})(\sqrt{|V|}\varphi_{k,L},g_j)
\end{equation*}
with orthonormal systems $\{f_j\}$ and $\{g_j\}$. By the Cauchy--Schwarz inequality and Bessel's inequality 
\begin{equation*}
\begin{split}
  \sum_{j=1}^n \varkappa_j
    & \leq \sum_{k=1}^\infty \frac{1}{|z-\lambda_{k,L}|}
           \Big[\sum_{j=1}^n |(f_j,\sqrt{|V|}\varphi_{k,L})|^2\Big]^{\frac{1}{2}}
           \Big[\sum_{j=1}^n |(\sqrt{|V|}\varphi_{k,L},g_j)|^2\Big]^{\frac{1}{2}} \\
    & \leq \sum_{k=1}^\infty \frac{1}{|z-\lambda_{k,L}|} \|\sqrt{|V|}\varphi_{k,L}\|^2 \\
    & < \infty
\end{split}
\end{equation*}
which shows the trace class property. For the holomorphy note that
\begin{equation*}
  \sqrt{|V|}R'(z)\sqrt{|V|} = -\sum_{j=1}^\infty \frac{1}{(z-\lambda_{j,L})^2} (\sqrt{|V|}\varphi_{j,L},\cdot)\sqrt{|V|}\varphi_{j,L}
\end{equation*}
is trace class by the same arguments. Then, standard analysis yields the statement.
That $\sqrt{|V|}R_L(z)$ is a Hilbert--Schmidt operator can be shown in like manner.
\end{proof}

The infinite volume Birman--Schwinger operators are treated in two steps (cf. \cite{Froese1997}).

\begin{lemma}\label{bso02t}
Let $V\in L^1(\R)$ and $\im(\sqrt{z})\neq 0$. Then, $K_\infty(z)$ and $K_{\infty,L}(z)$
are trace class and
\begin{equation}\label{bso02t01}
  \|K_{\infty,L}(z)\|_1 \leq \| K_\infty(z)\|_1 \leq \frac{1}{2|\im(\sqrt{z})|} \|V\|_1 .
\end{equation}
\end{lemma}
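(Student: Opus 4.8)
By \eqref{bso01t01} in Lemma \ref{bso01t} these operators are already known to be Hilbert--Schmidt, so the task is to upgrade this to the trace class together with the sharper norm bound \eqref{bso02t01}. The idea is to factor the free resolvent through a Fourier multiplier and use the elementary fact that a multiplication operator composed with an $L^2$-Fourier multiplier is Hilbert--Schmidt. So the plan is: (i) control $\int_\R |z-k^2|^{-1}\,dk$; (ii) split $R_\infty(z)=T_zT_z$ with $T_z$ a suitable Fourier multiplier whose symbol lies in $L^2$; (iii) read off the trace-class bound for $K_\infty(z)$; (iv) deduce the statement for $K_{\infty,L}(z)$ by localisation.

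First I would fix $z$ with $\im(\sqrt z)\neq 0$, hence $z\notin[0,\infty)=\sigma(H_\infty)$, and choose any measurable function $g_z:\R\to\C$ with $g_z(k)^2=(z-k^2)^{-1}$ for almost every $k$ (the principal square root where $z-k^2\notin(-\infty,0]$, extended arbitrarily across the null set). Then $|g_z(k)|^2=|z-k^2|^{-1}$, and from the factorisation $z-k^2=-(k-\sqrt z)(k+\sqrt z)$ together with the Cauchy--Schwarz inequality
\begin{equation*}
  \int_\R\frac{dk}{|z-k^2|}\ \le\ \Big(\int_\R\frac{dk}{|k-\sqrt z|^2}\Big)^{1/2}\Big(\int_\R\frac{dk}{|k+\sqrt z|^2}\Big)^{1/2}=\frac{\pi}{|\im(\sqrt z)|}
\end{equation*}
one gets $g_z\in L^2(\R)$ with $\|g_z\|_2^2\le \pi/|\im(\sqrt z)|$. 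Let $T_z$ be the Fourier multiplier with symbol $g_z$ (normalising the transform so that $-d^2/dx^2$ corresponds to multiplication by $k^2$ and $R_\infty(z;x,y)=\frac1{2\pi}\int_\R e^{ik(x-y)}(z-k^2)^{-1}\,dk$, consistent with \eqref{green_infinity}). Comparing symbols, $T_z^2=R_\infty(z)$, and $T_z^\ast$ has symbol $\overline{g_z}$.

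Next I would record the standard bound: for $h\in L^2(\R)$ the operator $h(X)\,g_z(-i\nabla)$ has integral kernel $h(x)\check g_z(x-y)$, so by Plancherel $\|h(X)\,g_z(-i\nabla)\|_2^2=\|h\|_2^2\,\|\check g_z\|_2^2=(2\pi)^{-1}\|h\|_2^2\,\|g_z\|_2^2$. Applying this with $h=\sqrt{|V|}$, where $\|\sqrt{|V|}\|_2^2=\|V\|_1$, gives $\sqrt{|V|}\,T_z\in B_2(\hilbert)$, and taking adjoints (using $|\overline{g_z}|=|g_z|$) also $T_z\sqrt{|V|}\in B_2(\hilbert)$, both with Hilbert--Schmidt norm at most $(2\pi)^{-1/2}\|V\|_1^{1/2}\|g_z\|_2$. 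Writing
\begin{equation*}
  K_\infty(z)=\sqrt{|V|}\,R_\infty(z)\,\sqrt{|V|}\,J=\big(\sqrt{|V|}\,T_z\big)\big(T_z\sqrt{|V|}\big)J
\end{equation*}
exhibits $K_\infty(z)$ as a product of two Hilbert--Schmidt operators and the contraction $J$, hence $K_\infty(z)\in B_1(\hilbert)$ with
\begin{equation*}
  \|K_\infty(z)\|_1\le \|\sqrt{|V|}\,T_z\|_2\,\|T_z\sqrt{|V|}\|_2\,\|J\|\le \frac{1}{2\pi}\|V\|_1\,\|g_z\|_2^2\le\frac{\|V\|_1}{2|\im(\sqrt z)|}\,,
\end{equation*}
which is the right-hand inequality of \eqref{bso02t01}. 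For the left-hand one I would use that $\chi_L$, $J$ and $\sqrt{|V|}$ are all multiplication operators and therefore commute, so $K_{\infty,L}(z)=\sqrt{|V|}\,\chi_LR_\infty(z)\chi_L\sqrt{|V|}\,J=\chi_L K_\infty(z)\chi_L$; since $\chi_L$ is a contraction, $K_{\infty,L}(z)\in B_1(\hilbert)$ with $\|K_{\infty,L}(z)\|_1\le\|K_\infty(z)\|_1$.

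The only genuinely delicate point is that $R_\infty(z)$ has no honest operator square root for general complex $z$ (for real $z<0$ it is negative definite), so one cannot simply invoke functional calculus. This is exactly why I would work with the Fourier multiplier $T_z$ attached to an \emph{arbitrary} measurable branch of $(z-k^2)^{-1/2}$: the only facts used — $T_z^2=R_\infty(z)$, $|g_z|^2=|z-k^2|^{-1}$, and the Hilbert--Schmidt identity above — are all insensitive to the choice of branch, and no continuity or analyticity of $z\mapsto T_z$ is needed for this lemma. (Should analyticity be wanted later, one would instead restrict to a half-plane on which the principal branch is holomorphic.)
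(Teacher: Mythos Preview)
Your proof is correct. Both you and the paper factor the resolvent as a product of two Hilbert--Schmidt Fourier multipliers and then invoke H\"older's inequality; the high-level architecture is the same, but the execution differs in an instructive way.

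The paper uses the natural first-order factorisation
\[
  R_\infty(z) = (\sqrt{z}\,\id - i\nabla)^{-1}(\sqrt{z}\,\id + i\nabla)^{-1},
\]
writes down the explicit position-space kernel $-e^{i\sqrt{z}(y-x)}\Theta(y-x)$ of each factor, and computes $\|\sqrt{|V|}(\sqrt{z}\mp i\nabla)^{-1}\|_2^2 = \|V\|_1/(2|\im\sqrt{z}|)$ by direct integration. You instead pick an arbitrary measurable square root $g_z$ of the symbol $(z-k^2)^{-1}$ and work on the Fourier side, replacing the explicit kernel calculation by Plancherel plus the Cauchy--Schwarz bound $\int|z-k^2|^{-1}\,dk\le\pi/|\im\sqrt z|$. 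In momentum space the two computations actually coincide: your Cauchy--Schwarz step is the scalar shadow of the operator H\"older inequality applied to the paper's two first-order factors, since $\int|k\mp\sqrt z|^{-2}\,dk=\pi/|\im\sqrt z|$ are precisely the squared $L^2$-norms of their symbols. What your packaging buys is that it sidesteps writing down any kernel and makes the square-root ambiguity irrelevant (as you note, no branch choice is needed because only $|g_z|^2$ enters); what the paper's buys is that the first-order factors are genuine resolvents of self-adjoint operators, which can be convenient when one later wants analyticity in $z$.
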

\begin{proof}
The first inequality in \eqref{bso02t01} follows via H\"older's inequality \eqref{hoelder_inequality}.
In order to prove the second we factor the resolvent into
\begin{equation*}
   R_\infty(z) = ( \sqrt{z}\id - i\nabla)^{-1}(\sqrt{z}\id + i\nabla)^{-1}
\end{equation*}
and start with the first factor, which has the integral kernel
\begin{equation*}
  (\sqrt{z}\id - i\nabla)^{-1}(x,y) = - e^{-i\sqrt{z}x}e^{i\sqrt{z}y} \Theta(y-x),
   \ \im(\sqrt{z})>0 .
\end{equation*}
We compute the Hilbert--Schmidt norm
\begin{equation*}
\begin{split}
\|\sqrt{|V|}(\sqrt{z}\id-i\nabla)^{-1}\|_2^2
     & = \int_\R\int_\R |V(x)| | e^{-i\sqrt{z}x}e^{i\sqrt{z}y} |^2
          \Theta(y-x)\, dy \, dx  \\
     & = \int_\R |V(x)| e^{2\im(\sqrt{z})x} \int_x^\infty e^{-2\im(\sqrt{z})y}\, dy\, dx\\
     & = \frac{1}{2\im(\sqrt{z})}\|V\|_1 .
\end{split}
\end{equation*}
The second factor can be treated likewise. 
For $\im(\sqrt{z})<0$ we obtain the same result since $K_{\infty,L}(\bar{z}) = J K_{\infty,L}(z)^*J$ and $K_{\infty}(\bar{z}) = J K_{\infty}(z)^*J$.

Thus, $K_\infty(z)=\sqrt{|V|}R_\infty(z)\sqrt{|V|}J$ is the product of two Hilbert--Schmidt operators and thereby trace class.
\end{proof}

We prove a trace class limiting absorption principle which, in particular, extends Lemma \ref{bso02t}
to $\im(\sqrt{z})=0$. Our proof combines the use of indicator functions as in \cite[Prop. 5.6]{Simon2005} with a
local resolvent formula from \cite{Froese1997}. Thereby, the Birman--Schwinger operators can, essentially, be written as an infinite 
sum of rank one operators whose trace norm can be computed explicitly
\begin{equation}\label{trace_norm_rank_one}
  \|(g,\cdot)f\|_1 = \|f\| \|g\|,\ f,g\in\hilbert .
\end{equation}
For an alternative approach based upon Mourre-type estimates see \cite[Thm. 6.1]{Sobolev1993}.

\begin{lemma}\label{bso03t} 
Let $V\in L^1(\R)$ satisfy $V\in\ell^{\frac{1}{2}}(L^1(\R))$ and $X^2V\in\ell^{\frac{1}{2}}(L^1(\R))$, see \eqref{birman_solomyak}.
Let $z,w\in\C\setminus\{0\}$ with $\im(\sqrt{z})\cdot\im(\sqrt{w})\geq 0$.
Then, $K_{\infty,L}(z)-K_{\infty,L}(w)\in B_1(\hilbert)$ and
$K_\infty(z)-K_\infty(w)\in B_1(\hilbert)$. Moreover,
\begin{equation}\label{bso03t01}
\begin{split}
    \| K_{\infty,L}(z)-K_{\infty,L}(w)\|_1
     & \leq \| K_\infty(z)-K_\infty(w)\|_1\\
     & \leq \frac{|\sqrt{z}-\sqrt{w}|}{|\sqrt{z}\sqrt{w}|} \Big\{ 
              |\sqrt{z}+\sqrt{w}|\Big[ \llbracket V\rrbracket_{1,\frac{1}{2}} \llbracket X^2V\rrbracket_{1,\frac{1}{2}} + \frac14 \|V\|_1 \Big]
                    + \llbracket V\rrbracket_{1,\frac{1}{2}}^2 \Big\} .
\end{split}
\end{equation}
In particular, the limits
\begin{equation*}
  \lim\limits_{\im(\sqrt{z})\to\pm 0} K_{\infty,L}(z) = K_{\infty,L}^\pm(\nu)\ \text{and}\
  \lim\limits_{\im(\sqrt{z})\to\pm 0} K_\infty(z) = K_\infty^\pm(\nu),\ z=(\sqrt{\nu}+is)^2,
\end{equation*}
exist in trace class. 
\end{lemma}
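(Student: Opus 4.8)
The plan is to reduce first to the case $\im(\sqrt z)\ge0$ and $\im(\sqrt w)\ge0$, and then to a block decomposition along the unit intervals $I_j$. Since $K_\infty(\bar\zeta)=JK_\infty(\zeta)^*J$ and $K_{\infty,L}(\bar\zeta)=JK_{\infty,L}(\zeta)^*J$ (as noted in the proof of Lemma \ref{bso02t}), the remaining case $\im(\sqrt z),\im(\sqrt w)\le0$ follows by replacing $(z,w)$ with $(\bar z,\bar w)$, taking adjoints and conjugating with the isometry $J$, none of which changes the trace norm. The first inequality in \eqref{bso03t01} is then immediate: $\chi_L$ commutes with the multiplication operators $\sqrt{|V|}$ and $J$, so $K_{\infty,L}(z)-K_{\infty,L}(w)=\chi_L\big(K_\infty(z)-K_\infty(w)\big)\chi_L$ by \eqref{R_infinity_L} and \eqref{K_infinity}, and $\|\chi_L T\chi_L\|_1\le\|T\|_1$ by H\"older's inequality \eqref{hoelder_inequality}. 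So it suffices to bound $\|K_\infty(z)-K_\infty(w)\|_1$, and I would establish \eqref{bso03t01} first for $\im(\sqrt z),\im(\sqrt w)>0$, passing to the boundary values only at the end.

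Write $\id=\sum_{j\in\Z}\chi_{I_j}$ with $I_j=[j,j+1)$ as in \eqref{birman_solomyak}, so $K_\infty(z)-K_\infty(w)=\sum_{j,k\in\Z}\chi_{I_j}\big(K_\infty(z)-K_\infty(w)\big)\chi_{I_k}$; the point is that each block is trace class with an explicitly summable bound. For an off-diagonal block ($j\neq k$, say $j<k$) the sign of $x-y$ is constant on $I_j\times I_k$, so by \eqref{green_infinity} the kernel there is $-\tfrac{i}{2\sqrt\zeta}e^{i\sqrt\zeta|x-y|}$ with $|x-y|=y-x$, which is rank one in $(x,y)$; hence $\chi_{I_j}R_\infty(\zeta)\chi_{I_k}$ is a rank one operator. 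Using
\begin{equation*}
  R_\infty(z;x,y)-R_\infty(w;x,y)=-\frac{i}{2\sqrt z}\big(e^{i\sqrt z u}-e^{i\sqrt w u}\big)-\frac{i}{2}\Big(\frac1{\sqrt z}-\frac1{\sqrt w}\Big)e^{i\sqrt w u},\qquad u=|x-y|,
\end{equation*}
and estimating the exponential difference by \eqref{exp_properties} (which yields the factor $|\sqrt z-\sqrt w|$ and an extra factor $u\le|x-c|+|y-c|$ for a reference point $c$ chosen in the gap between $I_j$ and $I_k$), each block becomes a finite-rank operator whose trace norm I compute via \eqref{trace_norm_rank_one} after the symmetric normalization $f(x)=\sqrt{|V(x)|}e^{-i\sqrt\zeta(x-c)}\chi_{I_j}(x)$ and $g$ its analogue on $I_k$; here $|e^{\pm i\sqrt\zeta(\cdot-c)}|\le1$ on the relevant interval because $\im(\sqrt\zeta)\ge0$ and $x-c<0<y-c$, so $\|f\|^2\le\|V\chi_{I_j}\|_1$, $\|g\|^2\le\tfrac1{4|z|}\|V\chi_{I_k}\|_1$ (with $X^2V$ in place of $V$ wherever the factor $u$ is present). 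Summing over $j\neq k$ and using $\sum_{j\neq k}a_j^{1/2}a_k^{1/2}\le(\sum_j a_j^{1/2})^2$ turns these contributions into the $\llbracket V\rrbracket_{1,\frac12}^2$ and $\llbracket V\rrbracket_{1,\frac12}\llbracket X^2V\rrbracket_{1,\frac12}$ terms of \eqref{bso03t01}; this is where the Birman--Solomyak hypotheses enter.

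The diagonal blocks $j=k$ are the crux, since the rank one structure is lost there. I would use the factorization $R_\infty(\zeta)=A(\zeta)B(\zeta)$, $A(\zeta)=(\sqrt\zeta\id-i\nabla)^{-1}$, $B(\zeta)=(\sqrt\zeta\id+i\nabla)^{-1}$, with the explicit triangular kernels from the proof of Lemma \ref{bso02t}, together with the resolvent identity rewritten so as to retain the small factor,
\begin{equation*}
  R_\infty(z)-R_\infty(w)=(\sqrt w-\sqrt z)\big[A(z)R_\infty(w)+R_\infty(z)B(w)\big].
\end{equation*}
Inserting $\id=\chi_{I_j}+(\id-\chi_{I_j})$ in the middle of each of $A(z)R_\infty(w)$ and $R_\infty(z)B(w)$: the $\chi_{I_j}$-part is a product of two Hilbert--Schmidt operators localized to the bounded interval $I_j$, with Hilbert--Schmidt norms bounded (using $|A(z)(x,y)|\le1$ and $|R_\infty(w;x,y)|\le\tfrac1{2|\sqrt w|}$ for $x,y\in I_j$, both valid since $\im(\sqrt\zeta)\ge0$) by $\|V\chi_{I_j}\|_1^{1/2}$ and $\tfrac1{2|\sqrt w|}\|V\chi_{I_j}\|_1^{1/2}$ respectively, hence trace class with trace norm $\le\tfrac1{2|\sqrt w|}\|V\chi_{I_j}\|_1$; the $(\id-\chi_{I_j})$-part, after carrying out the inner integral over $t\ge j+1$, collapses to a rank one operator because the exponentials factorize --- this is the local resolvent formula of \cite{Froese1997} --- whose trace norm is again computed via \eqref{trace_norm_rank_one} using $|e^{-i\sqrt\zeta(\cdot-(j+1))}|\le1$ on $I_j$. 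Summing over $j$ with $\sum_j\|V\chi_{I_j}\|_1=\|V\|_1$ produces the $\tfrac14\|V\|_1$ term. With every block controlled, $\sum_{j,k}$ converges absolutely in $B_1(\hilbert)$ (by finiteness of $\llbracket V\rrbracket_{1,\frac12}$ and $\llbracket X^2V\rrbracket_{1,\frac12}$), which legitimizes reassembling the series and gives $K_\infty(z)-K_\infty(w)\in B_1(\hilbert)$ together with \eqref{bso03t01}; the $K_{\infty,L}$-statement follows from the first paragraph. Finally, for $z=(\sqrt\nu+is)^2$, $w=(\sqrt\nu+is')^2$ with $s,s'\to\pm0$ one has $\sqrt z,\sqrt w\to\pm\sqrt\nu$, so $|\sqrt z-\sqrt w|\to0$ while $|\sqrt z\sqrt w|\to\nu>0$; hence \eqref{bso03t01} makes $\im(\sqrt z)\mapsto K_\infty(z)$ and $\im(\sqrt z)\mapsto K_{\infty,L}(z)$ uniformly Cauchy in $B_1(\hilbert)$ as $\im(\sqrt z)\to\pm0$, and by completeness of $B_1(\hilbert)$ the limits $K_\infty^\pm(\nu)$, $K_{\infty,L}^\pm(\nu)$ exist; a short continuity argument then extends \eqref{bso03t01} to the case $\im(\sqrt z)\im(\sqrt w)=0$.

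The main obstacle is the diagonal blocks: with no rank one structure available one must genuinely exploit the triangular-kernel factorization (equivalently Froese's local resolvent formula) to split $\chi_{I_j}\big(R_\infty(z)-R_\infty(w)\big)\chi_{I_j}$ into a trace-class product of two Hilbert--Schmidt operators on $I_j$ plus an explicit rank one remainder, and --- crucially --- to do so while keeping the factor $|\sqrt z-\sqrt w|$, since bounding $\|\chi_{I_j}R_\infty(z)\chi_{I_j}\|_1$ and $\|\chi_{I_j}R_\infty(w)\chi_{I_j}\|_1$ separately and subtracting would destroy the continuity statement. A secondary nuisance is the bookkeeping needed to assemble the sharp constants of \eqref{bso03t01} from the off-diagonal and diagonal contributions.
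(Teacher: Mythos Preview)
Your proposal is correct and mirrors the paper's proof: reduction to the upper half-plane, block decomposition over the unit intervals $I_j$, off-diagonal blocks handled as finite-rank operators via the factorized Green function with a reference point $c_{jk}$, and diagonal blocks via Froese's local resolvent formula. Your treatment of the diagonal through the factorization $R_\infty=AB$ and the insertion $\id=\chi_{I_j}+(\id-\chi_{I_j})$ is in fact a rederivation of that formula (since $A(z)R_\infty(w)+R_\infty(z)B(w)=(\sqrt z+\sqrt w)R_\infty(z)R_\infty(w)$, and the rank-one remainder you obtain after integrating over $t\ge j+1$ is exactly the paper's boundary operator $E_{j+1}$), so the two arguments coincide up to bookkeeping.
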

\begin{proof}
Throughout the proof we assume $\im(\sqrt{z})\geq 0$ and $\im(\sqrt{w})\geq 0$. 
The case $\im(\sqrt{z})\leq 0$ and $\im(\sqrt{w})\leq 0$ follows from the relation $K_{\infty,L}(\bar{z}) = J K_{\infty,L}(z)^*J$ 
and $K_{\infty}(\bar{z}) = J K_{\infty}(z)^*J$.

The first inequality in \eqref{bso03t01} follows via H\"older's inequality \eqref{hoelder_inequality}.
In order to prove the second we write the difference as an infinite sum of trace class operators.
To this end, put $I_j\coloneqq[j,j+1]$, $j\in\Z$, and write
\begin{equation*}
  K_\infty(z) - K_\infty(w) = \sum_{j,k\in\Z} \chi_{I_j}(K_\infty(z) - K_\infty(w))\chi_{I_k} .
\end{equation*}
We distinguish two cases, namely $j\neq k$ and $j=k$.

(a)
Let $j\neq k$. We start with the case $j<k$.
Let $x\in I_j$ and $y\in I_k$. Hence $x\leq y$ and the Green function \eqref{green_infinity} factorizes
which allows us to write it as the sum of four rank one operators
\begin{equation*}
\begin{split}
  \lefteqn{ R_\infty(z;x,y) - R_\infty(w;x,y) }\\
     & = \frac{1}{8i} \big[\frac{1}{\sqrt{z}} + \frac{1}{\sqrt{w}}\big]
          \big[ f(z,w;c_{jk}-x)g(z,w;y-c_{jk}) + g(z,w;c_{jk}-x)f(z,w;y-c_{jk}) \big]  \\
     & \quad + \frac{1}{4i} \big[\frac{1}{\sqrt{z}}-\frac{1}{\sqrt{w}}\big] 
          \big[h(z;c_{jk}-x)h(z;y-c_{jk})+h(w;c_{jk}-x)h(w;y-c_{jk})\big] .
\end{split}
\end{equation*}
The kernel functions are
\begin{equation*}
    f(z,w;u) \coloneqq e^{i\sqrt{z}u} - e^{i\sqrt{w}u},\
    g(z,w;u) \coloneqq e^{i\sqrt{z}u} + e^{i\sqrt{w}u},\
    h(z;u) \coloneqq e^{i\sqrt{z}u} .
\end{equation*}
The auxiliary quantity is to be chosen such that
\begin{equation*}
  c_{jk} \coloneqq 0,\ \text{for}\ j+1\leq 0\leq k\ \text{and}\
     j+1 \leq c_{jk} \leq k \ \text{otherwise} .
\end{equation*}
We conclude that $\chi_{I_j}(K_\infty(z) - K_\infty(w))\chi_{I_k}$ is of finite rank and, thereby, trace class.
We estimate its trace norm via \eqref{trace_norm_rank_one}. To this end, we use \eqref{exp_properties} and obtain
\begin{gather*}
  \|\chi_{I_j}\sqrt{|V|}f(z,w)\|\cdot \|\chi_{I_k}\sqrt{|V|}g(z,w)\|
     \leq 2|\sqrt{z}-\sqrt{w}| \Big[\int_j^{j+1} |V(x)|(c_{jk}-x)^2\, dx \int_k^{k+1} |V(y)|\, dy\Big]^{\frac{1}{2}} ,\\
  \|\chi_{I_j}\sqrt{|V|}g(z,w)\|\cdot \|\chi_{I_k}\sqrt{|V|}f(z,w)\|
     \leq 2|\sqrt{z}-\sqrt{w}| \Big[\int_j^{j+1} |V(x)|\, dx \int_k^{k+1} |V(y)|(y-c_{jk})^2\, dy\Big]^{\frac{1}{2}} .
\end{gather*}
We get rid of $c_{jk}$ by bounding the products of the integrals by
\begin{equation*}
  \alpha_j \coloneqq \int_j^{j+1} |V(x)|\, dx,\ 
  \beta_k \coloneqq \int_k^{k+1}  y^2 |V(y)|\, dy
\end{equation*}
which yields the estimates
\begin{gather*}
  \int_j^{j+1}|V(x)|\, dx \int_k^{k+1} |V(y)| (y-c_{jk})^2\, dy
     \leq \alpha_j\beta_k + \beta_j\alpha_k, \\
  \int_j^{j+1} |V(x)|(c_{jk}-x)^2\, dx \int_k^{k+1}|V(y)|\, dy
     \leq \beta_j\alpha_k  + \alpha_j\beta_k .
\end{gather*}
Here we used
\begin{align*}
  (y-c_{jk})^2 & \leq y^2     & (x-c_{jk})^2 & \leq x^2+y^2 && \text{for}\ 0<j+1\leq k ,\\
  (y-c_{jk})^2 & = y^2        & (x-c_{jk})^2 & = x^2        && \text{for}\ j+1\leq 0 \leq k ,\\
  (y-c_{jk})^2 & \leq y^2+x^2 & (x-c_{jk})^2 & \leq x^2     && \text{for}\ j+1\leq k<0 .
\end{align*}
Since the estimates are symmetric in $j$ and $k$ we obtain the same result for $j>k$.
Obviously, the norms involving $h$ are bounded by $\alpha_j$. Therefore,
\begin{equation*}
  \sum_{j\neq k} \|\chi_{I_j}(K_\infty(z) - K_\infty(w))\chi_{I_k}\|_1
     \leq \frac{1}{2}|\sqrt{z}-\sqrt{w}| \Big|\frac{1}{\sqrt{z}}+\frac{1}{\sqrt{w}}\Big| 
           \sum_{j\neq k} (\alpha_j\beta_k+\beta_j\alpha_k)^{\frac{1}{2}} 
      + \frac{1}{2}\Big|\frac{1}{\sqrt{z}}-\frac{1}{\sqrt{w}}\Big| \sum_{j\neq k} \alpha_j^{\frac{1}{2}}\alpha_k^{\frac{1}{2}} .
\end{equation*}

(b)
Let $j=k$. We use a local resolvent formula which slightly generalizes the one found in \cite[(7.1),(7.2)]{Froese1997}.
For an arbitrary interval $I\coloneqq [a,b]$, $a<b$, and its indicator function $\chi_I$ we have 
\begin{equation}\label{local_resolvent}
  \chi_I ( R_\infty(z) - R_\infty(w) )\chi_I
     = (w-z) \chi_I R_\infty(z)\chi_I R_\infty(w)\chi_I
       + \frac{1}{4i} \big( \frac{1}{\sqrt{z}}-\frac{1}{\sqrt{w}} \big)                    
               \chi_I ( E_a(z,w) + E_b(z,w) )\chi_I .
\end{equation}
Recall that $w,z\neq 0$, $\im(\sqrt{z})\geq 0$, $\im(\sqrt{w})\geq 0$.
The rank one operator $E_c(z,w)$ has the kernel
\begin{equation*}
  E_c(z,w;x,y) \coloneqq e^{i\sqrt{z}|x-c|}e^{i\sqrt{w}|y-c|},\ c\in\R .
\end{equation*}
Note that \eqref{local_resolvent} becomes the usual resolvent formula in the limit $a\to-\infty$, $b\to\infty$.
The product of the resolvents can be estimated by H\"older's inequality \eqref{hoelder_inequality} yielding the
Hilbert--Schmidt norms
\begin{equation*}
  \|\chi_{I_j}\sqrt{|V|}R_\infty(z)\chi_{I_j}\|_2 \leq \frac{1}{2|\sqrt{z}|} \alpha_j^{\frac{1}{2}},\
  \|\chi_{I_j}R_\infty(w)\sqrt{|V|}\chi_{I_j}\|_2 \leq \frac{1}{2|\sqrt{w}|} \alpha_j^{\frac{1}{2}} .
\end{equation*}
Using \eqref{trace_norm_rank_one}
\begin{equation*}
  \| \chi_{I_j} \sqrt{|V|}E_j(z,w)\sqrt{|V|}\chi_{I_j} \|_1 \leq \alpha_j,\   
  \| \chi_{I_j} \sqrt{|V|}E_{j+1}(z,w)\sqrt{|V|}\chi_{I_j}) \|_1 \leq \alpha_j  .
\end{equation*}
For the kernel estimates we used \eqref{exp_properties}. Thus,
\begin{equation*}
  \| \chi_{I_j}( K_\infty(z) - K_\infty(w) )\chi_{I_j}\|_1 
     \leq |w-z| \frac{1}{4|\sqrt{z}||\sqrt{w}|} \alpha_j + \frac{1}{2} \Big| \frac{1}{\sqrt{z}}-\frac{1}{\sqrt{w}}\Big| \alpha_j .
\end{equation*}

(c)
We combine the results of (a) and (b). 
\begin{equation*}
\begin{split}\|
  K_\infty(z) - K_\infty(w) \|_1 
    & \leq \frac{1}{2} |\sqrt{z}-\sqrt{w}| \Big|\frac{1}{\sqrt{z}} 
            + \frac{1}{\sqrt{w}}\Big| \sum_{j\neq k} (\alpha_j\beta_k+\beta_j\alpha_k)^{\frac{1}{2}} 
            + \frac{1}{2}\Big|\frac{1}{\sqrt{z}} - \frac1{\sqrt{w}}\Big| \sum_{j\not=k} \alpha_j^{\frac{1}{2}} \alpha_k^{\frac{1}{2}} \\
    & \quad +\Big[|z-w| \frac{1}{4|\sqrt{z}\sqrt{w}|} +\frac{1}{2} \Big|\frac{1}{\sqrt{z}} - \frac{1}{\sqrt{w}}\Big|\Big]\sum_j\alpha_j \\
    & \leq \frac{1}{2} |\sqrt{z}-\sqrt{w}| \Big|\frac{1}{\sqrt{z}} 
            + \frac1{\sqrt{w}}\Big| \Big[\sum_{j\neq k} \big(\alpha_j^{\frac{1}{2}}\beta_k^{\frac{1}{2}}+\beta_j^{\frac{1}{2}}\alpha_k^{\frac{1}{2}}\big) 
            +\frac{1}{2} \sum_{j}\alpha_j\Big] \\
    & \quad + \Big|\frac{1}{\sqrt{z}} - \frac{1}{\sqrt{w}}\Big|\Big(\sum_j\alpha_j^{\frac{1}{2}}\Big)^2 \\
    & \leq  |\sqrt{z}-\sqrt{w}| \Big|\frac{1}{\sqrt{z}}+\frac{1}{\sqrt{w}}\Big| 
            \big[\llbracket V\rrbracket_{1,\frac{1}{2}} \llbracket X^2 V\rrbracket_{1,\frac{1}{2}} + \frac{1}{4} \|V\|_1 \big] 
              + \Big|\frac{1}{\sqrt{z}}-\frac{1}{\sqrt{w}}\Big| \llbracket V\rrbracket_{1,\frac{1}{2}}^2.
\end{split}
\end{equation*}
See \eqref{birman_solomyak} for the definition of $\llbracket\cdot\rrbracket$.
\end{proof}

The next lemma concerns the Fumi term.

\begin{lemma}\label{bso04t}
Let $V\in L^1(\R)$ satisfy $V\in\ell^{\frac{1}{2}}(L^1(\R))$ and $X^2V\in\ell^{\frac{1}{2}}(L^1(\R))$, see \eqref{birman_solomyak}.
Then, for all $z\in\C\setminus\{0\}$ the operators $K_{\infty,L}(z)$ and $K_\infty(z)$ are trace class.
Moreover, $K_{\infty,L}(z)\to K_\infty(z)$ in $B_1(\hilbert)$ uniformly on 
compact sets $D\subset\C\setminus\{0\}$ as $L\to\infty$.
\end{lemma}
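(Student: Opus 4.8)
The plan is to split off the two assertions. For the trace-class statement there is essentially nothing to do beyond collecting earlier results: if $\im(\sqrt z)\neq 0$ then $K_\infty(z),K_{\infty,L}(z)\in B_1(\hilbert)$ by Lemma~\ref{bso02t}, while for $z>0$ the Birman--Solomyak hypotheses assumed here are precisely those of Lemma~\ref{bso03t}, which produces the trace-class boundary operators $K_\infty^\pm$, $K_{\infty,L}^\pm$ together with their trace-norm continuity; so the claim holds for every $z\in\C\setminus\{0\}$, where for $z>0$ the symbols $K_\infty(z)$, $K_{\infty,L}(z)$ are read as the appropriate boundary values. This is the only point at which the weight $X^2V\in\ell^{\frac12}(L^1(\R))$ is used; the convergence itself will need only $V\in L^1(\R)\cap\ell^{\frac12}(L^1(\R))$.

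For the uniform convergence I would first use that $\chi_L$ commutes with the multiplication operators $\sqrt{|V|}$ and $J$, so $K_{\infty,L}(z)=\chi_L K_\infty(z)\chi_L$ and hence
\begin{equation*}
  K_\infty(z)-K_{\infty,L}(z)=\chi_L^\perp K_\infty(z)+\chi_L\,K_\infty(z)\,\chi_L^\perp ,\qquad \chi_L^\perp\coloneqq 1-\chi_L .
\end{equation*}
Estimating the second term by $\|\chi_L^\perp K_\infty(\bar z)\|_1$ via $K_\infty(\bar z)=JK_\infty(z)^*J$ and $\|\chi_L\|,\|J\|\le 1$, it is enough to prove that $\sup_{z\in D}\|\chi_L^\perp K_\infty(z)\|_1\to0$ as $L\to\infty$ for each compact $D\subset\C\setminus\{0\}$. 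The crude factorization through $(\sqrt z\mp i\nabla)^{-1}$ only yields a bound of order $|\im(\sqrt z)|^{-1}\|\chi_L^\perp V\|_1^{1/2}$ (cf. Lemma~\ref{bso01t}.3), which degenerates as $z$ approaches $[0,\infty)$, so a genuinely uniform tail estimate is required.

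To obtain it I would reuse the unit-interval machinery from the proof of Lemma~\ref{bso03t}. Writing $I_j\coloneqq[j,j+1]$, $\alpha_j\coloneqq\|V\chi_{I_j}\|_1$, and $\chi_L^\perp K_\infty(z)=\sum_{|j|\gtrsim L}\sum_{k\in\Z}\chi_{I_j}K_\infty(z)\chi_{I_k}$ (the $j$-sum running over the blocks not contained in $\Lambda_L$), one estimates the blocks separately. For $j\neq k$ the Green function \eqref{green_infinity} factorizes over $I_j\times I_k$, so $\chi_{I_j}K_\infty(z)\chi_{I_k}$ is rank one and \eqref{trace_norm_rank_one}, \eqref{exp_properties} give $\|\chi_{I_j}K_\infty(z)\chi_{I_k}\|_1\le\frac{1}{2|\sqrt z|}\alpha_j^{1/2}\alpha_k^{1/2}$, the exponential factor being $\le1$ since the shift $j{+}1{-}k$ and $\im(\sqrt z)$ have opposite signs; note that, in contrast to Lemma~\ref{bso03t}, no weight $X^2V$ enters here because one is not taking a difference. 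For $j=k$ one has $\|\chi_{I_j}K_\infty(z)\chi_{I_j}\|_1\le\frac{1}{2|\im(\sqrt z)|}\alpha_j$ when $\im(\sqrt z)\neq0$, and at a boundary point $z$ one writes $\chi_{I_j}K_\infty(z)\chi_{I_j}=\chi_{I_j}K_\infty(z_0)\chi_{I_j}+\chi_{I_j}(K_\infty(z)-K_\infty(z_0))\chi_{I_j}$ for a fixed $z_0$ with $\im(\sqrt{z_0})$ of the appropriate sign, the second summand being controlled per block by the local-resolvent estimate from part~(b) of the proof of Lemma~\ref{bso03t}; in every case $\|\chi_{I_j}K_\infty(z)\chi_{I_j}\|_1\le C(D)\,\alpha_j$ with $C(D)$ finite on $D$ (after splitting $D$ into its two parts with $\im(\sqrt z)\gtrless0$, each still compact in $\C\setminus\{0\}$). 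Summing over $k$,
\begin{equation*}
  \sum_{k\in\Z}\|\chi_{I_j}K_\infty(z)\chi_{I_k}\|_1\le C(D)\,\alpha_j+\frac{\llbracket V\rrbracket_{1,\frac12}}{2|\sqrt z|}\,\alpha_j^{1/2},
\end{equation*}
whence
\begin{equation*}
  \sup_{z\in D}\|\chi_L^\perp K_\infty(z)\|_1\le C(D)\sum_{|j|\gtrsim L}\alpha_j+C(D)\,\llbracket V\rrbracket_{1,\frac12}\sum_{|j|\gtrsim L}\alpha_j^{1/2},
\end{equation*}
and both tails vanish as $L\to\infty$ because $\sum_j\alpha_j=\|V\|_1<\infty$ and $\sum_j\alpha_j^{1/2}=\llbracket V\rrbracket_{1,\frac12}<\infty$.

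The hard part is exactly this uniformity up to the real axis: away from $[0,\infty)$ the convergence is cheap, but the integration contour in the Fumi-term formula runs through $z=\nu$, so the diagonal blocks must be bounded uniformly right up to the spectrum. That is where the local resolvent identity of Lemma~\ref{bso03t}, and hence the weighted Birman--Solomyak condition, does its work; everything else is the bookkeeping of the block sums already present in the proof of that lemma.
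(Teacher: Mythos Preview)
Your proof is correct and follows the same overall strategy as the paper: reduce to bounding $\|\chi_L^\perp K_\infty(z)\|_1$ (and the adjoint variant $\|K_\infty(z)\chi_L^\perp\|_1$), then control this uniformly via the unit-interval block decomposition underlying Lemma~\ref{bso03t}. The organization differs slightly. The paper splits \emph{globally}, writing $K_\infty(z)\chi_L^\perp=K_\infty(w)\chi_L^\perp+(K_\infty(z)-K_\infty(w))\chi_L^\perp$ with $\sqrt w=\pm i$; the first piece is handled by the factorization $R_\infty=(\sqrt w-i\nabla)^{-1}(\sqrt w+i\nabla)^{-1}$ from Lemma~\ref{bso02t}, the second by rerunning the full block estimate of Lemma~\ref{bso03t}, which then drags in tails $\llbracket X^2V\chi_L^\perp\rrbracket_{1,\frac12}$ as well as $\llbracket V\chi_L^\perp\rrbracket_{1,\frac12}$. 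You instead split \emph{block-wise}: the off-diagonal blocks $\chi_{I_j}K_\infty(z)\chi_{I_k}$ are rank one and bounded by $\tfrac{1}{2|\sqrt z|}\alpha_j^{1/2}\alpha_k^{1/2}$ directly, without ever forming a difference, and only the diagonal blocks need the $z\mapsto z_0$ trick. That is a genuine simplification; in particular it makes transparent your observation that the convergence itself requires only $V\in\ell^{\frac12}(L^1(\R))$, the weighted hypothesis $X^2V\in\ell^{\frac12}(L^1(\R))$ being used solely to invoke Lemma~\ref{bso03t} for the trace-class claim on the positive real axis.
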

\begin{proof}
The trace class property is a simple consequence of Lemmas \ref{bso02t} and \ref{bso03t}.
To prove convergence we start with
\begin{equation*}
\begin{split}
  \|K_{\infty,L}(z) - K_\infty(z)\|_1
    & \leq \|\chi_L K_\infty(z)\chi_L - \chi_L K_\infty(z)\|_1
          + \|\chi_L K_\infty(z) - K_\infty(z) \|_1 \\
    & \leq \|K_\infty(z)\chi_L^\perp \|_1 + \| \chi_L^\perp K_\infty(z) \|_1 .
\end{split}
\end{equation*}
Obviously, all operators involved are indeed trace class. We bound the trace norm
\begin{equation*}
  \|K_\infty(z)\chi_L^\perp \|_1 \leq \| (K_\infty(z) - K_\infty(w))\chi_L^\perp \|_1 + \| K_\infty(w)\chi_L^\perp\|_1 .
\end{equation*}
Take $\sqrt{w}=\pm i$ for $\im(\sqrt{z})\gtrless 0$. Using the same factorization of the resolvent 
and the Cauchy--Schwarz inequality for the trace norm as in the proof of Lemma \ref{bso02t} we infer that
\begin{equation*}
  \| K_\infty(w) \chi_L^\perp \|_1 \leq \frac{1}{2} \|V\|_1^{\frac{1}{2}} \|V\chi_L^\perp\|_1^{\frac{1}{2}} .
\end{equation*}
The proof of Lemma \ref{bso03t} can be modified to give the bound
\begin{equation*}
\begin{split}
\lefteqn{\| (K_\infty(z) - K_\infty(w))\chi_L^\perp \|_1}\\
   & \leq \frac{1}{2} |\sqrt{z}-\sqrt{w}| \Big|\frac{1}{\sqrt{z}} 
            + \frac{1}{\sqrt{w}}\Big| \sum_{\substack{\chi_{I_k}\cdot\chi_L^\perp\neq 0\\j\neq k}} (\alpha_j\beta_k+\beta_j\alpha_k)^{\frac{1}{2}} 
            + \frac{1}{2}\Big|\frac{1}{\sqrt{z}} - \frac{1}{\sqrt{w}}\Big| 
                 \sum_{\substack{\chi_{I_k}\cdot\chi_L^\perp\neq 0\\j\neq k}} \alpha_j^{\frac{1}{2}} \alpha_k^{\frac{1}{2}} \\
   & \quad + \Big[|z-w| \frac{1}{4|\sqrt{z}\sqrt{w}|} 
           +\frac{1}{2} \Big|\frac{1}{\sqrt{z}} - \frac{1}{\sqrt{w}}\Big|\Big]\sum_{j,\chi_{I_j}\cdot\chi_L^\perp\neq 0}\alpha_j \\
   & \leq |\sqrt{z}-\sqrt{w}| \Big|\frac{1}{\sqrt{z}}+\frac{1}{\sqrt{w}}\Big| 
            \Big[\frac{1}{2}\llbracket V\rrbracket_{1,\frac{1}{2}} \llbracket X^2 V\chi_L^\perp\rrbracket_{1,\frac{1}{2}} 
               + \frac{1}{2}\llbracket X^2 V\rrbracket_{1,\frac{1}{2}}\llbracket V\chi_L^\perp\rrbracket_{1,\frac{1}{2}}  
               + \frac{1}{4} \|V\chi_L^\perp\|_1 \Big]  \\
   & \quad + \Big|\frac{1}{\sqrt{z}}-\frac{1}{\sqrt{w}}\Big| \llbracket V\rrbracket_{1,\frac{1}{2}} \llbracket V\chi_L^\perp\rrbracket_{1,\frac{1}{2}}.
\end{split}
\end{equation*}
The norm $\| \chi_L^\perp K_\infty(z) \|_1$ has the same bound. This and the assumptions on $V$ yield the claimed convergence.
\end{proof}

\subsection{Wave operators\label{wo}}
We define the wave operators (cf. \eqref{Omega_K}) corresponding to \eqref{K_infinity}
\begin{equation}\label{Omega}
  \Omega_{\infty,L}(z)  \coloneqq (\id -K_{\infty,L}(z))^{-1} : \hilbert_L\to\hilbert_L,\
  \Omega_\infty(z)     \coloneqq (\id -K_\infty(z))^{-1} : \hilbert_\infty\to\hilbert_\infty
\end{equation}
and their boundary values (see \eqref{K_boundary_values})
\begin{equation}\label{Omega_boundary_values}
  \Omega_{\infty,L}^\pm(\nu)  \coloneqq (\id -K_{\infty,L}^\pm(\nu))^{-1},\
  \Omega_\infty^\pm(\nu) \coloneqq (\id -K_\infty^\pm(\nu))^{-1} .
\end{equation}
First of all, we have to ensure that the wave operators exist.

\begin{lemma}\label{wo01t}
Let $V\in L^1(\R)$. Then the following hold true.
\begin{enumerate}
\item 
The wave operator $\Omega_\infty(z)$ exists for all $z\in\C\setminus\interval[open right]{-\frac{1}{4}\|V\|_1^2}{\infty}$.
\item
If, in addition, $V\in\ell^{\frac{1}{2}}(L^1(\R))$ and $X^2V\in\ell^{\frac{1}{2}}(L^1(\R))$ then
$\Omega_\infty(z)$ exists for all $z\in\C\setminus[-\frac{1}{4}\|V\|_1^2, 0]$ and is bounded.
Equivalently, $\det(\id-K_\infty(z))\neq 0$. Moreover, for every compact $D\subset\C\setminus[-\frac{1}{4}\|V\|_1^2, 0]$
\begin{equation}\label{wo01t01}
  \inf_{z\in D} |\det(\id-K_\infty(z))| > 0 .
\end{equation}
\end{enumerate}
\end{lemma}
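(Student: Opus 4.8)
\emph{Overview and Part 1.} The plan is to split the region of interest into three pieces --- the open half-planes $\C\setminus\R$, the negative half-line $(-\infty,-\frac{1}{4}\|V\|_1^2)$, and (for the second part) the positive axis reached through the boundary values $K_\infty^\pm$ --- and treat each with a separate tool; only the positive axis needs real work. For $z\in\C\setminus\R$, Lemma~\ref{bso02t} says $K_\infty(z)$ is trace class, hence compact, so $\sigma(K_\infty(z))\setminus\{0\}$ consists of eigenvalues; by Lemma~\ref{spectrum_K} each such eigenvalue has nonzero imaginary part, so $1\notin\sigma(K_\infty(z))$ and $\Omega_\infty(z)=(\id-K_\infty(z))^{-1}$ exists and is bounded. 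For real $z<-\frac{1}{4}\|V\|_1^2$ one has $|\sqrt z|=\sqrt{|z|}>\frac{1}{2}\|V\|_1$, so by the Hilbert--Schmidt bound \eqref{bso01t01} together with Jensen's inequality \eqref{jensen_inequality}, $\|K_\infty(z)\|\le\|K_\infty(z)\|_2\le\frac{1}{2|\sqrt z|}\|V\|_1<1$, and $\Omega_\infty(z)$ is given by a Neumann series. Since $\C\setminus[-\frac{1}{4}\|V\|_1^2,\infty)$ is exactly the union of these two sets, Part~1 follows.

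\emph{Part 2.} Under the Birman--Solomyak hypotheses, Lemma~\ref{bso03t} provides the trace-class (hence compact) boundary operators $K_\infty^\pm(\nu)$ for $\nu>0$, so it suffices to show $1$ is not an eigenvalue, equivalently $\det(\id-K_\infty^\pm(\nu))\neq0$. Assume $K_\infty^+(\nu)\varphi=\varphi$ with $\varphi\neq0$ (the case of $K_\infty^-(\nu)$ is identical). Set $g\coloneqq\sqrt{|V|}J\varphi$; since $\sqrt{|V|}\in L^2(\R)$ we get $g\in L^1(\R)$ with $\int_\R|y|\,|g(y)|\,dy\le\|X^2V\|_1^{1/2}\|\varphi\|<\infty$ (Cauchy--Schwarz, using $X^2V\in\ell^{\frac{1}{2}}(L^1(\R))\subset L^1(\R)$), and the eigenvalue equation yields $\varphi=\sqrt{|V|}\psi$ for $\psi\coloneqq R_\infty^+(\nu)g$, hence $g=V\psi$. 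Thus $\psi$ is bounded, continuous, and solves $-\psi''+V\psi=\nu\psi$ (in the distributional, hence strong, sense, as $\psi''=(V-\nu)\psi\in L^1_{\mathrm{loc}}$). From the explicit kernel \eqref{R_boundary_values} one reads off that $\psi$ is purely outgoing, $\psi(x)=a_\pm e^{\pm i\sqrt\nu x}+o(1)$ and $\psi'(x)=\pm i\sqrt\nu\,a_\pm e^{\pm i\sqrt\nu x}+o(1)$ as $x\to\pm\infty$, with $a_\pm=-\frac{i}{2\sqrt\nu}\int_\R e^{\mp i\sqrt\nu y}g(y)\,dy$. Because $V$ and $\nu$ are real the flux $\im(\overline\psi\psi')$ is $x$-independent; evaluating it at $\pm\infty$ gives $\sqrt\nu\,|a_+|^2=-\sqrt\nu\,|a_-|^2$, so $a_+=a_-=0$. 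Feeding this back in, $|\psi(x)|\le\frac{1}{\sqrt\nu}\int_{|x|}^\infty|g|$ for large $|x|$, whence $\psi\in L^2(\R)$ and $\psi(x)\to0$ as $x\to\pm\infty$. But then $\psi$ would be an $L^2$-eigenfunction of the self-adjoint operator $H_\infty+V$ at the positive energy $\nu$, which is impossible: the Jost solutions $f_\pm\sim e^{\pm i\sqrt\nu x}$ at $+\infty$ exist (since $V\in L^1$), span the solution space, and no nontrivial combination of them tends to $0$ at $+\infty$; hence $\psi\equiv0$ near $+\infty$ and, by ODE uniqueness, $\psi\equiv0$ --- contradicting $\varphi\neq0$. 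Therefore $\det(\id-K_\infty^\pm(\nu))\neq0$ for all $\nu>0$, and together with Part~1 this gives the existence and boundedness of $\Omega_\infty$ (understood through $\Omega_\infty^\pm$ on the positive axis) on $\C\setminus[-\frac{1}{4}\|V\|_1^2,0]$; the equivalence with $\det(\id-K_\infty)\neq0$ is then the Fredholm alternative.

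\emph{The uniform bound \eqref{wo01t01}.} By Lemma~\ref{bso03t}, $\sqrt z\mapsto K_\infty(z)$ is locally Lipschitz in trace norm on $\{\im\sqrt z\ge0,\ \sqrt z\ne0\}$ and likewise on $\{\im\sqrt z\le0,\ \sqrt z\ne0\}$, so it extends continuously to the cut $[0,\infty)$ from each side via $K_\infty^\pm$. Since the Fredholm determinant is Lipschitz on bounded subsets of $\id+B_1(\hilbert)$, the map $z\mapsto\det(\id-K_\infty(z))$ (read off the appropriate boundary value near the cut) is continuous on the region in question, and by Parts~1--2 it never vanishes on $D$. A continuous, nowhere-vanishing function on the compact set $D$ has modulus bounded below by a positive constant, which is \eqref{wo01t01}.

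\emph{Main obstacle.} Everything outside Part~2 is routine bookkeeping with the estimates of Lemmas~\ref{bso01t}--\ref{bso03t}. The crux is excluding $1$ from $\sigma(K_\infty^\pm(\nu))$, i.e.\ the absence of positive eigenvalues of $H_\infty+V$: one must upgrade the abstract eigenvector $\varphi$ to a genuine decaying $L^2$-solution, and this is exactly where the decay hypothesis $X^2V\in\ell^{\frac{1}{2}}(L^1(\R))$ enters (through $\int|y|\,|g|<\infty$, which yields both the outgoing asymptotics and the $L^2$-bound on $\psi$), with the one-dimensional flux/Wronskian identity doing the actual elimination.
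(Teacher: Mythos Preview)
Your argument is correct, and Part~1 together with the continuity argument for \eqref{wo01t01} match the paper's proof essentially verbatim (the paper invokes Lemma~\ref{energy_omega} rather than Lemma~\ref{spectrum_K} for $\im(z)\neq 0$, but both routes are available and equivalent here).

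Part~2 is where you take a genuinely different route. The paper avoids the whole outgoing-solution/flux analysis by invoking the Jost--Pais formula (Lemma~\ref{jost_pais}): for $\nu>0$ one has $\det(\id-K_\infty(\nu))=1/\mathscr{t}(\sqrt\nu)$, and since the transmission coefficient satisfies $|\mathscr{t}|\le 1$ this gives the quantitative bound $|\det(\id-K_\infty(\nu))|\ge 1$ in one stroke. Your approach is the classical Agmon-style argument: lift a hypothetical eigenvector of $K_\infty^+(\nu)$ to an outgoing solution $\psi$ of the Schr\"odinger equation, kill the outgoing amplitudes via the conserved flux, and then recognise $\psi$ as an $L^2$ eigenfunction at positive energy, which is forbidden by the Jost-solution asymptotics. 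This is more hands-on and entirely self-contained (it does not rely on the scattering-theoretic appendix), but it is longer and yields only $\det\neq 0$ rather than $|\det|\ge 1$. Either way the uniform lower bound \eqref{wo01t01} then follows from continuity on compacta, exactly as you wrote.
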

\begin{proof}
1.
First note that the Birman--Schwinger operator $K_\infty(z)$ is bounded for $z\neq 0$, Lemma \ref{bso01t}.
We distinguish two cases.

(i)
Let $\im(z)\neq 0$. Then $z\notin\sigma(H)\cup\sigma(H_V)$. 
By Lemma \ref{energy_omega} the inverse $\Omega_\infty(z)$ exists and is bounded.

(ii)
Let $\im(z)=0$ and $\re(z)<-\frac{1}{4}\|V\|_1^2$. From Jensen's inequality \eqref{jensen_inequality}
and \eqref{bso01t01}
\begin{equation*}
   \|K_\infty(z)\| \leq \|K_\infty(z)\|_2 \leq \frac{1}{2|\sqrt{z}|}\|V\|_1 < 1 .
\end{equation*}
A Neumann series argument then shows that $\Omega_\infty(z)$ exists and is bounded.

Both in case (i) and (ii) we have $\im(\sqrt{z})\neq 0$ whence $K_\infty(z)\in B_1(\hilbert)$ 
by Lemma \ref{bso02t}. Therefore, the perturbation determinant is well-defined and non-zero
(cf. \cite[Thm. XIII.105]{ReedSimon1978}). 

2.
Assume the Birman--Solomyak condition on $V$ whereby $K_\infty(z)\in B_1(\hilbert)$ for $z\in\interval[open]{0}{\infty}$,
Lemma \ref{bso03t}, and the perturbation determinant is well-defined.
We infer from the Jost--Pais formula (see \eqref{jost_pais01} and \eqref{jost_pais02}) that $|\det(\id-K_\infty(z))| \geq 1$.
Thus, the determinant cannot vanish and the inverse $\Omega(z)$ exists and is bounded 
(cf. \cite[Thm. XIII.105, Thm. XIII.107]{ReedSimon1978}).
By Lemma \ref{bso03t}, the map $K_\infty(z)$ depends continuously on $z$ with respect to $B_1(\hilbert)$ and so does
the determinant $\det(\id-K_\infty(z))$. This implies \eqref{wo01t01}
\end{proof}

The wave operator $\Omega_{\infty,L}(z)$ requires weaker conditions.

\begin{lemma}\label{wo01at}
Let $V\in L^1(\R)$. Then $\Omega_{\infty,L}(z)$ exists for all $z\in\C\setminus[-\frac{1}{4}\|V\|_1^2, 0]$ and is bounded.
\end{lemma}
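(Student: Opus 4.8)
The plan is to recognize $\Omega_{\infty,L}(z)$ as the full-line wave operator attached to the compactly supported potential $W\coloneqq\chi_LV$, and then to quote the second assertion of Lemma~\ref{wo01t}. This is legitimate because that assertion only requires the potential to lie in $L^1(\R)$ and to satisfy the two Birman--Solomyak conditions, and $W$ meets all of these automatically: being supported in $\Lambda_L$ it satisfies $W\in L^1_0(\R)\subset\ell^{1/2}(L^1(\R))$, and likewise $X^2W\in L^1(\R)\cap\ell^{1/2}(L^1(\R))$ with $\|X^2W\|_1\le L^2\|V\|_1<\infty$ --- no matter how rough $V$ itself is.

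The first step is the identity $K_{\infty,L}(z)=\sqrt{|W|}\,R_\infty(z)\,\sqrt{|W|}\,J_W$, where $\sqrt{|W|}=\chi_L\sqrt{|V|}$ and $J_W$ is the multiplication operator coinciding with $J$ on $\supp W$ and with $\id$ off it, so that $W=\sqrt{|W|}\,J_W\,\sqrt{|W|}$ and $J_W^2=\id$. Since $\chi_L$, $\sqrt{|V|}$ and $J$ are multiplication operators and hence commute, from $R_{\infty,L}(z)=\chi_LR_\infty(z)\chi_L$ one obtains $\sqrt{|V|}R_{\infty,L}(z)\sqrt{|V|}J=\chi_L\sqrt{|V|}\,R_\infty(z)\,\sqrt{|V|}\chi_LJ$, and $\chi_L\sqrt{|V|}\,J=\chi_L\sqrt{|V|}\,J_W$ because these two multiplication operators agree wherever $\chi_L\sqrt{|V|}\ne0$. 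Thus, on $\hilbert_\infty$, $K_{\infty,L}(z)$ is exactly the Rollnik-form Birman--Schwinger operator of $H_\infty+W$, so that $(\id-K_{\infty,L}(z))^{-1}$ on $\hilbert_\infty$ is its wave operator. The second step is to note that invertibility of $\id-K_{\infty,L}(z)$ on $\hilbert_\infty$ is equivalent to invertibility on $\hilbert_L$: the orthogonal splitting $\hilbert_\infty=\overline{\ran\sqrt{|W|}}\oplus\ker\sqrt{|W|}$ reduces $K_{\infty,L}(z)$ (its range lies in $\overline{\ran\sqrt{|W|}}$, and it annihilates $\ker\sqrt{|W|}$ since $\sqrt{|W|}J_W$ does), so $\id-K_{\infty,L}(z)$ acts as the identity on $\ker\sqrt{|W|}$ and is invertible iff its compression to $\overline{\ran\sqrt{|W|}}$ is; and that subspace is contained in $L^2(\Lambda_L)$, so the compression is literally the same operator whether one works in $\hilbert_\infty$ or in $\hilbert_L$.

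Combining the two steps: the second assertion of Lemma~\ref{wo01t}, applied with $W$ in place of $V$, yields that $\Omega_{\infty,L}(z)=(\id-K_{\infty,L}(z))^{-1}$ exists and is bounded for every $z\in\C\setminus[-\frac14\|W\|_1^2,0]$; since $\|W\|_1=\int_{-L}^L|V|\le\|V\|_1$ we have $[-\frac14\|W\|_1^2,0]\subset[-\frac14\|V\|_1^2,0]$, so the asserted domain $\C\setminus[-\frac14\|V\|_1^2,0]$ is contained in the one just obtained, which proves the lemma. There is no serious obstacle; the only point needing a little care is the identification of $K_{\infty,L}(z)$ with a genuine full-line Birman--Schwinger operator and the (harmless) transfer between the two Hilbert spaces. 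As a fallback one could argue directly --- a Neumann series via~\eqref{bso01t01} covers $|z|>\frac14\|V\|_1^2$, Lemma~\ref{energy_omega} applied to $H_\infty+W$ covers all $z\notin\R$, and for real $z=\nu>0$ one rules out the eigenvalue $1$ of $K_{\infty,L}^\pm(\nu)$ by converting the eigenequation into a purely outgoing solution of $(-\Delta+W-\nu)u=0$, which a Wronskian identity forces to vanish --- but the route through Lemma~\ref{wo01t} is considerably shorter.
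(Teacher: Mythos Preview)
Your proposal is correct and takes essentially the same approach as the paper: recognize $K_{\infty,L}(z)$ as the Birman--Schwinger operator of the compactly supported potential $V\chi_L$, note that compact support trivially yields the Birman--Solomyak conditions, and invoke Lemma~\ref{wo01t}(2). The paper's proof is a single sentence to this effect; your version simply spells out the details (the $J_W$ identification, the $\hilbert_\infty$ versus $\hilbert_L$ reduction, and the inclusion $[-\tfrac14\|V\chi_L\|_1^2,0]\subset[-\tfrac14\|V\|_1^2,0]$) that the paper leaves implicit.
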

\begin{proof}
$K_{\infty,L}(z)$ is the Birman--Schwinger operator for the potential $V\chi_L$, which has compact
support and, thus, satisfies the Birman--Solomyak conditions in Lemma \ref{wo01t}.
\end{proof}

In studying the energy difference with the aid of Lemma \ref{wo03t} we need the wave operators restricted, essentially,
to the deficiency subspace (see \eqref{deficiency_subspace}). 
By \ref{Omega_T} this leads to the T-matrix
\begin{equation}\label{Omega_matrix}
  2i\sqrt{z}T_L(z) = ( (\sqrt{|V|}\varepsilon_j(\bar z),J\Omega_{\infty,L}(z)\sqrt{|V|}\varepsilon_k(z)) )_{j,k=1,2},\
   \im(\sqrt{z})\geq 0 .
\end{equation}
For the infinite volume operators this is, in general, true only for $z\in\R$,
\begin{equation*}
  2i\sqrt{\nu}T(\nu) = ( (\sqrt{|V|}\varepsilon_j(\nu),J\Omega_\infty^+(\nu)\sqrt{|V|}\varepsilon_k(\nu)) )_{j,k=1,2}.
\end{equation*}
For the T-matrices and S-matrices see \eqref{Omega_T} and \eqref{s-matrix02}. Note that $z=k^2$ and $S(z)=\mathscr{S}(k)$,
this convention being used for all scattering data.

\begin{lemma}\label{wo02t}
Let $V\in L^1(\R)$. Then, the following hold true.
\begin{enumerate}
\item 
Define $V_-$ as $V_-(x) \coloneqq \min\{ V(x) ,0\}$. Let $z=(\sqrt{\nu}+is)^2$ with $\nu>0$ and $s\geq0$. Then,
\begin{equation} \label{wo02t01}
  \| S_L(z) \|_2 \leq  \sqrt{2}\Big\{ 1+ \frac{1}{2|\sqrt{z}|}\mathcal{V}_L(2s) \exp\Big[ \frac{1}{|\sqrt{z}|}\|V\|_1 \Big] \Big\}
                        \exp\Big[ \frac{1}{2\nu} ( \| V_-\|_1^2 + \sqrt{\nu} \| V_-\|_1 )\Big] .
\end{equation}
Let $z=(t+ib)^2$ with $b\geq \|V_-\|_1$. Then,
\begin{equation}\label{wo02t01a}
  \|T_L(z) \|_2  \leq \frac{\sqrt{2}}{2|\sqrt{z}|}\big[ \|V\|_1 + \mathcal{V}_L(2b) \big]\exp\Big[ \frac{1}{|\sqrt{z}|}\|V\|_1 \Big]
                  \exp\big[ \frac{2}{b}\|V_-\|_1 + \frac{2}{b^2}\|V_-\|_1^2\big] .
\end{equation}
\item
The map $\Gamma_\nu^+\ni z \mapsto S_L(z)$ is continuous at the Fermi energy
\begin{equation}\label{wo02t02}
  \| S_L(z) - S_L(\nu) \|_2 = \| T_L(z) - T_L(\nu) \|_2 \leq C s
\end{equation}
\item
The scattering matrix $S_L(\nu)$ converges to $S(\nu)$ as $L\to\infty$. More precisely,
\begin{equation}\label{wo02t03}
  \| S_L(\nu) - S(\nu) \|_2 = \| T_L(\nu) - T(\nu) \|_2
     \leq \Big(\frac{5}{2}\Big)^{\frac{1}{2}}\frac{1}{\sqrt{\nu}} \int_{|x|\geq L} |V(x)|\, dx \exp\big[\frac{1}{\sqrt{\nu}}\|V\|_1\big].
\end{equation}
\end{enumerate}
\end{lemma}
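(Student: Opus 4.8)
The plan is to reduce all three parts to the matrix representation \eqref{Omega_matrix}, whose $(j,k)$-entry reads
\[
  \bigl(2i\sqrt z\,T_L(z)\bigr)_{jk}=\bigl(\sqrt{|V|}\varepsilon_j(\bar z),\,J\Omega_{\infty,L}(z)\sqrt{|V|}\varepsilon_k(z)\bigr),
\]
and to control its three constituents separately: the ``source'' vectors $\sqrt{|V|}\varepsilon_k(z)$, the wave operator $\Omega_{\infty,L}(z)=(\id-K_{\infty,L}(z))^{-1}$, and the scalar prefactor $\tfrac1{2\sqrt z}$. Since $S_L(z)=\id+T_L(z)$, one has $\|S_L(z)\|_2\le\sqrt2+\|T_L(z)\|_2$ (the Frobenius norm of the $2\times2$ identity is $\sqrt2$) and $S_L(z)-S_L(\nu)=T_L(z)-T_L(\nu)$, which already explains the ``$\sqrt2\{1+\cdots\}$'' shape of \eqref{wo02t01} and the first equalities in \eqref{wo02t02} and \eqref{wo02t03}; it thus suffices to estimate $T_L$.

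For part~1 the source vectors are immediate from \eqref{exp_properties}: $\|\sqrt{|V|}\varepsilon_k(z)\|^2=\int_{-L}^{L}|V(x)|\,e^{\mp2\im(\sqrt z)x}\,dx\le\mathcal V_L\bigl(2|\im(\sqrt z)|\bigr)$, which is $\mathcal V_L(2s)$ on $\Gamma_{\nu,b}^+$ and $\mathcal V_L(2b)$ on $\Gamma_{\nu,b}^-$; bounding each entry by $\|\sqrt{|V|}\varepsilon_j(\bar z)\|\,\|\Omega_{\infty,L}(z)\|\,\|\sqrt{|V|}\varepsilon_k(z)\|$ and summing over $j,k$ leaves one power of $\mathcal V_L$ (resp.\ the slightly more generous $\|V\|_1+\mathcal V_L(2b)$ in \eqref{wo02t01a}, after splitting off the diagonal first‑Born term) times $\|\Omega_{\infty,L}(z)\|$. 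The technical heart is the bound on $\|\Omega_{\infty,L}(z)\|$. Since $K_{\infty,L}(z)$ is exactly the Birman--Schwinger operator of the \emph{compactly supported} potential $V\chi_L$, I would use the one-dimensional structure of $R_\infty(z)$ --- its factorization into a Volterra and an anti-Volterra operator (cf.\ the proof of Lemma~\ref{bso02t}) --- so that $(\id-K_{\infty,L}(z))^{-1}$ is estimated through a Gronwall/Volterra argument on the Neumann series; for $\im(\sqrt z)\ge0$ the relevant $\sqrt{|V|}$-sandwiched kernels are pointwise dominated by $\sqrt{|V(x)|}\sqrt{|V(y)|}$, which produces the factor $\exp[\tfrac1{|\sqrt z|}\|V\|_1]$. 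The $V_-$-dependent factor bounds the Jost function (equivalently, $\det(\id-K_{\infty,L}(z))$) from below: all bound states of $H_\infty+V\chi_L$ lie in $[-\tfrac14\|V_-\|_1^2,0)$, so on $\Gamma_{\nu,b}^-$ the hypothesis $b\ge\|V_-\|_1$ keeps the contour below $\sigma(H_\infty+V\chi_L)$ and yields the clean distance-type factor $\exp[\tfrac2b\|V_-\|_1+\tfrac2{b^2}\|V_-\|_1^2]$, whereas $\Gamma_{\nu,b}^+$ grazes the continuous spectrum at $z=\nu$ and the same estimate, run in limiting-absorption form, produces $\exp[\tfrac1{2\nu}(\|V_-\|_1^2+\sqrt\nu\|V_-\|_1)]$. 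Multiplying by $\tfrac1{2|\sqrt z|}$ and reorganizing then gives \eqref{wo02t01} and \eqref{wo02t01a}.

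For part~2 I would write $2i\bigl(\sqrt z\,T_L(z)-\sqrt\nu\,T_L(\nu)\bigr)_{jk}$ as a telescoping sum of three differences, each $O(|s|)$ with constants allowed to depend on $L$: (i) $\|\sqrt{|V|}(\varepsilon_k(z)-\varepsilon_k(\nu))\|\le|\sqrt z-\sqrt\nu|\bigl(\int_{-L}^{L}x^2|V(x)|\,dx\bigr)^{1/2}=O(|s|)$ by \eqref{exp_properties}, and likewise for $\varepsilon_j(\bar z)-\varepsilon_j(\nu)$; (ii) $\Omega_{\infty,L}(z)-\Omega_{\infty,L}^+(\nu)=\Omega_{\infty,L}(z)\bigl(K_{\infty,L}(z)-K_{\infty,L}^+(\nu)\bigr)\Omega_{\infty,L}^+(\nu)$, where $\|K_{\infty,L}(z)-K_{\infty,L}^+(\nu)\|_2\le C|s|$ by Lemma~\ref{bso01t}, part~2, with $w=\nu$ (note $\im(\sqrt z)\cdot\im(\sqrt\nu)=0\ge0$) and the operator-norm factors are bounded uniformly near $\nu$ by part~1; (iii) the prefactor difference $|\tfrac1{2\sqrt z}-\tfrac1{2\sqrt\nu}|=O(|s|)$. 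Collecting these and passing from the Frobenius norm to $\sqrt2$ times the operator norm yields \eqref{wo02t02}.

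For part~3 the decisive point is that $\Omega_{\infty,L}^+(\nu)-\id$ and its adjoint both have range supported in $\Lambda_L$ --- since $\Omega-\id=K_{\infty,L}^+(\nu)\Omega$ and $K_{\infty,L}^+(\nu)=\sqrt{|V|}\chi_L R_\infty^+(\nu)\chi_L\sqrt{|V|}J$ has range in $\ran(\sqrt{|V|}\chi_L)$ --- so that applying $\chi_L^\perp$ on the appropriate side kills the correction to $\id$. Moving $\Omega_{\infty,L}^+(\nu)$ to the adjoint side gives $2i\sqrt\nu\,T_L(\nu)_{jk}=\bigl(\sqrt{|V|}\varepsilon_j(\nu),J\Omega_{\infty,L}^+(\nu)\chi_L\sqrt{|V|}\varepsilon_k(\nu)\bigr)$; subtracting the analogous expression for $T(\nu)$ and inserting $\Omega_{\infty,L}^+-\Omega_\infty^+=\Omega_{\infty,L}^+(K_{\infty,L}^+-K_\infty^+)\Omega_\infty^+$ together with
\[
  K_{\infty,L}^+(\nu)-K_\infty^+(\nu)=-\sqrt{|V|}\chi_L^\perp R_\infty^+(\nu)\chi_L\sqrt{|V|}J-\sqrt{|V|}R_\infty^+(\nu)\chi_L^\perp\sqrt{|V|}J,
\]
one checks that every resulting term carries a factor $\chi_L^\perp\sqrt{|V|}w$ with $w$ one of $\Omega_{\infty,L}^{+*}J\sqrt{|V|}\varepsilon_j(\nu)$, $J\sqrt{|V|}\varepsilon_j(\nu)$, or $\Omega_\infty^+(\nu)\sqrt{|V|}\varepsilon_k(\nu)$; and in each case $\chi_L^\perp\sqrt{|V|}w=\chi_L^\perp V\,(\cdots)$ pointwise, because $\sqrt{|V|}\,J\sqrt{|V|}=V$, because $\Omega_{\infty,L}^{+*}-\id$ is supported in $\Lambda_L$, and because $\Omega_\infty^+(\nu)\sqrt{|V|}\varepsilon_k=\sqrt{|V|}\varepsilon_k+\sqrt{|V|}R_\infty^+(\nu)\sqrt{|V|}J\,\Omega_\infty^+(\nu)\sqrt{|V|}\varepsilon_k$ is pointwise dominated by a constant times $\sqrt{|V|}$. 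Hence each term is genuinely $O\bigl(\int_{|x|\ge L}|V|\bigr)$ --- not merely $O\bigl((\int_{|x|\ge L}|V|)^{1/2}\bigr)$ --- the remaining factors being bounded via $|R_\infty^+(\nu;x,y)|\le\tfrac1{2\sqrt\nu}$, $|\varepsilon_k(\nu)|\equiv1$, and the bounds on $\|\Omega_{\infty,L}^+(\nu)\|$, $\|\Omega_\infty^+(\nu)\|$ from part~1 and Lemma~\ref{wo01t}; absorbing the $\|V\|_1$-polynomial factors into $\exp[\tfrac1{\sqrt\nu}\|V\|_1]$ gives \eqref{wo02t03}. The main obstacle throughout is the bound on $\|\Omega_{\infty,L}(z)\|$ in part~1, uniform in $L$ and valid up to and onto the real axis: it cannot come from a Neumann series or a crude resolvent bound (on $\Gamma_{\nu,b}^+$ the resolvent $R_{V\chi_L,\infty}(z)$ itself diverges as $z\to\nu$) and genuinely rests on the one-dimensional Volterra structure and on the fact that $\|V_-\|_1$ alone controls the negative spectrum of $H_\infty+V\chi_L$; once it is in hand, matching the explicit constants $\tfrac1{2\nu}$, $(5/2)^{1/2}$, etc.\ is routine bookkeeping.
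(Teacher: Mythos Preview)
Your proposal identifies the right ingredients but does not actually close the argument, and the paper's route is genuinely different from what you sketch.

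The paper never bounds $\|\Omega_{\infty,L}(z)\|$. Instead it uses the explicit scattering-theoretic factorization
\[
  T_L(z)=\frac{t_L(z)}{2i\sqrt z}\,\tilde S_L(z),\qquad
  \tilde S_L(z)=\begin{pmatrix} s_{0,L}(z)&s_{1,L}(z)\\ s_{2,L}(z)&s_{0,L}(z)\end{pmatrix},
\]
where $t_L$ is the transmission coefficient for the compactly supported potential $V\chi_L$ and the $s_{j,L}$ are scalar products of the $e_j=\sqrt{|V|}\varepsilon_j$ with the \emph{Jost solutions} $f_j=\sqrt{|V|}\psi_\pm$ (Appendix~\ref{stl}). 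The two exponential factors then arise from two completely separate mechanisms: the factor $\exp[\tfrac1{|\sqrt z|}\|V\|_1]$ comes from the pointwise Gronwall bound \eqref{stl01t02} on the Jost solutions, obtained from the Lippmann--Schwinger equation \eqref{lippmann_schwinger_modified}, which \emph{is} a genuine Volterra equation; the $V_-$-factor comes from bounding $|t_L(z)|$ via the Faddeev--Deift--Trubowitz Blaschke product \eqref{deift_trubowitz01} together with the Lieb--Thirring inequality $\sum_j\beta_j\le\tfrac12\|V_-\|_1$. Parts~2 and~3 likewise go through the $s_{j,L}$: part~2 uses the $k$-derivative bound \eqref{stl03t01} on the Jost solutions, and part~3 just reads off $|s_j(\nu)-s_{j,L}(\nu)|$ from the integral over $|x|\ge L$ in the Lippmann--Schwinger representation.

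Your plan, by contrast, routes everything through $\|\Omega_{\infty,L}(z)\|$, and this is where there is a gap. The Birman--Schwinger operator $K_{\infty,L}(z)=\sqrt{|V|}R_{\infty,L}(z)\sqrt{|V|}J$ is \emph{not} Volterra: the Green function \eqref{green_infinity} is not triangular, and the factorization $R_\infty=(\sqrt z-i\nabla)^{-1}(\sqrt z+i\nabla)^{-1}$ from Lemma~\ref{bso02t} does not by itself turn the Neumann series for $(\id-K_{\infty,L})^{-1}$ into something Gronwall-controllable. What \emph{is} Volterra is the Lippmann--Schwinger equation for $m_\pm$, and exploiting that is precisely the passage from the wave operator to Jost solutions that you have not made. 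Similarly, ``the hypothesis $b\ge\|V_-\|_1$ keeps the contour below $\sigma(H_\infty+V\chi_L)$'' gives $\det(\id-K_{\infty,L}(z))\neq 0$ on $\Gamma_{\nu,b}^-$, but a non-vanishing determinant does not yield an operator-norm bound on the inverse with the stated explicit constants; the paper gets those constants from the explicit Blaschke factors $|k+i\beta_j|/|k-i\beta_j|$, maximized in $u=\re k$ or $v=\im k$ as appropriate. In short, you have correctly diagnosed that the one-dimensional Volterra structure and the $V_-$-control of bound states are the two inputs, but you have attached them to the wrong object: they act on the Jost solutions and on $t_L$, not on $\Omega_{\infty,L}$.
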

\begin{proof}
It is convenient to introduce
\begin{equation*}
  s_{0,L}(z) \coloneqq (e_1(\bar z),\chi_L Jf_1(z)),\ 
  s_{1,L}(z) \coloneqq (e_1(\bar z),\chi_L Jf_2(z)),\ s_{2,L}(z) \coloneqq (e_2(\bar z),\chi_LJf_1(z))
\end{equation*}
and the matrix
\begin{equation*}
  \tilde S_L(z) \coloneqq
\begin{pmatrix}
  s_{0,L}(z) & s_{1,L}(z) \\
  s_{2,L}(z) & s_{0,L}(z)
\end{pmatrix} .
\end{equation*}
The relations
\begin{equation}\label{wo02t04}
  r_{j,L}(z) = \frac{1}{2i\sqrt{z}} s_{j,L}(z) t_L(z),\ j=1,2,\
  T_L(z) =  \frac{t_L(z)}{2i\sqrt{z}}\tilde S_L(z)
\end{equation}
follow easily from \eqref{s-matrix01}.

1.
From the Faddeev--Deift--Trubowitz formula \eqref{deift_trubowitz01} we infer that
\begin{equation}\label{deift_trubowitz02}
  |t_L(z)| \leq \prod_{j=1}^n \frac{|u+i(v+\beta_j)|}{|u+i(v-\beta_j)|},\ v\geq 0 .
\end{equation}
We find uniform bounds. Firstly, we maximize with respect to $v$ thereby obtaining
\begin{equation*}
  \frac{|u+i(v+\beta_j)|^2}{|u+i(v-\beta_j)|^2}
    \leq 1 + \frac{2}{u^2} ( \beta_j^2 + \beta_j\sqrt{u^2+\beta_j^2} ) 
    \leq 1 + \frac{2}{u^2} ( 2 \beta_j^2 + \beta_j u ).
\end{equation*}
Secondly, the maximizing with respect to $u$ yields for $v\geq 2\beta_j$, $j=1,\ldots,n$,
\begin{equation*}
  \frac{|u+i(v+\beta_j)|^2}{|u+i(v-\beta_j)|^2}
     = 1 + \frac{4v\beta_j}{u^2+(v-\beta_j)^2}
     \leq 1 + \frac{4\beta_j}{v-\beta_j} + \frac{4\beta_j^2}{(v-\beta_j)^2}
     \leq 1 + \frac{8\beta_j}{v} + \frac{16\beta_j^2}{v^2} .
\end{equation*}
Therefore,
\begin{equation*}
  |t_L(z)| \leq \exp\Big[ \frac{1}{u^2} \sum_{j=1}^n ( 2 \beta_j^2 + \beta_j u ) \Big] ,\
  |t_L(z)| \leq \exp\Big[ \frac{4}{v}\sum_{j=1}^n \beta_j + \frac{8}{v^2}\sum_{j=1}^n \beta_j^2 \Big] .
\end{equation*}
The sums of the $\beta_j$ can be estimated with the aid of a Lieb--Thirring inequality 
(see \cite{Weidl1996}, \cite{HundertmarkLiebThomas1998})
\begin{equation*}
  \big[\sum_{j=1}^n \beta_j^2\big]^{\frac{1}{2}} 
    \leq \sum_{j=1}^n \beta_j 
    \leq \frac{1}{2} \int_\R \chi_L(x) |V_-(x)| \, dx \leq \frac{1}{2} \| V_- \|_1 .
\end{equation*}
The first sum could also be treated directly via an appropriate Lieb--Thirring inequality. However, that
would require an additional condition on $V$. Now,
\begin{equation*}
  |t_L(z)| \leq \exp\big[ \frac{1}{2u^2} ( \| V_-\|_1^2 + u \| V_-\|_1 )\big] ,\
  |t_L(z)| \leq \exp\big[ \frac{2}{v}\|V_-\|_1 + \frac{2}{v^2}\|V_-|_1^2\big] .
\end{equation*}
Using \eqref{stl01t02} we obtain
\begin{gather*}
  |s_{0,L}(z)| \leq \int_\R  |V(x)|\chi_L(x)\, dx \exp\big[ \frac{1}{|\sqrt{z}|}\|V\|_1 \big]
              \leq \|V\|_1 \exp\big[ \frac{1}{|\sqrt{z}|}\|V\|_1 \big], \\
  |s_{j,L}(z)|
    \leq \int_\R e^{-2x\im(\sqrt{z})} |V(x)|\chi_L(x)\, dx \exp\big[ \frac{1}{|\sqrt{z}|}\|V\|_1 \big]
    \leq \mathcal{V}_L(2\im(\sqrt{z})) \exp\big[ \frac{1}{|\sqrt{z}|}\|V\|_1 \big] ,\ j=1,2,
\end{gather*}\
cf. \eqref{V_L}. Finally, put $u=\sqrt{\nu}$ in the first estimate of $t_L(z)$. Then, the first relation in \eqref{wo02t04} 
yields \eqref{wo02t01}. For \eqref{wo02t01a} put $v=b$ in the second estimate of $t_L(z)$ 
and use the second relation in \eqref{wo02t04}.
Note that $\beta_j\leq\frac{1}{2}\|V_-\|_1$ which is a trivial consequence of the Lieb--Thirring inequality.

2.
Using \eqref{stl03t01} we obtain via the mean value theorem
\begin{equation*}
  |m_j(\sqrt{\nu}+is;x) - m_j(\sqrt{\nu};x)| 
    \leq s \frac{2}{\sqrt{\nu}} \exp\big[ \frac{2}{\sqrt{\nu}} \|V\|_1\big] \times
\begin{cases}
   \int_x^\infty (y-x)|V(y)|\, dy & \text{for}\ j=1 , \\
   \int_{-\infty}^x (x-y)|V(y)|\, dy & \text{for}\ j=2 ,
\end{cases}
\end{equation*}
and thus
\begin{align*}
  |s_{0,L}(z) - s_{0,L}(\nu)|
     & \leq s \|XV\|_1\|V\|_1 \frac{2}{\sqrt{\nu}}\exp\big[ \frac{2}{\sqrt{\nu}}\|V\|_1\big] , \\
  |s_{j,L}(z) - s_{j,L}(\nu)| 
     & \leq s \Big\{ \mathcal{V}_L'(2s)\|V\|_1 + \mathcal{V}_L(2s)\|XV\|_1\Big\} \frac{2}{\sqrt{\nu}}\exp\big[\frac{2}{\sqrt{\nu}}\|V\|_1\big] ,\
        j=1,2.
\end{align*}
One can check straightforwardly that
\begin{equation*}
  S_L(z) - S_L(\nu)
    = \frac{t_L(z)}{2i\sqrt{z}}
        \Big\{ \tilde S_L(z) - \tilde S_L(\nu) + \big[ s_{0,L}(z)-s_{0,L}(\nu) - 2i(\sqrt{z}-\sqrt{\nu})\big] T_L(\nu) \Big\}
\end{equation*}
which implies \eqref{wo02t02}.

3.
For $S(\nu)$ define $\tilde S(\nu)$ analogously to $\tilde S_L(\nu)$.
Using  \eqref{stl01t02} we obtain easily
\begin{equation*}
   |s_0(\nu) - s_{0,L}(\nu)|
     = \Big|\frac{1}{2i\sqrt{\nu}} \int_{|x|\geq L} V(x) e^{-i\sqrt{\nu}x}f_1(\nu;x)\, dx\Big|
     \leq \frac{1}{2\sqrt{\nu}} \int_{|x|\geq L} |V(x)| \, dx \exp\big[ \frac{1}{\sqrt{\nu}} \|V\|_1\big]
\end{equation*}
and likewise for $j=1,2$
\begin{equation*}
  |s_j(\nu) - s_{j,L}(\nu)|
     \leq \frac{1}{2\sqrt{\nu}} \int_{|x|\geq L} |V(x)| \, dx \exp\big[ \frac{1}{\sqrt{\nu}} \|V\|_1\big] .
\end{equation*}
The simple relation
\begin{equation*}
  S(\nu) - S_L(\nu)
    = \frac{t(\nu)}{2i\sqrt{\nu}}[ \tilde S(\nu) - \tilde S_L(\nu) + (s_0(\nu)-s_{0,L}(\nu))(S_L(\nu)-\id) ]
\end{equation*}
then yields \eqref{wo02t03}.
\end{proof}

For a finite rank operator $D$ the Fredholm determinant of $\id-D$ reduces to a usual determinant.

\begin{lemma}\label{determinant_finite_rank}
Let $D:\hilbert\to\hilbert$ be an operator of finite rank $n\in\N$ and let $\hat D$ be the corresponding 
$n\times n$ Gram matrix, i.e.
\begin{equation}\label{determinant_finite_rank01}
   D = \sum_{j=1}^n (g_j,\cdot)f_j,\ \hat D\coloneqq((g_j,f_k))_{j,k=1,\ldots,n}
\end{equation}
where $f_j,g_k\in\hilbert$. Then $\sigma(D)\setminus\{0\}=\sigma(\hat D)\setminus\{0\}$.
In particular,
\begin{equation}\label{determinant_finite_rank02}
   \det (\id-D) = \det (\id - \hat D) .
\end{equation}
\end{lemma}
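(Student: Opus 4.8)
The plan is to factor the finite-rank operator through $\C^n$. Set $F\colon\C^n\to\hilbert$, $F(c_1,\dots,c_n)\coloneqq\sum_{j=1}^n c_jf_j$, and $G\colon\hilbert\to\C^n$, $G\varphi\coloneqq((g_1,\varphi),\dots,(g_n,\varphi))$. Then $D=FG$, whereas the matrix of $GF$ in the standard basis has entries $(GF)_{jk}=(g_j,f_k)$, i.e.\ $\hat D=GF$. Thus the statement is the familiar fact that $FG$ and $GF$ share their nonzero spectrum, together with the determinant identity $\det(\id-FG)=\det(\id-GF)$, specialised to the case where $GF$ acts on a finite-dimensional space.

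First I would prove $\sigma(D)\setminus\{0\}=\sigma(\hat D)\setminus\{0\}$. Since $D$ has finite rank it is compact, so every nonzero point of $\sigma(D)$ is an eigenvalue, and $\sigma(\hat D)$ consists of eigenvalues in any case. If $D\varphi=\varkappa\varphi$ with $\varphi\neq0$ and $\varkappa\neq0$, then $v\coloneqq G\varphi\neq0$ (otherwise $\varkappa\varphi=FG\varphi=0$) and $\hat Dv=GFG\varphi=G(D\varphi)=\varkappa v$; conversely, if $\hat Dv=\varkappa v$ with $v\neq0$ and $\varkappa\neq0$, then $\varphi\coloneqq Fv\neq0$ (otherwise $\varkappa v=GFv=0$) and $D\varphi=FGFv=F\hat Dv=\varkappa\varphi$. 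This proves the first assertion of the lemma. For \eqref{determinant_finite_rank02} one moreover needs the nonzero eigenvalues to match \emph{with algebraic multiplicity}: from the intertwining relations $GD=\hat DG$ and $DF=F\hat D$ (both immediate from $D=FG$, $\hat D=GF$) one sees that $G$ and $F$ map the generalised eigenspaces $\ker((D-\varkappa\id)^k)$ and $\ker((\hat D-\varkappa\id)^k)$ into one another, and that $\varkappa^{-1}FG$ and $\varkappa^{-1}GF$ act there as the identity plus a nilpotent operator; hence these finite-dimensional spaces have equal dimension for every $k$.

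With the nonzero eigenvalues of $D$ and of $\hat D$ thus agreeing as multisets, the determinant identity follows: for a finite-rank operator the Fredholm determinant equals $\det(\id-D)=\prod_\varkappa(1-\varkappa)$, the product running over the nonzero eigenvalues of $D$ counted with algebraic multiplicity, and $\det(\id-\hat D)$ is the same product (a zero eigenvalue of $\hat D$, if present, contributes only trivial factors). Alternatively, and avoiding any multiplicity bookkeeping, one can expand both sides using $\det(\id-D)=\sum_{k\ge0}(-1)^k\tr(\bigwedge^kD)$ together with the functoriality $\bigwedge^k(FG)=\bigwedge^kF\circ\bigwedge^kG$ and the cyclicity of the trace to obtain $\tr(\bigwedge^kD)=\tr(\bigwedge^k\hat D)=\sum_{i_1<\dots<i_k}\det\big((g_{i_a},f_{i_b})_{a,b=1}^k\big)$, whence $\det(\id-D)=\det(\id-\hat D)$ term by term. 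The spectral correspondence is immediate from the factorisation; the one place that warrants a little care is matching the algebraic multiplicities (equivalently, justifying the minor expansion of the Fredholm determinant).
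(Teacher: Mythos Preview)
Your proof is correct and follows essentially the same route as the paper: both establish the eigenvector correspondence between $D=FG$ and $\hat D=GF$ (the paper writes it out componentwise rather than via the maps $F,G$, but it is the same argument). The one place where you go further is in justifying the determinant identity: the paper simply says that a possible zero eigenvalue ``does not matter'' and concludes \eqref{determinant_finite_rank02} directly from $\sigma(D)\setminus\{0\}=\sigma(\hat D)\setminus\{0\}$, whereas you correctly observe that this step requires the nonzero eigenvalues to agree \emph{with algebraic multiplicity}, and you supply that via the intertwining relations (or alternatively via the minor expansion). So your argument is a more complete version of the paper's.
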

\begin{proof}
Let $\lambda\neq 0$ be an eigenvalue of $D$, i.e. $D\varphi=\lambda\varphi$ for some $\varphi\neq 0$.
Explicitly,
\begin{equation*}
   \lambda\varphi = \sum_{k=1}^n (g_k,\varphi) f_k .
\end{equation*}
We conclude $c\coloneqq( (g_1,\varphi),\ldots,(g_n,\varphi))^T\neq 0$ since otherwise we had $\lambda=0$.
Now, one easily checks $\hat D c=\lambda c$. Conversely, writing out the eigenvalue equation $\lambda c = \hat D c$,
\begin{equation*}
  \lambda c_j = \sum_{k=1}^n (g_j,f_k)c_k = (g_j, \sum_{k=1}^n c_k f_k) = (g_j,\varphi),\ \varphi\coloneqq\sum_{k=1}^n c_kf_k,
\end{equation*}
shows that $\varphi\neq 0$. Now, by simple algebra $D\varphi=\lambda\varphi$.

In the determinants $\det(\id-D)$ and $\det(\id-\hat D)$ a possible eigenvalue $0$ of $D$ or $\hat D$
does not matter which proves \eqref{determinant_finite_rank02}.
\end{proof}

By now we have introduced all objects so that we can finally present the
factorization of the perturbation determinant (cf. \eqref{perturbation_determinant}).

\begin{lemma}\label{wo03t}
The following hold true.
\begin{enumerate}
\item
We have the factorization
\begin{equation}\label{wo03t00}
  \id - K_L(z) = (\id - K_{\infty,L}(z))(\id + \Omega_{\infty,L}(z)\sqrt{|V|}D_L(z)\sqrt{|V|}J) .
\end{equation}
For $\im(z)\neq 0$, the operator $\Omega_{\infty,L}(z)\sqrt{|V|}D_L(z)\sqrt{|V|}J$ has no spectral values
in $\interval[open left]{-\infty}{-1}$.
\item
Let $T_L(z)$ be the T-matrix for the potential $V\chi_L$. Then,
\begin{equation}\label{wo03t01}
  \det(\id-K_L(z))
      = \det(\id-K_{\infty,L}(z)) \det(\id+d_L(z) T_L(z)G_L(z)) .
\end{equation}
Note that $\det(\id-K_L(\bar{z})) = (\det(\id-K_L(z)))^*$, and similarly for the two other determinants on the right-hand side.

For $\im(z)\neq 0$ the $2\times 2$-matrix $d_L(z)T_L(z)G_L(z)$ does not have eigenvalues in $\interval[open left]{-\infty}{-1}$.
\item
With the S-matrix, $S_L(z)=\id+T_L(z)$, we have
\begin{equation}\label{wo03t02}
  \id+d_L(z) T_L(z)G_L(z) = (e^{2iL\sqrt{z}}S_L(z) + U(z)\sigma_x)(e^{2iL\sqrt{z}}\id + U(z)\sigma_x)^{-1} .
\end{equation}
\end{enumerate}
\end{lemma}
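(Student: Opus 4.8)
The plan is to take the three parts in the order stated, observing that (1) and (3) are purely algebraic, while the one genuinely analytic point is the spectral localization in (1), which then yields the spectral claim in (2) for free.

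\emph{Parts (1) and (3) are algebra.} For \eqref{wo03t00} I would combine Lemma~\ref{fr01t}, i.e. $R_L(z) = R_{\infty,L}(z) - D_L(z)$, with the definitions \eqref{K_L}, \eqref{K_infinity} to get $K_L(z) = K_{\infty,L}(z) - \sqrt{|V|}D_L(z)\sqrt{|V|}J$, so that $\id - K_L(z) = (\id - K_{\infty,L}(z)) + \sqrt{|V|}D_L(z)\sqrt{|V|}J$. Since $\Omega_{\infty,L}(z) = (\id - K_{\infty,L}(z))^{-1}$ exists for the $z$ under consideration (Lemma~\ref{wo01at}), I factor $\id - K_{\infty,L}(z)$ out on the left and use $(\id - K_{\infty,L}(z))\Omega_{\infty,L}(z) = \id$ to arrive at \eqref{wo03t00}. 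For \eqref{wo03t02} I would simply insert $d_L(z) = e^{2iL\sqrt z}$ and $G_L(z) = (e^{2iL\sqrt z}\id + U(z)\sigma_x)^{-1}$ from \eqref{fr01t02} together with $S_L(z) = \id + T_L(z)$: then $\id + d_L(z)T_L(z)G_L(z) = \bigl(e^{2iL\sqrt z}\id + U(z)\sigma_x + e^{2iL\sqrt z}T_L(z)\bigr)G_L(z) = (e^{2iL\sqrt z}S_L(z) + U(z)\sigma_x)(e^{2iL\sqrt z}\id + U(z)\sigma_x)^{-1}$, no commutativity being needed.

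\emph{Spectral claim in (1).} Fix $z$ with $\im z \neq 0$ and set $A \coloneqq \Omega_{\infty,L}(z)\sqrt{|V|}D_L(z)\sqrt{|V|}J$. From \eqref{wo03t00}, $\id + A = \Omega_{\infty,L}(z)(\id - K_L(z))$ is invertible (as $z\notin\sigma(H_{V,L})$), so $0$ is not an eigenvalue. Suppose some $\rho\le 0$ were an eigenvalue of $\id+A$ with eigenvector $\varphi\neq 0$; then $(\id - K_L(z))\varphi = \rho(\id - K_{\infty,L}(z))\varphi$, which rearranges to $(1-\rho)\varphi = \sqrt{|V|}(R_L(z) - \rho R_{\infty,L}(z))\chi$ with $\chi\coloneqq\sqrt{|V|}J\varphi$. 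Pairing with $J\varphi$ and using that $J$ and $\sqrt{|V|}$ are self-adjoint gives $(1-\rho)(J\varphi,\varphi) = (\chi, R_L(z)\chi) - \rho(\chi, R_{\infty,L}(z)\chi)$; since $(J\varphi,\varphi)\in\R$, the imaginary part of the right-hand side vanishes. Using $R(z)-R(\bar z) = 2i\,\im(z)\,R(z)R(\bar z)$ (for $R=R_L$ and for $R_\infty$) together with $\chi_L\chi = \chi$, one has $\im(\chi,R_L(z)\chi) = \im(z)\|R_L(\bar z)\chi\|^2$ and $\im(\chi,R_{\infty,L}(z)\chi) = \im(z)\|R_\infty(\bar z)\chi\|^2$, so dividing by $\im z$ yields $\|R_L(\bar z)\chi\|^2 = \rho\,\|R_\infty(\bar z)\chi\|^2$. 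For $\rho<0$ the right-hand side is $\le 0$ while the left-hand side is $\ge 0$, hence both vanish; so $R_L(\bar z)\chi = 0$, and injectivity of the resolvent $R_L(\bar z)$ forces $\chi = 0$, whence $\varphi = 0$ — a contradiction. Thus $\id+A$ has no eigenvalue in $(-\infty,0]$, equivalently $A$ has none in $(-\infty,-1)$ (and, being finite rank, its spectrum consists of $0$ together with its eigenvalues).

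\emph{Part (2).} Since $K_L(z),K_{\infty,L}(z)\in B_1(\hilbert_L)$ (Lemmas~\ref{bso02at}, \ref{bso01t}) and $A$ has finite rank, multiplicativity of the Fredholm determinant applied to \eqref{wo03t00} gives $\det(\id-K_L(z)) = \det(\id-K_{\infty,L}(z))\det(\id+A)$. Writing $A = \sum_{j=1}^2 (g_j,\cdot)f_j$ with $f_j = \Omega_{\infty,L}(z)\sqrt{|V|}\varepsilon_j(z)$ and computing its $2\times2$ Gram matrix via \eqref{Omega_matrix}, one identifies that matrix as $d_L(z)G_L(z)T_L(z)$; Lemma~\ref{determinant_finite_rank} then gives $\det(\id+A) = \det(\id + d_L(z)G_L(z)T_L(z)) = \det(\id + d_L(z)T_L(z)G_L(z))$, the last step being $\det(\id+XY)=\det(\id+YX)$, which is \eqref{wo03t01}. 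The relation $\det(\id-K_L(\bar z)) = \overline{\det(\id-K_L(z))}$ follows from $K_L(\bar z) = JK_L(z)^*J$ (and the analogous relation for $K_{\infty,L}$, already recorded), invariance of the determinant under conjugation by the unitary $J$, and $\det(M^*) = \overline{\det M}$; the statement for $d_L T_L G_L$ is then forced by the product formula. Finally, the $2\times2$ spectral claim is exactly the one proved in (1), since the non-zero spectra of $A$, of $d_L G_L T_L$, and of $d_L T_L G_L$ all coincide.

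The main obstacle is the spectral localization in (1)/(2); everything else is algebra or bookkeeping already licensed by the preceding lemmas and by \eqref{Omega_matrix}. Once one notices that $(J\varphi,\varphi)$ is real and that $\im R(z)$ has the sign of $\im z$ (being $\im(z)$ times a positive semidefinite operator), the localization argument is short; the only point needing care is that $R_L(\bar z)$, a genuine resolvent, is injective, which is what turns "norm $=0$" into "$\chi=0$".
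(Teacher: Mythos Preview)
Your argument is correct and follows essentially the same route as the paper: the factorization and Part~(3) are identical algebra, Part~(2) is the same application of Lemma~\ref{determinant_finite_rank} to the rank-two operator, and your spectral localization via the imaginary part of $(\chi,R(z)\chi)$ is the paper's argument reparametrized by $\rho=1+\varkappa$. One cosmetic slip: with the convention $R(z)=(z\id-H)^{-1}$ the resolvent identity reads $R(z)-R(\bar z)=-2i\,\im(z)\,R(z)R(\bar z)$, so your sign in $\im(\chi,R_L(z)\chi)$ is off, but the same sign appears on both sides and cancels.
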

\begin{proof}
1.
The decomposition \eqref{fr01t01} in Lemma \ref{fr01t} implies
\begin{equation*}
  \id - K_L(z) = \id - K_{\infty,L}(z) + \sqrt{|V|}D_L(z)\sqrt{|V|}J
               = (\id - K_{\infty,L}(z)) (\id + \Omega_{\infty,L}(z)\sqrt{|V|}D_L(z)\sqrt{|V|}J) .
\end{equation*}
Since $\Omega_{\infty,L}(z)\sqrt{|V|}D_L(z)\sqrt{|V|}J$ is a rank two operator it has only eigenvalues. 
One eigenvalue may be $\varkappa=0\notin\interval[open left]{-\infty}{-1}$. If $\im(z)\neq 0$ and $\im(\varkappa)\neq 0$
the statement is true as well. Hence, the critical case is $\im(z)\neq 0$ and $\im(\varkappa)=0$.
Let $\varphi\neq 0$ such that
\begin{equation*}
  \Omega_{\infty,L}(z)\sqrt{|V|}D_L(z)\sqrt{|V|}J\varphi = \varkappa\varphi .
\end{equation*}
Via the decomposition \eqref{fr01t01} this implies
\begin{equation*}
  ( \varkappa\id + \sqrt{|V|}R_L(z)\sqrt{|V|} )J\varphi = (\varkappa+1)\sqrt{|V|}R_{\infty,L}(z)\sqrt{|V|}J\varphi .
\end{equation*}
We multiply by $J$, take scalar products
\begin{equation*}
  \varkappa (\varphi,J\varphi) + (\varphi,J\sqrt{|V|}R_L(z)\sqrt{|V|}J\varphi)
      = (\varkappa+1)(\varphi,J\sqrt{|V|}R_{\infty,L}(z)\sqrt{|V|}J\varphi) ,
\end{equation*}
and look at the imaginary part
\begin{multline*}
  \im(\varkappa) (\varphi,J\varphi) - \im(z) \|R_L(z)\sqrt{|V|}J\varphi\|^2\\
     = \im(\varkappa)\re((\varphi,J\sqrt{|V|}R_{\infty,L}(z)\sqrt{|V|}J\varphi)) 
      - (\re(\varkappa)+1)\im(z) \|R_{\infty,L}(z)\sqrt{|V|}J\varphi\|^2 .
\end{multline*}
By the assumption on $z$ and $\varkappa$ this implies
\begin{equation*}
  \| R_L(z)\sqrt{|V|}J\varphi\|^2 = (\re(\varkappa)+1) \|R_{\infty,L}(z)\sqrt{|V|}J\varphi\|^2 .
\end{equation*}
The norms do not vanish since that would imply $\sqrt{|V|}J\varphi=0$ and furthermore $\varkappa\varphi=0$ 
by the eigenvalue equation which contradicts $\varkappa\neq 0$ and $\varphi\neq 0$. We conclude
that $\re(\varkappa)+1>0$ which proves the statement on the spectral values.

2.
We take determinants and apply Lemma \ref{determinant_finite_rank} to the finite rank operator 
$\Omega_{\infty,L}(z)\sqrt{|V|}D_L(z)\sqrt{|V|}J$ thereby obtaining
\begin{equation*}
\begin{split}
  \det(\id- K_L(z))
     & = \det(\id-K_{\infty,L}(z)) \det(\id+\Omega_{\infty,L}(z)\sqrt{|V|}D_L(z)\sqrt{|V|}J)\\
     & = \det(\id-K_{\infty,L}(z)) \det(\id+\tilde\Omega_L(z)) .
\end{split}
\end{equation*}
The matrix $\tilde\Omega_L(z)$ has the entries, $j,k=1,2$,
\begin{equation*}
  (\tilde\Omega_L(z))_{jk}
     = \frac{1}{2i\sqrt{z}}d_L(z)\sum_{l=1}^2 (\sqrt{|V|}J\varepsilon_j(\bar z),\Omega_{\infty,L}(z)\sqrt{|V|}\varepsilon_l(z))g_{lk}(z)
     = \frac{1}{2i\sqrt{z}}d_L(z)(\hat\Omega_{\infty,L}(z)G_L(z))_{jk} .
\end{equation*}
This proves \eqref{wo03t01}. Apart from $0$ the matrix $\tilde\Omega_L(z)$ has the same eigenvalues
as the rank two operator in part one, see \ref{determinant_finite_rank}. 

3.
Finally, \eqref{wo03t02} is straightforward to prove.
\end{proof}

\section{Boundary condition scattering matrix \texorpdfstring{$U(z)$}{U}\label{bcs}}
We study in more detail the properties of the boundary condition scattering matrix $U(z)$, see \eqref{fr_bcs}.
At first, we derive some general properties, which might be interesting in its own right.
Then we investigate the behaviour on the Fermi parabola.

\subsection{General properties of \texorpdfstring{$U(z)$}{U}}
We consider a somewhat more general situation by allowing for $n\times n$-matrices
instead of the $2\times 2$-matrix in \eqref{fr_bcs}. Also, $\sqrt{z}$ may be replaced by
$w\in\C$. We start with showing that the inverse in \eqref{fr_bcs} exists.

\begin{lemma}\label{bcs_general01t}
The matrix $iA+w B$ is invertible for all $w\in\C$ with $\re(w)\neq 0$ and for all $w\in\C$, $\re(w)=0$,
except a finite number of points $w_1,\ldots,w_m$, $m\leq n$.
\end{lemma}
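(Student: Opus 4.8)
The plan is to reduce the statement to a fact about the roots of the polynomial $p(w)\coloneqq\det(iA+wB)$ in $w\in\C$. Expanding the determinant, each matrix entry $ia_{jk}+wb_{jk}$ is affine in $w$, so $p$ is a polynomial of degree at most $n$. I would show that every root of $p$ lies on the imaginary axis; since a nonzero polynomial of degree $\le n$ has at most $n$ roots, labelling them $w_1,\dots,w_m$ (so $m\le n$) then gives exactly the asserted statement, provided one also checks that $p$ is not identically zero.

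To locate the roots I would pass to the adjoint. Since $\det(iA+wB)=0$ is equivalent to $\det\big((iA+wB)^*\big)=\overline{\det(iA+wB)}=0$ and $(iA+wB)^*=-iA^*+\bar w B^*$, a number $w_0$ is a root of $p$ exactly when there is $\psi\neq 0$ with $iA^*\psi=\bar w_0\,B^*\psi$. From the self-adjointness conditions \eqref{abc_condition} I would use two facts: by Lemma \ref{abc01t} one has $\ker(A^*)\cap\ker(B^*)=\{0\}$, so $B^*\psi\neq 0$ (if $B^*\psi=0$ then $iA^*\psi=0$ as well, contradicting $\psi\neq 0$); and $AB^*=BA^*$ means $AB^*$ is self-adjoint, hence $(\psi,AB^*\psi)=(A^*\psi,B^*\psi)\in\R$. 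The key step is then to take the inner product of $iA^*\psi=\bar w_0 B^*\psi$ with $B^*\psi$, which yields $i\,(B^*\psi,A^*\psi)=\bar w_0\,\|B^*\psi\|^2$. The left-hand side equals $i\,\overline{(A^*\psi,B^*\psi)}$ and is therefore purely imaginary, while $\|B^*\psi\|^2>0$; hence $\bar w_0$, and so $w_0$, is purely imaginary, i.e. $\re(w_0)=0$.

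This already shows that no $w$ with $\re(w)\neq 0$ can be a root of $p$; in particular $p(1)\neq 0$, so $p$ is not the zero polynomial. Its roots $w_1,\dots,w_m$, $m\le n$, therefore form a finite set, and each of them satisfies $\re(w_j)=0$ (trivially so if $w_j=0$). Consequently $iA+wB$ is invertible for every $w$ with $\re(w)\neq 0$ and for every $w$ with $\re(w)=0$ outside the finite set $\{w_1,\dots,w_m\}$, which is the claim. I do not expect a genuine obstacle here: the computation is essentially a one-line scalar identity, and the only points needing care are the disposal of the degenerate case $B^*\psi=0$ via Lemma \ref{abc01t} and the observation that $p\not\equiv 0$ (without which ``finitely many exceptional points'' would be vacuous).
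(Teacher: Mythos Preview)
Your argument is correct. Both proofs pass to the adjoint kernel $(iA+wB)^*\psi=0$ and invoke Lemma~\ref{abc01t} together with the polynomiality of $\det(iA+wB)$, but the core computation differs. The paper expands the full square
\[
  (iA+wB)(iA+wB)^* = (A+\im(w)B)(A^*+\im(w)B^*) + \re(w)^2\,BB^*,
\]
a sum of two positive semidefinite operators; applying this to $\psi$ forces $B^*\psi=0$ and then $A^*\psi=0$ whenever $\re(w)\neq 0$. You instead isolate the single scalar identity $i\,(B^*\psi,A^*\psi)=\bar w_0\,\|B^*\psi\|^2$ and read off $\re(\bar w_0)=0$ from the self-adjointness of $AB^*$. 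Your route is a bit more direct and uses the hypothesis $AB^*=BA^*$ in the form ``$(\psi,AB^*\psi)\in\R$'' rather than through the algebraic square; the paper's decomposition, on the other hand, gives the slightly stronger intermediate conclusion $A^*\psi=B^*\psi=0$ (not just $\re(w_0)=0$), which is not needed here but is a pleasant structural fact.
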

\begin{proof}
The matrix $iA+w B$ is invertible if and only if its adjoint is invertible.
Let $(iA+wB)^*\varphi=0$. We compute
\begin{equation*}
\begin{split}
  (iA+w B)(iA+w B)^*
    & = AA^* + |w|^2BB^* +i\re(w)(AB^* - BA^*) + \im(w) (AB^*+BA^*) \\
    & = (A+\im(w)B)(A^*+\im(w)B^*) + \re(w)^2 BB^* .
\end{split}
\end{equation*}
Since $\re(w)\neq 0$ we conclude
\begin{equation*}
  (A^* + \im(w) B^*)\varphi = 0,\ B^*\varphi = 0 .
\end{equation*}
Lemma \ref{abc01t} implies that $\varphi=0$. In particular, $\det(iA+wB)\neq 0$ for $\re(w)\neq 0$.
Obviously, the determinant is a polynomial in $w$ of degree at most $n$. Since it does not vanish identically 
there can be at most $n$ zeros.
\end{proof}

The conditions on $A$ and $B$ imply some nice properties of $U$.

\begin{lemma}\label{bcs_general02t}
The matrix $U(w)\coloneqq (iA-wB)^{-1}(iA+wB)$ has an eigenvalue $-1$ if $A$ has an eigenvalue $0$. Likewise,
$U(w)$ has an eigenvalue $1$ if $B$ has an eigenvalue $0$. The respective eigenvectors can be chosen independently of $w$.
Furthermore, $U(w_1)U(w_2)^*=U(w_2)^*U(w_1)$ for all $w_1,w_2\in\C\setminus\{0\}$ and $w_1+\bar w_2\neq 0$. 
In particular, $U(w)$ is normal for $\re(w)\neq 0$. If, in addition, $\im(w)=0$ then $U(w)$ is unitary.
\end{lemma}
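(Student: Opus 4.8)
The plan is to rewrite $U(w)$ in a Cayley-transform form and exploit the symmetry built into \eqref{abc_condition}. Rewriting the defining factors as $iA-wB=i(A+iwB)$ and $iA+wB=i(A-iwB)$ gives $U(w)=C(w)^{-1}C(-w)$ with $C(w)\coloneqq A+iwB$; by Lemma \ref{bcs_general01t} the matrix $C(w)$, hence $U(w)$, is invertible for all $w$ outside a finite set, and in particular whenever $\re(w)\neq0$. The two eigenvalue claims are then immediate and uniform in $w$: if $A\varphi=0$, $\varphi\neq0$, then $C(\pm w)\varphi=\pm iwB\varphi$, so $C(-w)\varphi=-C(w)\varphi$ and $U(w)\varphi=C(w)^{-1}(-C(w)\varphi)=-\varphi$; if $B\varphi=0$, $\varphi\neq0$, then $C(\pm w)\varphi=A\varphi$, so $U(w)\varphi=C(w)^{-1}C(w)\varphi=\varphi$.

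The core of the argument is the bilinear identity
\begin{equation*}
  C(\alpha)\,C^\sharp(\beta) = C(\beta)\,C^\sharp(\alpha),\qquad C^\sharp(\gamma)\coloneqq A^*+i\gamma B^*,\quad \alpha,\beta\in\C,
\end{equation*}
which follows at once by expanding $C(\alpha)C^\sharp(\beta)=AA^*-\alpha\beta BB^*+i(\alpha+\beta)BA^*$ and noting that the right-hand side is symmetric under $\alpha\leftrightarrow\beta$ once $AB^*=BA^*$ from \eqref{abc_condition} has been used. Rearranging it (wherever the relevant inverses exist) yields $C(\alpha)^{-1}C(\beta)=C^\sharp(\beta)\,C^\sharp(\alpha)^{-1}$, and since $C(w)^*=C^\sharp(-\bar w)$ this produces the two representations $U(w)=C(w)^{-1}C(-w)=C^\sharp(-w)\,C^\sharp(w)^{-1}$ and $U(w)^*=C(-\bar w)^{-1}C(\bar w)$. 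Each of these holds for $w$ outside a finite exceptional set, hence, by continuity (all expressions are rational in $w$), throughout the domain of definition.

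I would then prove $U(w_1)U(w_2)^*=U(w_2)^*U(w_1)$ by a short chain that uses only the identity above: starting from $U(w_1)U(w_2)^*=C^\sharp(-w_1)\,C^\sharp(w_1)^{-1}\,C(-\bar w_2)^{-1}\,C(\bar w_2)$, merge the two middle factors into $\bigl(C(-\bar w_2)C^\sharp(w_1)\bigr)^{-1}$ and swap its arguments; next convert the blocks $C^\sharp(-w_1)C^\sharp(-\bar w_2)^{-1}$ and $C(w_1)^{-1}C(\bar w_2)$ via the rearranged identity; finally apply the identity once more to the resulting middle factor $C(-w_1)C^\sharp(\bar w_2)$. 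The expression collapses to $\bigl(C(-\bar w_2)^{-1}C(\bar w_2)\bigr)\bigl(C^\sharp(-w_1)C^\sharp(w_1)^{-1}\bigr)=U(w_2)^*U(w_1)$. The hypotheses $w_1,w_2\neq0$ and $w_1+\bar w_2\neq0$, together with Lemma \ref{bcs_general01t}, guarantee that all auxiliary matrices $C(\pm w_j)$, $C(\pm\bar w_j)$, $C^\sharp(\pm w_j)$ occurring above are invertible for $(w_1,w_2)$ outside a nowhere dense set, and the identity extends to the rest by continuity. Normality of $U(w)$ for $\re(w)\neq0$ is the special case $w_1=w_2=w$ (where $w_1+\bar w_2=2\re(w)\neq0$); and if moreover $\im(w)=0$ then $\bar w=w$ gives $U(w)^*=C(-w)^{-1}C(w)$, whence $U(w)^*U(w)=\id=U(w)U(w)^*$ by telescoping, i.e.\ $U(w)$ is unitary.

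The only genuinely nontrivial ingredient is spotting the identity $C(\alpha)C^\sharp(\beta)=C(\beta)C^\sharp(\alpha)$ and organizing the commutation computation so that every step either swaps the arguments of an unevaluated product $C\cdot C^\sharp$ or trades a $C^{-1}C$ block for a $C^\sharp C^{\sharp\,-1}$ block; once this bookkeeping is in place the calculation is purely formal, the only care being the generic invertibility of the intermediate factors, which is handled by the continuity argument above.
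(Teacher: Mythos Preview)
Your proof is correct and takes a genuinely different route from the paper. The paper rewrites $U(w)=-\id+2i(iA-wB)^{-1}A=\id+2w(iA-wB)^{-1}B$ for the eigenvalue claims, then for the commutation computes $w_1(U(w_1)+\id)(U(w_2)-\id)^*=-\bar w_2(U(w_1)-\id)(U(w_2)+\id)^*$ directly from $AB^*=BA^*$, rearranges this to $(U(w_1)+c\id)(U(w_2)^*-c\id)=(1-c^2)\id$ with $c=(w_1-\bar w_2)/(w_1+\bar w_2)$, and finally invokes that for matrices a left inverse is a right inverse to swap the factors. Your approach instead isolates the single algebraic identity $C(\alpha)C^\sharp(\beta)=C(\beta)C^\sharp(\alpha)$ as the content of $AB^*=BA^*$ and reduces the commutation to a sequence of argument swaps, cleaning up the invertibility of intermediate factors by a density/continuity argument. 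The paper's method is more self-contained (no genericity argument) and exhibits the pleasant scalar relation between $U(w_1)$ and $U(w_2)^*$, but it genuinely uses finite-dimensionality at the left-inverse step and needs the hypothesis $w_1+\bar w_2\neq 0$ to define $c$. Your method avoids both: it never uses that the ambient space is finite-dimensional, and the continuity argument shows the restriction $w_1+\bar w_2\neq 0$ is in fact unnecessary, so you end up proving a slightly stronger statement.
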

\begin{proof}
By Lemma \ref{bcs_general01t} the matrix $iA-wB$ is invertible. We may therefore write
\begin{equation*}
  U(w) = -\id + 2i(iA-wB)^{-1}A = \id + 2w(iA-wB)^{-1}B .
\end{equation*}
This shows the statement on the eigenvalues $\pm 1$.
The eigenvectors of $A$ and $B$ can, obviously, be chosen independently of $w$.
Using \eqref{abc_condition} we compute
\begin{equation*}
\begin{split}
  w_1(U(w_1)+\id)(U(w_2)-\id)^*
   & = 4iw_1\bar w_2 (iA-w_1B)^{-1}AB^*(-iA^*-\bar w_2B^*)^{-1}\\
   & = 4iw_1\bar w_2 (iA-w_1B)^{-1}BA^*(-iA^*-\bar w_2B^*)^{-1}\\
   & = -\bar w_2(U(w_1)-\id)(U(w_2)+\id)^*\
\end{split}
\end{equation*}
which yields
\begin{equation*}
  U(w_1)U(w_2)^* - cU(w_1)+ cU(w_2)^* = \id,\ c\coloneqq\frac{w_1-\bar w_2}{w_1+\bar w_2}
\end{equation*}
and furthermore
\begin{equation*}
  ( U(w_1)+c\id )( U(w_2)^*-c\id ) = (1-c^2)\id .
\end{equation*}
For $c^2\neq 1$ we infer that $U(w_1)+c\id$ is a left-inverse and therefore a right-inverse 
as well since we are working with matrices. Thus
\begin{equation*}
  ( U(w_2)^* - c\id )( U(w_1) + c\id ) = (1-c^2)\id .
\end{equation*}
This implies that $U(w_1)$ and $U(w_2)^*$ commute if $c\neq \pm 1$ or, equivalently, $w_{1,2}\neq 0$.
$U(w)$ is unitary if $c=0$ which is equivalent to $\im(w)=0$.
\end{proof}

The matrix $U(z)$ is closely related to the spectrum of $H_L$, cf. Lemma \ref{H_L_spectrum}.

\begin{lemma}\label{bcs_special01t}
Let $\lambda\in\C\setminus\{0\}$ such that the matrix $iA-\sqrt{\lambda}B$ is invertible, which is in particular true
for $\re(\sqrt{\lambda})\neq 0$, see Lemma \ref{bcs_general01t}. Then $\lambda$ is an eigenvalue of $H_L$, cf. \eqref{H_L}, 
if and only if the matrix $e^{2iL\sqrt{\lambda}}\id + U(\lambda)\sigma_x$ is singular .
\end{lemma}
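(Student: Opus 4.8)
The plan is to characterize the eigenvalues of $H_L$ directly through the action of the boundary condition on the two-dimensional space $\mathcal{N}_\lambda$ (see \eqref{deficiency_subspace}) and then to factor out the invertible matrix $iA-\sqrt{\lambda}B$.

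Since $H_L=\tilde H|_{\dom(H_L)}$, a nonzero $\varphi$ satisfies $H_L\varphi=\lambda\varphi$ exactly when $\varphi\in\mathcal{N}_\lambda$ and $(A\Gamma_1-B\Gamma_2)\varphi=0$. For $\lambda\neq 0$ the functions $\varepsilon_1(\lambda;x)=e^{i\sqrt{\lambda}x}$ and $\varepsilon_2(\lambda;x)=e^{-i\sqrt{\lambda}x}$ (for a fixed choice of $\sqrt{\lambda}$) are linearly independent and span $\mathcal{N}_\lambda$, so writing $\varphi=c_1\varepsilon_1(\lambda)+c_2\varepsilon_2(\lambda)$ identifies $\varphi=0$ with the coefficient vector $c=(c_1,c_2)^T$ being zero. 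Hence $\lambda$ is an eigenvalue of $H_L$ if and only if the $2\times 2$ matrix $(A\Gamma_1-B\Gamma_2)|_{\mathcal{N}_\lambda}$ has a nontrivial kernel, i.e.\ is singular. By the computation already carried out in the proof of Lemma \ref{fr01t},
\begin{equation*}
  (A\Gamma_1-B\Gamma_2)|_{\mathcal{N}_\lambda} = e^{i\sqrt{\lambda}L}(A+i\sqrt{\lambda}B) + e^{-i\sqrt{\lambda}L}(A-i\sqrt{\lambda}B)\sigma_x .
\end{equation*}

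Next I would use the elementary identities $A+i\sqrt{\lambda}B=-i(iA-\sqrt{\lambda}B)$ and $A-i\sqrt{\lambda}B=-i(iA+\sqrt{\lambda}B)$ to rewrite
\begin{equation*}
  (A\Gamma_1-B\Gamma_2)|_{\mathcal{N}_\lambda}
    = -i\big[ e^{i\sqrt{\lambda}L}(iA-\sqrt{\lambda}B) + e^{-i\sqrt{\lambda}L}(iA+\sqrt{\lambda}B)\sigma_x \big]
    = -i(iA-\sqrt{\lambda}B)\big[ e^{i\sqrt{\lambda}L}\id + e^{-i\sqrt{\lambda}L}U(\lambda)\sigma_x \big] ,
\end{equation*}
where the last step uses that $iA-\sqrt{\lambda}B$ is invertible by hypothesis (which holds in particular when $\re(\sqrt{\lambda})\neq 0$ by Lemma \ref{bcs_general01t}) together with the definition \eqref{fr_bcs} of $U(\lambda)$. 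Since $-i(iA-\sqrt{\lambda}B)$ is invertible and $e^{i\sqrt{\lambda}L}\neq 0$, the matrix $(A\Gamma_1-B\Gamma_2)|_{\mathcal{N}_\lambda}$ is singular if and only if $e^{i\sqrt{\lambda}L}\id + e^{-i\sqrt{\lambda}L}U(\lambda)\sigma_x$ is singular, and multiplying by the nonzero scalar $e^{i\sqrt{\lambda}L}$ this is equivalent to $e^{2iL\sqrt{\lambda}}\id + U(\lambda)\sigma_x$ being singular. Combined with the first paragraph this proves the claim.

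There is no deep obstacle here; the one point to get exactly right is the bookkeeping of signs and of the left/right placement of $\sigma_x$, so that precisely $U(\lambda)=(iA-\sqrt{\lambda}B)^{-1}(iA+\sqrt{\lambda}B)$ appears (and not its inverse or a conjugate) — which is why I would rely on the already-verified matrix representation in Lemma \ref{fr01t} rather than recompute $\Gamma_{1,2}|_{\mathcal{N}_\lambda}$ from scratch. A minor remark worth including is that the statement is insensitive to the chosen branch of $\sqrt{\lambda}$: replacing $\sqrt{\lambda}$ by $-\sqrt{\lambda}$ merely interchanges $\varepsilon_1(\lambda)$ and $\varepsilon_2(\lambda)$, which does not affect whether $(A\Gamma_1-B\Gamma_2)|_{\mathcal{N}_\lambda}$ is singular.
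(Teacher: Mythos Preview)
Your proof is correct and follows essentially the same route as the paper's: both reduce the eigenvalue condition to singularity of $(A\Gamma_1-B\Gamma_2)|_{\mathcal{N}_\lambda}$ and then factor out $-i(iA-\sqrt{\lambda}B)$ to arrive at $e^{2iL\sqrt{\lambda}}\id+U(\lambda)\sigma_x$. The only cosmetic difference is that you cite the matrix representation already computed in the proof of Lemma~\ref{fr01t}, whereas the paper recomputes $\Gamma_{1,2}|_{\mathcal{N}_\lambda}$ from scratch; your branch-independence remark is a nice addition not in the original.
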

\begin{proof}
We solve the eigenvalue equation
\begin{equation*}
  -\varphi''(x) = \lambda\varphi(x), \ 
A
\begin{pmatrix}
  \varphi(L) \\ \varphi(-L)
\end{pmatrix}
-
B
\begin{pmatrix}
  -\varphi'(L) \\ \varphi'(-L)
\end{pmatrix}
= 0 .
\end{equation*}
Since $\lambda\neq 0$ the general solution is, cf. \eqref{deficiency_subspace},
\begin{equation*}
  \varphi(x) = a_1 e^{i\sqrt{\lambda}x} + a_2 e^{-i\sqrt{\lambda}x} .
\end{equation*}
The boundary conditions are satisfied if and only if there is a solution $a\coloneqq (a_1,a_2)^T\neq 0$ of the equation
\begin{equation*}
 M a = 0,\ 
 M\coloneqq  A \big[ e^{i\sqrt{\lambda}L}\id + e^{-i\sqrt{\lambda}L}\sigma_x \big] 
    - i\sqrt{\lambda}B\big[ -e^{i\sqrt{\lambda}L}\id + e^{-i\sqrt{\lambda}L}\sigma_x \big],
\end{equation*}
which is to say that the matrix $M$ is singular. Thanks to our assumption on $iA-\sqrt{\lambda}B$ we may write
\begin{equation*}
\begin{split}
   M & = e^{i\sqrt{\lambda}L}( A+i\sqrt{\lambda}B) + e^{-i\sqrt{\lambda}L}( A-i\sqrt{\lambda}B)\sigma_x \\
     & = e^{-i\sqrt{\lambda}L} (A+i\sqrt{\lambda}B)\big[ e^{2i\sqrt{\lambda}L}\id + (A+i\sqrt{\lambda}B)^{-1}(A-i\sqrt{\lambda}B)\sigma_x \big]\\
     & = -i e^{-i\sqrt{\lambda}L} (iA-\sqrt{\lambda}B)\big[ e^{2i\sqrt{\lambda}L}\id + U(\lambda) \sigma_x \big]
\end{split}
\end{equation*}
and conclude that the rightmost matrix must be singular.
\end{proof}
 
\subsection{Behaviour of \texorpdfstring{$U(z)$}{U} on the Fermi parabola}
We study how $G_L(z)$, see \eqref{fr01t02}, behaves on the Fermi parabola starting with $z$ off the spectrum.

\begin{lemma}\label{bcs_fermi00t}
There are constants $b_0>0$, $L_0>0$, and $C(b_0,L_0)\geq 0$ such that
\begin{equation}\label{bcs_fermi00t01}
  \| G_L(z) \| \leq C(b_0,L_0)\ \text{for all} \ L\geq L_0 \ \text{and}\  z=(t+ib)^2\in\Gamma_{\nu,b}^- \ \text{with}\ \ b\geq b_0 .
\end{equation}
\end{lemma}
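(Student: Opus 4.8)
The plan is to read off from \eqref{fr01t02} that $G_L(z) = (e^{2i\sqrt z L}\id + U(z)\sigma_x)^{-1}$ and to exploit that on $\Gamma_{\nu,b}^-$ one has $\sqrt z = t+ib$ with $|t|\le\sqrt\nu$, so the scalar prefactor is exponentially small, $|e^{2i\sqrt z L}| = e^{-2bL}\le e^{-2b_0L_0}$ once $b\ge b_0$ and $L\ge L_0$. Since $\re(\sqrt z)=t$, Lemma \ref{bcs_general01t} guarantees that $iA\pm\sqrt z B$ are both invertible whenever $t\ne 0$, and for $t=0$ (i.e.\ $\sqrt z=ib$) the same holds once $b_0$ exceeds the moduli of the finitely many exceptional purely imaginary points; in particular $U(z)$ is invertible on the whole range under consideration. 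Using $\sigma_x^{-1}=\sigma_x$ I factor
\begin{equation*}
  e^{2i\sqrt z L}\id + U(z)\sigma_x = \bigl(\id + e^{2i\sqrt z L}\,\sigma_x U(z)^{-1}\bigr)\,U(z)\sigma_x ,
\end{equation*}
so that, as soon as $e^{-2bL}\,\|U(z)^{-1}\|<1$, a Neumann series gives
\begin{equation*}
  \|G_L(z)\| \le \frac{\|U(z)^{-1}\|}{1 - e^{-2bL}\,\|U(z)^{-1}\|} .
\end{equation*}
Everything therefore reduces to a bound $\|U(z)^{-1}\|\le M_0$ uniform over all $z\in\Gamma_{\nu,b}^-$ with $b\ge b_0$.

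For this uniform bound I would use that $U(z)^{-1}=(iA+\sqrt z B)^{-1}(iA-\sqrt z B)$ equals $U(w)$ evaluated at $w=-\sqrt z=-t-ib$, which ranges over the region $\{w:|\re w|\le\sqrt\nu,\ |\im w|\ge b_0\}$. On that region, for $\re w\ne 0$ the matrix $U(w)$ is normal by Lemma \ref{bcs_general02t}, hence $\|U(w)\|$ is the largest modulus of its eigenvalues. A short computation shows that, for $\lambda\ne 1$, $\lambda$ is an eigenvalue of $U(w)$ exactly when $iA+\tfrac{1+\lambda}{1-\lambda}w\,B$ is singular, i.e.\ (Lemma \ref{bcs_general01t}) when $\tfrac{1+\lambda}{1-\lambda}w\in\{w_1,\dots,w_m\}$, the finite exceptional set on the imaginary axis, while $\lambda=1$ occurs only when $B$ is singular. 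Thus the eigenvalues of $U(w)$ are $\lambda=1$ and $\lambda_j(w)=(w_j-w)/(w_j+w)$; writing $w_j=i\beta_j$ one finds
\begin{equation*}
  |\lambda_j(w)|^2 = \frac{|\re w|^2 + (\beta_j-\im w)^2}{|\re w|^2 + (\beta_j+\im w)^2},
\end{equation*}
and choosing $b_0\ge 2\max_j|\beta_j|$ forces $|\beta_j+\im w|\ge\tfrac12|\im w|$ and $|\beta_j-\im w|\le\tfrac32|\im w|$, whence $|\lambda_j(w)|^2\le 4\nu/b_0^2+9$. Therefore $\|U(w)\|\le M_0\coloneqq(4\nu/b_0^2+9)^{1/2}$ for all $w$ in the region with $\re w\ne 0$; the remaining case $\re w=0$ follows by letting $t\to 0$, since near those points $U$ is holomorphic, hence continuous, once $b_0$ is large. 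This gives $\|U(z)^{-1}\|\le M_0$ uniformly.

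Finally I fix $b_0$ as above and then pick $L_0$ so large that $e^{-2b_0L_0}M_0\le\tfrac12$; then for all $L\ge L_0$ and $z=(t+ib)^2\in\Gamma_{\nu,b}^-$ with $b\ge b_0$ one has $e^{-2bL}\|U(z)^{-1}\|\le e^{-2b_0L_0}M_0\le\tfrac12$, and the displayed Neumann bound yields $\|G_L(z)\|\le 2M_0=:C(b_0,L_0)$. The main obstacle is exactly the uniform control of $U$ far out along the imaginary direction: because $(iA-\sqrt z B)^{-1}$ can grow when $B$ is not invertible, no crude estimate of $\|U\|$ suffices, and it is the normality of $U(w)$ together with the explicit Möbius form $\lambda_j(w)=(w_j-w)/(w_j+w)$ of its eigenvalues—so that $|\lambda_j(w)|\to 1$ as $|\im w|\to\infty$—that makes the uniform bound transparent; a fallback, if one prefers to avoid normality, is the direct case distinction on $\rank B\in\{0,1,2\}$, which shows that $U(\,\cdot\,)$ extends to a bounded continuous matrix function on a neighbourhood of $\infty$.
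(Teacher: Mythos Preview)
Your proof is correct and follows the same route as the paper: the identical factorization $e^{2iL\sqrt z}\id+U(z)\sigma_x=(\id+e^{2iL\sqrt z}\sigma_xU(z)^{-1})U(z)\sigma_x$ and the same Neumann series estimate. The only difference is that where the paper simply asserts $\sup\|U(z)^{-1}\|<\infty$ for suitable $b_0$ (citing the definition of $U$ and Lemma~\ref{bcs_general01t}), you supply an explicit argument via normality and the M\"obius form of the eigenvalues; this is a valid and more detailed justification of that step, not a different strategy.
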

\begin{proof}
We use a Neumann series argument. To this end, we write
\begin{equation*}
  e^{2iL\sqrt{z}}\id + U(z)\sigma_x = ( e^{2iL\sqrt{z}} \sigma_x U(z)^{-1} + \id ) U(z)\sigma_x .
\end{equation*}
Recalling the definition of $U(z)$ in \eqref{fr_bcs} one can easily see that (cf. Lemma \ref{bcs_general01t})
\begin{equation*}
  u \coloneqq \sup_{\substack{b\geq b_0\\ z\in\Gamma_{\nu,b}^-}} \|U(z)^{-1}\|<\infty
\end{equation*}
for an appropriate $b_0>0$. We note that $\sigma_x$ is unitary and choose $L_0$ such that
\begin{equation*}
  \|e^{2iL\sqrt{z}} \sigma_x U(z)^{-1}\| = e^{-2Lb} \|U(z)^{-1}\| \leq e^{-2Lb} u \leq e^{-2L_0b_0} u < 1
\end{equation*}
for $L\geq L_0$, $b\geq b_0$ thereby obtaining
\begin{equation*}
  \|G_L(z)\| = \|\sigma_x U(z)^{-1}( e^{2iL\sqrt{z}} \sigma_x U(z)^{-1} + \id )^{-1}\| \leq \frac{u}{1- e^{-2L_0b_0} u} \eqqcolon C(b_0,L_0).
\end{equation*}
This is \eqref{bcs_fermi00t01}.
\end{proof}

The behaviour near the spectrum requires a bit more effort.

\begin{lemma}\label{bcs_fermi01t}
There are constants $L_0>0$, $a>0$, and $C\geq 0$ such that
\begin{equation*}
  \| G_L(z) \| \leq C \big[ \frac{1}{Ls}\Theta(\frac{a}{L}-s) + \Theta(s-\frac{a}{L}) \big]
\end{equation*}
for all $z=(\sqrt{\nu}+is)^2\in\Gamma_\nu$ and $L\geq L_0$.
\end{lemma}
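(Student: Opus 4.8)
The plan is to avoid inverting $e^{2iL\sqrt z}\id+U(z)\sigma_x$ directly and instead read $G_L(z)$ off the rank--two operator $D_L(z)$, for which an \emph{a priori} bound is available from the fact that on $\Gamma_\nu$ the resolvent $R_L(z)$ is harmless ($\sigma(H_L)\subset\R$). Let $\Phi_z\colon\C^2\to\hilbert_L$ be the operator with columns $\varepsilon_1(z),\varepsilon_2(z)$ (cf. \eqref{deficiency_subspace}) and $\Psi_z\colon\hilbert_L\to\C^2$, $\Psi_z\varphi=((\varepsilon_1(\bar z),\varphi),(\varepsilon_2(\bar z),\varphi))$, so that Lemma~\ref{fr01t} (see \eqref{fr01t01}, \eqref{fr01t02}) reads $D_L(z)=\frac{1}{2i\sqrt z}e^{2iL\sqrt z}\Phi_zG_L(z)\Psi_z$. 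Since $\varepsilon_1(w),\varepsilon_2(w)$ are linearly independent in $\hilbert_L$ for $w\neq 0$, $\Phi_z$ has a bounded left inverse $\Phi_z^\dagger$ and $\Psi_z$ a bounded right inverse $\Psi_z^\dagger$, whence $G_L(z)=2i\sqrt z\,e^{-2iL\sqrt z}\,\Phi_z^\dagger D_L(z)\Psi_z^\dagger$ and therefore $\|G_L(z)\|\le 2|\sqrt z|\,e^{2\im(\sqrt z)L}\,\|\Phi_z^\dagger\|\,\|\Psi_z^\dagger\|\,\|D_L(z)\|$. I would then treat $z=(\sqrt\nu+is)^2\in\Gamma_\nu$ in the two regimes $0<s\le a/L$ and $s\ge a/L$, with $a>0$ to be fixed.

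For the first regime all three factors are controlled explicitly. The Gram matrix $\Phi_z^*\Phi_z=((\varepsilon_j(z),\varepsilon_k(z)))_{j,k}$ on $L^2([-L,L])$ has diagonal entries $\sinh(2sL)/s$ and off-diagonal entry $\sin(2\sqrt\nu L)/\sqrt\nu$, hence smallest eigenvalue at least $\sinh(2sL)/s-1/\sqrt\nu\ge 2L-1/\sqrt\nu\ge L$ once $L\ge 1/\sqrt\nu$; the same computation applies to $\Psi_z\Psi_z^*=((\varepsilon_j(\bar z),\varepsilon_k(\bar z)))_{j,k}$. Thus $\|\Phi_z^\dagger\|\,\|\Psi_z^\dagger\|\le 1/L$. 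Next, $D_L(z)=R_{\infty,L}(z)-R_L(z)$ and, since $\sigma(H_\infty)$ and $\sigma(H_L)$ are real, $\|R_{\infty,L}(z)\|\le\|R_\infty(z)\|\le 1/|\im z|$ and $\|R_L(z)\|\le 1/|\im z|$, so $\|D_L(z)\|\le 2/|\im z|=1/(\sqrt\nu\,s)$ because $\im z=2\sqrt\nu s$ on $\Gamma_\nu$. Finally $e^{2\im(\sqrt z)L}=e^{2sL}\le e^{2a}$ and $|\sqrt z|=\sqrt{\nu+s^2}\le\sqrt{\nu+a^2}$ for $L\ge 1$. Combining these three estimates gives $\|G_L(z)\|\le 2\sqrt{\nu+a^2}\,e^{2a}(\sqrt\nu)^{-1}(sL)^{-1}$, i.e. a bound of the required form $C/(Ls)$.

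For $s\ge a/L$ I would argue as in Lemma~\ref{bcs_fermi00t} by a Neumann series: writing $e^{2iL\sqrt z}\id+U(z)\sigma_x=U(z)\sigma_x\bigl(e^{2iL\sqrt z}\sigma_xU(z)^{-1}+\id\bigr)$ yields $G_L(z)=\bigl(e^{2iL\sqrt z}\sigma_xU(z)^{-1}+\id\bigr)^{-1}\sigma_xU(z)^{-1}$, and $\|e^{2iL\sqrt z}\sigma_xU(z)^{-1}\|=e^{-2sL}\|U(z)^{-1}\|\le e^{-2a}\sup_{\Gamma_\nu}\|U(z)^{-1}\|$. The supremum is finite: on $\Gamma_\nu$ one has $\re(\sqrt z)=\sqrt\nu\neq 0$, so $U(z)$ and $U(z)^{-1}=U(-\sqrt z)$ are normal by Lemma~\ref{bcs_general02t}, and their eigenvalues $(\lambda_k-\sqrt z)/(\lambda_k+\sqrt z)$ — with $\lambda_k$ the zeros of $\det(iA+\lambda B)$, all purely imaginary by Lemma~\ref{bcs_general01t}, so $|\lambda_k\pm\sqrt z|\ge\sqrt\nu$ — stay in a fixed annulus as $z$ runs over $\Gamma_\nu$. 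Fixing $a$ so large that $e^{-2a}\sup_{\Gamma_\nu}\|U(z)^{-1}\|\le\tfrac12$ makes the series converge and gives $\|G_L(z)\|\le 2\sup_{\Gamma_\nu}\|U(z)^{-1}\|$. Taking $C$ the larger of the two constants and $L_0=\max\{1/\sqrt\nu,1\}$ proves the claim for $s>0$; the case $s<0$ follows from the identity relating $G_L(z)$ and $G_L(\bar z)^{*}$ that is immediate from $D_L(\bar z)=D_L(z)^{*}$ and the factorization above.

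The only delicate point is the matching of the two regimes: the factor $e^{2sL}$ ruins the first estimate for large $s$, so $a$ must be chosen for the Neumann argument, after which the small-$s$ bound merely inherits the harmless constant $e^{2a}$. Everything else is bookkeeping, the key observation being that the trivial bound $\|R_L(z)\|\le 1/|\im z|$ already suffices to beat the $e^{2sL}$ on the window $s\le a/L$, so no spectral (eigenvalue-counting) analysis of $H_L$ near $\nu$ is needed.
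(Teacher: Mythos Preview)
Your proof is correct, and for the regime $s\ge a/L$ it coincides with the paper's Neumann-series argument. For the regime $0<s\le a/L$, however, you take a genuinely different route. The paper works directly with the $2\times 2$ matrix: it factors $e^{2iL\sqrt z}\id+U(z)\sigma_x=(1+s^2/\nu)^{1/2}\tilde U(z)\sigma_x\bigl[\sigma_x\tilde U(z)^*I(z)+\id\bigr]$ with $\tilde U(z)$ unitary and $I(z)$ normal, computes the two eigenvalues $\varkappa_{1,2}$ of $I(z)$ explicitly, and shows by hand that $1-|\varkappa_{1,2}|^2\gtrsim Ls$ via $\sinh(x)\ge x$ and $|\sin(2L\sqrt\nu)|<2L\sqrt\nu$. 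Your argument bypasses this computation entirely: you read $G_L(z)$ back from $D_L(z)$ through the factorisation of Lemma~\ref{fr01t}, then use only the trivial self-adjoint resolvent bound $\|R_L(z)\|,\|R_{\infty,L}(z)\|\le 1/|\im z|=1/(2\sqrt\nu s)$ together with the Gram-matrix estimate $\lambda_{\min}(\Phi_z^*\Phi_z)\ge 2L-1/\sqrt\nu\ge L$. This is more conceptual and more robust---nothing in it is tied to the $2\times 2$ structure---and it makes transparent why no information about $\sigma(H_L)$ near $\nu$ is required.

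One minor caveat: your explicit formula for the eigenvalues of $U(z)^{-1}$ as $(\lambda_k-\sqrt z)/(\lambda_k+\sqrt z)$ presumes that $A$ and $B$ are simultaneously diagonalisable (in particular that $\det(iA+\lambda B)$ has full degree in $\lambda$). In the general case one should instead argue via normality (Lemma~\ref{bcs_general02t}) and the observation that $\det U(z)=\det(iA+\sqrt zB)/\det(iA-\sqrt zB)$ is a ratio of polynomials of equal degree in $\sqrt z$, nonvanishing on $\re(\sqrt z)=\sqrt\nu$ by Lemma~\ref{bcs_general01t}, hence bounded and bounded away from zero on $\Gamma_\nu$. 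The paper itself does not spell this out either, simply asserting that $\sup_{\Gamma_\nu}\|U(z)^{-1}\|<\infty$ is easily seen from the definition.
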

\begin{proof}
We distinguish between small and large $s$. We start with the easier case whose proof parallels that of Lemma \ref{bcs_fermi00t}. 

(i) 
Let $\frac{a}{L}\leq s$ with $a>0$ to be chosen soon. We write
\begin{equation*}
  e^{2iL\sqrt{z}}\id + U(z)\sigma_x = ( e^{2iL\sqrt{z}}\sigma_x U(z)^{-1} + \id ) U(z)\sigma_x .
\end{equation*}
Recalling the definition of $U(z)$ in \eqref{fr_bcs} one can easily see that (cf. Lemma \ref{bcs_general01t})
\begin{equation*}
  u \coloneqq \sup_{z\in\Gamma_\nu} \|U(z)^{-1}\|<\infty .
\end{equation*}
We note that $\sigma_x$ is unitary and obtain for all $z\in\Gamma_\nu$
\begin{equation*}
  \| e^{2iL\sqrt{z}} \sigma_x U(z)^{-1} \| = e^{-2Ls} \|U(z)^{-1}\| \leq e^{-a} u  < 1
\end{equation*}
if $a>0$ is sufficiently large. Hence, a Neumann series argument shows that
\begin{equation*}
  \|G_L(z)\| = \|\sigma_x U(z)^{-1}( e^{2iL\sqrt{z}} \sigma_x U(z)^{-1} + \id )^{-1}\| 
                \leq \frac{u}{1-e^{-a} u} < \infty .
\end{equation*}
(ii) 
Let $0 < s\leq \frac{a}{L}$ with $a$ as in part (i). After some simple calculations we obtain
\begin{equation*}
  e^{2iL\sqrt{z}}\id + U(z)\sigma_x
     = (1+\frac{s^2}{\nu})^{\frac{1}{2}} \tilde U(z)\sigma_x  \big[\sigma_x \tilde U(z)^* I(z) + \id \big]
\end{equation*}
with the matrices
\begin{equation*}
  \tilde U(z) \coloneqq \frac{1}{(1+\frac{s^2}{\nu})^{\frac{1}{2}}} ( U(z) + \frac{is}{\sqrt{\nu}} \id ) , \
  I(z)\coloneqq \frac{1}{(1+\frac{s^2}{\nu})^{\frac{1}{2}}} ( e^{2iL\sqrt{z}}\id - \frac{is}{\sqrt{\nu}}\sigma_x ) .
\end{equation*}
Note that $\tilde U(z)$ is unitary and $I(z)$ is normal. Thus,
\begin{equation*}
  \|\sigma_x \tilde U(z)^*I(z)\| = \|I(z)\| = \max\{ |\varkappa_1|,|\varkappa_2| \} \eqqcolon \varkappa
\end{equation*}
where the eigenvalues $\varkappa_{1,2}$ of $I(z)$ are
\begin{equation*}
  \varkappa_{1,2} = \frac{1}{(1+\frac{s^2}{\nu})^{\frac{1}{2}}} ( e^{2iL\sqrt{z}} -\sigma \frac{is}{\sqrt{\nu}} ),\ \sigma = \pm 1 .
\end{equation*}
Straightforward calculations show that
\begin{equation*}
   1 - |\varkappa_{1,2}|^2
      =  2 \frac{e^{-2Ls}}{1+\frac{s^2}{\nu}} ( \sinh(2Ls) + \frac{\sigma s}{\sqrt{\nu}}\sin(2L\sqrt\nu) ) 
      \geq 4Ls \frac{e^{-2Ls}}{1+\frac{s^2}{\nu}} ( 1 + \sigma \frac{\sin(2L\sqrt\nu)}{2L\sqrt{\nu}} )
\end{equation*}
where we used the estimate $\sinh(x)\geq x$. Moreover, using $\sin(x)\leq x$ we obtain
\begin{equation*}
  \varkappa_0 \coloneqq \inf_{L\geq L_0} ( 1 + \sigma \frac{\sin(2L\sqrt\nu)}{2L\sqrt{\nu}} ) > 0 
\end{equation*}
and thereby the lower bound
\begin{equation*}
 1 - |\varkappa_{1,2}|^2
      \geq 4Ls \frac{e^{-2Ls}}{1+\frac{s^2}{\nu}} \varkappa_0
      > 0 .
\end{equation*}
A Neumann series argument shows that $G_L(z)$ exists with
\begin{equation*}
  \| G_L(z)\| \leq (1+\frac{s^2}{\nu})^{-\frac{1}{2}} \frac{1}{1-|\varkappa|} .
\end{equation*}
Using the simple estimate
\begin{equation*}
  |\varkappa|  \leq e^{-2Ls} + \frac{s}{\sqrt{\nu}} \leq 1 + \frac{s}{\sqrt{\nu}} 
\end{equation*}
we obtain further
\begin{equation*}
  \| G_L(z)\| 
     \leq (1+\frac{s^2}{\nu})^{-\frac{1}{2}} \frac{1+|\varkappa|}{1-|\varkappa|^2}
     \leq (1+\frac{s^2}{\nu})^{\frac{1}{2}} (2+\frac{s}{\sqrt{\nu}}) \frac{1}{4Ls} e^{2Ls} \frac{1}{\varkappa_0} .
\end{equation*}
Now use $0<s\leq a/L_0$ in all factors except $1/(Ls)$ to conclude the proof.
\end{proof}

\section{Asymptotics of the energy difference \texorpdfstring{$\mathcal{E}_L(\nu)$}{}\label{asymptotics}}
We express the energy difference with the aid of Proposition \ref{energy01t}.
The results of Sections \ref{bso} and \ref{wo} ensure that the necessary conditions are satisfied.
Now, we plug the factorization \eqref{wo03t01} of the perturbation determinant
into the general formula \eqref{energy01t01} and obtain the decomposition of the energy difference  
\begin{equation}\label{asymptotics01}
  \mathcal{E}_L(\nu)= - f(\nu)\xi_L(\nu) + \mathcal{E}_{L,b}^{\text{Fumi}}(\nu) + \mathcal{E}_{L,b}^{\text{FSE}}(\nu)
  \ \text{for all}\ b>0 .
\end{equation}
Here $\xi_L$ is the spectral shift function for the finite volume system and
(see \eqref{K_infinity}, \eqref{Omega_matrix}, and \eqref{fr01t02})
\begin{align}
      \mathcal{E}_{L,b}^{\text{Fumi}}(\nu) & \coloneqq - \frac{1}{2\pi i}\int_{\Gamma_{\nu,b}} f'(z)
              \ln[\det(\id - K_{\infty,L}(z))]\, dz , \label{E_L_Fumi}\\
      \mathcal{E}_{L,b}^{\text{FSE}}(\nu)  & \coloneqq - \frac{1}{2\pi i}\int_{\Gamma_{\nu,b}} f'(z) 
              \ln[\det(\id + d_L(z)T_L(z) G_L(z))]\, dz . \label{E_L_FSE}
\end{align}
The first term is named after Fumi \cite{Fumi1955} (cf. \cite{Affleck1997})
and the correction is called finite size energy (FSE), see \cite{Affleck1997}.
It remains to show that the said terms converge and that the limits do not depend on $b$.

\subsection{Spectral shift correction\label{ssc}}
Because our main focus is on the Fumi term and the finite size energy we will look only
briefly at the spectral shift function $\xi_L(\nu)$ in \eqref{asymptotics01}. 
Since $\xi_L(\nu)\in\Z$ for $L<\infty$ it cannot converge unless it is constant. 
Nonetheless, it is known that at least for certain boundary conditions
$\xi_L(\nu)$ stays bounded as $L\to\infty$ or that even $\xi_L(\nu)=0$ for potentials $V$ being small in an appropriate sense,
see \cite[Sec. 3.5]{KuettlerOtteSpitzer2014}). Furthermore, it was shown in \cite{BorovykMakarov2012} that $\xi_L(\nu)$
is Ces\'aro-convergent to $\xi(\nu)$
\begin{equation*}
  \lim_{L\to\infty} \frac{1}{L} \int_0^L \xi_l(\nu)\, dl = \xi(\nu) .
\end{equation*}
Another way to look at it is via the micro-canonical energy difference. To this end, define
\begin{equation}\label{ssc00}
  M \coloneqq \max\{ k \mid \mu_{k,L} \leq \nu \}, \ 
  N \coloneqq \max\{ k \mid \lambda_{k,L} \leq \nu \} .
\end{equation}
Then, cf. Lemma \ref{H_L_spectrum} and \eqref{H_VL_spectrum},
\begin{equation}\label{ssc01}
  \mathcal{E}^{\text{mc}}_{N,L} \coloneqq \sum_{k=1}^N\mu_{k,L} - \sum_{k=1}^N\lambda_{k,L} ,
\end{equation}
with $f(z)=z$ for simplicity.
It is easily seen that $\mathcal{E}^{\text{mc}}_{N,L}$ is related to $\mathcal{E}_L(\nu)$ from \eqref{energy_difference} via
\begin{equation*}
  \mathcal{E}^{\text{mc}}_{N,L}
     = \mathcal{E}_L(\nu) + \sign(N-M)\sum_{k=\min\{M,N\}+1}^{\max\{M,N\}}\mu_{k,L} .
\end{equation*}
Note that starting with $\mathcal{E}^{\text{mc}}_{N,L}$ one would use the unperturbed eigenvalue $\lambda_{N,L}$ 
to determine $\nu$ and thereby $\mathcal{E}_L(\nu)$. That is why the sum is over the $\mu_{k,L}$ and
not the $\lambda_{k,L}$.
From Lifshitz's formula \eqref{lifshitz} we know that $\xi_L(\nu) = -(M-N)$. Thereby, the spectral shift correction
can be compensated
\begin{equation}\label{ssc02}
  - \nu\xi_L(\nu) + \sign(N-M)\sum_{k=\min\{M,N\}+1}^{\max\{M,N\}}\mu_{k,L} 
     = \sign(N-M)\sum_{k=\min\{M,N\}+1}^{\max\{M,N\}} (\mu_{k,L}-\nu) .
\end{equation}
It can be shown that $\mu_{N,L}-\nu=O(1/L)$ as $N\to\infty,L\to\infty$.
This was first considered in \cite{FukudaNewton1956}. Since $M-N=O(1)$ the term in \eqref{ssc02}
is $O(1/L)$ and does not contribute to the leading Fumi term but to the finite size correction
of the micro-canonical energy difference, $\mathcal{E}^{\text{mc}}_{N,L}$, as $N,L\to\infty$ and $N/(2L)\to\rho$.

\subsection{Fumi term\label{fumi}}
One would reasonably expect that in some sense the restricted Birman--Schwinger operators $K_{\infty,L}(z)$ in \eqref{E_L_Fumi} 
converge, namely to $K_\infty(z)$, and that the determinants do so as well
(cf. the Kac--Ahiezer Theorem \cite{Kac1954,Ahiezer1964}).

\begin{lemma}\label{fumi01t}
Let $V\in L^1(\R)$ satisfy $V\in\ell^{\frac{1}{2}}(L^1(\R))$ and $X^2V\in\ell^{\frac{1}{2}}(L^1(\R))$, see \eqref{birman_solomyak}.
Then, for all $2b>\|V\|_1$
\begin{equation}\label{fumi01t01}
  \lim_{L\to\infty} \mathcal{E}^{Fumi}_{L,b}(\nu) 
    = \mathcal{E}^{Fumi}_b(\nu)  
    \coloneqq - \frac{1}{2\pi i} \int_{\Gamma_{\nu,b}} f'(z) \ln\big[\det(\id-K_\infty(z))\big]\, dz .
\end{equation}
\end{lemma}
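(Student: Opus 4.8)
The plan is to push everything onto the fixed compact contour $\Gamma_{\nu,b}$ and exploit two facts already in hand: that $K_{\infty,L}(z)\to K_\infty(z)$ in trace norm \emph{uniformly} on it (Lemma~\ref{bso04t}), and that the limiting perturbation determinant $\det(\id-K_\infty(z))$ stays uniformly bounded away from $0$ on it (Lemma~\ref{wo01t}). First I would check the contour geometry: under the hypothesis $2b>\|V\|_1$ the closed curve $\Gamma_{\nu,b}=\Gamma_{\nu,b}^-\cup\Gamma_{\nu,b}^+$ lies in $\C\setminus[-\tfrac14\|V\|_1^2,0]$. Indeed $\Gamma_{\nu,b}^+$ meets the real axis only at $z=\nu>0$, while $\Gamma_{\nu,b}^-=\{(t+ib)^2:|t|\le\sqrt\nu\}$ meets it only at its vertex $t=0$, i.e.\ at $z=-b^2$, and $-b^2<-\tfrac14\|V\|_1^2$. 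In particular $D\coloneqq\Gamma_{\nu,b}$ is a compact subset of $\C\setminus\{0\}$, so Lemma~\ref{bso04t} yields $\sup_{z\in D}\|K_{\infty,L}(z)-K_\infty(z)\|_1\to0$ as $L\to\infty$; Lemma~\ref{bso03t} (continuity in trace norm on the compact $D$) gives $M\coloneqq\sup_{z\in D}\|K_\infty(z)\|_1<\infty$, hence also $\sup_L\sup_{z\in D}\|K_{\infty,L}(z)\|_1\le M$ by H\"older's inequality, since $K_{\infty,L}(z)=\chi_L K_\infty(z)\chi_L$; Lemma~\ref{wo01t}(2) gives $\delta\coloneqq\inf_{z\in D}|\det(\id-K_\infty(z))|>0$; and Lemma~\ref{wo01at} shows $\det(\id-K_{\infty,L}(z))\neq0$ on $D$ for every $L$, so the logarithms in \eqref{E_L_Fumi} and \eqref{fumi01t01} are a priori well-defined along the contour.

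Next I would invoke the standard Lipschitz estimate for Fredholm determinants,
\begin{equation*}
  |\det(\id-A)-\det(\id-B)|\le\|A-B\|_1\exp\big(\|A\|_1+\|B\|_1+1\big),
\end{equation*}
which, combined with the uniform bound $M$, upgrades the trace-norm convergence to $\sup_{z\in D}|\det(\id-K_{\infty,L}(z))-\det(\id-K_\infty(z))|\to0$. Since $|\det(\id-K_\infty(z))|\ge\delta$ on $D$, for $L$ large we also get $|\det(\id-K_{\infty,L}(z))|\ge\delta/2$ uniformly on $D$. Normalising the branch of each logarithm at the vertex $z=-b^2$, where $\|K_{\infty,L}(-b^2)\|\le\|K_{\infty,L}(-b^2)\|_2\le\tfrac1{2b}\|V\|_1<1$ and likewise $\|K_\infty(-b^2)\|<1$ (Lemma~\ref{bso01t}), so the principal branch applies to both, the uniform convergence of the determinants together with the uniform lower bound propagates to $\sup_{z\in D}\big|\ln[\det(\id-K_{\infty,L}(z))]-\ln[\det(\id-K_\infty(z))]\big|\to0$ (either by continuity of $\log$ away from $0$ after the lower bound, or by integrating logarithmic derivatives along the contour). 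Since $f'$ is entire, hence bounded on the compact $D$, passing to the limit under the integral sign in \eqref{E_L_Fumi} yields \eqref{fumi01t01}.

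The only step needing a word of care, and the closest thing to an obstacle, is branch consistency: the logarithm in \eqref{E_L_Fumi} is the one inherited from \eqref{energy01t01} through the factorisation \eqref{wo03t01}, and on a priori grounds it could differ from the one normalised at $z=-b^2$ by a locally constant integer multiple of $2\pi i$; along the connected contour $D$, on which no determinant vanishes, that multiple is in fact constant. Because $\Gamma_{\nu,b}$ is closed and $f'$ is entire, $\int_{\Gamma_{\nu,b}}f'(z)\,dz=0$, so this ambiguity does not change the value of $\mathcal{E}^{Fumi}_{L,b}(\nu)$, and the argument goes through unchanged. I expect no genuine difficulty: all the analytic substance has been placed in Lemmas~\ref{bso04t}, \ref{wo01t}, and \ref{wo01at}, and what remains is bookkeeping on the contour plus the elementary determinant estimate.
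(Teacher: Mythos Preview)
Your proof is correct and follows essentially the same route as the paper: both use Lemma~\ref{bso04t} for uniform trace-norm convergence on the compact contour, the Lipschitz estimate for Fredholm determinants, and Lemma~\ref{wo01t} for the uniform lower bound $\inf_{z\in\Gamma_{\nu,b}}|\det(\id-K_\infty(z))|>0$, concluding uniform convergence of the logarithms and passing to the limit under the integral. Your explicit attention to the branch of the logarithm (normalising at the vertex $z=-b^2$ and disposing of any residual $2\pi i$-ambiguity via $\int_{\Gamma_{\nu,b}}f'(z)\,dz=0$) is a point the paper leaves implicit, but the underlying argument is the same.
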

\begin{proof}
Because of $b>\|V\|_1/2$ the compact set $\Gamma_{\nu,b}$ satisfies 
\begin{equation*}
  \Gamma_{\nu,b}\subset\C\setminus[-\frac{1}{4}\|V\|_1^2,0] .
\end{equation*}
Let $z\in\Gamma_{\nu,b}$. We look at the difference
\begin{multline*}
  \ln\big[\det(\id - K_{\infty,L}(z))\big] - \ln\big[\det(\id - K_\infty(z))\big]\\
   = \ln\Big[ \frac{\det(\id-K_{\infty,L}(z))}{\det(\id-K_\infty(z))}\Big]
   = \ln\Big[ 1 - \frac{\det(\id-K_\infty(z)) - \det(\id - K_{\infty,L}(z))}{\det(\id-K_\infty(z))}\Big]
\end{multline*}
and use the inequality, see e.g. \cite[Thm. 3.4]{Simon2005},
\begin{equation*}
  |\det(\id-K_\infty(z)) - \det(\id-K_{\infty,L}(z))| 
   \leq \|K_\infty(z) - K_{\infty,L}(z)\|_1 \exp\big[ \|K_{\infty,L}(z)\|_1+\|K_\infty(z)\|_1 + 1\big] .
\end{equation*}
From Lemma \ref{wo01t} we infer that
\begin{equation*}
  \inf_{z\in\Gamma_{\nu,b}} |\det(\id-K_\infty(z))| > 0 .
\end{equation*}
Along with Lemma \ref{bso04t} this shows that
\begin{equation*}
  \ln[\det(\id-K_{\infty,L}(z))] \to \ln[\det(\id-K_\infty(z))] \ \text{as}\ L\to\infty
\end{equation*}
uniformly on $\Gamma_{\nu,b}$. Hence, the integrals converge as well.
\end{proof}

We express the Fumi term from \eqref{fumi01t01} through 
the spectral shift function $\xi$ thereby showing that it does not depend on $b$. This would also
follow from the decomposition \eqref{asymptotics01} (where the left-hand side does not depend on $b$) 
and from $\mathcal{E}^{\mathrm{FSE}}_L(\nu)\to0$ as $L\to\infty$ (proved in Section \ref{fse}).

\begin{proposition}\label{fumi02t}
Let $V\in L^1(\R)$ satisfy
$V\in\ell^{\frac{1}{2}}(L^1(\R))$ and $X^2V\in\ell^{\frac{1}{2}}(L^1(\R))$, see \eqref{birman_solomyak}.
Then, for all $2b>\|V\|_1$ we have $\mathcal{E}^{\mathrm{Fumi}}_b(\nu)=\mathcal{E}^{Fumi}(\nu)$ where
\begin{equation}
  \mathcal{E}^{\text{Fumi}}(\nu)
       = \int_{-\infty}^\nu f'(\lambda)\xi(\lambda)\, d\lambda \label{fumi02t02} .
\end{equation}
Here, the integration is actually over a bounded interval.
\end{proposition}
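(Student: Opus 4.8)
The plan is to evaluate the contour integral in~\eqref{fumi01t01} by deforming the Fermi parabola onto the real segment it encircles and identifying the resulting boundary integral with the spectral shift function via Kre\u\i{}n's formula~\eqref{krein_xi}. Write $\Delta_\infty(z)\coloneqq\det(\id-K_\infty(z))$. First I would collect the structural facts: by Lemma~\ref{bso02t} and holomorphy of $R_\infty$, the function $\Delta_\infty$ is holomorphic on $\C\setminus[0,\infty)$; by the Birman--Schwinger principle (Lemmas~\ref{energy_omega} and~\ref{bso02t}) its only zeros there are the eigenvalues of $H_\infty+V$ outside $[0,\infty)$, i.e.\ the (finitely many, under the decay hypotheses, via $XV\in L^1$ and a Bargmann-type bound) negative eigenvalues $\mu_1<\dots<\mu_K<0$, all contained in $(-\tfrac14\|V\|_1^2,0)$ by Lemma~\ref{wo01t}. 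For $2b>\|V\|_1$ the closed curve $\Gamma_{\nu,b}$ from~\eqref{fermi_parabola_b} meets $\R$ only at $\nu>0$ and at $-b^2<-\tfrac14\|V\|_1^2$, hence $\Gamma_{\nu,b}\cap(-\tfrac14\|V\|_1^2,0)=\emptyset$ and Lemma~\ref{wo01t} gives $\inf_{z\in\Gamma_{\nu,b}}|\Delta_\infty(z)|>0$; this is precisely what makes the branch of $\ln\Delta_\infty$ along $\Gamma_{\nu,b}$ in~\eqref{fumi01t01} well defined, and I would pin it down by requiring it to be real at $z=-b^2$, where $\Delta_\infty>0$.

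Next I would split $\Gamma_{\nu,b}$ --- regarded as a path from $\nu+i0$ up and over the vertex region to $-b^2$ and back down to $\nu-i0$ --- into its part in $\{\im z\geq 0\}$ and its part in $\{\im z\leq 0\}$. Since $\Delta_\infty$ is holomorphic and zero-free on each open half-plane, Cauchy's theorem lets me push each half-arc, carrying along the branch, onto a thin contour $R_\varepsilon$ hugging the interval $[-b^2,\nu]$ (upper side at height $+\varepsilon$, lower side at height $-\varepsilon$, two short vertical end-caps). The boundary branches $\ln_{\pm}\Delta_\infty(\,\cdot\pm i0)$ agree and are real at $-b^2$ and satisfy $\ln_-\Delta_\infty(\bar w)=\overline{\ln_+\Delta_\infty(w)}$ because $\Delta_\infty(\bar z)=\overline{\Delta_\infty(z)}$ (as $K_\infty(\bar z)=JK_\infty(z)^*J$). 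Letting $\varepsilon\to0$, the end-caps contribute nothing ($\ln\Delta_\infty$ being bounded near $\nu$ by Lemma~\ref{wo01t} and holomorphic near $-b^2$), and the two long sides combine into
\[
  \int_{\Gamma_{\nu,b}}f'(z)\ln\Delta_\infty(z)\,dz
     =\int_{-b^2}^{\nu}f'(\lambda)\big[\ln_-\Delta_\infty(\lambda-i0)-\ln_+\Delta_\infty(\lambda+i0)\big]\,d\lambda
     =-2\pi i\int_{-b^2}^{\nu}f'(\lambda)\,\xi(\lambda)\,d\lambda ,
\]
the last equality being~\eqref{krein_xi} in the form $\ln_+\Delta_\infty(\lambda+i0)-\ln_-\Delta_\infty(\lambda-i0)=2\pi i\,\xi(\lambda)$. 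Dividing by $-2\pi i$, $\mathcal{E}^{\mathrm{Fumi}}_b(\nu)=\int_{-b^2}^{\nu}f'(\lambda)\xi(\lambda)\,d\lambda$; since $\xi$ vanishes below $\inf\sigma(H_\infty+V)\geq-\tfrac14\|V\|_1^2>-b^2$, the lower limit may be replaced by $-\infty$, which simultaneously gives the $b$-independence, the statement that the integral is over a bounded interval, and~\eqref{fumi02t02}. (The identity can also be read off conceptually: \eqref{fumi01t01} is the perturbation-determinant term of Proposition~\ref{energy01t} for the pair $(H_\infty,H_\infty+V)$, so $\mathcal{E}^{\mathrm{Fumi}}_b(\nu)=\tr[f(H_\infty+V)\Pi_\nu-f(H_\infty)P_\nu]+f(\nu)\xi(\nu)$, which Kre\u\i{}n's trace formula applied to $f\chi_{(-\infty,\nu]}$ evaluates.)

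The step requiring genuine care is the limit $\varepsilon\to0$ along $R_\varepsilon$: the branch $\ln\Delta_\infty$ acquires logarithmic singularities at the threshold $z=0$, where the free Green function forces $\Delta_\infty(z)\sim c/\sqrt{z}$ (or $\Delta_\infty(z)\to0$ in the resonant case), and at each $\mu_j$, where $\Delta_\infty$ has a zero of finite order. All of these are integrable against the smooth $f'$, so the limit is legitimate once one exhibits a dominating function of the shape $C+\sum_j\big|\ln|\lambda-\mu_j|\big|+\big|\ln|\lambda|\big|$ and verifies --- using the normalisation at $-b^2$ and the conjugation symmetry above --- that the transported branch on the two sides of $R_\varepsilon$ indeed converges to $\ln_{\pm}\Delta_\infty(\,\cdot\pm i0)$. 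The remaining items (finiteness of the negative spectrum, legitimacy of the Cauchy-theorem deformation in each half-plane, boundedness of $\ln\Delta_\infty$ near the two crossing points) follow routinely from Lemmas~\ref{bso02t}, \ref{bso03t} and~\ref{wo01t}.
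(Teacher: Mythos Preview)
Your argument is correct, but it takes a genuinely different route from the paper. The paper's proof is considerably shorter: it invokes the integral representation $\ln\Delta_\infty(z)=\int_\R\frac{\xi(\lambda)}{\lambda-z}\,d\lambda$ from Lemma~\ref{ssf02t}, inserts this into the contour integral~\eqref{fumi01t01}, swaps the order of integration by Fubini, and then evaluates the inner integral $\frac{1}{2\pi i}\int_{\Gamma_{\nu,b}}\frac{f'(z)}{z-\lambda}\,dz$ via Cauchy's formula for $f'$, obtaining $f'(\lambda)$ for $\lambda$ inside $\Gamma_{\nu,b}$ and $0$ otherwise. No contour deformation, no singularity analysis on the real axis, no branch tracking.

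Your approach is the Sokhotski--Plemelj dual of this: instead of writing $\ln\Delta_\infty$ as a Cauchy transform of $\xi$ and integrating, you collapse $\Gamma_{\nu,b}$ onto the real segment and recover $\xi$ from the boundary jump of $\ln\Delta_\infty$. This is more hands-on and forces you to confront the real-axis singularities (the negative eigenvalues $\mu_j$ and the threshold $z=0$) that the paper's route sidesteps by staying off the real line until after Fubini. What your approach buys is self-containment --- you do not need Lemma~\ref{ssf02t} or, behind it, the weighted $L^1$-integrability of $\xi$; you rely instead on the elementary fact that logarithmic singularities are integrable. The paper's approach buys brevity and avoids the branch bookkeeping and dominated-convergence step you correctly flag as requiring care. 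Both are valid; the paper's is the standard trick once the Stieltjes representation of the perturbation determinant is in hand.
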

\begin{proof}
The conditions on $V$ ensure that the limit in \eqref{krein_xi} exists for all $\nu>0$. That is the
limiting absorption principle in Lemma \ref{bso03t}.

In order to prove formula \eqref{fumi02t02} we recall from  Lemma \ref{ssf02t}
\begin{equation*}
  \ln\det(\id-K_\infty(z)) = \int_\R \frac{\xi(\lambda)}{\lambda-z}\, d\lambda,\ \im(z)\neq 0 .
\end{equation*}
Along with
\begin{equation*}
  \frac{1}{2\pi i}\int_{\Gamma_{\nu,b}} \frac{f'(z)}{z-\lambda}\, dz =
\begin{cases}
  f'(\lambda) & \text{for}\ \lambda\ \text{inside}\ \Gamma_{\nu,b} , \\
   0          & \text{otherwise}
\end{cases}
\end{equation*}
we obtain
\begin{multline*}
  \int_{\Gamma_{\nu,b}} f'(z)\ln\det(\id-K_\infty(z))\, dz\\
     = \int_{\Gamma_{\nu,b}} f'(z) \int_\R \frac{\xi(\lambda)}{\lambda-z}\, d\lambda\, dz
     = \int_\R \xi(\lambda)\int_{\Gamma_{\nu,b}} \frac{f'(z)}{\lambda-z}\, dz \, d\lambda
     = - 2\pi i \int_{-\infty}^\nu \xi(\lambda)f'(\lambda)\, d\lambda .
\end{multline*}
Inserting this into \eqref{fumi01t01} proves \eqref{fumi02t02}.
\end{proof}

Formula \eqref{fumi02t02} is related to Kre\u\i{}n's trace formula for the function
$\lambda\mapsto \Theta(\nu-\lambda)f(\lambda)$.
Note that our proof uses Cauchy's integral formula which needs a holomorphic $f$.
A more general situation was studied in \cite{FrankPushnitski2015}, see also \cite{KohmotoKomaNakamura2012}
and \cite{peller2016}.

We illustrate the result in Proposition \ref{fumi02t}. 
The micro-canonical energy difference \eqref{ssc01} can be bounded via the variational principle 
(see also \cite{Gasymov1963,Gasymov1964} for a direct proof)
\begin{equation}\label{fumi01}
  \sum_{j=1}^N (\psi_{j,L}, V\psi_{j,L}) 
   \leq \sum_{j=1}^N (\mu_{j,L}-\lambda_{j,L})
   \leq \sum_{j=1}^N (\varphi_{j,L},V\varphi_{j,L}) ,
\end{equation}
Here, $\varphi_{j,L}$ and $\psi_{j,L}$ are the eigenvectors to the eigenvalues $\lambda_{j,L}$ and 
$\mu_{j,L}$ of the unperturbed and perturbed Hamilton operator, respectively, cf. Lemma \ref{H_L_spectrum}, \eqref{H_VL_spectrum}. 
If $V$ has a definite sign then \eqref{fumi01} provides either an effective upper or lower bound. Let, e.g.,
$V\geq 0$ then the rightmost inequality is non-trivial. We write out the right-hand side
\begin{equation*}
  \sum_{j=1}^N (\varphi_{j,L},V\varphi_{j,L}) = \int_{-L}^L V(x) \sum_{j=1}^N \varphi_{j,L}(x)^2 \, dx
\end{equation*}
and determine the thermodynamic limit of the spectral function in the special case of Dirichlet boundary conditions. 
Then,
\begin{equation*}
  \frac{1}{L} \Big[ 
    \sum_{\substack{ j\leq N \\ \text{odd}}} \cos^2(\frac{j\pi x}{2L})
  + \sum_{\substack{ j\leq N \\ \text{even}}} \sin^2(\frac{j\pi x}{2L})
              \Big]
   \to \frac{1}{\pi} \int_0^{\sqrt{\nu}} \cos^2(xt)\, dt
  + \frac{1}{\pi} \int_0^{\sqrt{\nu}} \sin^2(xt)\, dt
  = \frac{1}{\pi} \sqrt{\nu} .
\end{equation*}
We conclude
\begin{equation}\label{fumi02}
  \limsup_{N,L\to\infty} \sum_{j=1}^N (\mu_{j,L}-\lambda_{j,L})
   \leq \frac{1}{\pi} \sqrt{\nu} \int_\R V(x)\, dx ,
\end{equation}
which is in line with the result derived by Fumi \cite[Eq. 6]{Fumi1955} in dimension three using first order perturbation theory, 
see also \cite[Eq. 14]{Affleck1997} and \cite[Eq. 26]{FrankLewinLiebSeiringer2011}.
Interestingly, one would obtain \eqref{fumi02} by inserting the leading term in the high energy asymptotics 
of the spectral shift function (see e.g. \cite[Thm. 2]{AsselDimassi2008}) into \eqref{fumi02t02}
for the Fumi term.

\subsection{Finite size energy\label{fse}}
From a simple change of coordinates it can be seen that the
finite size correction $\mathcal{E}_L^{\text{FSE}}(\nu)$ in \eqref{E_L_FSE} is of the order $1/L$.
This will be made precise in Lemma \ref{fse01t}. To begin with,
we determine how $\mathcal{V}_L(s)$ in \eqref{V_L} behaves asymptotically for large $L$. 
The condition $V\in L^1(\R)$ makes $\mathcal{V}_L$ a tad smaller than the exponential function.

\begin{lemma}\label{con01t}
Let $V\in L^1(\R)$ satisfy $X^nV\in L^1(\R)$ with $n\in\N_0$. 
Then, for $s\geq 0$
\begin{equation}\label{con01t01}
  0 \leq e^{-Ls} \mathcal{V}_L^{(n)}(s) \leq \|X^nV\|_1 .
\end{equation}
If $X^{n+\alpha} V\in L^1(\R)$ with $\alpha\geq 0$ then for $s>0$
\begin{equation}\label{con01t02}
  \lim_{L\to\infty} \big[ L^\alpha e^{-Ls} \mathcal{V}_L^{(n)}(s) \big] =0 .
\end{equation}
Moreover,
\begin{equation}\label{con01t03}
  \lim_{L\to\infty} \big[ L^\alpha \int_0^b s^\alpha e^{-Ls} \mathcal{V}_L^{(n)}(s)\, ds \big] = 0 .
\end{equation}
\end{lemma}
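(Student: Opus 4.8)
The plan is to reduce all three claims to elementary estimates on the explicit integral kernel of $\mathcal{V}_L^{(n)}$. On the \emph{bounded} interval $[-L,L]$ the function $x\mapsto|x|^k e^{s|x|}$ is bounded uniformly for $s$ in any compact set, so differentiation under the integral sign is legitimate and gives
\begin{equation*}
  \mathcal{V}_L^{(n)}(s)=\int_{-L}^L|V(x)|\,|x|^n e^{s|x|}\,dx,\qquad s\in\R .
\end{equation*}
Then \eqref{con01t01} is immediate: non-negativity is clear, and for $s\ge0$ and $|x|\le L$ one has $e^{s(|x|-L)}\le1$, whence $e^{-Ls}\mathcal{V}_L^{(n)}(s)\le\int_{-L}^L|V(x)|\,|x|^n\,dx\le\|X^nV\|_1$.

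For \eqref{con01t02} I would split the integral at $|x|=L/2$. On $\{|x|\le L/2\}$ the exponential supplies the decay: for fixed $s>0$, $L^\alpha e^{s(|x|-L)}\le L^\alpha e^{-sL/2}\to0$ as $L\to\infty$, and the remaining factor is bounded by $\|X^nV\|_1<\infty$. On $\{L/2\le|x|\le L\}$ I instead trade powers of $L$ for powers of $x$: from $|x|\ge L/2$ and $\alpha\ge0$ one gets $L^\alpha|x|^n\le2^\alpha|x|^{n+\alpha}$, while $e^{s(|x|-L)}\le1$, so this piece is at most $2^\alpha\int_{|x|\ge L/2}|x|^{n+\alpha}|V(x)|\,dx$, which tends to $0$ because $X^{n+\alpha}V\in L^1(\R)$ (absolute continuity of the integral). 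Adding the two pieces gives \eqref{con01t02}.

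For \eqref{con01t03} the same dichotomy works once Tonelli's theorem is used to interchange the $s$- and $x$-integrations (everything is non-negative):
\begin{equation*}
  L^\alpha\int_0^b s^\alpha e^{-Ls}\,\mathcal{V}_L^{(n)}(s)\,ds
   =\int_{-L}^L|V(x)|\,|x|^n\Big[L^\alpha\int_0^b s^\alpha e^{-s(L-|x|)}\,ds\Big]\,dx .
\end{equation*}
On $\{|x|\le L/2\}$ we have $L-|x|\ge L/2$, so the bracket is at most $L^\alpha\int_0^\infty s^\alpha e^{-sL/2}\,ds=2^{\alpha+1}\Gamma(\alpha+1)/L$, and this contribution is $O(1/L)\cdot\|X^nV\|_1\to0$. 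On $\{L/2\le|x|\le L\}$ we estimate the bracket crudely by $L^\alpha\int_0^b s^\alpha\,ds=L^\alpha b^{\alpha+1}/(\alpha+1)$ and again use $L^\alpha|x|^n\le2^\alpha|x|^{n+\alpha}$ to absorb the factor $L^\alpha$; the contribution is then bounded by a constant times $\int_{|x|\ge L/2}|x|^{n+\alpha}|V(x)|\,dx\to0$.

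The main obstacle is precisely \eqref{con01t03}: \eqref{con01t01} alone is useless here because it leaves an uncompensated factor $L^\alpha$, and the naive route through $\int_0^b s^\alpha e^{-Ls}\,ds\le\Gamma(\alpha+1)L^{-\alpha-1}$ combined with the monotonicity $\mathcal{V}_L^{(n)}(s)\le\mathcal{V}_L^{(n)}(b)$ fails because $\mathcal{V}_L^{(n)}(b)$ may grow like $e^{bL}$. The point of the endpoint split is that the exponential $e^{-s(L-|x|)}$ is effective only away from $|x|=L$, while near $|x|=L$ one must instead exploit the stronger decay hypothesis $X^{n+\alpha}V\in L^1(\R)$ of the potential; that hypothesis (not needed for \eqref{con01t01}) is exactly what forces the boundary contribution to vanish.
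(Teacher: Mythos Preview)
Your proof is correct and follows essentially the same approach as the paper's: both split the $x$-integral at a fixed fraction of $L$ (you take $\delta=1/2$, the paper keeps a general $0<\delta<1$), use exponential decay on the inner region and the hypothesis $X^{n+\alpha}V\in L^1(\R)$ on the outer region. For \eqref{con01t03} the paper integrates its pointwise estimate over $s$ directly, whereas you first interchange the $s$- and $x$-integrals via Tonelli and then split; the resulting bounds are identical.
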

\begin{proof}
The bound \eqref{con01t01} follows immediately from definition \eqref{V_L}.
Pick any $0<\delta<1$. For $s\geq 0$
\begin{equation}\label{con01t04}
\begin{split}
 L^\alpha e^{-Ls} \mathcal{V}_L^{(n)}(s) 
   & = L^\alpha e^{-Ls} \int_{0\leq |x|\leq\delta L}|x|^n |V(x)| e^{|x|s}\, dx 
          + L^\alpha\int_{\delta L\leq |x| \leq L} |x|^{n+\alpha} |V(x)| \frac{1}{|x|^\alpha} e^{-(L-|x|)s}\, dx\\
   & \leq L^\alpha e^{-(1-\delta)Ls} \|X^nV\|_1 + \frac{1}{\delta^\alpha} \int_{\delta L\leq |x|} |x|^{n+\alpha} |V(x)|\, dx .
\end{split}
\end{equation}
This implies \eqref{con01t02}. Finally, we integrate \eqref{con01t04}
\begin{equation*}
\begin{split}
  L^\alpha\int_0^b s^\alpha e^{-Ls}\mathcal{V}_L^{(n)}(s)\, ds
    & \leq \int_0^b L^\alpha s^\alpha e^{-(1-\delta)Ls}\, ds \|X^nV\|_1 
          + \frac{1}{\delta^\alpha} \int_0^b s^\alpha \int_{\delta L\leq |x|} |x|^{n+\alpha} |V(x)|\, dx\, ds\\
    & \leq \frac{1}{L}\int_0^\infty s^\alpha e^{-(1-\delta)s}\, ds \|X^nV\|_1 
               + \frac{1}{\delta^\alpha}\frac{b^{\alpha+1}}{\alpha+1} \int_{\delta L\leq |x|} |x|^{n+\alpha} |V(x)|\, dx
\end{split}
\end{equation*}
and obtain \eqref{con01t03}.
\end{proof}

In order to perform the limit in the FSE term we need the inequality for $n\times n$-matrices
\begin{equation}\label{det_difference_matrix}
  |\det(M_1) - \det(M_2)| \leq n \|M_1-M_2\|_2 \max\{\|M_1\|_2^{n-1},\|M_2\|_2^{n-1}\},
\end{equation}
which follows via multilinearity and Hadamard's inequality.

\begin{lemma}\label{fse01t}
Let $V\in L^1(\R)$ such that $X^nV\in L^1(\R)$ for $n=1,2,3$. We assume that $e^{2iL\sqrt{\nu}}\to e^{i\pi\eta}$,
$\eta\in\interval[open right]{-1}{1}$, as $L\to\infty$. 
Then, for all $b>0$ large enough $L\mathcal{E}^{\mathrm{FSE}}_{L,b}(\nu)\to\mathcal{E}^{\mathrm{FSE}}(\nu)$ as $L\to\infty$. Here,
\begin{equation}\label{fse01t01}
  \mathcal{E}^{\mathrm{FSE}}(\nu)
     = - \frac{1}{\pi} \sqrt{\nu} f'(\nu) \int_0^\infty \Big\{\ln\det(\id + d(s) T(\nu) G(s)) + \ln\det(\id + d(s) T(\nu) G(s))^* \Big\}\, ds
\end{equation}
with the T-matrix from \eqref{Omega_T} and with
\begin{equation}\label{fse01t02}
  G(s) \coloneqq  (e^{i\pi\eta}e^{-2s}\id + U(\nu)\sigma_x)^{-1},\ d(s)\coloneqq e^{i\pi\eta}e^{-2s},\ s\geq 0.
\end{equation}
\end{lemma}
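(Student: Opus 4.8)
The plan is to evaluate $\lim_{L\to\infty}L\,\mathcal E^{\mathrm{FSE}}_{L,b}(\nu)$ directly from the contour representation \eqref{E_L_FSE}, using that the Fermi parabola \eqref{fermi_parabola_b} splits as $\Gamma_{\nu,b}=\Gamma^-_{\nu,b}\cup\Gamma^+_{\nu,b}$; all of the limit will come from $\Gamma^+_{\nu,b}$ after the rescaling $s\mapsto\sigma/L$. I fix $b$ large enough that Lemma \ref{bcs_fermi00t} applies and \eqref{wo02t01a} holds ($b\ge b_0$ and $b\ge\|V_-\|_1$), and $L$ large enough (using the hypothesis $e^{2iL\sqrt\nu}\to e^{i\pi\eta}$) that $\nu\notin\sigma(H_L)\cup\sigma(H_{V,L})$. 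On $\Gamma^-_{\nu,b}$ one has $\im\sqrt z=b$, hence $|d_L(z)|=e^{-2bL}$ by \eqref{fr01t02}; combined with $\|G_L(z)\|\le C(b_0,L_0)$ from \eqref{bcs_fermi00t01} and $\|T_L(z)\|\le C_b(\|V\|_1+\mathcal V_L(2b))$ from \eqref{wo02t01a} (the exponential prefactors there being bounded on $\Gamma^-_{\nu,b}$), this gives $\sup_{z\in\Gamma^-_{\nu,b}}\|d_LT_LG_L\|\le C_b(e^{-2bL}\|V\|_1+e^{-2bL}\mathcal V_L(2b))\to0$, the last limit being \eqref{con01t02} with $n=\alpha=0$, $s=2b$. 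Since the factorisations \eqref{wo03t01}--\eqref{wo03t02} pin the branch of $\ln\det(\id+d_LT_LG_L)$ and make it $O(\|d_LT_LG_L\|)$ here, the contribution of $\Gamma^-_{\nu,b}$ to $L\,\mathcal E^{\mathrm{FSE}}_{L,b}(\nu)$ is $O\!\bigl(Le^{-2bL}(\|V\|_1+\mathcal V_L(2b))\bigr)$, which tends to $0$ by \eqref{con01t02} with $\alpha=1$ (using $XV\in L^1$).

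On $\Gamma^+_{\nu,b}$ I parametrise $z=(\sqrt\nu+is)^2$, $s\in[-b,b]$, so $dz=2i(\sqrt\nu+is)\,ds$, and split into $s\in[0,b]$ and $s\in[-b,0]$. Because $R_L(\bar z)=R_L(z)^*$ — equivalently the conjugation relations recorded after \eqref{wo03t01} — the integrand at $(\sqrt\nu-is)^2$ is the complex conjugate of the one at $(\sqrt\nu+is)^2$, so the two half-arcs produce the two conjugate terms in \eqref{fse01t01} and it suffices to treat $s\in[0,b]$. Substituting $s=\sigma/L$ turns $L$ times that half into $-\tfrac1\pi\int_0^{bL}f'\bigl((\sqrt\nu+\tfrac{i\sigma}{L})^2\bigr)\bigl(\sqrt\nu+\tfrac{i\sigma}{L}\bigr)\ln\det(\id+d_LT_LG_L)\bigr|_{z=(\sqrt\nu+i\sigma/L)^2}\,d\sigma$. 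For fixed $\sigma>0$ the integrand converges: $d_L(z)=e^{2i\sqrt\nu L}e^{-2\sigma}\to e^{i\pi\eta}e^{-2\sigma}=d(\sigma)$ by hypothesis; $U\bigl((\sqrt\nu+i\sigma/L)^2\bigr)\to U(\nu)$ by continuity of \eqref{fr_bcs} (Lemma \ref{bcs_general01t}), whence $G_L(z)=(d_L(z)\id+U(z)\sigma_x)^{-1}\to(e^{i\pi\eta}e^{-2\sigma}\id+U(\nu)\sigma_x)^{-1}=G(\sigma)$ of \eqref{fse01t02}, the inverse existing because $U(\nu)\sigma_x$ is unitary (Lemma \ref{bcs_general02t}) and $|d(\sigma)|=e^{-2\sigma}<1$ for $\sigma>0$; $T_L(z)\to T(\nu)$ by \eqref{wo02t02}--\eqref{wo02t03} (which give $\|T_L(z)-T(\nu)\|\lesssim\sigma/L+o_L(1)$); and $f'(z)(\sqrt\nu+i\sigma/L)\to f'(\nu)\sqrt\nu$. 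Hence the integrand tends to $-\tfrac{\sqrt\nu}{\pi}f'(\nu)\ln\det(\id+d(\sigma)T(\nu)G(\sigma))$, with the branch fixed by $\ln\det\to0$ as $\sigma\to\infty$ (the determinant being invertible there by \eqref{wo03t02} and unitarity).

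The remaining, and technically decisive, step is to pass the limit inside the $\sigma$-integral. I would split $(0,bL)$ into $(0,a]$, $[a,R]$ and $[R,bL]$ ($a$ as in Lemma \ref{bcs_fermi01t}, $R$ fixed and large). What makes this feasible is that $\|d_L(z)T_L(z)\|$ is uniformly bounded on the whole $s\ge0$ part of $\Gamma^+_{\nu,b}$: indeed $\|T_L(z)\|\lesssim\|V\|_1+\mathcal V_L(2s)$ by \eqref{wo02t01}, $|d_L(z)|=e^{-2sL}$, and $e^{-2sL}\mathcal V_L(2s)=\int_{-L}^L|V(x)|e^{-2s(L-|x|)}\,dx\le\|V\|_1$. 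So $\|d_LT_LG_L\|\le C\|G_L(z)\|$, which by Lemma \ref{bcs_fermi01t} is $\lesssim 1/\sigma$ for $\sigma\le a$ and $\lesssim 1$ for $\sigma\ge a$; together with a lower bound for $|\det(\id+d_LT_LG_L)|$ coming from \eqref{wo03t01}--\eqref{wo03t02} and from $\nu\notin\sigma(H_L)\cup\sigma(H_{V,L})$, this yields $|\ln\det(\id+d_LT_LG_L)|\lesssim 1+|\log\sigma|$ on $(0,a]$ and $\lesssim 1$ on $[a,R]$, an $L$-uniform integrable majorant, so dominated convergence applies on $(0,R]$. On $[R,bL]$ one has $|d_L|=e^{-2\sigma}$, $\|G_L(z)\|$ bounded, and the offending part of $d_LT_L$ controlled by $e^{-2\sigma}\mathcal V_L(2\sigma/L)=\int_{-L}^L|V(x)|e^{-2\sigma(1-|x|/L)}\,dx$; the decay estimates \eqref{con01t02}--\eqref{con01t03} of Lemma \ref{con01t} (needing $X^nV\in L^1$, $n\le3$) make $\int_R^{bL}(\cdots)\,d\sigma\to0$ as $R\to\infty$, uniformly in $L$. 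Assembling the three regimes, the $s\ge0$ half of $\Gamma^+_{\nu,b}$ contributes $-\tfrac{\sqrt\nu}{\pi}f'(\nu)\int_0^\infty\ln\det(\id+d(\sigma)T(\nu)G(\sigma))\,d\sigma$ and the $s\le0$ half its complex conjugate, which together are \eqref{fse01t01}; no $b$ survives, so the limit is independent of $b$, consistently with the $b$-independence of the left-hand side of \eqref{asymptotics01}. The main obstacle is exactly this uniform-integrability argument: one must show that the mass of $V$ near the endpoints $\pm L$ — the source of the growth of $\mathcal V_L$ — becomes negligible after the $1/L$-rescaling, and that the clustering of $\sigma(H_L)$ and $\sigma(H_{V,L})$ near the Fermi energy produces at most an integrable logarithmic singularity in $\sigma$; this is precisely why $X^3V\in L^1$ and the auxiliary Lemma \ref{con01t} are needed.
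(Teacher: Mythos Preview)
Your treatment of $\Gamma^-_{\nu,b}$ is essentially identical to the paper's. On $\Gamma^+_{\nu,b}$, however, you and the paper diverge. The paper does \emph{not} attempt dominated convergence for the original integrand; instead it first \emph{freezes} $S_L(z)$ and $U(z)$ at the Fermi energy, replacing them by $S_L(\nu)$ and $U(\nu)$ via \eqref{wo03t02}, and shows with \eqref{det_difference_matrix}, \eqref{wo02t02} and Lemma~\ref{bcs_fermi01t} that the resulting error in the logarithm is $O\bigl(se^{-2Ls}(\mathcal V_L(2s)+\mathcal V_L'(2s)+1)\bigr)$, so that $L$ times its integral over $[0,b]$ tends to zero by Lemma~\ref{con01t}. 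After freezing, $S_L(\nu)\sigma_xU(\nu)^{-1}$ and $\sigma_xU(\nu)^{-1}$ are \emph{unitary}; hence both $\det(e^{2iL\sqrt\nu}e^{-2\sigma}S_L(\nu)\sigma_xU(\nu)^{-1}+\id)$ and $\det(e^{2iL\sqrt\nu}e^{-2\sigma}\sigma_xU(\nu)^{-1}+\id)$ have moduli pinched between $(1-e^{-2\sigma})^2$ and $(1+e^{-2\sigma})^2$, uniformly in $L$. This yields the $L$-independent integrable majorant $|\ln\det(\cdots)|\lesssim\ln\coth\sigma$, and the final rescaled limit (using $e^{2iL\sqrt\nu}\to e^{i\pi\eta}$ and $S_L(\nu)\to S(\nu)$ from \eqref{wo02t03}) is then routine.

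Your direct route has a genuine gap exactly where you flag it: the $L$-uniform lower bound on $|\det(\id+d_LT_LG_L)|$ near $\sigma=0$. The argument you invoke---$\nu\notin\sigma(H_L)\cup\sigma(H_{V,L})$---gives only an $L$-dependent bound, because the eigenvalues of both operators accumulate at spacing $\sim 1/L$ around $\nu$, and for $s>0$ the matrix $S_L(z)$ is \emph{not} unitary, so the clean $(1-e^{-2\sigma})^2$ bound is unavailable. Nothing in \eqref{wo03t01}--\eqref{wo03t02} supplies an $L$-uniform replacement. The paper's freezing trick is precisely the device that resolves this: it trades the unknown behaviour of $S_L(z)$ for $\im\sqrt z>0$ against the unitary $S_L(\nu)$, at the cost of an error term whose smallness is controlled by the moment conditions on $V$ through Lemma~\ref{con01t}. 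If you want to salvage your direct approach, you would effectively have to reproduce this freezing step inside your majorant estimate---writing $S_L(z)=S_L(\nu)+O(s)$---which is the paper's argument in disguise.
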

\begin{proof}
We treat the two parts $\Gamma_{\nu,b}^-$ and $\Gamma_{\nu,b}^+$ of the Fermi parabola separately.

(a) 
On $\Gamma_{\nu,b}^-$ we use \eqref{det_difference_matrix} with $M_2=\id$ and obtain
\begin{equation*}
  |\det(\id+d_L(z) T_L(z)G_L(z)) - 1|
     \leq 2|d_L(z)| \|T_L(z)\|_2 \|G_L(z)\|_2
     \leq C e^{-2Lb} \mathcal{V}_L(2b) 
\end{equation*}
where we used \ref{wo02t01a} and Lemma \ref{bcs_fermi00t}. Now, Lemma \ref{con01t} shows that
$\det(\id+d_L(z) T_L(z)G_L(z))\to 1$ as $L\to\infty$ uniformly in $z\in\Gamma_{\nu,b}^-$. Hence, the integral
vanishes in this limit. 

(b) 
On $\Gamma_{\nu,b}^+$ we replace $S_L(z)$ and $U(z)$ by $S_L(\nu)$ and $U(\nu)$, respectively, which yields
the difference
\begin{multline*}
  \ln\Big[\frac{\det(e^{2iL\sqrt{z}}S_L(z)\sigma_x U(z)^{-1} + \id)}{\det(e^{2iL\sqrt{z}}\sigma_x U(z)^{-1} + \id)}\Big]
   - \ln\Big[\frac{\det(e^{2iL\sqrt{z}}S_L(\nu)\sigma_x U(\nu)^{-1} + \id)}{\det(e^{2iL\sqrt{z}}\sigma_x U(\nu)^{-1} + \id)}\Big]\\
   = \ln\Big[\frac{\det(e^{2iL\sqrt{z}}S_L(z)\sigma_x U(z)^{-1} + \id)}{\det(e^{2iL\sqrt{z}}S_L(\nu)\sigma_x U(\nu)^{-1} + \id)}\Big]
       - \ln\Big[\frac{\det(e^{2iL\sqrt{z}}\sigma_x U(z)^{-1} + \id)}{\det(e^{2iL\sqrt{z}}\sigma_x U(\nu)^{-1} + \id)}\Big].
\end{multline*}
We rewrite the first term
\begin{equation*}
  \frac{\det(e^{2iL\sqrt{z}}S_L(z)\sigma_x U(z)^{-1} + \id)}{\det(e^{2iL\sqrt{z}}S_L(\nu)\sigma_x U(\nu)^{-1} + \id)}
     = 1 + \frac{\det(e^{2iL\sqrt{z}}S_L(z)\sigma_x U(z)^{-1} + \id) 
                  - \det(e^{2iL\sqrt{z}}S_L(\nu)\sigma_x U(\nu) + \id)}{\det(e^{2iL\sqrt{z}}S_L(\nu)\sigma_x U(\nu)^{-1} + \id)}
\end{equation*}
and estimate the numerator via \eqref{det_difference_matrix}
\begin{equation*}
\begin{split}
  \lefteqn{| \det(e^{2iL\sqrt{z}}S_L(z)\sigma_x U(z)^{-1} + \id) - \det(e^{2iL\sqrt{z}}S_L(\nu)\sigma_x U(\nu)^{-1} + \id) |}\\
    & \leq 2 e^{-2Ls}\|S_L(z)\sigma_x U(z)^{-1} - S_L(\nu)\sigma_x U(\nu)^{-1}\|_2 \\
    & \quad \times \max\{ \|e^{2iL\sqrt{z}}S_L(z)\sigma_x U(z)^{-1}+\id\|_2, \|e^{2iL\sqrt{z}}S_L(\nu)\sigma_x U(\nu)^{-1}+\id\|_2 \} .
\end{split}
\end{equation*}
Now with Lemma \ref{wo02t} (see also Lemma \ref{bcs_fermi01t})
\begin{equation*}
  \|S_L(z)\sigma_x U(z)^{-1} - S_L(\nu)\sigma_x U(\nu)^{-1}\|_2
      \leq \|S_L(z)-S_L(\nu)\|_2 \|U(z)^{-1}\| + \|U(z)^{-1}-U(\nu)^{-1}\|_2 
      \leq C s .
\end{equation*}
The second term can be treated likewise
\begin{equation*}
  \frac{\det(e^{2iL\sqrt{z}}\sigma_x U(z)^{-1} + \id)}{\det(e^{2iL\sqrt{z}}\sigma_x U(\nu)^{-1} + \id)}
     = 1 + \frac{\det(e^{2iL\sqrt{z}}\sigma_x U(z)^{-1} + \id) 
                  - \det(e^{2iL\sqrt{z}}\sigma_x U(\nu)^{-1} + \id)}{\det(e^{2iL\sqrt{z}}\sigma_x U(\nu)^{-1} + \id)} .
\end{equation*}
The numerator can be estimated
\begin{equation*}
\begin{split}
  \lefteqn{|\det(e^{2iL\sqrt{z}}\sigma_x U(z)^{-1} + \id) - \det(e^{2iL\sqrt{z}}\sigma_x U(\nu)^{-1} + \id)|}\\
    & \leq 2 e^{-2Ls}\|U(z)^{-1}-U(\nu)^{-1}\|_2 
         \max\{\|e^{2iL\sqrt{z}}\sigma_x U(z)^{-1} + \id\|_2 ,\|e^{2iL\sqrt{z}}\sigma_x U(\nu)^{-1} + \id\|_2\} \\
    & \leq C e^{-2Ls} s .
\end{split}
\end{equation*}
All in all, Lemma \ref{con01t} proves the statement.
\end{proof}

Next, we cast the integral in \eqref{fse01t01} into a more compact form.

\begin{lemma}\label{fse02t}
The finite size energy \eqref{fse01t01} can be written as (see \eqref{Omega_T}, \eqref{fse01t02})
\begin{equation}\label{fse02t01}
   \mathcal{E}^{\text{FSE}}(\nu)
     = -\frac{\sqrt{\nu}}{2\pi}f'(\nu) \int_0^\infty 
          \ln\big\{\det\big[ \id + ( \cosh(s) + \re(e^{i\eta}\sigma_xU(\nu)^*) )^{-1}\re(e^{i\eta}\sigma_xU(\nu)^*T(\nu)) \big]\big\} \, ds .
\end{equation}
\end{lemma}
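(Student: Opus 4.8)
The plan is to turn \eqref{fse01t01} into \eqref{fse02t01} by an explicit computation with the two $2\times2$ determinants, exploiting unitarity. Since $\det(A^{*})=\overline{\det A}$, the bracketed integrand in \eqref{fse01t01} equals $\ln\lvert\det(\id+d(s)T(\nu)G(s))\rvert^{2}$. From $G(s)=(d(s)\id+U(\nu)\sigma_x)^{-1}$ in \eqref{fse01t02} and $S(\nu)=\id+T(\nu)$ one obtains, exactly as in \eqref{wo03t02},
\begin{equation*}
  \id+d(s)T(\nu)G(s)=\bigl(d(s)S(\nu)+U(\nu)\sigma_x\bigr)\bigl(d(s)\id+U(\nu)\sigma_x\bigr)^{-1},
\end{equation*}
so that $\lvert\det(\id+d(s)T(\nu)G(s))\rvert^{2}$ is $\lvert\det(d(s)S(\nu)+U(\nu)\sigma_x)\rvert^{2}$ divided by $\lvert\det(d(s)\id+U(\nu)\sigma_x)\rvert^{2}$.

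The heart of the matter is the elementary identity: for a unitary matrix $A$ and $d\in\C$,
\begin{equation*}
  \lvert\det(\id+dA)\rvert^{2}=\det\bigl((1+\lvert d\rvert^{2})\,\id+2\re(dA)\bigr),\qquad \re(B):=\tfrac12\bigl(B+B^{*}\bigr),
\end{equation*}
which follows from $\overline{\det(\id+dA)}=\det(\id+\bar d A^{*})$ (conjugate, transpose, and $\overline{A}^{T}=A^{*}$), from $(\id+dA)(\id+\bar d A^{*})=(1+\lvert d\rvert^{2})\id+2\re(dA)$, and from $AA^{*}=\id$. Writing $d(s)\id+U(\nu)\sigma_x=U(\nu)\sigma_x(\id+d(s)\sigma_xU(\nu)^{*})$ and $d(s)S(\nu)+U(\nu)\sigma_x=U(\nu)\sigma_x(\id+d(s)\sigma_xU(\nu)^{*}S(\nu))$, the factors $\lvert\det(U(\nu)\sigma_x)\rvert^{2}=1$ drop out of the quotient, and I would apply the identity with $A=\sigma_xU(\nu)^{*}$ and with $A=\sigma_xU(\nu)^{*}S(\nu)$, both of which are unitary since $U(\nu)$ is unitary ($\nu>0$ real, Lemma \ref{bcs_general02t}), as are $\sigma_x$ and the scattering matrix $S(\nu)$. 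Using $\lvert d(s)\rvert^{2}=e^{-4s}$ this identifies $\lvert\det(\id+d(s)T(\nu)G(s))\rvert^{2}$ with the ratio of $\det\bigl((1+e^{-4s})\id+2e^{-2s}\re(e^{i\eta}\sigma_xU(\nu)^{*}S(\nu))\bigr)$ to $\det\bigl((1+e^{-4s})\id+2e^{-2s}\re(e^{i\eta}\sigma_xU(\nu)^{*})\bigr)$; pulling the scalar $2e^{-2s}$ out of each $2\times2$ matrix cancels it from the ratio and turns $(1+e^{-4s})/(2e^{-2s})$ into $\cosh(2s)$.

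Finally I would substitute this into \eqref{fse01t01} and change variables $s\mapsto s/2$, which supplies the prefactor $\tfrac12$ (so $-\tfrac1\pi\sqrt\nu f'(\nu)$ becomes $-\tfrac{\sqrt\nu}{2\pi}f'(\nu)$) and turns $\cosh(2s)$ into $\cosh(s)$. Setting $M_s:=\cosh(s)\id+\re(e^{i\eta}\sigma_xU(\nu)^{*})$, which for $s>0$ is self-adjoint with spectrum in $[\cosh(s)-1,\cosh(s)+1]\subset(0,\infty)$ and hence invertible, and using $\re(e^{i\eta}\sigma_xU(\nu)^{*}S(\nu))=\re(e^{i\eta}\sigma_xU(\nu)^{*})+\re(e^{i\eta}\sigma_xU(\nu)^{*}T(\nu))$, the numerator determinant factors as
\begin{equation*}
  \det\bigl(M_s+\re(e^{i\eta}\sigma_xU(\nu)^{*}T(\nu))\bigr)=\det(M_s)\,\det\bigl(\id+M_s^{-1}\re(e^{i\eta}\sigma_xU(\nu)^{*}T(\nu))\bigr),
\end{equation*}
so the determinant ratio is $\det(\id+M_s^{-1}\re(e^{i\eta}\sigma_xU(\nu)^{*}T(\nu)))$, and taking its logarithm and integrating yields \eqref{fse02t01}. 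I do not expect a genuine obstacle here: the only delicate points are the correct bookkeeping of conjugates and transposes in the unitarity identity (together with the fact that $S(\nu)$ is unitary at positive energy) and the factor of $2$ in the change of variables.
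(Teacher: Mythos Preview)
Your proof is correct and reaches the same conclusion, but the route through the algebra is genuinely different from the paper's. The paper keeps the operator $\id+d(s)T(\nu)G(s)$ intact and expands $F(s)\coloneqq(\id+dTG)^*(\id+dTG)$ directly, using the unitarity relation $T^*T=-T-T^*$ together with the identity $\id-d(s)G(s)=U(\nu)\sigma_xG(s)$, and then commutes factors under the determinant to reduce to an expression involving $G(s)G(s)^*$, which is computed explicitly. You instead first invoke the factorization \eqref{wo03t02} to split $\id+dTG$ into a quotient, pull out the unitary factor $U(\nu)\sigma_x$, and apply once and for all the clean identity $|\det(\id+dA)|^2=\det((1+|d|^2)\id+2\re(dA))$ for unitary $A$ to numerator and denominator separately. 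Your approach is more modular and makes the role of unitarity (of $U(\nu)$, $\sigma_x$, and $S(\nu)$) completely transparent; the paper's approach is slightly more hands-on but avoids the need to cite \eqref{wo03t02} and works entirely at the level of $T$ and $G$. Both the final change of variables $s\mapsto s/2$ and the factorization $\det(M_s+\cdot)=\det(M_s)\det(\id+M_s^{-1}\cdot)$ are handled identically.
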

\begin{proof}
The integrand in \eqref{fse01t01} is
\begin{gather*}
  \ln\det(\id + d(s) T(\nu) G(s)) + \ln\det(\id + d(s) T(\nu) G(s))^*  
     = \ln\det(F(s)),\\
  F(s) \coloneqq (\id+d(s)T(\nu)G(s))^*(\id+d(s)T(\nu)G(s)) .
\end{gather*}
Note that $T(\nu)^*T(\nu)=-T(\nu)-T(\nu)^*$. Then,
\begin{equation*}
\begin{split}
  F(s) 
   & = \id + \bar d(s) G(s)^*T(\nu)^* + d(s) T(\nu)G(s) + |d(s)|^2 G(s)^*T(\nu)^*T(\nu)G(s) \\
   & = \id + \bar d(s) G(s)^*T(\nu)^*(\id-d(s)G(\nu)) + d(s)(\id-\bar d(s)G(s)^*T(\nu)G(s) \\
   & = \id + G(s)^*\big[ \bar d(s)T(\nu)^*U(\nu)\sigma_x + d(s) \sigma_xU(\nu)^*T(\nu) \big]G(s)
\end{split}
\end{equation*}
where we used $\id-d(s)G(s)= U(\nu)\sigma_xG(s)$. In the determinant we commute the factors
\begin{equation*}
  \det(F(s))
    = \det\big[ \id + G(s)G(s)^*( \bar d(s)T(\nu)^*U(\nu)\sigma_x + d(s) \sigma_xU(\nu)^*T(\nu) ) \big] .
\end{equation*}
Recalling the definition of $d(s)$ we obtain
\begin{equation*}
  G(s)G(s)^* 
   = e^{2s} \big[ (e^{-2s} + e^{2s})\id + e^{-i\eta}U(\nu)\sigma_x + e^{i\eta}\sigma_xU(\nu)^* \big]^{-1}
\end{equation*}
and thereby
\begin{equation*}
  \det(F(s))
   =  \det\big[ \id + ( \cosh(2s) + \re(e^{i\eta}\sigma_xU(\nu)^*) )^{-1}\re(e^{i\eta}\sigma_xU(\nu)^*T(\nu)) \big] .
\end{equation*}
Finally, substitute $s\to s/2$ in the integral to prove \eqref{fse02t01}.
\end{proof}

Surprisingly, the integral in \eqref{fse02t01} can be evaluated.
It is instructive to consider a more general integral.
The method is motivated by that used for integrals of rational functions over the half-line. 
It involves a very simple version of a Riemann--Hilbert problem.

\begin{lemma}\label{integral01t}
Let $H_1$ and $H_2$ be self-adjoint matrices with $H_1>0$ and $H_1+H_2>0$. Let $f(t)\coloneqq\cosh(t)-1$. Then,
\begin{equation}\label{integral01t01}
  \int_0^\infty \ln\big[\det( \id + (f(t)\id+H_1)^{-1}H_2 ) \big] \, dt =
     \frac{1}{2} \tr\big[ \arcosh^2(H_1+H_2-\id) - \arcosh^2(H_1-\id) \big].
\end{equation}
\end{lemma}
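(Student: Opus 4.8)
The idea is to diagonalise, reduce \eqref{integral01t01} to a one–dimensional integral, evaluate that integral in closed form, and reassemble.

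First I would rewrite the integrand. Put $A\coloneqq H_1+H_2-\id$ and $B\coloneqq H_1-\id$; the hypotheses $H_1>0$, $H_1+H_2>0$ say exactly that $A,B$ are self-adjoint with spectra in $(-1,\infty)$, while $f(t)\id+H_1=(\cosh t-1)\id+H_1>0$ is invertible for every $t\ge0$. Since
\[
  \id+(f(t)\id+H_1)^{-1}H_2=(f(t)\id+H_1)^{-1}\big(f(t)\id+H_1+H_2\big),
\]
we get $\det\big(\id+(f(t)\id+H_1)^{-1}H_2\big)=\det(\cosh t\,\id+A)/\det(\cosh t\,\id+B)>0$ for $t\ge0$. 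Listing the eigenvalues of $A$ and $B$ (with multiplicity, say in decreasing order) as $a_1,\dots,a_n$ and $b_1,\dots,b_n$, all in $(-1,\infty)$, the integrand equals $\sum_{j=1}^n\ln\frac{\cosh t+a_j}{\cosh t+b_j}$; each summand is continuous on $[0,\infty)$ and $O(e^{-t})$ as $t\to\infty$, so the $t$-integral converges absolutely and splits over $j$.

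The heart of the argument is the scalar identity: for $\alpha,\beta\in(-1,\infty)$,
\[
  \int_0^\infty\ln\frac{\cosh t+\alpha}{\cosh t+\beta}\,dt=\tfrac12\big(g(\alpha)-g(\beta)\big),
\]
where $g$ is the real-analytic function on $(-1,\infty)$ equal to $\arcosh^2$ on $[1,\infty)$ and to $-\arccos^2$ on $(-1,1]$ (they patch analytically at $x=1$ since $x=\cosh v$ gives $2(x-1)=v^2+\tfrac{1}{12}v^4+\cdots$, an analytic function of $v^2$ with non-vanishing derivative at $0$, whose local inverse is $g(x)=\arcosh^2x$). To prove this I would differentiate in $\alpha$ under the integral sign — legitimate because the $\alpha$-derivative $(\cosh t+\alpha)^{-1}$ is $O(e^{-t})$ uniformly for $\alpha$ in compact subsets of $(-1,\infty)$ — obtaining $\int_0^\infty(\cosh t+\alpha)^{-1}\,dt$; the substitution $w=\tanh(t/2)$ turns this into $\int_0^1 2\,dw/\big((1+\alpha)+(1-\alpha)w^2\big)$, an elementary integral equal to $\arcosh(\alpha)/\sqrt{\alpha^2-1}$ for $\alpha>1$ and $\arccos(\alpha)/\sqrt{1-\alpha^2}$ for $-1<\alpha<1$ — in both regimes exactly $\tfrac12g'(\alpha)$. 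Integrating back in $\alpha$ and fixing the constant by the vanishing at $\alpha=\beta$ gives the claim.

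Finally, summing over the pairs $(a_j,b_j)$,
\[
  \int_0^\infty\ln\det\big(\id+(f(t)\id+H_1)^{-1}H_2\big)\,dt=\tfrac12\sum_{j=1}^n\big(g(a_j)-g(b_j)\big)=\tfrac12\big(\tr g(A)-\tr g(B)\big),
\]
and $\tr g(A)=\tr\arcosh^2(A)$, $\tr g(B)=\tr\arcosh^2(B)$ by the (real-analytic) functional calculus; substituting $A=H_1+H_2-\id$, $B=H_1-\id$ yields \eqref{integral01t01}. (Alternatively, to sidestep pairing eigenvalues of two different operators one may differentiate $\int_0^\infty\ln\det(\id+s(f(t)\id+H_1)^{-1}H_2)\,dt$ in $s\in[0,1]$, using $\int_0^\infty(\cosh t\,\id+M)^{-1}\,dt=\tfrac12 g'(M)$ and $\tfrac{d}{ds}\tr g(H_1+sH_2-\id)=\tr\!\big[g'(H_1+sH_2-\id)H_2\big]$.) The main obstacle here is nothing deep: it is the careful case-split evaluation of the scalar integral together with the check that $\arcosh^2$ extends real-analytically across $x=1$, so that the right-hand side of \eqref{integral01t01} is genuinely a self-adjoint matrix.
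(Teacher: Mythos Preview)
Your proof is correct, but the route is genuinely different from the paper's. You diagonalise and reduce to the scalar integral $\int_0^\infty(\cosh t+\alpha)^{-1}\,dt$, which you evaluate by the Weierstrass substitution; this is entirely elementary real analysis. The paper instead substitutes $x=f(t)$ and integrates by parts to reach $\int_0^\infty r(x)\arcosh(x+1)\,dx$ with $r(x)=-\tr\big[(x\id+H_1+H_2)^{-1}-(x\id+H_1)^{-1}\big]$ rational, then evaluates this by complex analysis: it exhibits a function $h(z)=-\tfrac{1}{4\pi i}\arcosh^2(-z-1)$ whose jump across $[0,\infty)$ is $\arcosh(x+1)$, and computes the integral as a sum of residues via a keyhole contour. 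Your argument is shorter and avoids contour integration; the paper's argument, on the other hand, is set up for a general monotone $f$ (the $\cosh$ only enters when one has to produce the jump function $h$), and makes the appearance of $\arcosh^2$ structurally transparent as the primitive solving a tiny Riemann--Hilbert problem. Your remark that $\arcosh^2$ extends real-analytically through $x=1$ to $-\arccos^2$ on $(-1,1]$ is exactly the point the paper uses downstream (Proposition~\ref{fse03t}), so it is good that you flag it.
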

\begin{proof}
We start with the more general integral
\begin{equation*}
  I \coloneqq \int_0^\infty \ln\big[\det(\id+(f(t)\id+H_1)^{-1}H_2)\big]\, dt
\end{equation*}
where $f:\R^+\to\R$ is a differentiable function with 
\begin{equation*}
   f'(t)> 0\ \text{for}\ t>0,\
   f(0)=0,\ f(\infty)=\infty,\ \lim_{x\to\infty} f^{-1}(x)/x =0 .
\end{equation*}
We substitute
\begin{equation*}
  t = f^{-1}(x),\ dt = \frac{d}{dx}(f^{-1}(x))\, dx
\end{equation*}
and integrate by parts
\begin{equation*}
\begin{split}
  I  & = \int_0^\infty \ln\big[\det( \id + (x\id+H_1)^{-1}H_2)\big] \frac{d}{dx}( f^{-1}(x) )\, dx\\
     & = - \int_0^\infty \tr\big[ (x\id+H_1+H_2)^{-1} - (x\id+H_1)^{-1} \big] f^{-1}(x)\, dx\\
     & = \int_0^\infty r(x) g(x)\, dx
\end{split}
\end{equation*}
where
\begin{equation*}
   r(x) \coloneqq - \tr\big[ (x\id+H_1+H_2)^{-1} - (x\id+H_1)^{-1} \big],\ g(x)\coloneqq f^{-1}(x).
\end{equation*}
Note that $r$ is a rational function whose poles are the eigenvalues of $-H_1$ and $-H_1-H_2$ which we assumed to be negative. 
Thus, $r$ has no poles on $\R^+$. Integrals of this type can be evaluated if one finds a
holomorphic function $h:\C\setminus\interval[open right]{0}{\infty}\to\C$ satisfying on the cut
\begin{equation*}
  h(x+i0) - h(x-i0) = g(x),\ x\in[0,\infty[ .
\end{equation*}
Then, the keyhole integration contour
\begin{equation*}
  \Gamma_{\varepsilon,R,r} \coloneqq \Gamma_\varepsilon^+ \cup \Gamma_r \cup \Gamma_\varepsilon^- \cup \Gamma_R ,
\end{equation*}
see Fig. \ref{f_keyhole},
\begin{figure}[bth]
\includegraphics[width=0.75\textwidth]{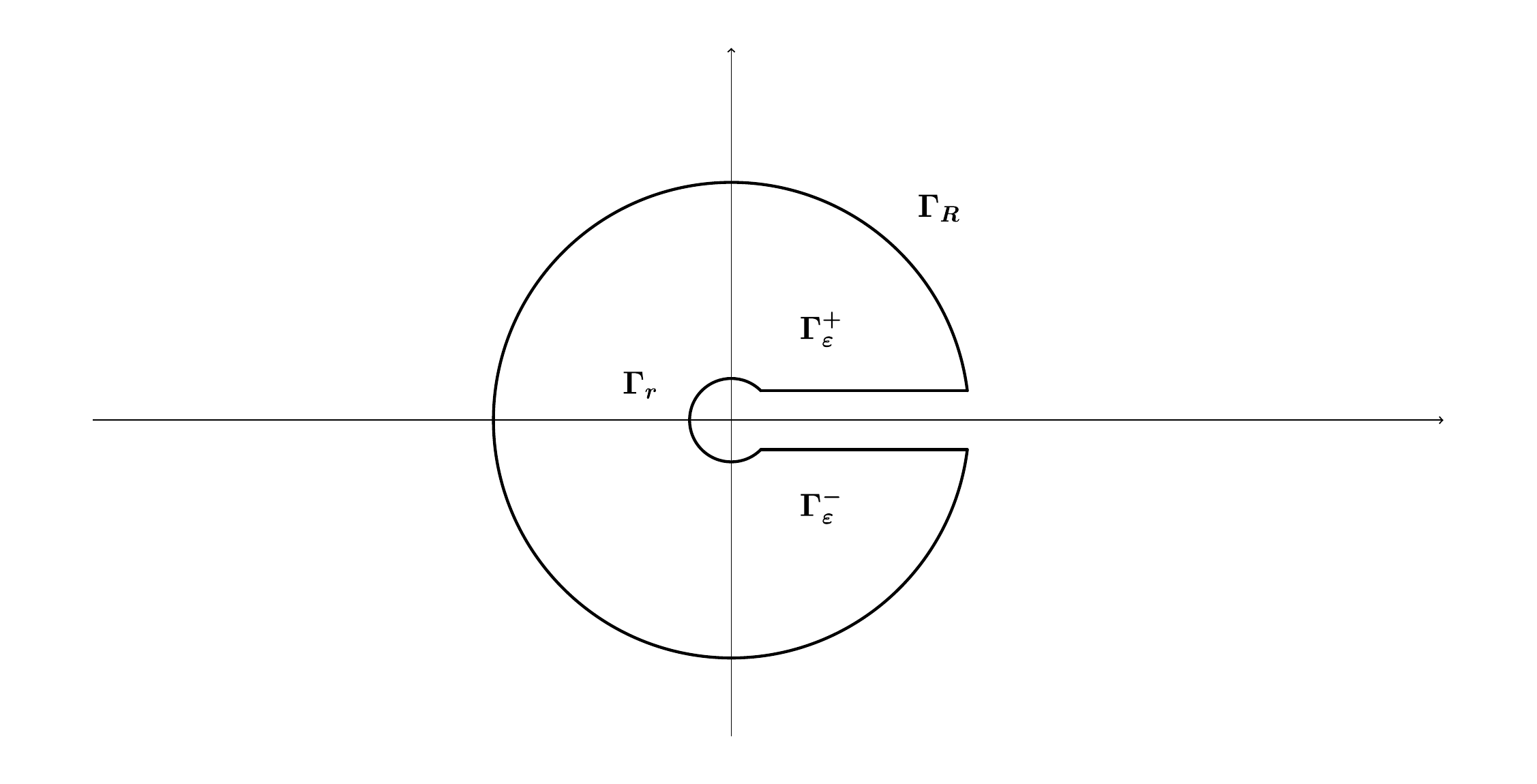}
\caption{The keyhole integration contour in the complex plane.}
\label{f_keyhole}
\end{figure}
along with the residue theorem can be used to show that
\begin{equation*}
  I = 2\pi i \sum_{z\in\C\setminus\interval[open right]{0}{\infty}} \res(r(z)h(z)) .
\end{equation*}
In our case,
\begin{equation*}
  f(t) = \cosh(t) - 1,\ 
  g(x)\coloneqq f^{-1}(x) = \arcosh(x+1) .
\end{equation*}
Now, we are left with finding the function $h$ which would generally
require to solve a Riemann--Hilbert problem. Here we can use
the area function itself. The principal branch is given by the formula
\begin{equation*}
  \arcosh(z) = \ln( \pm(z^2-1)^{\frac{1}{2}} + z),\ z\in\C\setminus\interval[open]{-\infty}{1},\ \re(z)\gtrless 0 .
\end{equation*}
It is holomorphic on $\C$ except for $\interval[open left]{-\infty}{1}$ where it satisfies
\begin{equation*}
  \lim_{\varepsilon\to+0} \arcosh(x\pm i\varepsilon) = 
\begin{cases}
  \pm\ln( i(1-x^2)^{\frac{1}{2}} + x )        & -1 < x \leq 1 ,\\
  \pm \pi i + \ln( (x^2-1)^{\frac{1}{2}} - x) & x\leq -1 .
\end{cases}
\end{equation*}
This motivates us to put
\begin{equation*}
  h(z) \coloneqq -\frac{1}{4\pi i} \arcosh^2(-z-1) .
\end{equation*}
The outer minus sign is necessary since the minus sign in the argument swaps the upper and the lower halfplane.
Obviously, $h$ is holomorphic at least on $\C\setminus\interval[open right]{-2}{\infty}$. But
\begin{equation*}
  h(x+i0) - h(x-i0) = 0,\ -2 \leq x < 0 ,
\end{equation*}
shows that it extends to $\C\setminus\interval[open right]{0}{\infty}$. Furthermore,
\begin{equation*}
  h(x+i0) - h(x-i0) 
    = \ln\big( ((x+1)^2-1)^{\frac{1}{2}} + x +1 \big)
    = \arcosh(x+1),\ x\geq 0 .
\end{equation*}
Hence,
\begin{equation*}
\begin{split}
  I & = 2\pi i \sum_{z\in\C\setminus[0,\infty[} \res\Big\{ \tr\big[ (z\id+H_1+H_2)^{-1} - (z\id+H_1)^{-1} \big]
        \frac{1}{4\pi i}\arcosh^2(-z-1)\Big\} \\
    & = \frac{1}{2}  \tr\big[ \arcosh^2(H_1+H_2-\id) - \arcosh^2(H_1-\id) \big] .
\end{split}
\end{equation*}
This proves \eqref{integral01t01}.
\end{proof}

Finally, we can compute the finite size energy.

\begin{proposition}\label{fse03t}
The finite size energy is given by (cf. \eqref{fr_bcs} and \eqref{s-matrix02})
\begin{equation}\label{fse03t01}
   \mathcal{E}^{\text{FSE}}(\nu) = \frac{\sqrt{\nu}}{4\pi}f'(\nu)
     \tr\big[ \arccos^2(\re(e^{i\eta}\sigma_xU(\nu)^*S(\nu))) - \arccos^2(\re(e^{i\eta}\sigma_xU(\nu)^*)) \big] .
\end{equation}
\end{proposition}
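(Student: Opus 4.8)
The plan is to obtain Proposition~\ref{fse03t} by feeding the integral representation \eqref{fse02t01} of Lemma~\ref{fse02t} into the explicit evaluation of Lemma~\ref{integral01t}. First I would match the two integrands: with $f(s)=\cosh(s)-1$ as in Lemma~\ref{integral01t} and with the self-adjoint matrices
\begin{equation*}
  H_1 \coloneqq \id+\re(e^{i\eta}\sigma_x U(\nu)^*),\qquad
  H_2 \coloneqq \re(e^{i\eta}\sigma_x U(\nu)^* T(\nu)),
\end{equation*}
one has $f(s)\id+H_1=\cosh(s)\id+\re(e^{i\eta}\sigma_x U(\nu)^*)$, so that the $s$-integral in \eqref{fse02t01} is exactly $\int_0^\infty\ln\det(\id+(f(s)\id+H_1)^{-1}H_2)\,ds$. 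Moreover $H_1-\id=\re(e^{i\eta}\sigma_x U(\nu)^*)$ and, since $S(\nu)=\id+T(\nu)$,
\begin{equation*}
  H_1+H_2-\id=\re\big(e^{i\eta}\sigma_x U(\nu)^*(\id+T(\nu))\big)=\re(e^{i\eta}\sigma_x U(\nu)^* S(\nu)),
\end{equation*}
which are precisely the matrices occurring in \eqref{fse03t01}.

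Before invoking Lemma~\ref{integral01t} I must check its hypotheses $H_1>0$ and $H_1+H_2>0$, and this is where the actual work is. Both $W\coloneqq e^{i\eta}\sigma_x U(\nu)^*$ and $W'\coloneqq e^{i\eta}\sigma_x U(\nu)^* S(\nu)$ are unitary --- using that $U(\nu)$ is unitary for real $\sqrt{\nu}$ (Lemma~\ref{bcs_general02t}), that $\sigma_x$ is unitary, and that $S(\nu)$ is unitary --- hence normal, so $\re(W)$ and $\re(W')$ are self-adjoint with spectrum contained in $[-1,1]$ and $H_1=\id+\re(W)$, $H_1+H_2=\id+\re(W')$ are positive semidefinite with spectrum in $[0,2]$. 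Strict positivity fails exactly when $-1$ is an eigenvalue of $W$ or of $W'$, which by Lemma~\ref{bcs_special01t} is the resonant situation in which $\nu$ is a limiting eigenvalue of $H_L$. In that exceptional case I would argue by continuity: the integrand of \eqref{fse02t01} has at worst an integrable logarithmic singularity at $s=0$ (the determinant ratio behaves like a fixed power of $\cosh(s)-1$ there), so the left-hand side of \eqref{fse03t01} depends continuously on $(U(\nu),S(\nu),\eta)$, the right-hand side obviously does, and the generic case therefore suffices.

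With the hypotheses verified, Lemma~\ref{integral01t} gives
\begin{equation*}
  \int_0^\infty\ln\det\big(\id+(f(s)\id+H_1)^{-1}H_2\big)\,ds
   =\tfrac12\tr\big[\arcosh^2(\re(e^{i\eta}\sigma_x U(\nu)^* S(\nu)))-\arcosh^2(\re(e^{i\eta}\sigma_x U(\nu)^*))\big].
\end{equation*}
The final step is to pass from $\arcosh$ to $\arccos$: for a self-adjoint matrix $M$ with $\sigma(M)\subset[-1,1]$ one has $\arcosh^2(M)=-\arccos^2(M)$, a consequence of the functional calculus and the scalar identity $\arcosh(x\pm i0)=\pm i\arccos(x)$ on $(-1,1]$ that already entered through the branch of $h$ in the proof of Lemma~\ref{integral01t}. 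Substituting this and the value of the integral into \eqref{fse02t01}, the prefactor $-\tfrac{\sqrt{\nu}}{2\pi}f'(\nu)$, the factor $\tfrac12$ from Lemma~\ref{integral01t}, and the sign reversal $\arcosh^2\to-\arccos^2$ combine to $+\tfrac{\sqrt{\nu}}{4\pi}f'(\nu)$, which is exactly \eqref{fse03t01}. The positivity/resonance verification is the only genuine obstacle; everything else is sign bookkeeping.
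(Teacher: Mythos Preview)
Your proof is correct and follows essentially the same route as the paper: define $H_1=\id+\re(e^{i\eta}\sigma_xU(\nu)^*)$ and $H_2=\re(e^{i\eta}\sigma_xU(\nu)^*T(\nu))$, apply Lemma~\ref{integral01t}, and convert $\arcosh^2$ to $-\arccos^2$ via unitarity. You are in fact more careful than the paper, which merely asserts $0\le H_1$ and $0\le H_1+H_2$ without addressing the strict positivity hypothesis of Lemma~\ref{integral01t}; your continuity argument for the borderline case fills a gap the paper leaves open.
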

\begin{proof}
We define
\begin{equation*}
  H_1 \coloneqq \id + \re(e^{i\eta}\sigma_xU(\nu)^*),\ H_2\coloneqq \re(e^{i\eta}\sigma_xU(\nu)^*T(\nu)) .
\end{equation*}
Note that $0\leq H_1$ and $0\leq H_1+H_2$. Inserting $H_{1,2}$ into Lemma \ref{integral01t} we obtain
\begin{equation*}
\begin{split}
  \lefteqn{\int_0^\infty\ln\det\big[ \id + ( \cosh(s)\id + \re(e^{i\eta}\sigma_xU(\nu)^*))^{-1}\re(e^{i\eta}\sigma_xU(\nu)^*T(\nu)) \big]\, ds}\\
    & = \frac{1}{2}\tr\big[ \arcosh^2(\re(e^{i\eta}\sigma_xU(\nu)^*)+\re(e^{i\eta}\sigma_xU(\nu)^*T(\nu)))
                           - \arcosh^2(\re(e^{i\eta}\sigma_xU(\nu)^*)) \big]\\
    & = \frac{1}{2}\tr\big[ \arcosh^2(\re(e^{i\eta}\sigma_xU(\nu)^*S(\nu)))
                           - \arcosh^2(\re(e^{i\eta}\sigma_xU(\nu)^*)) \big] .
\end{split}
\end{equation*}
Since $U(\nu)$ and $S(\nu)$ are unitary the eigenvalues of
$\re(e^{i\eta}\sigma_xU(\nu)^*)$ and $\re(e^{i\eta}\sigma_xU(\nu)^*S(\nu))$
lie in the interval $[-1,1]$ whereby $\arcosh(z)=\pm i\arccos(z)$. This shows \eqref{fse03t01}.
\end{proof}

\section{Asymptotics on the half-line\label{hl}}
The study of the asymptotics of the energy difference on the half-line can be pursued along
the same line of ideas as on the whole line. But we do not know how to extend Gebert's methods for the half-line to the whole line.

So, let $\hilbert_\infty\coloneqq L^2(\R^+)$, $\dom(H_\infty)\subset\hilbert$ be dense,
and $H_\infty:\dom(H_\infty)\to\hilbert_\infty$, $H_\infty \coloneqq -\frac{d^2}{dx^2}$
the free Schr\"odinger operator defined on the half-line with boundary condition at the origin
\begin{equation}\label{hl01}
  a\varphi(0) = b\varphi'(0),\ a\bar b=b\bar a,\ |a|^2+|b|^2=1.
\end{equation}
We compute the resolvent $R_\infty(z)\coloneqq (z\id-H_\infty)^{-1}$. The function
\begin{equation}\label{hl02}
  \varepsilon_0(z;x) \coloneqq \frac{ia-\sqrt{z}b}{ia+\sqrt{z}b}e^{i\sqrt{z}x} - e^{-i\sqrt{z}x}
\end{equation}
is a formal solution, i.e. it is not in $\hilbert_\infty$,
to the differential equation $H_\infty \varepsilon_0(z)=z\varepsilon_0(z)$
and satisfies the boundary condition \eqref{hl01} at the origin.
Note the property
\begin{equation}\label{hl03}
  \overline{\varepsilon_0(z)} = - \frac{ia+\sqrt{\bar z}b}{ia-\sqrt{\bar z}b} \varepsilon_0(\bar z) .
\end{equation}
Then, the Green function of $R_\infty(z)$ is
\begin{equation}\label{hl04}
  R_\infty(z;x,y) = \frac{i}{2\sqrt{z}}
\begin{cases}
  \varepsilon_0(z;x)e^{i\sqrt{z}y} & \text{for}\ x\leq y , \\
  e^{i\sqrt{z}x}\varepsilon_0(z;y) & \text{for}\ x\geq y .
\end{cases}
\end{equation}
We restrict the operator $H_\infty$ to $\hilbert_L\coloneqq L^2(\Lambda_L)$, $\Lambda_L\coloneqq [0,L]$, thereby obtaining the maximal operator
$\tilde H:\dom(\tilde H)\to\hilbert_L$, $\tilde H\coloneqq -\frac{d^2}{dx^2}$.
Integration by parts shows that
\begin{equation}\label{hl05}
  (\varphi,\tilde H\psi) - (\tilde H\varphi,\psi)
  = (\Gamma_1\varphi,\Gamma_2\psi) - (\Gamma_2\varphi,\Gamma_1\psi)
    \ \text{with}\ 
  \Gamma_1\varphi = \varphi(L),\ \Gamma_2\varphi = -\varphi'(L) .
\end{equation}
The deficiency subspace is (cf. \eqref{abc_deficiency_subspace}, \eqref{hl02})
\begin{equation}\label{hl06}
  \mathcal{N}_z = \mathspan\{ \varepsilon_0(z) \} .
\end{equation}
The boundary condition at $x=L$ is parametrized by the $1\times 1$-matrices $A$ and $B$.
We define the self-adjoint restriction $H_L\coloneqq\tilde H|_{\dom(H_L)}$ via
\begin{equation}\label{hl07}
  \dom(H_L) \coloneqq \{ \varphi\in\dom(\tilde H) \mid (A\Gamma_1-B\Gamma_2)\varphi =0 \}
\end{equation}
with resolvent
\begin{equation}\label{hl08}
  R_L(z)\coloneqq (z\id- H_L)^{-1}:\hilbert_L\to\hilbert_L,\
  z\in\C\setminus\sigma(H_L) .
\end{equation}
The restrictions of $\Gamma_{1,2}$ to $\mathcal{N}_z$ are just numbers or, more precisely, multiplication operators by thoses numbers
\begin{equation}\label{hl09}
  \Gamma_1 \varepsilon_0(z) = \frac{ia-\sqrt{z}b}{ia+\sqrt{z}b} e^{i\sqrt{z}L} - e^{-i\sqrt{z}L},\
  \Gamma_2 \varepsilon_0(z) = -i\sqrt{z} \big[ \frac{ia-\sqrt{z}b}{ia+\sqrt{z}b} e^{i\sqrt{z}L} + e^{-i\sqrt{z}L} \big] .
\end{equation}
Furthermore, using \eqref{hl03} one obtains
\begin{equation}\label{hl10}
  \Gamma_1R_{\infty,L}(z)\varphi
    = \frac{1}{2i\sqrt{z}}\frac{ia-\sqrt{z}b}{ia+\sqrt{z}b} e^{i\sqrt{z}L} ( \varepsilon_0(\bar z),\varphi)
  ,\
  \Gamma_2R_{\infty,L}(z)\varphi = -\frac{1}{2}\frac{ia-\sqrt{z}b}{ia+\sqrt{z}b} e^{i\sqrt{z}L} (\varepsilon_0(\bar z),\varphi) .
\end{equation}
Now (cf. \eqref{fr01t02})
\begin{equation}\label{hl11}
  G_L(z) = - \frac{i}{2\sqrt{z}}e^{2i\sqrt{z}L}
    \big[ e^{2i\sqrt{z}L} - \frac{ia+\sqrt{z}b}{ia-\sqrt{z}b}\frac{iA+\sqrt{z}B}{iA-\sqrt{z}B} \big]^{-1} .
\end{equation}
Note that on the half-line the relation between the Fermi energy $\nu$ and the system length $L$ is
\begin{equation*}
  L \sqrt{\nu} = \pi( N+\eta ), \ 0\leq \eta < 1 .
\end{equation*}
The relevant quantity for the finite size energy is (cf. \eqref{fse01t02})
\begin{equation}\label{hl12}
  W(z) : = - e^{-2i\pi\eta}\frac{ia+\sqrt{z}b}{ia-\sqrt{z}b}\frac{iA+\sqrt{z}B}{iA-\sqrt{z}B} ,
\end{equation}
which appears in the analogue of Proposition \ref{fse03t}
\begin{equation}\label{hl13}
   \mathcal{E}^{\text{FSE}}(\nu) = \frac{\sqrt{\nu}}{4\pi}f'(\nu)
     \big[ \arccos^2(\re(W(\nu)^*S(\nu))) - \arccos^2(\re(W(\nu))) \big] .
\end{equation}
We compare \eqref{hl13} with Gebert's result \cite[(1.4), (2.12)]{Gebert2015}
\begin{equation*}
  \mathcal{E}^\text{mc}_{N,L}(\nu)
    = \int_{-\infty}^\nu \xi(\lambda)\, d\lambda 
      + \frac{\sqrt{\nu}\pi}{L} E^\text{mc}_\eta(\nu) + o(\frac{1}{L}),\ 
     E^\text{mc}_\eta(\nu) = (1-2\eta)\xi(\nu) + \xi(\nu)^2 .
\end{equation*} 
To this end, we write
\begin{equation*}
  W(\nu) = e^{i\pi} e^{-2\pi i\eta}  e^{2\pi i\vartheta} ,\ S(\nu) = e^{-2\pi i\xi(\nu)} .
\end{equation*}
The latter is the Birman--Kre\u\i{}n formula (cf. Lemma \ref{ssf03t}). Then,
\begin{equation*}
  \re(W(\nu)^*S(\nu)) = \cos(2\pi( -\frac{1}{2} + \eta  - \vartheta - \xi(\nu) )),\
  \re(W(\nu)) = \cos(2\pi( \frac{1}{2} - \eta  + \vartheta))
\end{equation*}
and furthermore
\begin{multline*}
  \arccos^2(\re(W(\nu)^*S(\nu))) - \arccos^2(\re(W(\nu)) ) \\
    = 4\pi^2 \big[ ( -\frac{1}{2} + \eta  - \vartheta - \xi(\nu) )^2 - (\frac{1}{2} - \eta + \vartheta) \big] 
    = \xi(\nu)^2 + (1-2\eta+2\vartheta) \xi(\nu) .
\end{multline*}
For Dirichlet boundary conditions at $x=0$ and $x=L$ we have $\vartheta=0$ which yields exactly the coefficient
as in \cite{Gebert2015}.

\appendix
\section{Scattering theory\label{scattering}}
General results from scattering theory show that the operator $\hat\Omega^\pm(\nu)$ in \eqref{Omega_matrix} 
is related to the scattering matrix (see e.g. \cite[ Thm. 4]{BirmanEntina1967}).
Here, we give an elementary derivation.
To this end, we provide some scattering theoretic background for the one-dimensional
Schr\"odinger equation
\begin{equation}\label{st01}
  -\psi''(x) + V(x)\psi(x) = k^2 \psi(x),\ \im(k) \geq 0,
\end{equation}
both on the line and on the half-line.
We write $k^2$ instead of $z$ to stay closer to the usual notation in scattering theory.
For scattering on the line see, e.g., \cite[Sec. 2, \S 3, in particular pp. 145, 146]{DeiftTrubowitz1979}
and on the half-line see \cite[Ch. 4]{Yafaev2010}.

\subsection{Scattering on the whole line\label{stl}}
In a scattering experiment, a plane wave coming from, say, $+\infty$ interacts
with a potential whereby one part is reflected back to $+\infty$ and thus superposes
the incoming wave. Another part moves toward $-\infty$.
This scenario is described by the scattering solution $u_+$ of \eqref{st01}
\begin{equation}\label{stl01}
  u_+(x) \sim
\begin{cases}
  \mathscr{t}_1(k) e^{-ikx}          & \text{for}\ x\to-\infty ,\\
  e^{-ikx} + \mathscr{r_1}(k) e^{ikx} & \text{for}\ x\to\infty ,
\end{cases}
\end{equation}
where $\mathscr{t}_1(k)$ and $\mathscr{r}_1(k)$ are the transmission and reflection coefficient,
respectively. Analogously, $u_-$ describes scattering from $-\infty$
\begin{equation}\label{stl02}
  u_-(x) \sim
\begin{cases}
  \mathscr{t}_2(k) e^{ikx}           & \text{for}\ x\to\infty ,\\
  e^{ikx} + \mathscr{r}_2(k) e^{-ikx} & \text{for}\ x\to-\infty .
\end{cases}
\end{equation}
It can be shown that $\mathscr{t}_1(k)=\mathscr{t}_2(k)\eqqcolon \mathscr{t}(k)$, which is reasonable on physical grounds since the wave
moves through the entire potential. In order to describe the asymptotics in \eqref{stl01} and \eqref{stl02}
we use the so-called Jost solutions $\psi_\pm(k;\cdot)$ of \eqref{st01}
\begin{equation}\label{stl03}
  \psi_+(k;x) \sim e^{ikx},\ \text{for}\ x\to\infty,\
  \psi_-(k;x) \sim e^{-ikx},\ \text{for}\ x\to-\infty .
\end{equation}
Their existence and properties can be obtained via the Lippmann--Schwinger equation
\begin{equation}\label{lippmann_schwinger}
\begin{aligned}
  \psi_+(k;x) & = e^{ikx} + \frac{1}{k}\int_x^\infty \sin(k(y-x)) V(y) \psi_+(k;y)\, dy,\ x\in\R ,\\
  \psi_-(k;x) & = e^{-ikx} + \frac{1}{k}\int_{-\infty}^x \sin(k(x-y)) V(y) \psi_-(k;y)\, dy,\ x\in\R .
\end{aligned}
\end{equation}
In order to study their analytical properties it is more convenient to consider the functions
\begin{equation}\label{stl03_modified}
  m_+(k;x) \coloneqq e^{-ikx}\psi_+(k;x)\ \text{and}\ m_-(k;x)\coloneqq e^{ikx}\psi_-(k;x) . 
\end{equation}
They satisfy the equations
\begin{equation}\label{lippmann_schwinger_modified}
\begin{aligned}
  m_+(k;x) & = 1 + \int_x^\infty D_k(y-x)V(y)m_+(k;y)\, dy,\ x\in\R, \\
  m_-(k;x) & = 1 + \int_{-\infty}^x D_k(x-y)V(y)m_-(k;y)\, dy,\ x\in\R,
\end{aligned}
\end{equation}
with the kernel function
\begin{equation*}
  D_k(x) \coloneqq \int_0^x e^{2iky}\, dy = \frac{1}{2ik} ( e^{2ikx} - 1 ) .
\end{equation*}
We will need a version of Gronwall's lemma.

\begin{lemma}\label{volterra} 
Let $r:\R\to\C$ be bounded. For $k:\R\times\R\to\C$ define 
\begin{equation*}
  \tilde k(x,y) \coloneqq 
\begin{cases}
   \sup_{x\leq t\leq y} |k(t,y)| & \text{for}\ x\leq y , \\
   \tilde k(x,y) = 0          &  \text{otherwise} ,
\end{cases}
\end{equation*}
and assume that $\tilde{k}(x,\cdot)\in L^1(\R)$ for all $x\in\R$. Then, the Volterra equation
\begin{equation*}
  u(x) = r(x) + \int_x^\infty k(x,y) u(y)\, dy
\end{equation*}
has a unique solution $u$ satisfying
\begin{equation*}
  |u(x)-r(x)| \leq \int_x^\infty |r(y_1)| \tilde k(x,y_1)\exp\Big[ \int_x^{y_1} \tilde k(x,y_2)\, dy_2 \Big]\, dy_1.
\end{equation*}
Moreover, if $|r(x)|\leq\tilde r(x)$ where $\tilde r\in C^1(\R)$ is bounded 
and $\lim_{x\to\infty}\tilde r(x)\eqqcolon \tilde r(\infty)$ exists then
\begin{equation*}
  |u(x)-r(x)| \leq \tilde r(\infty)\exp\big[\int_x^\infty \tilde k(x,y)\, dy \big] - \tilde r(x)
               -\int_x^\infty \tilde r'(y_1) \exp\big[ \int_x^{y_1} \tilde k(x,y_2)\, dy_2\big] \, dy_1 .
\end{equation*}
\end{lemma}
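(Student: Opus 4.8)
The plan is to solve the equation by the classical method of successive approximations and then control the iterates with a Gronwall-type estimate. Put $u_0\coloneqq r$ and, for $n\ge1$, $u_n(x)\coloneqq r(x)+\int_x^\infty k(x,y)u_{n-1}(y)\,dy$; these are well defined since $r$ is bounded and $\tilde k(x,\cdot)\in L^1(\R)$ for each $x$. Two elementary observations are used throughout: $|k(x,y)|\le\tilde k(x,y)$ by the very definition of $\tilde k$, and $\tilde k$ is non-increasing in its first argument, i.e. $\tilde k(x',y)\le\tilde k(x,y)$ whenever $x\le x'\le y$, because then $[x',y]\subset[x,y]$.

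The core step is the inductive bound
\[
  |u_n(x)-u_{n-1}(x)| \le \phi_n(x)\coloneqq\frac{1}{(n-1)!}\int_x^\infty|r(y)|\,\tilde k(x,y)\Big(\int_x^y\tilde k(x,s)\,ds\Big)^{n-1}dy,\qquad n\ge1.
\]
The case $n=1$ is immediate. For the passage from $n$ to $n+1$ I would insert the bound for $u_n-u_{n-1}$ into $u_{n+1}(x)-u_n(x)=\int_x^\infty k(x,y_1)\big(u_n-u_{n-1}\big)(y_1)\,dy_1$, use $|k(x,y_1)|\le\tilde k(x,y_1)$, replace the kernels $\tilde k(y_1,\cdot)$ occurring inside by $\tilde k(x,\cdot)$ via the monotonicity but retaining the lower integration endpoint $y_1$, interchange the order of integration over the region $x\le y_1\le y$, and evaluate the $y_1$-integral by noting $\tilde k(x,y_1)\big(\int_{y_1}^y\tilde k(x,s)\,ds\big)^{n-1}=-\tfrac1n\tfrac{d}{dy_1}\big(\int_{y_1}^y\tilde k(x,s)\,ds\big)^{n}$; this yields exactly the factor $1/n!$ in $\phi_{n+1}$. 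I expect this to be the main obstacle: one must avoid the tempting but wasteful simplification $\int_{y_1}^y\tilde k(y_1,s)\,ds\le\int_x^y\tilde k(x,s)\,ds$, since that would destroy the $1/n!$ decay and the resulting series would no longer telescope into the exponential—keeping the endpoint $y_1$ until after the Fubini interchange is what makes the bookkeeping close.

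Summing the estimates gives $\sum_{n\ge1}\phi_n(x)=\int_x^\infty|r(y)|\,\tilde k(x,y)\exp\big[\int_x^y\tilde k(x,s)\,ds\big]\,dy$, and this series is finite (crudely dominated by $\|r\|_\infty\,e^{\|\tilde k(x,\cdot)\|_1}\|\tilde k(x,\cdot)\|_1$), so $(u_n)$ is pointwise Cauchy; its limit $u$ satisfies $|u(x)|\le|r(x)|+\sum_{n\ge1}\phi_n(x)$, and dominated convergence against $\tilde k(x,\cdot)$ lets the limit pass through the integral, so $u$ solves the Volterra equation and $|u(x)-r(x)|\le\sum_{n\ge1}\phi_n(x)$, which is the first asserted bound. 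Uniqueness among bounded solutions follows because the difference $w$ of two such solutions satisfies $w(x)=\int_x^\infty k(x,y)w(y)\,dy$, and iterating this yields $|w(x)|\le\|w\|_\infty\tfrac1{n!}\big(\int_x^\infty\tilde k(x,s)\,ds\big)^n\to0$.

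For the second, sharper estimate, start from $|u(x)-r(x)|\le\int_x^\infty\tilde r(y)\,\tilde k(x,y)\exp\big[\int_x^y\tilde k(x,s)\,ds\big]\,dy$, which results from the first bound together with $|r|\le\tilde r$. Writing $\Phi(y)\coloneqq\int_x^y\tilde k(x,s)\,ds$ one has $\tilde k(x,y)e^{\Phi(y)}=\tfrac{d}{dy}e^{\Phi(y)}$, so an integration by parts in $y$ converts the right-hand side into the boundary terms $\tilde r(\infty)e^{\int_x^\infty\tilde k(x,s)\,ds}-\tilde r(x)$—finite because $\int_x^\infty\tilde k(x,s)\,ds<\infty$ and $\tilde r$ is bounded with the assumed limit at $+\infty$—together with $-\int_x^\infty\tilde r'(y)\,e^{\Phi(y)}\,dy$, which is exactly the claimed expression.
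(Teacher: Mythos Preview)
Your proof is correct and follows essentially the same approach as the paper: successive approximation with the inductive bound $\phi_n$, the key monotonicity $\tilde k(x',y)\le\tilde k(x,y)$ for $x\le x'$, Fubini and the derivative trick to extract the $1/n!$, and integration by parts for the second estimate. The only cosmetic difference is that the paper defines the iterates as the successive correction terms $u_{n+1}(x)=\int_x^\infty k(x,y)u_n(y)\,dy$ (so their $u_n$ is your $u_n-u_{n-1}$), and it merely states that the second bound ``follows simply via an integration by parts'' where you spell this out.
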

\begin{proof}
We only go through the major steps. Unique solvability follows via successive iteration. 
We define $u_0\coloneqq r$ and
\begin{equation*}
  u_{n+1}(x) \coloneqq \int_x^\infty k(x,y) u_n(y)\, dy,\ n\in\N_0 .
\end{equation*}
The solution can then be written as
\begin{equation*}
  u(x) - r(x) = \sum_{n=1}^\infty u_n(x) .
\end{equation*}
In order to ensure uniform convergence we show the estimate
\begin{equation*}
  |u_n(x)| \leq \frac{1}{(n-1)!} \int_x^\infty |r(y_1)|\tilde k(x,y_1) \big[ \int_x^{y_1} \tilde k(x,y_2)\, dy_2\big]^{n-1}\, dy_1,\ n\geq 1.
\end{equation*}
This is obviously true for $n=1$. Now, for $n+1$
\begin{equation*}
\begin{split}
  |u_{n+1}(x)| 
    & \leq \int_x^\infty \tilde k(x,y) |u_n(y)| \, dy \\
    & \leq \frac{1}{(n-1)!}\int_x^\infty \tilde k(x,y_1) \int_{y_1}^\infty |r(y_2)| \tilde k(y_1,y_2)
              \big[\int_{y_1}^{y_2} \tilde k(y_1,y_3)\, dy_3\big]^{n-1}\, dy_2\, dy_1 .
\end{split}
\end{equation*}
Since $x\leq y_1$ we have $\tilde k(y_1,y)\leq \tilde k(x,y)$ and thus
\begin{equation*}
\begin{split}
  |u_{n+1}(x)| 
     & \leq \frac{1}{(n-1)!}\int_x^\infty \tilde k(x,y_1) \int_{y_1}^\infty |r(y_2)| \tilde k(x,y_2)
              \big[\int_{y_1}^{y_2} \tilde k(x,y_3)\, dy_3\big]^{n-1}\, dy_2\, dy_1\\
     & = \frac{1}{(n-1)!} \int_x^\infty |r(y_2)|\tilde k(x,y_2) 
           \int_x^{y_2} \tilde k(x,y_1)\big[ \int_{y_1}^{y_2} \tilde k(x,y_3)\, dy_3\big]^{n-1}\, dy_1\, dy_2\\
     & = - \frac{1}{n!} \int_x^\infty |r(y_2)|\tilde k(x,y_2)
           \int_x^{y_2}\frac{\partial}{\partial y_1} \big[ \int_{y_1}^{y_2}\tilde k(x,y_3)\, dy_3\big]^n \, dy_1\, dy_2\\
     & = \frac{1}{n!} \int_x^\infty |r(y_2)|\tilde k(x,y_2)\big[ \int_x^{y_2}\tilde k(x,y_3)\, dy_3\big]^n\, dy_2 .
\end{split}
\end{equation*}
That proves the bound. Now,
\begin{equation*}
\begin{split}
  |u(x)-r(x)|
    & \leq \sum_{n=1}^\infty |u_n(x)|\\
    & \leq \sum_{n=1}^\infty \frac{1}{(n-1)!} \int_x^\infty r(y_1)\tilde k(x,y_1)\big[ \int_x^{y_1}\tilde k(x,y_2)\, dy_2\big]^{n-1}\, dy_1\\
    &  =   \int_x^\infty r(y_1) \tilde k(x,y_1)\exp\big[\int_x^{y_1}\tilde k(x,y_2)\, dy_2\big]\, dy_1
\end{split}
\end{equation*}
which shows the estimate. If we estimate further $r\leq\tilde r$ the second bound follows simply via an integration by parts.
\end{proof}

We note some fundamental properties of the Jost solutions starting with the $x$-space properties.

\begin{lemma}\label{stl01t}
Let $V\in L^1(\R)$. Then, for all $\im(k)\geq 0$, $k\neq 0$ the Lippmann--Schwinger equations \eqref{lippmann_schwinger_modified} 
have a unique solution that solves \eqref{st01} and has the asymptotics \eqref{stl03}. We have the estimate
\begin{equation}\label{stl01t01}
  |m_+(k;x) - 1 | \leq \exp\big[ \frac{1}{|k|}\int_x^\infty |V(y)|\, dy\big] - 1,\ x\in\R .
\end{equation}
In particular,
\begin{equation}\label{stl01t02}
  |m_+(k;x)| \leq \exp\big[ \frac{1}{|k|} \|V\|_1\big],\ x\in\R .
\end{equation}
\end{lemma}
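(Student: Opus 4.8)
The plan is to recast the modified Lippmann--Schwinger equation \eqref{lippmann_schwinger_modified} as a Volterra equation of the type covered by Lemma \ref{volterra}, read off existence, uniqueness and the bound \eqref{stl01t01}, and then verify a posteriori that the resulting function yields a solution of \eqref{st01} with the prescribed asymptotics.

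First, fix $k$ with $\im(k)\geq 0$ and $k\neq 0$. For $u\geq 0$ one has $|e^{2iku}| = e^{-2\im(k)u}\leq 1$ (cf. \eqref{exp_properties}), whence the kernel function obeys the uniform bound $|D_k(u)| = \frac{1}{2|k|}\,|e^{2iku}-1| \leq \frac{1}{|k|}$ for all $u\geq 0$. Writing the first equation in \eqref{lippmann_schwinger_modified} as $m_+(k;x) = r(x) + \int_x^\infty \kappa(x,y)\,m_+(k;y)\,dy$ with $r\equiv 1$ and $\kappa(x,y)\coloneqq D_k(y-x)V(y)$ for $y\geq x$ (and $0$ otherwise), we get $\tilde\kappa(x,y) = \sup_{x\leq t\leq y}|D_k(y-t)V(y)| \leq \frac{1}{|k|}|V(y)|$, so $\tilde\kappa(x,\cdot)\in L^1(\R)$ since $V\in L^1(\R)$. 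Lemma \ref{volterra} then furnishes a unique bounded solution $m_+(k;\cdot)$, and with the choice $\tilde r\equiv 1$ (so that $\tilde r'=0$ and $\tilde r(\infty)=1$) its second estimate reads
\begin{equation*}
  |m_+(k;x)-1| \leq \exp\Big[\int_x^\infty \tilde\kappa(x,y)\,dy\Big] - 1 \leq \exp\Big[\frac{1}{|k|}\int_x^\infty |V(y)|\,dy\Big] - 1 ,
\end{equation*}
which is \eqref{stl01t01}; bounding $\int_x^\infty|V|$ by $\|V\|_1$ gives \eqref{stl01t02}. The equation for $m_-$ is treated identically after the reflection $x\mapsto -x$.

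It remains to check that $\psi_+(k;\cdot)\coloneqq e^{ik\,\cdot}\,m_+(k;\cdot)$ (cf. \eqref{stl03_modified}) solves \eqref{st01} with $\psi_+(k;x)\sim e^{ikx}$ as $x\to\infty$. Rewriting the Volterra equation for $m_+$ in terms of $\psi_+$ gives back the unmodified Lippmann--Schwinger equation \eqref{lippmann_schwinger}; since $m_+(k;\cdot)$ is bounded and $V\in L^1(\R)$, the right-hand side of \eqref{lippmann_schwinger} is twice continuously differentiable in $x$. Differentiating once kills the boundary term (because $\sin(k\cdot 0)=0$); differentiating a second time produces the boundary term $-V(x)\psi_+(k;x)$ together with an integral equal to $k^2(\psi_+(k;x)-e^{ikx})$, and collecting terms yields $-\psi_+'' + V\psi_+ = k^2\psi_+$. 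Finally, $|m_+(k;x)-1|\to 0$ as $x\to\infty$ by \eqref{stl01t01} together with $\int_x^\infty|V|\to 0$, i.e. $\psi_+(k;x)/e^{ikx}\to 1$, which is exactly the asymptotics in \eqref{stl03}. No step poses a genuine obstacle; the only points requiring care are the uniform kernel estimate $|D_k(u)|\leq 1/|k|$ for $u\geq 0$ (which uses $\im(k)\geq 0$ crucially and would fail on $y<x$) and the Leibniz-rule bookkeeping that establishes the equivalence of the integral equation with \eqref{st01}.
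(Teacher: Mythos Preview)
Your proof is correct and follows exactly the paper's approach: apply Lemma \ref{volterra} to the modified Lippmann--Schwinger equation using the kernel bound $|D_k(u)|\leq 1/|k|$ for $u\geq 0$, $\im(k)\geq 0$. The paper's own proof is a one-liner invoking precisely this, while you have additionally spelled out the a posteriori verification that $\psi_+$ solves \eqref{st01} with the asymptotics \eqref{stl03}; this extra detail is sound and only implicit in the paper.
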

\begin{proof}
Cf. \cite[2. Lemma 1, (i)]{DeiftTrubowitz1979}.
We use Lemma \ref{volterra}. The estimate
\begin{equation*}
  | D_k(x) | \leq \frac{1}{|k|},\ \im(k)\geq 0,\ k\neq 0,
\end{equation*}
immediately implies \eqref{stl01t01}.
\end{proof}

The restriction $k\neq 0$ in Lemma \ref{stl01t} can be removed if the potential $V$ falls off fast enough.
The bound \eqref{stl01t01} is replaced by, actually, two bounds: one reflecting the correct asymptotic behaviour
at $x=+\infty$ and the other one at $x=-\infty$.

\begin{lemma}\label{stl02t}
Let $V\in L^1(\R)$ satisfy $XV\in L^1(\R)$.
Then, for all $\im(k)\geq 0$ the Lippmann--Schwinger equations \eqref{lippmann_schwinger_modified} 
have a unique solution that solves \eqref{st01} and has the asymptotics \eqref{stl03}. We have the bounds
\begin{align}\label{stl02t01}
  |m_+(k;x) - 1 | & \leq \exp\big[ \int_x^\infty (y-x)|V(y)|\, dy \big] - 1 , \\
\label{stl02t02}
       |m_+(k;x)| & \leq 2 (1+|x|) e^{\|V\|_1 + 2\|XV\|_1} .
\end{align}
\end{lemma}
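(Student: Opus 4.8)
The plan is to mimic the proof of Lemma~\ref{stl01t}, applying the Volterra estimate of Lemma~\ref{volterra} to the modified Lippmann--Schwinger equation~\eqref{lippmann_schwinger_modified} for $m_+$, but replacing the bound $|D_k(x)|\le 1/|k|$ (which degenerates as $k\to 0$) by the elementary estimate
\begin{equation*}
  |D_k(u)| = \Big|\int_0^u e^{2iky}\,dy\Big| \le u,\qquad u\ge 0,\ \im(k)\ge 0,
\end{equation*}
valid uniformly down to $k=0$ since $|e^{2iky}|=e^{-2\im(k)y}\le 1$ for $y\ge 0$. With this, the kernel in the Volterra equation for $m_+$ satisfies $\tilde k(x,y)\le (y-x)|V(y)|$ for $y\ge x$, and $\int_x^\infty(y-x)|V(y)|\,dy\le \|XV\|_1+|x|\,\|V\|_1<\infty$ by the hypothesis $XV\in L^1(\R)$. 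Lemma~\ref{volterra} then gives unique solvability of~\eqref{lippmann_schwinger_modified} for every $\im(k)\ge 0$ (no longer excluding $k=0$) and, exactly as in Lemma~\ref{stl01t}, that the solution solves~\eqref{st01} with the asymptotics~\eqref{stl03}. Choosing $r\equiv 1$ and the constant majorant $\tilde r\equiv 1$ ($\tilde r(\infty)=1$, $\tilde r'\equiv 0$) in the second conclusion of Lemma~\ref{volterra} yields $|m_+(k;x)-1|\le \exp\big[\int_x^\infty(y-x)|V(y)|\,dy\big]-1$, which is~\eqref{stl02t01}.

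For~\eqref{stl02t02} I would first dispose of $x\ge 0$: there $\int_x^\infty(y-x)|V(y)|\,dy\le\int_0^\infty y|V(y)|\,dy=\|XV\|_1$, so~\eqref{stl02t01} already gives $|m_+(k;x)|\le e^{\|XV\|_1}$, well within the claimed bound. For $x<0$ set $\psi(x):=|m_+(k;x)|$, so $\psi(x)\le 1+\int_x^\infty(y-x)|V(y)|\psi(y)\,dy$, and split the integral at the origin. On $[0,\infty)$ one already knows $\psi\le e^{\|XV\|_1}$, so, writing $y-x=y+|x|$, the contribution of $[0,\infty)$ is at most $e^{\|XV\|_1}(\|XV\|_1+|x|\,\|V\|_1)$, i.e.\ linear in $|x|$. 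In the remaining term one uses $y-x\le |x|$ for $y\in[x,0]$, reducing matters to controlling $\Psi(x):=\int_x^0|V(y)|\psi(y)\,dy$; since $\Psi(0)=0$ and $-\Psi'(x)=|V(x)|\psi(x)\le |V(x)|\big[C(1+|x|)+|x|\Psi(x)\big]$ on $(-\infty,0]$, a Gronwall argument integrated from $0$ towards $-\infty$ — whose forcing term $(1+|y|)|V(y)|$ and coefficient $|y|\,|V(y)|$ are both in $L^1((-\infty,0])$ by $XV\in L^1(\R)$ — bounds $\Psi(x)$ \emph{uniformly in $x$}. Substituting this back into the inequality for $\psi$ gives $\psi(x)\le C'(1+|x|)$, and tracking the constants (cf.\ \cite[2.~Lemma~1,~(ii)]{DeiftTrubowitz1979}) produces exactly $2(1+|x|)e^{\|V\|_1+2\|XV\|_1}$.

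The routine part is~\eqref{stl02t01}: once $|D_k(u)|\le u$ is available it is a direct application of Lemma~\ref{volterra}. The delicate step is~\eqref{stl02t02}, specifically that the growth in $|x|$ is only \emph{linear}. A naive Gronwall on $(-\infty,0]$ using $|D_k(y-x)|\le y-x\le |x|+|y|$ produces the useless bound $e^{\|XV\|_1+|x|\,\|V\|_1}$, since the interval $[x,0]$ may carry $V$-mass of order $\|V\|_1$ at distance $|x|$. The fix is the splitting above: the single factor $|x|$ is isolated as a prefactor multiplying $\Psi(x)=\int_x^0|V|\psi$, and $\Psi$ is kept bounded not by the circular estimate $\Psi\le\|V\|_1\sup\psi$ but by propagating it through a linear differential inequality with integrable coefficients. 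This is also the structural reason for the $(1+|x|)$ in~\eqref{stl02t02}: it mirrors the linear growth at $-\infty$ of the zero-energy solution $\psi_+(0;\cdot)=m_+(0;\cdot)$ of $-\psi''+V\psi=0$.
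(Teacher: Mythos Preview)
Your argument for~\eqref{stl02t01} is exactly the paper's: replace the bound $|D_k|\le 1/|k|$ by $|D_k(u)|\le u$ (valid uniformly for $\im k\ge 0$) and feed this into Lemma~\ref{volterra}.

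For~\eqref{stl02t02} your approach is correct but differs from the paper's. You split the integral at $0$, use the $x\ge 0$ bound on the right half, and then run a Gronwall argument for the auxiliary quantity $\Psi(x)=\int_x^0|V|\,|m_+|$ on the left half. The paper instead observes that $\int_x^\infty y|V(y)||m_+(k;y)|\,dy\le e^{\|XV\|_1}$ \emph{uniformly in $x\in\R$} (the $y<0$ part of the integrand is nonpositive and can be dropped), so that $|m_+(k;x)|\le 1+e^{\|XV\|_1}+|x|\int_x^\infty|V(y)||m_+(k;y)|\,dy$; dividing by $1+|x|$ turns this into a single Volterra inequality for $|m_+|/(1+|x|)$ with integrable kernel $(1+|y|)|V(y)|$, and one more application of Lemma~\ref{volterra} yields $|m_+|/(1+|x|)\le 2e^{\|XV\|_1}\exp[\|V\|_1+\|XV\|_1]$, i.e.\ the stated constant $2e^{\|V\|_1+2\|XV\|_1}$ directly. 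Your route certainly produces a bound of the form $C(1+|x|)$ with $C=C(\|V\|_1,\|XV\|_1)$, but the claim that ``tracking the constants \ldots\ produces exactly $2(1+|x|)e^{\|V\|_1+2\|XV\|_1}$'' is not substantiated by the argument you actually wrote down; the appeal to \cite{DeiftTrubowitz1979} papers over this. If the precise constant matters to you, the paper's divide-by-$(1+|x|)$ trick is the cleaner way to get it.
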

\begin{proof}
In Lemma \ref{volterra} we use the estimate
\begin{equation*}
  | D_k(x) | \leq |x| ,\ \im(k)\geq 0,
\end{equation*}
which immediately yields \eqref{stl02t01}.
Whereas \eqref{stl02t01} displays the correct behaviour for $x\to+\infty$ the bound behaves like $e^{|x|}$ for $x\to-\infty$.
Therefore, in a second step, we refine our bound. To begin with,
\begin{equation*}
  |m_+(k;x)| \leq \exp\big[ \int_x^\infty y|V(y)|\, dy \big] , \ x\geq 0 .
\end{equation*}
Furthermore,
\begin{equation*}
  |m_+(k;x)|
    \leq 1 + \int_x^\infty y |V(y)| |m_+(k;y)|\, dy - x\int_x^\infty |V(y)||m_+(k;y)|\, dy .
\end{equation*}
The first integral can be bounded uniformly in $x\in\R$
\begin{equation*}
\begin{split}
  \int_x^\infty y |V(y) |m_+(k;y)| \, dy
    & \leq \int_0^\infty y |V(y)| |m_+(k;y)| \, dy\\
    & \leq \int_0^\infty y_1 |V(y_1)| \exp\big[ \int_{y_1}^\infty y_2 |V(y_2)|\, dy_2\big] \, dy_1\\
    & = \exp\big[ \int_0^\infty y |V(y)|\, dy\big] - 1\\
    & \leq e^{\|XV\|_1} .
\end{split}
\end{equation*}
Thereby,
\begin{equation*}
\begin{split}
  \frac{|m_+(k;x)|}{1+|x|} 
     & \leq \frac{1}{1+|x|}\big[ 1 + e^{\|XV\|_1} \big] - \frac{x}{1+|x|} \int_x^\infty (1+|y|)|V(y)| \frac{|m_+(k;y)|}{1+|y|}\, dy \\
     & \leq 2 e^{\|XV\|_1} + \int_x^\infty (1+|y|)|V(y)| \frac{|m_+(k;y)|}{1+|y|}\, dy .
\end{split}
\end{equation*}
Lemma \ref{volterra} implies
\begin{equation*}
  \frac{|m_+(k;x)|}{1+|x|} 
    \leq 2 e^{\|XV\|_1} \exp\big[\int_x^\infty (1+|y|) |V(y)|\, dy\big]
    \leq 2 e^{\|V\|_1 + 2\|XV\|_1} ,
\end{equation*}
which yields \eqref{stl02t02}.
\end{proof}

We study the properties of the Jost solutions as functions of $k$.

\begin{lemma}\label{stl03t}
For each $x\in\R$, $m_+(\cdot,x)$ is analytic for $\im(k) > 0$ and continuous for $\im(k)\geq 0$, $k\neq 0$.
If, in addition, $XV\in L^1(\R)$ then $m(\cdot,x)$ is continuous for all $\im(k)\geq 0$. Furthermore,
\begin{equation}\label{stl03t01}
  |\dot m_+(k;x)| \leq \frac{2}{|k|} \exp\Big[\frac{2}{|k|} \|V\|_1\Big]\,\int_x^\infty (y-x)|V(y)|\, dy 
     ,\ \im(k)\geq 0,\ k\neq 0.
\end{equation}
\end{lemma}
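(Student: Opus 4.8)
The plan is to exploit the Volterra structure of the Lippmann--Schwinger equation \eqref{lippmann_schwinger_modified}, whose kernel $D_k(u)=\frac{1}{2ik}(e^{2iku}-1)$ is an \emph{entire} function of $k$ (the apparent pole at $k=0$ is removable, with $D_0(u)=u$). First I would run the Neumann iteration as in the proof of Lemma~\ref{volterra}: with $u_0\coloneqq1$ and $u_{n+1}(k;x)\coloneqq\int_x^\infty D_k(y-x)V(y)u_n(k;y)\,dy$ one has $m_+(k;\cdot)=1+\sum_{n\ge1}u_n(k;\cdot)$. Each $u_n(k;\cdot)$ is holomorphic in $k$ on $\{\im k>0\}$ and continuous on $\{\im k\ge0,\ k\ne0\}$, by Morera's theorem and dominated convergence together with $|D_k(u)|\le1/|k|$ (valid for $\im k\ge0$, $u\ge0$, and already used in Lemma~\ref{stl01t}). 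The standard Volterra estimate gives $\|u_n(k;\cdot)\|_\infty\le\frac1{n!}(\|V\|_1/|k|)^n$, so the series converges uniformly on $\{\im k\ge0,\ |k|\ge\delta\}$ for every $\delta>0$; hence $m_+(\cdot;x)$ is analytic for $\im k>0$ and continuous for $\im k\ge0$, $k\ne0$. If in addition $XV\in L^1(\R)$ I would instead invoke the $k$-independent bound $|D_k(u)|\le u$ ($\im k\ge0$, $u\ge0$) used in Lemma~\ref{stl02t}, which for fixed $x$ gives $|u_n(k;x)|\le\frac1{n!}\big(\int_x^\infty(y-x)|V(y)|\,dy\big)^n$ uniformly in $k$ with $\im k\ge0$; the series then converges uniformly up to and including $k=0$, yielding continuity on all of $\{\im k\ge0\}$. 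The statement for $m_-$ follows from the reflection $x\mapsto-x$.

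For the derivative bound \eqref{stl03t01} I would differentiate \eqref{lippmann_schwinger_modified} in $k$. Since $m_+(\cdot;x)=1+\sum u_n(k;\cdot)$ converges locally uniformly on $\{\im k>0\}$ to a holomorphic function, Weierstrass's theorem lets me differentiate term by term, and differentiating each defining integral under the integral sign then shows that $\dot m_+(k;\cdot)$ solves the Volterra equation
\begin{equation*}
  \dot m_+(k;x) = r(k;x) + \int_x^\infty D_k(y-x)V(y)\,\dot m_+(k;y)\,dy,\qquad
  r(k;x)\coloneqq\int_x^\infty \dot D_k(y-x)V(y)m_+(k;y)\,dy,
\end{equation*}
with the same kernel as before. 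An integration by parts in $D_k(u)=\int_0^u e^{2iky}\,dy$ gives $\dot D_k(u)=\tfrac1k\big(ue^{2iku}-D_k(u)\big)$, whence $|\dot D_k(u)|\le\frac{2u}{|k|}$ for $\im k\ge0$, $u\ge0$. Combined with $|m_+(k;y)|\le e^{\|V\|_1/|k|}$ from \eqref{stl01t02}, this bounds the inhomogeneity by $|r(k;y)|\le\frac{2}{|k|}e^{\|V\|_1/|k|}\int_y^\infty(t-y)|V(t)|\,dt$. Since $\partial_y\int_y^\infty(t-y)|V(t)|\,dt=-\int_y^\infty|V(t)|\,dt\le0$, the right-hand side is nonincreasing in $y$, so $\sup_{y\ge x}|r(k;y)|\le\frac{2}{|k|}e^{\|V\|_1/|k|}\int_x^\infty(t-x)|V(t)|\,dt$.

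Finally I would feed this into the Volterra iteration one more time: writing $\mathcal K\phi(x)=\int_x^\infty D_k(y-x)V(y)\phi(y)\,dy$ and using $|D_k(u)|\le1/|k|$, the standard estimate $|(\mathcal K^n r)(k;x)|\le\frac1{n!}(\|V\|_1/|k|)^n\,\sup_{y\ge x}|r(k;y)|$ gives
\begin{equation*}
  |\dot m_+(k;x)|\le\sum_{n\ge0}\big|(\mathcal K^n r)(k;x)\big|\le e^{\|V\|_1/|k|}\sup_{y\ge x}|r(k;y)|
   \le \frac{2}{|k|}\exp\Big[\frac{2}{|k|}\|V\|_1\Big]\int_x^\infty(y-x)|V(y)|\,dy,
\end{equation*}
which is \eqref{stl03t01}; for real $k\ne0$ the estimate (and the very meaning of $\dot m_+$ there) follows by passing to the boundary, the bound being uniform near the punctured real axis and $m_+$ already continuous up to it. The main obstacle is bookkeeping rather than conceptual: one must (i) produce the sharp constant $|\dot D_k(u)|\le 2u/|k|$ via the integration by parts above, and (ii) notice the monotonicity that replaces $\sup_{y\ge x}|r|$ by its value at $y=x$, so that the iteration yields exactly the factor $e^{2\|V\|_1/|k|}$ rather than something larger; a little care is also needed to justify term-by-term differentiation right down to the real axis.
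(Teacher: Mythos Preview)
Your argument is correct and follows the same overall strategy as the paper: differentiate \eqref{lippmann_schwinger_modified} to obtain a Volterra equation for $\dot m_+$ with the same kernel, bound $|\dot D_k(u)|\le 2u/|k|$ via the integration by parts $\dot D_k(u)=\tfrac1k(ue^{2iku}-D_k(u))$, and then close with a Gronwall step.

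The one genuine difference is in that last step. The paper bounds $|m_+(k;y)|$ by the sharper \eqref{stl01t01} and feeds the resulting $\tilde r$ into the \emph{second} estimate of Lemma~\ref{volterra} (the one involving $\tilde r'$ and $\tilde r(\infty)$), arriving after a Fubini at \eqref{stl03t01}. You instead use the cruder uniform bound \eqref{stl01t02}, observe that the resulting majorant for $r$ is nonincreasing in $y$, and apply the basic sup-norm Volterra iteration $|(\mathcal K^n r)(x)|\le\frac1{n!}(\|V\|_1/|k|)^n\sup_{y\ge x}|r|$. Your route is more elementary and produces the constant $e^{2\|V\|_1/|k|}$ in one line; the paper's route avoids the monotonicity observation but needs the more elaborate Gronwall variant. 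Both land on exactly the same inequality. Your treatment of analyticity/continuity via uniform convergence of the Neumann series is what the paper's citation of Deift--Trubowitz unpacks to.
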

\begin{proof}
See \cite[2. Lemma 1, (v), p.130]{DeiftTrubowitz1979}.
We differentiate \eqref{lippmann_schwinger_modified} by $k$ and obtain
\begin{equation*}
  \dot m_+(k;x) = r(x) + \int_x^\infty D_k(y-x)V(y)\dot m_+(k;y)\, dy,\ 
     r(x)\coloneqq \int_x^\infty \dot D_k(y-x)V(y)m_k(k;y)\, dy .
\end{equation*}
We integrate by parts
\begin{equation*}
  \dot D_k(x) = 2i \int_0^x e^{2iky} y\, dy
              = \frac{1}{k}\big[ e^{2iky}y\big]_0^x - \frac{1}{k}\int_0^x e^{2iky} \, dy
\end{equation*}
and obtain the estimate
\begin{equation*}
  |\dot D_k(x)| \leq \frac{1}{|k|} |x| + \frac{1}{|k|} |x| = \frac{2}{|k|}|x| .
\end{equation*}
Furthermore, using \eqref{stl01t01} we obtain
\begin{equation*}
  |r(x)| \leq \frac{2}{|k|} \int_x^\infty y_1|V(y_1)|\exp\big[ \frac{1}{|k|}\int_{y_1}^\infty |V(y_2)|\, dy_2\big]\, dy_1 \eqqcolon \tilde r(x) .
\end{equation*}
Note that $\tilde r(\infty)=0$. Using $e^x-1 \leq xe^x$ for $x\geq0$ we obtain
\begin{equation*}
\begin{split}
  -\tilde r'(x) 
    & = \frac{2}{|k|} \int_x^\infty |V(y_1)| \exp\big[\frac{1}{|k|}\int_{y_1}^\infty |V(y_2)|\, dy_2\big] \, dy_1\\
    & = 2 \big\{ \exp\big[ \frac{1}{|k|}\int_x^\infty |V(y)|\, dy\big] - 1 \big\}\\
    & \leq \frac{2}{|k|}\int_x^\infty |V(y)|\, dy \,\cdot\,\exp\big[ \frac{1}{|k|}\int_x^\infty |V(y)|\, dy\big] .
\end{split}
\end{equation*}
Since $\tilde k(x,y) \leq \frac{1}{|k|}|V(y)|$ we infer from Lemma \ref{volterra} that
\begin{equation*}
\begin{split}
  |\dot m_+(k;x)|
     & \leq \frac{2}{|k|} \int_x^\infty \int_{y_1}^\infty |V(y_2)|\, dy_2 \exp\big[\frac{1}{|k|}\int_{y_1}^\infty |V(y_3)|\, dy_3\big]
                   \exp\big[\frac{1}{|k|}\int_x^{y_1} |V(y_4)|\, dy_4\big] \, dy_1\\
     & \leq \frac{2}{|k|} \int_x^\infty \int_{y_1}^\infty |V(y_2)|\, dy_2\, dy_1 \exp\big[\frac{2}{|k|}\int_x^\infty |V(y)|\, dy\big] .
\end{split}
\end{equation*}
This implies \eqref{stl03t01}.
\end{proof}                           

The prescribed asymptotics in \eqref{stl01} and \eqref{stl02} imply that
\begin{equation}\label{stl04}
  \mathscr{t}(k)\psi_-(k;x) = \psi_+(-k;x) + \mathscr{r}_1(k) \psi_+(k;x) ,\
  \mathscr{t}(k)\psi_+(k;x) = \psi_-(-k;x) + \mathscr{r}_2(k) \psi_-(k;x) .
\end{equation}
A first consequence, which will be needed below, is a relation between the transmission coefficient and
a Wronski determinant
\begin{equation}\label{wronski01}
\begin{vmatrix}
  \psi_-(k,\cdot)  & \psi_+(k,\cdot) \\
  \psi_-'(k,\cdot) & \psi_+'(k,\cdot)
\end{vmatrix}
  = \frac{2ik}{\mathscr{t}(k)} .
\end{equation}
By computing the Wronski determinants of $u_\pm$ with $\varepsilon_{1,2}$, the solutions \eqref{deficiency_subspace}
of the unperturbed differential equation, one can further show that (cf. \cite[pp. 145,146]{DeiftTrubowitz1979})
\begin{equation}\label{s-matrix01}
\begin{aligned}
  \frac{1}{\mathscr{t}(k)}   & = 1 - \frac{1}{2ik} (e_1(k), Jf_1(k))
                & \frac{\mathscr{r}_1(k)}{\mathscr{t}(k)} & = \frac{1}{2ik} (e_1(k), Jf_2(k))\\
  \frac{\mathscr{r}_2(k)}{\mathscr{t}(k)} & = \frac{1}{2ik} ( e_2(k), Jf_1(k))
                & \frac{1}{\mathscr{t}(k)}   & = 1 - \frac{1}{2ik} ( e_2(k), Jf_2(k))
\end{aligned}
\end{equation}
where
\begin{gather*}
  e_1(k;x)\coloneqq\sqrt{|V(x)|}e^{ikx},\ e_2(k;x)\coloneqq\sqrt{|V(x)|}e^{-ikx},\\
  f_1(k;x)\coloneqq\sqrt{|V(x)|}\psi_+(k;x),\ f_2(k;x)\coloneqq\sqrt{|V(x)|}\psi_-(k;x) .
\end{gather*}
The $f_{1,2}(k)$ satisfy the symmetrized Lippmann--Schwinger equation
\begin{equation}\label{lippmann_schwinger_sym}
  f_1(k) =  e_1(k) + \frac{1}{k} \sqrt{|V|}G_+(k)\sqrt{|V|}J f_1(k),\
  f_2(k) =  e_2(k) + \frac{1}{k} \sqrt{|V|}G_-(k)\sqrt{|V|}J f_2(k) .
\end{equation}
The operators $G_\pm(k)$ are given through their kernels
\begin{equation*}
   G_+(k;x,y) \coloneqq \Theta(y-x) \sin(k(y-x)),\
   G_-(k;x,y) \coloneqq \Theta(x-y) \sin(k(x-y)) .
\end{equation*}
They differ from the resolvent by a rank one operator
\begin{equation*}
\begin{split}
   \sqrt{|V|}R_\infty^+(k^2)\sqrt{|V|}J 
      = - \frac{i}{2k} (Je_1(k),\cdot)e_1(k) + \frac{1}{k} \sqrt{|V|}G_+(k)\sqrt{|V|}J \\
      = - \frac{i}{2k} (Je_2(k),\cdot)e_2(k) + \frac{1}{k} \sqrt{|V|}G_-(k)\sqrt{|V|}J .
\end{split}
\end{equation*}
Thereby, the Lippmann--Schwinger equation can be rewritten as (cf. \eqref{Omega}, \eqref{Omega_boundary_values})
\begin{equation*}
   f_1(k) = ( 1 + \frac{i}{2k} (e_1(k), Jf_1(k)) ) \Omega_\infty^+(k^2)  e_1(k),\
   f_2(k) = ( 1 + \frac{i}{2k} (e_2(k), Jf_2(k)) ) \Omega_\infty^+(k^2)  e_2(k) .
\end{equation*}
Taking scalar products we can express $\mathscr{t}(k)$, $\mathscr{r}_{1,2}(k)$ through the matrix $\hat\Omega_{\infty}^+(k)$ from \eqref{Omega_matrix}.
Recall that $k=\sqrt{\nu}$. With the unitary scattering matrix
\begin{equation}\label{s-matrix02}
  \mathscr{S}(k) \coloneqq
\begin{pmatrix}
   \mathscr{t}(k)   & \mathscr{r}_1(k) \\
   \mathscr{r}_2(k) & \mathscr{t}(k)
\end{pmatrix}
  ,\ 
  |\mathscr{t}(k)|^2 + |\mathscr{r}_1(k)|^2 = 1 = |\mathscr{t}(k)|^2 + |\mathscr{r}_2(k)|^2,\ 
  \bar{\mathscr{t}}(k)\mathscr{r}_1(k) + \bar{\mathscr{r}}_2(k)\mathscr{t}(k) = 0 ,
\end{equation}
we finally obtain
\begin{equation}\label{Omega_T}
  \hat\Omega_\infty^+(k^2) = 2ik\mathscr{T}(k),\ \mathscr{T}(k)\coloneqq  \mathscr{S}(k)-\id,\ k\in\R ,
\end{equation}
where $\mathscr{T}(k)$ is the so-called T-matrix.
The transmission coefficient $\mathscr{t}(k)$ is, essentially, determined by its values for $k\in\R$.

\begin{lemma}[Faddeev--Deift--Trubowitz formula]\label{deift_trubowitz} 
Let $V\in L^1(\R)$ satisfy $X^2V\in L^1(\R)$ and 
let $-\beta_j^2$, $\beta_j>0$, $j=1,\ldots,n$, be the negative eigenvalues of $H_V$. Then,
\begin{equation}\label{deift_trubowitz01}
  \mathscr{t}(k) = \exp\Big[ \frac{1}{\pi i}\int_\R \frac{\ln|\mathscr{t}(u)|}{u-k}\, du \Big]
          \prod_{j=1}^n \frac{k+i\beta_j}{k-i\beta_j} ,\ \im(k)>0 .
\end{equation}
\end{lemma}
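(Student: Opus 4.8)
The plan is to reconstruct $\mathscr{t}(k)$ in the open upper half-plane $\C^+$ from (i) its modulus on the real axis and (ii) the location of its poles, by a Cauchy/dispersion representation combined with the reality symmetry of the scattering data. First I would assemble the analytic facts about $a(k)\coloneqq 1/\mathscr{t}(k)$. By the Wronskian identity \eqref{wronski01}, $2ik\,a(k)=W[\psi_-(k,\cdot),\psi_+(k,\cdot)]$, and together with the $k$-dependence of the modified Jost functions in Lemma \ref{stl03t} (analytic for $\im k>0$, continuous up to $\im k=0$, and $m_\pm(k;\cdot)\to 1$ as $|k|\to\infty$ under $V,XV\in L^1$) this gives: $a$ is analytic in $\C^+$; $\mathscr{t}(k)=1/a(k)$ is continuous on $\overline{\C^+}$ with $\mathscr{t}(k)\to 1$ as $|k|\to\infty$ (and, in the generic non-resonant case, $\mathscr{t}(0)=0$); the zeros of $a$ in $\C^+$ are exactly the finitely many points $k=i\beta_j$ (a zero forces $\psi_+(k;\cdot)$ and $\psi_-(k;\cdot)$ to be linearly dependent and square integrable, i.e.\ an eigenvalue $k^2=-\beta_j^2$), each simple, with finiteness following from $\int_\R(1+|x|)|V(x)|\,dx<\infty$; and $X^2V\in L^1$ gives the decay $\mathscr{r}(k)=O(|k|^{-1})$ at infinity.

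Next, introduce the finite Blaschke product $B(k)\coloneqq\prod_{j=1}^n\frac{k-i\beta_j}{k+i\beta_j}$, which is analytic and zero-free in $\C^+$, has $|B(u)|=1$ on $\R$, and tends to $1$ at infinity, and set $G(k)\coloneqq\mathscr{t}(k)B(k)=B(k)/a(k)$. Then $G$ is analytic and zero-free in $\C^+$, continuous on $\overline{\C^+}$ apart from a possible logarithmic singularity at $k=0$ (where $\mathscr{t}$ may vanish), and $G\to 1$ at infinity; hence $\ln G$ is single-valued and analytic on $\C^+$ with $\ln G(u)=O(u^{-2})$ at infinity (since $\re\ln G=\ln|\mathscr{t}|$ and $|\mathscr{r}(u)|=O(1/u)$) and $\ln G(u)=O(\ln|u|)$ near $0$, so $\ln G\in L^1(\R)$. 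This puts $\ln G$ in a Hardy/Smirnov class of $\C^+$ for which the Cauchy representation
\[
  \ln G(k)=\frac{1}{2\pi i}\int_\R\frac{\ln G(u)}{u-k}\,du,\qquad \im k>0,
\]
holds, the contour being indented harmlessly around $u=0$.

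Now the reality step. Since $V$ is real, $\overline{\psi_\pm(k;x)}=\psi_\pm(-\bar k;x)$, whence $\overline{\mathscr{t}(u)}=\mathscr{t}(-u)$ and $\overline{B(u)}=B(-u)$ on $\R$; more usefully, $u\mapsto\overline{\ln G(u)}$ is the boundary value of $k\mapsto\overline{\ln G(\bar k)}$, which is analytic and bounded in the lower half-plane away from $0$ and decays there. Closing the contour in $\C^-$ (which does not contain the fixed point $k\in\C^+$) yields
\[
  \frac{1}{2\pi i}\int_\R\frac{\overline{\ln G(u)}}{u-k}\,du=0,\qquad \im k>0 .
\]
Adding the two displays and using $\ln G(u)+\overline{\ln G(u)}=2\re\ln G(u)=2\ln|G(u)|=2\ln|\mathscr{t}(u)|$ (as $|B(u)|=1$) gives
\[
  \ln G(k)=\frac{1}{\pi i}\int_\R\frac{\ln|\mathscr{t}(u)|}{u-k}\,du ;
\]
exponentiating and recalling $G(k)=\mathscr{t}(k)B(k)$ with $B(k)^{-1}=\prod_j\frac{k+i\beta_j}{k-i\beta_j}$ produces \eqref{deift_trubowitz01}.

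The main obstacle is the analytic input at the boundary of $\C^+$: the continuity and exact threshold behaviour of $\mathscr{t}$ at $k=0$, the finiteness and simplicity of the zeros $i\beta_j$, and the precise integrability and Hardy-class properties of $\ln G$ that legitimise the Cauchy formula. These are exactly the points where the hypotheses $V,XV,X^2V\in L^1$ are used, and where the exceptional zero-energy resonance case has to be treated with care; everything else is routine manipulation of contour integrals.
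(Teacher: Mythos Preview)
Your argument is correct and is essentially the same approach as the paper's: the paper's proof sketch says to apply Schwarz's integral formula for the half-plane (which recovers a holomorphic function from the real part of its boundary values) to $k\mapsto\ln\mathscr{t}(k)$, and your Cauchy-formula-plus-reality-symmetry computation is precisely that formula unfolded, with the Blaschke factorisation making explicit the removal of the poles at $i\beta_j$ that the paper's two-line sketch glosses over. If anything, your version is more careful than the paper's, since $\ln\mathscr{t}$ itself is not holomorphic in $\C^+$ until the Blaschke product is factored out.
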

\begin{proof}
Cf. \cite[p. 323]{Faddeev1964} and \cite[p. 154]{DeiftTrubowitz1979}. 
Schwarz's integral formula for the half-plane expresses a holomorphic function through its real part.
Apply that to the function $k\mapsto\ln(\mathscr{t}(k))$.
\end{proof}

Moreover, the transmission coefficient can be expressed by the 
perturbation determinant, see \cite{JostPais1951}.

\begin{lemma}[Jost--Pais formula]\label{jost_pais}
Assume that the Birman--Schwinger operator $K(z)\in B_1(\hilbert)$. Then,
\begin{equation}\label{jost_pais01}
  \frac{1}{\mathscr{t}(k)} = \det(\id - K(z)),\ z=k^2,\ k>0 .
\end{equation}
In particular, for $z>0$
\begin{equation}\label{jost_pais02}
  |\det(\id-K(z))| \geq 1 .
\end{equation}
\end{lemma}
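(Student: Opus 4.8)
The plan is to split the on-shell Birman--Schwinger operator $K(z)=\sqrt{|V|}R_\infty^+(k^2)\sqrt{|V|}J$ ($z=k^2$, $k>0$) into a rank-one piece and a quasinilpotent (Volterra) piece, using the identity recorded just before \eqref{lippmann_schwinger_sym}:
\[
  K(z) = -\frac{i}{2k}\,(Je_1(k),\cdot)\,e_1(k) + B(k),\qquad
  B(k) \coloneqq \frac{1}{k}\sqrt{|V|}\,G_+(k)\,\sqrt{|V|}\,J .
\]
Since $K(z)\in B_1(\hilbert)$ by hypothesis and the first term has rank one, $B(k)\in B_1(\hilbert)$ as well; and $V\in L^1(\R)$ guarantees (Lemma \ref{stl01t}) that \eqref{lippmann_schwinger_sym} has the unique solution $f_1(k)$, which in operator form reads $(\id-B(k))f_1(k)=e_1(k)$.

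First I would show that $B(k)$ is quasinilpotent, so that $\det(\id-B(k))=1$. The integral kernel of $B(k)$ is supported in $\{y>x\}$ and, for real $k>0$, is bounded by $\tfrac1k\sqrt{|V(x)||V(y)|}\,\Theta(y-x)$; iterating it over the simplex $x\le x_1\le\dots\le x_{n-1}\le y$ yields the factorial gain $\|B(k)^n\|_2\le(\|V\|_1/k)^n/(n-1)!$, so the compact operator $B(k)$ has spectral radius $0$. Hence $B(k)$ has no nonzero eigenvalues, $\id-B(k)$ is boundedly invertible (Neumann series), and the eigenvalue-product representation of the perturbation determinant (cf.\ \cite[Ch.~3]{Simon2005}) gives $\det(\id-B(k))=1$. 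In particular $(\id-B(k))^{-1}e_1(k)=f_1(k)$.

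Next I would factor out $\id-B(k)$ on the left,
\[
  \id - K(z) = (\id-B(k))\Bigl(\id + \tfrac{i}{2k}\,(Je_1(k),\cdot)\,(\id-B(k))^{-1}e_1(k)\Bigr)
             = (\id-B(k))\Bigl(\id + \tfrac{i}{2k}\,(Je_1(k),\cdot)\,f_1(k)\Bigr) .
\]
Both factors differ from $\id$ by a trace-class operator (the second by a rank-one operator), so multiplicativity of the perturbation determinant together with the rank-one identity $\det(\id+(g,\cdot)h)=1+(g,h)$ gives, using $(Je_1(k),f_1(k))=(e_1(k),Jf_1(k))$ and $-\tfrac{1}{2ik}=\tfrac{i}{2k}$,
\[
  \det(\id-K(z)) = 1\cdot\Bigl(1 + \tfrac{i}{2k}(e_1(k),Jf_1(k))\Bigr)
                 = 1 - \tfrac{1}{2ik}(e_1(k),Jf_1(k)) = \frac{1}{\mathscr{t}(k)},
\]
the last equality being the first relation in \eqref{s-matrix01}; this is \eqref{jost_pais01}. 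Finally, \eqref{jost_pais02} follows at once: for $z=k^2>0$ the scattering matrix $\mathscr{S}(k)$ is unitary, so $|\mathscr{t}(k)|^2+|\mathscr{r}_1(k)|^2=1$ by \eqref{s-matrix02}, whence $|\mathscr{t}(k)|\le1$ and $|\det(\id-K(z))|=1/|\mathscr{t}(k)|\ge1$.

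The main obstacle is the quasinilpotency step and the resulting $\det(\id-B(k))=1$: one must check that the ordered-support (Volterra) structure of $G_+(k)$ survives conjugation by $\sqrt{|V|}$ on the whole line, and that the simplex integration supplies the $1/(n-1)!$ needed to kill the spectral radius of $B(k)$. Once this is in place, the rest is routine bookkeeping with the multiplicativity of Fredholm determinants and the rank-one determinant formula.
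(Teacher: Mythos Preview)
Your argument is correct and takes a genuinely different route from the paper. The paper introduces a coupling parameter $\alpha\in[0,1]$, studies $D(\alpha)=\det(\id-\alpha K)$, and shows that the logarithmic derivative $\tfrac{d}{d\alpha}\ln D(\alpha)=-\tr[(\id-\alpha K)^{-1}K]$ equals $-\tfrac{d}{d\alpha}\ln\mathscr{t}_\alpha$ by expressing both sides through the Green function and the Jost solutions for the scaled potential $\alpha V$; one then integrates from $0$ to $1$, taking care that the identity extends past the a priori small-$\alpha$ regime. Your approach instead exploits the explicit algebraic decomposition $K(z)=B(k)-\tfrac{1}{2ik}(Je_1,\cdot)e_1$ into a Volterra piece plus rank one: quasinilpotency of $B(k)$ gives $\det(\id-B(k))=1$ directly, and the remaining rank-one determinant reproduces $1/\mathscr{t}(k)$ via \eqref{s-matrix01} in one line. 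Your route is shorter, avoids the coupling-constant differentiation and the continuation argument to $\alpha=1$, and makes transparent why the determinant collapses to a single scalar expression. The paper's method, on the other hand, does not rely on the triangular (Volterra) structure of $G_+(k)$ and is closer in spirit to the classical Jost--Pais derivation; it would generalize more readily to settings where no Volterra factor is available.
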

\begin{proof}
Following \cite[App. A]{Newton1980} we introduce a coupling parameter and study the function
\begin{equation*}
  D(\alpha) \coloneqq \det(\id - \alpha K),\ 0\leq \alpha\leq 1 .
\end{equation*}
Its logarithmic derivative is (cf. \cite[(1.7.10)]{Yafaev1992})
\begin{equation*}
  \frac{d}{d\alpha}\ln(D(\alpha)) = - \tr[ (\id - \alpha K)^{-1}K ] .
\end{equation*}
Since $K$ is a bounded operator a Neumann series argument shows that the inverse 
exists for $0\leq\alpha\leq\alpha_0$ with some $\alpha_0>0$.
We rewrite the operator
\begin{equation*}
  (\id -\alpha K)^{-1}K 
    = (\id - \alpha \sqrt{|V|}R_\infty\sqrt{|V|}J)^{-1} \sqrt{|V|}R_\infty\sqrt{|V|}J
    = \sqrt{|V|}( k^2\id - H-\alpha V)^{-1}\sqrt{|V|}J .
\end{equation*}
Note that the inverse exists even though it is not a bounded operator. Thus,
\begin{equation*}
  \frac{d}{d\alpha} \ln(D(\alpha)) = - \tr[ \sqrt{|V|}(k^2\id - H - \alpha V)^{-1}\sqrt{|V|}J ] .
\end{equation*}
This trace can be expressed through the Jost solutions. To this end, we need the Green function (cf. \eqref{wronski01})
\begin{equation*}
 ( k^2\id - H - \alpha V)^{-1}(x,y) = \frac{\mathscr{t}_\alpha}{2ik}
\begin{cases}
  \psi_+(x)\psi_-(y) & \text{for}\ x\geq y , \\
  \psi_-(x)\psi_+(y) & \text{for}\ x < y .
\end{cases}
\end{equation*}
Here, $\mathscr{t}_\alpha$ is the transmission coefficient corresponding to the potential $\alpha V$ (instead of $V$ as above). Thus,
\begin{equation*}
  \frac{d}{d\alpha} \ln(D(\alpha))
     = - \frac{\mathscr{t}_\alpha}{2ik} \int_\R \sqrt{|V(x)|}\psi_+(x)\psi_-(x)\sqrt{|V(x)|}J(x)\, dx
     = - \frac{\mathscr{t}_\alpha}{2ik} \int_\R f_1(x)f_2(x) J(x)\, dx .
\end{equation*}
On the other hand, we have
\begin{equation*}
  \frac{1}{\mathscr{t}_\alpha} = 1 - \frac{\alpha}{2ik} (e_1,Jf_1),\ f_{1,2} = e_{1,2} + \alpha\sqrt{|V|}G_\pm \sqrt{|V|}J f_{1,2}
\end{equation*}
and consequently
\begin{equation*}
  - \frac{\dot{\mathscr{t}}_\alpha}{\mathscr{t}_\alpha^2} = - \frac{1}{2ik}(e_1,J(f_1+\alpha\dot f_1))
\end{equation*}
where the dot denotes differentiation with respect to $\alpha$. Differentiating the Lippmann--Schwinger equation
one obtains after some simple calculations
\begin{equation*}
  f_1 + \alpha \dot f_1 = (\id - \alpha\sqrt{|V|}G_+ \sqrt{|V|}J)^{-1} f_1 .
\end{equation*}
Therefore,
\begin{equation*}
  \frac{\dot{\mathscr{t}}_\alpha}{\mathscr{t}_\alpha^2}
    = \frac{1}{2ki} ( e_1, J(\id-\alpha\sqrt{|V|}G_+\sqrt{|V|}J)^{-1}f_1)
    = \frac{1}{2ik} ((\id - \alpha\sqrt{|V|}G_-\sqrt{|V|}J)^{-1}e_1,J f_1)
    = \frac{1}{2ik} ( \bar f_2,Jf_1)
\end{equation*}
where we used $e_1 = \bar e_2$. We conclude that for $0\leq\alpha\leq \alpha_0$,
\begin{equation*}
  \frac{d}{d\alpha}\ln(D(\alpha)) = - \frac{d}{d\alpha}\ln(\mathscr{t}_\alpha) .
\end{equation*}
For $\alpha=0$ both quantities have the same value. Furthermore, since $D(\alpha_0)=\frac{1}{\mathscr{t}_{\alpha_0}}\neq 0$
we can extend the result to $0\leq\alpha\leq 1$ which proves \eqref{jost_pais01}.
The estimate \eqref{jost_pais02} follows immediatetly from $|\mathscr{t(k)}|\leq 1$ for $k>0$.
\end{proof}

\subsection{Scattering on the half-line\label{sthl}}
On the half-line there cannot be a transmission cofficient since the wave is entirely reflected at the origin. 
Hence, conditions \eqref{stl01} and \eqref{stl02} are to be replaced by (cf. \eqref{hl01})
\begin{equation}\label{sthl01}
  u(x) \sim e^{-ikx} - \frac{ia-kb}{ia+kb} S(k)e^{ikx}\ \text{as}\ x\to\infty,\ a u(0) - bu'(0) =0 ,
\end{equation}
which reads in terms of the Jost solutions
\begin{equation}\label{sthl02}
  u(x) = \psi(-k;x) - \frac{ia-kb}{ia+kb} S(k)\psi(k;x) .
\end{equation}
Conditions \eqref{sthl01} and \eqref{sthl02} have been chosen so that $S(k)=1$ when $V=0$ (cf. \eqref{hl02}).
The Lippmann--Schwinger equation is the same only this time it is restricted to the positive axis
\begin{equation}\label{sthl03}
  \psi(k;x) = e^{ikx} + \frac{1}{k}\int_x^\infty \sin(k(y-x))V(y) \psi(k;y)\, dy,\ x\geq 0 .
\end{equation}
Note that these $\psi$ are the restrictions of the Jost solutions $\psi_+$ defined on the line.
As in Section \ref{stl} it is convenient to work with the symmetrized Lippmann--Schwinger equation
\begin{equation}\label{sthl04}
  f_j = e_j + \frac{1}{k}\sqrt{|V|}G_+\sqrt{|V|} f_j,\ j=1,2,
\end{equation}
with
\begin{equation*}
  f_1(x) \coloneqq \sqrt{|V(x)|}\psi(k;x),\ f_2(x) \coloneqq \sqrt{|V(x)|}\psi(-k;x) .
\end{equation*}
Note that $f_2$ is different from that on the whole line.
Using the boundary conditions one can derive the scattering matrix from \eqref{sthl02}
\begin{equation*}
  S(k) = c \frac{a\psi(-k;0) - b\psi'(-k;0)}{a\psi(k;0)-b\psi'(k;0)},\ 
    c \coloneqq \frac{ia + kb}{ia-kb} = \frac{i\bar a + k\bar b}{i\bar a-k\bar b},\
    |c|^2 = 1 .
\end{equation*}
The values at $0$ can be obtained via the Lippmann--Schwinger equation \eqref{sthl03} and \eqref{sthl04}
\begin{equation*}
  \psi(k;0) = 1 + \frac{1}{k}(e_s,Jf_1),\ \psi'(k;0) = ik - (e_c,Jf_1),\
  \psi(-k;0) = 1 + \frac{1}{k}(e_s,Jf_2),\ \psi'(-k;0) = -ik - (e_c,Jf_2)
\end{equation*}
where
\begin{equation*}
  e_s(x) \coloneqq \sqrt{|V(x)|}\sin(kx),\ e_c(x)\coloneqq\sqrt{|V(x)|}\cos(kx) .
\end{equation*}
We rewrite the scattering matrix
\begin{equation*}
  S(k) = c \frac{ a+ikb + \frac{1}{k}(\bar a e_s+k\bar b e_c,Jf_2) }{ a-ikb + \frac{1}{k}(\bar a e_s+k\bar b e_c,Jf_1)}
       = \frac{i-\frac{1}{2k}(e_0,Jf_2)}{i-\frac{1}{2kc}(e_0,Jf_1)}
\end{equation*}
where (see \eqref{hl02})
\begin{equation*}
  e_0\coloneqq \sqrt{|V|}\varepsilon_0,\ \bar a e_s + k\bar b e_c = - \frac{1}{2}(i\bar a+k\bar b)e_0 .
\end{equation*}
The T-matrix is then
\begin{equation*}
  T(k) = S(k) - 1 = \frac{1}{2ki} \frac{(e_0,Jf_0)}{1 + \frac{i}{2kc}(e_0,Jf_1)},\
  f_0 \coloneqq \frac{1}{c}f_1 - f_2 .
\end{equation*}
By linearity, $f_0$ satisfies the Lippmann--Schwinger equation with $e_0$ instead of $e_{1,2}$.

A look at the kernel functions shows that the resolvent is a rank one perturbation of the Lippmann--Schwinger operator
which yields for the Birman--Schwinger operator
\begin{equation*}
  \sqrt{|V|}R_\infty\sqrt{|V|}J 
     = \frac{i}{2k} (J\bar e_0,\cdot) e_1 + \frac{1}{k}\sqrt{|V|}G_+\sqrt{|V|}J
     = - \frac{i}{2ku} (Je_0,\cdot) e_1 + \frac{1}{k}\sqrt{|V|}G_+\sqrt{|V|}J .
\end{equation*}
Thereby, the Lippmann--Schwinger equation for $f_0$ gives
\begin{equation*}
  f_0 = \Omega_\infty e_0 + \frac{i}{2kc} (e_0,Jf_0)\Omega_\infty e_1 .
\end{equation*}
Taking scalar products one obtains
\begin{equation*}
  (1-\frac{i}{2kc}(e_0,J\Omega_\infty e_1))(e_0,Jf_0) = (e_0,J\Omega_\infty e_0) .
\end{equation*}
Likewise for $f_1$
\begin{equation*}
  f_1 = (1 + \frac{i}{2kc}(e_0,Jf_1)) \Omega_\infty e_1 .
\end{equation*}
Once again, we take scalar products and rearrange the terms 
\begin{equation*}
  (1+\frac{i}{2kc}(e_0,Jf_1))(1-\frac{i}{2kc}(e_0,J\Omega_\infty e_1) = 1 .
\end{equation*}
We use the formulae for $f_0$ and $f_1$ to obtain
\begin{equation*}
  T(k) = \frac{1}{2ki} \frac{(e_0,J\Omega_\infty e_0)}{(1-\frac{i}{2kc}(e_0,J\Omega_\infty e_1))(1 + \frac{i}{2kc}(e_0,Jf_1))}
       = \frac{1}{2ki} (e_0,J\Omega_\infty e_0) .
\end{equation*}
Finally, we have established the relation between the wave operator \eqref{Omega_matrix} and the T-matrix
\begin{equation*}
  T(k) = \frac{1}{2ki} \hat\Omega_\infty(\nu) .
\end{equation*}
Recall that $k=\sqrt{\nu}$.

\subsection{Spectral shift function \texorpdfstring{$\xi$}{}\label{ssf}}
We collect some properties of the spectral shift function 
(see e.g. \cite[Ch.~8]{Yafaev1992}, \cite[Ch.~0~\S~9]{Yafaev2010}).
As our definition we use
\begin{equation}\label{krein_xi}
  \xi(\nu) \coloneqq \frac{1}{\pi} \lim_{y\to +0} \im\ln[\Delta(\nu+iy)] ,
\end{equation}
which is also known as Kre\u\i{}n's formula. If $H$ and $H_V$ are semi-bounded from below and
have no essential spectrum then the spectral shift function can be expressed by Lifshitz's formula
\begin{equation}\label{lifshitz}
  \xi(\nu) = -\tr(\Pi_\nu - P_\nu),\ \nu\in\R .
\end{equation}
Note that in general the difference of the spectral projections is not trace class.

\begin{lemma}\label{ssf01t}
Let $R_V(z)-R(z)\in B_1(\hilbert)$ for $z\notin\sigma(H_V)\cup\sigma(H)$. Then the following hold true.
\begin{enumerate}
\item $\xi\in L^1(\R; (1+|\lambda|)^{-2})$.
\item Assume in addition that there are constants $C\geq 0$ and $\alpha>0$ such that
for some $\re(z)$ and all sufficiently large $\im(z)>0$
\begin{equation*}
  \| R_V(z) - R(z) \|_1 \leq \frac{C}{\im(z)^\alpha} .
\end{equation*}
Then $\xi\in L^1(\R; (1+|\lambda|)^{-\tau-1})$ for all $\tau > \max\{-1, 1-\alpha\}$.
\end{enumerate}
\end{lemma}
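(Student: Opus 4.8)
The plan is to recover $\xi$ from the boundary behaviour of the perturbation determinant $\Delta(z)=\det(\id-K(z))$ through Kre\u\i{}n's formula \eqref{krein_xi}; the whole argument takes place on the upper half-plane $\C_+=\{z:\im z>0\}$. By Hypothesis \ref{h1} and Lemma \ref{energy00t}, $\Delta$ is holomorphic on $\C_+$, and by Lemma \ref{energy_omega} it is nowhere zero there (for $\im z\neq0$ one has $z\notin\sigma(H_V)$, and $\Delta(z)\neq0$ is equivalent to boundedness of $\Omega(z)$). A single-valued branch of $\ln\Delta$ is therefore fixed on $\C_+$ by the normalisation $\ln\Delta(z)\to0$ as $\im z\to+\infty$ --- this limit holds because $K(z)\to0$ in $B_1$ as $|\im z|\to\infty$ (Lemma \ref{energy00t} and the spectral theorem). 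By \eqref{krein_xi}, with this branch $\pi\xi(\lambda)=\im\ln\Delta(\lambda+i0)$ for a.e.\ $\lambda$, the nontangential boundary value existing a.e.\ as part of the classical framework of \cite[Ch.\ 8]{Yafaev1992}.

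For part 1 I would run the classical Birman--Kre\u\i{}n representation, cf.\ \cite[Ch.\ 8]{Yafaev1992}: $\ln|\Delta(z)|$ is bounded above near $i\infty$, so on a half-plane $\{\im z\geq y_0\}$ the nonnegative harmonic function $C_0-\ln|\Delta(\cdot)|$ admits a Herglotz--Riesz representation $c(\im z-y_0)+\int_\R P(z,\lambda)\,d\mu(\lambda)$ with $c\geq0$ and a positive measure $\mu$ satisfying $\int_\R(1+\lambda^2)^{-1}\,d\mu(\lambda)<\infty$; taking the harmonic conjugate and then boundary values identifies $\pi\xi$, up to a bounded term, with the Hilbert transform of $\mu$, so that $\int_\R(1+|\lambda|)^{-2}|\xi(\lambda)|\,d\lambda\leq C\int_\R(1+|\lambda|)^{-2}\,d\mu(\lambda)<\infty$. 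One checks separately that the local singularities of $\xi$ (the integer jumps at discrete eigenvalues of $H$ and $H_V$, and threshold behaviour) are locally integrable, so only the decay as $\lambda\to\pm\infty$ is in question.

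For part 2 I would feed the extra hypothesis into this machinery through a sharpened behaviour of $\ln\Delta$ at $i\infty$. From the logarithmic-derivative identity $\tfrac{d}{dz}\ln\Delta(z)=\tr[R_V(z)-R(z)]$ (already used in the proof of Proposition \ref{energy01t}) and $\ln\Delta(x+iy)\to0$,
\[
  \ln\Delta(x+iy)=-i\int_y^\infty\tr\big[R_V(x+it)-R(x+it)\big]\,dt,
\]
so for $\alpha>1$ the hypothesis gives $|\ln\Delta(x+iy)|\leq C\int_y^\infty t^{-\alpha}\,dt=O(y^{1-\alpha})$, i.e.\ decay at $i\infty$; this forces $c=0$ and, via the Poisson estimate $C_0-\ln|\Delta(iR)|\geq(2R)^{-1}\mu([-R,R])$, improves the bound on $\mu([-R,R])$, hence on the growth of $|\xi(\lambda)|$, to $O(|\lambda|^{1-\alpha+\varepsilon})$ --- giving $\int_\R(1+|\lambda|)^{-s}|\xi|\,d\lambda<\infty$ for $s>\max\{0,2-\alpha\}=\max\{-1,1-\alpha\}+1$. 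For $0<\alpha\leq1$ the bound $|\tfrac{d}{dt}\ln\Delta(x+it)|\leq Ct^{-\alpha}$ is no longer integrable at $+\infty$, and one must instead use the trace bound Tauberian-style: starting from $\tr[R_V(iy)-R(iy)]=\int_\R\xi(\lambda)(\lambda-iy)^{-2}\,d\lambda$ (valid once part 1 and the representation of $\ln\Delta$ are in hand) one extracts, by averaging in $y$, the moment bound $\mu([-R,R])=O(R^{1-\alpha+\varepsilon})$, and the weighted estimate for the Hilbert transform closes the argument. Either way, $\xi\in L^1(\R;(1+|\lambda|)^{-\tau-1})$ for every $\tau>\max\{-1,1-\alpha\}$; this refinement is in the spirit of the high-energy asymptotics of the spectral shift function, cf.\ \cite{AsselDimassi2008}.

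The main obstacle, I expect, is not the Herglotz--Riesz step but the uniform control of $\Delta$ near the real axis: $\ln|\Delta(x+iy)|$ need not be bounded above as $\re z\to+\infty$ with $\im z$ small, since there $z$ approaches the essential spectrum of $H$ and $H_V$, so one cannot apply a half-plane representation on all of $\C_+$; one localises near the real axis (using local integrability of $\xi$) or runs the representation on $\{\im z\geq y_0\}$ and estimates the defect coming from the strip $0<\im z<y_0$, exactly as in \cite[Ch.\ 8]{Yafaev1992}. The second delicate point is the Tauberian passage for $\alpha\leq1$; once both are settled, part 2 follows from part 1 together with the $i\infty$-behaviour of $\ln\Delta$ established above.
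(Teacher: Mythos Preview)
The paper gives no argument here --- it simply cites \cite[Thm.~4.1]{SinhaMohapatra1994} for part~1 and \cite[(8.8.3)]{Yafaev1992} for part~2 --- so there is nothing in the paper itself to compare your sketch against; one can only ask whether your route actually reaches the conclusion.

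Your sketch for part~1 has a genuine gap. Even if one grants the Herglotz--Riesz representation of $C_0-\ln|\Delta|$ on a half-plane (itself unjustified: an upper bound near $i\infty$ does not force $C_0-\ln|\Delta|\geq0$ on all of $\{\im z\geq y_0\}$, and under the bare hypothesis $R_V(z)-R(z)\in B_1$ the operator $K(z)$ need not be trace class, so $\Delta(z)=\det(\id-K(z))$ may not even be defined), the inference ``$\pi\xi$ is, up to a bounded term, the Hilbert transform of $\mu$, so $\int(1+|\lambda|)^{-2}|\xi|\leq C\int(1+|\lambda|)^{-2}\,d\mu$'' is false. The conjugate function of a positive measure $\mu$ with $\int(1+\lambda^2)^{-1}\,d\mu<\infty$ need not lie in $L^1(\R;(1+\lambda^2)^{-1})$: already $\mu=\delta_0$ gives the conjugate $1/(\pi\lambda)$, which is not even locally integrable. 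No weighted $L^1$ bound on the harmonic conjugate follows from a Poisson representation alone, so the last displayed inequality in your part~1 has no basis. The route in the cited references is different in kind: one passes to the bounded pair $(H_V-z_0)^{-1},(H-z_0)^{-1}$, whose difference is trace class by hypothesis, applies Kre\u\i{}n's original theorem to get the associated shift function in $L^1(\R)$, and pulls back by $\lambda\mapsto(\lambda-z_0)^{-1}$; the Jacobian of this change of variables is what produces the weight $(1+\lambda^2)^{-1}$. For part~2 your identity $\tfrac{d}{dz}\ln\Delta=\tr[R_V-R]$ is the right input, but once part~1 and the representation $\ln\Delta(z)=\int_\R(\lambda-z)^{-1}\xi(\lambda)\,d\lambda$ are available, differentiating gives $\tr[R_V(z)-R(z)]=-\int_\R(\lambda-z)^{-2}\xi(\lambda)\,d\lambda$, and the decay hypothesis on the left-hand side yields the improved weighted integrability of $\xi$ by a direct estimate (this is essentially \cite[(8.8.3)]{Yafaev1992}), without the Tauberian detour through $\mu([-R,R])$ that you sketch.
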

\begin{proof}
1. 
See \cite[Thm. 4.1]{SinhaMohapatra1994}.
2. 
See \cite[(8.8.3)]{Yafaev1992}.
\end{proof}

Under certain conditions \eqref{krein_xi} can be inverted in that the perturbation determinant
can be expressed via the spectral shift function.

\begin{lemma}\label{ssf02t}
Assume that $\xi\in L^1(\R;(1+|\lambda|)^{-1})$. Then,
\begin{equation*}
  \ln[\Delta(z)] = \int_{-\infty}^\infty \frac{\xi(\lambda)}{\lambda-z}\, d\lambda,\ \im(z)\neq 0,
\end{equation*}
and $\xi$ is given via \eqref{krein_xi}.
\end{lemma}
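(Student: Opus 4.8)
The plan is to reduce the lemma to the single identity $\ln[\Delta(z)]=\int_\R \xi(\lambda)(\lambda-z)^{-1}\,d\lambda$ for $\im(z)\neq0$; the second assertion then follows at once, since taking imaginary parts gives $\tfrac1\pi\im\ln[\Delta(\nu+iy)]=\tfrac1\pi\int_\R \frac{y}{(\lambda-\nu)^2+y^2}\,\xi(\lambda)\,d\lambda$, the Poisson integral of $\xi$, and $\xi\in L^1(\R;(1+|\lambda|)^{-1})\subset L^1(\R;(1+|\lambda|)^{-2})$ (cf.\ Lemma~\ref{ssf01t}), so this converges to $\xi(\nu)$ for a.e.\ $\nu$ as $y\to+0$, which is exactly \eqref{krein_xi}.

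First I would present both sides as holomorphic functions on $\C\setminus\R$ with matching behaviour at infinity. Since $\sigma(H),\sigma(H_V)\subset\R$, Lemma~\ref{energy_omega} gives that $\id-K(z)$ is boundedly invertible for $\im(z)\neq0$, so $\Delta(z)\neq0$ there and $z\mapsto\ln[\Delta(z)]$ is a well-defined holomorphic function once the branch is normalized by $\ln[\Delta(z)]\to0$ as $|\im(z)|\to\infty$ --- legitimate because $\|K(z)\|_1\to0$, hence $\Delta(z)\to1$, in that limit, and $\ln[\Delta(\bar z)]=\overline{\ln[\Delta(z)]}$. On the other side, $G(z):=\int_\R \xi(\lambda)(\lambda-z)^{-1}\,d\lambda$ is absolutely convergent for $\im(z)\neq0$ --- precisely what the weighted-$L^1$ hypothesis buys --- holomorphic, with $G(\bar z)=\overline{G(z)}$, $G(iy)\to0$ as $y\to\infty$, and nontangential boundary value satisfying $\tfrac1\pi\im G(x+i0)=\xi(x)$ a.e.\ by the Poisson-kernel/Fatou argument above. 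Setting $F:=\ln[\Delta]-G$ on the upper half-plane, the definition \eqref{krein_xi} forces $\im F(x+i0)=\pi\xi(x)-\pi\xi(x)=0$ for a.e.\ $x$, while $F(iy)\to0$ as $y\to\infty$.

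The heart of the matter is to deduce $F\equiv0$ from these facts, and I expect this to be the main obstacle. It is not automatic: $u:=\im F$ is harmonic on $\C^+$ and vanishes a.e.\ on $\R$, but so does $u(x+iy)=y$, so one must supply a growth bound that excludes such functions. I would obtain it by estimating $\|K(z)\|_1$ (hence $\ln[\Delta(z)]$) and $|G(z)|$ near $\R$ and near $\infty$ --- via the first resolvent identity, the bounds $\|R(z)\|,\|R_V(z)\|=O(1/|\im(z)|)$, and the standing hypothesis $R_V(z)-R(z)\in B_1$ that underlies the very definitions of $\Delta$ and $\xi$ (cf.\ Lemma~\ref{ssf01t}) --- so as to place $F$ in a Nevanlinna/Hardy-type class on $\C^+$ in which $\im F$ is the Poisson integral of its boundary values. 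Since those boundary values vanish a.e., $\im F\equiv0$, so $F$ is a real constant, and $F(iy)\to0$ then gives $F\equiv0$.

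As a fallback route I would differentiate instead: $\tfrac{d}{dz}\ln[\Delta(z)]=\tr[R_V(z)-R(z)]$ (as in the proof of Proposition~\ref{energy01t}), and establish Kre\u\i{}n's trace formula $\tr[R_V(z)-R(z)]=-\int_\R\xi(\lambda)(\lambda-z)^{-2}\,d\lambda=\tfrac{d}{dz}G(z)$, so that $\ln[\Delta]-G$ is constant and the constant vanishes as $\im(z)\to\infty$; but then the obstacle simply moves to proving that trace formula, which is again Kre\u\i{}n's theorem and rests on the same integrability input. Everything else --- the manipulations with Cauchy transforms and the Poisson kernel --- is routine.
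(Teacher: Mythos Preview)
The paper does not actually prove this lemma: its proof reads in full ``See \cite[(0.9.38)]{Yafaev2010} and \cite[(0.9.39)]{Yafaev2010}.'' So there is no argument to compare against, and your sketch is already more than the paper supplies.

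Your outline is sound and follows the standard Herglotz/Nevanlinna route. A few remarks. First, your ``fallback'' via $\tfrac{d}{dz}\ln[\Delta(z)]=\tr[R_V(z)-R(z)]$ and Kre\u\i{}n's trace formula is in fact the usual proof in Yafaev's book, not a fallback; the boundary-value argument you lead with is more delicate because $\im\ln[\Delta(z)]=\arg\Delta(z)$ has no a~priori sign, so you cannot invoke the Herglotz representation directly and must instead control growth well enough to identify $F$ with the Poisson extension of its (vanishing) boundary data. You flag this as the main obstacle, correctly. Second, the step ``$\|K(z)\|_1\to0$ as $|\im(z)|\to\infty$'' is not automatic in the abstract setting of the appendix; in Yafaev it comes from the standing assumption that $R_V(z)-R(z)\in B_1$ together with resolvent bounds, and you should make that dependence explicit rather than asserting it. Third, your recovery of \eqref{krein_xi} from the integral representation via the Poisson kernel is fine, and the inclusion $L^1(\R;(1+|\lambda|)^{-1})\subset L^1(\R;(1+|\lambda|)^{-2})$ is exactly what is needed for a.e.\ convergence of the Poisson integral.
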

\begin{proof}
See \cite[(0.9.38)]{Yafaev2010} and \cite[(0.9.39)]{Yafaev2010}.
\end{proof}

The spectral shift function is related to the scattering matrix (cf. Section \ref{scattering}).

\begin{lemma}[Birman--Kre\u\i{}n formula]\label{ssf03t}
Let $R_V(z)-R(z)\in B_1(\hilbert)$ for $z\notin\sigma(H_V)\cup\sigma(H)$. Then,
\begin{equation}\label{ssf03t01}
  \det(S(\lambda)) = e^{-2\pi i\xi(\lambda)}
\end{equation}
where $S(\lambda)$ is the scattering matrix at energy $\lambda\in\R$ for the operators $H_V$ and $H$.
\end{lemma}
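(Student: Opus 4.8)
The plan is to exploit the one-dimensional structure, in which the scattering matrix $S(\lambda)=\mathscr{S}(k)$, $\lambda=k^2$, is $2\times2$ and its determinant collapses onto the transmission coefficient. First I would use the unitarity relations \eqref{s-matrix02}: solving $\bar{\mathscr{t}}(k)\mathscr{r}_1(k)+\bar{\mathscr{r}}_2(k)\mathscr{t}(k)=0$ for $\mathscr{r}_1(k)$ and inserting it into $\det\mathscr{S}(k)=\mathscr{t}(k)^2-\mathscr{r}_1(k)\mathscr{r}_2(k)$ yields, using again $|\mathscr{t}(k)|^2+|\mathscr{r}_2(k)|^2=1$,
\begin{equation*}
  \det\mathscr{S}(k)=\frac{\mathscr{t}(k)}{\overline{\mathscr{t}(k)}},\qquad k>0 .
\end{equation*}
Hence it suffices to prove $\mathscr{t}(k)/\overline{\mathscr{t}(k)}=e^{-2\pi i\xi(k^2)}$.

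Next I would bring in the Jost--Pais formula, Lemma \ref{jost_pais}: for $k>0$ the upper boundary value of the perturbation determinant satisfies
\begin{equation*}
  \Delta^+(k^2)\coloneqq\det(\id-K_\infty^+(k^2))=\frac{1}{\mathscr{t}(k)}
\end{equation*}
(cf. \eqref{jost_pais01} and \eqref{K_boundary_values}), which is meaningful because the trace-norm limiting absorption principle of Lemma \ref{bso03t} --- or Hypothesis \ref{h1} in the abstract phrasing --- provides $K_\infty^+(k^2)\in B_1(\hilbert)$ and the continuity of $\Delta$ up to $(0,\infty)$. Since $\mathscr{t}(k)$ is finite and non-zero for real $k>0$ --- the Jost solutions are linearly independent there, as the Wronskian \eqref{wronski01} equals $2ik/\mathscr{t}(k)$ --- the number $\Delta^+(k^2)$ lies in $\C\setminus\{0\}$, so its argument is well defined. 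Combining this with the reflection identity $\Delta(\bar z)=\overline{\Delta(z)}$ (equivalently $K_\infty(\bar z)=JK_\infty(z)^*J$, used repeatedly above), one gets $\overline{\mathscr{t}(k)}=1/\overline{\Delta^+(k^2)}=1/\Delta^-(k^2)$, so that
\begin{equation*}
  \det\mathscr{S}(k)=\frac{\mathscr{t}(k)}{\overline{\mathscr{t}(k)}}=\frac{\Delta^-(k^2)}{\Delta^+(k^2)}=\frac{\overline{\Delta^+(k^2)}}{\Delta^+(k^2)} .
\end{equation*}

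Finally I would close the loop with Kre\u\i{}n's formula \eqref{krein_xi}, taken along the continuous determination of $\arg\Delta(z)$ pinned down by $\arg\Delta(z)\to0$ as $\im(z)\to\infty$ (where $\Delta(z)\to1$): it reads $\im\ln\Delta^+(\lambda)=\pi\xi(\lambda)$, i.e. $\Delta^+(\lambda)=|\Delta^+(\lambda)|\,e^{i\pi\xi(\lambda)}$, and therefore $\overline{\Delta^+(\lambda)}/\Delta^+(\lambda)=e^{-2\pi i\xi(\lambda)}$ with $\lambda=k^2$; this is exactly \eqref{ssf03t01}. I expect the main obstacle to be the bookkeeping with the branch of the logarithm: one has to verify that the limit $\lim_{y\to+0}\im\ln\Delta(\lambda+iy)$ in \eqref{krein_xi} is taken along the same branch as the normalisation $\mathscr{t}(\infty)=1$ implicit in the Jost--Pais formula, and that $\Delta^+$ has no zeros on $(0,\infty)$ (no embedded eigenvalues nor spectral singularities), which under the standing integrability hypotheses on $V$ follows from $\mathscr{t}(k)\neq0$ for $k>0$. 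The remaining ingredients --- the $2\times2$ determinant algebra and the passage to trace-norm boundary values --- are routine once Lemmas \ref{jost_pais} and \ref{bso03t} are in place.
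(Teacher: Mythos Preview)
Your argument is correct for the whole-line one-dimensional Schr\"odinger operator, and it is a genuinely different route from the paper: the paper does not prove this lemma at all but simply cites \cite[Thm.~0.9.4]{Yafaev2010}, which treats the abstract trace-class setting. Your approach exploits the concrete one-dimensional structure --- the $2\times2$ unitarity algebra reducing $\det\mathscr{S}(k)$ to $\mathscr{t}(k)/\overline{\mathscr{t}(k)}$, then the Jost--Pais identity \eqref{jost_pais01} and Kre\u\i{}n's formula \eqref{krein_xi} --- to obtain \eqref{ssf03t01} directly. This is more elementary and self-contained within the paper's own toolbox, and it is exactly what is needed for the application in Section~\ref{hl}.

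The only caveat is scope: the lemma is stated abstractly (for any pair $H,H_V$ with trace-class resolvent difference), whereas your proof is specific to $H_\infty=-d^2/dx^2$ on $L^2(\R)$. For the half-line version used in \eqref{hl13} you would need the analogous Jost--Pais identity on $[0,\infty)$, and for the general Birman--Kre\u\i{}n theorem one must fall back on the abstract stationary scattering theory as in Yafaev. Within the paper's actual needs, however, your argument suffices.
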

\begin{proof}
See \cite[Thm. 0.9.4]{Yafaev2010}.
\end{proof}

%
%
\section*{Acknowledgement}
It is a pleasure to thank M. Gebert, F. Gesztesy, H.-K. Kn\"orr,
P. M\"uller, K. Pankrashkin, and G. Raikov for discussions leading to this work
and A. Pushnitski for making us aware of his paper \cite{FrankPushnitski2015}.

\bibliographystyle{abbrv}
\bibliography{/home/otte/Bibliothek/BibTeX/aoc}
\end{document}